\title{The Richness of CSP Non-redundancy}%
\author{Joshua Brakensiek\thanks{University of California, Berkeley. Contact: \href{mailto:josh.brakensiek@berkeley.edu}{josh.brakensiek@berkeley.edu}} \and Venkatesan Guruswami\thanks{Simons Institute for the Theory of Computing and the University of California, Berkeley. Contact: \href{mailto:venkatg@berkeley.edu}{venkatg@berkeley.edu}} \and
  Bart M. P. Jansen\thanks{Eindhoven University of Technology, The Netherlands. Contact: \href{mailto:b.m.p.jansen@tue.nl}{b.m.p.jansen@tue.nl}} \and
  Victor Lagerkvist\thanks{Linköping University, Linköping. Contact: \href{mailto:victor.lagerkvist@liu.se}{victor.lagerkvist@liu.se}} \and
  Magnus Wahlström\thanks{Royal Holloway, University of London. Contact: \href{mailto:magnus.wahlstrom@rhul.ac.uk}{magnus.wahlstrom@rhul.ac.uk}}
}
\date{}
\definecolor{olivegreen}{RGB}{107,142,35} %
\newcommand{\surj}{\operatorname{surj}}
\newcommand{\ar}{\operatorname{ar}}
\newcommand{\polylog}{\operatorname{polylog}}
\newcommand{\CSP}{\operatorname{CSP}}
\newcommand{\csp}{\operatorname{CSP}}
\newcommand{\Pol}{\operatorname{Pol}}
\newcommand{\tLIN}{\operatorname{3LIN}}
\newcommand{\sat}{\operatorname{sat}}
\newcommand{\pPol}{\operatorname{pPol}}
\newcommand{\pattern}{\operatorname{Pattern}}
\newcommand{\mpattern}{\operatorname{mPattern}}
\newcommand{\AS}{\operatorname{AS}}
\newcommand{\Bal}{\operatorname{Bal}}
\newcommand{\Cat}{\operatorname{Cat}}
\newcommand{\Grp}{\operatorname{Grp}}
\newcommand{\Cox}{\operatorname{Cox}}
\newcommand{\tx}{\tilde{x}}
\newcommand{\cla}{\Gamma_1}
\newcommand{\clb}{\Gamma_2}
\newcommand{\ex}{\operatorname{ex}}
\newcommand{\cI}{\mathcal I}
\newcommand{\cP}{\mathcal P}
\newcommand{\cF}{\mathcal F}
\newcommand{\cH}{\mathcal H}
\newcommand{\N}{\mathbb{N}}
\newcommand{\F}{\mathbb{F}}
\newcommand{\cC}{\mathcal C}
\newcommand{\OR}{\operatorname{OR}}
\newcommand{\CUT}{\operatorname{CUT}}
\newcommand{\eps}{\varepsilon}
\newcommand{\NRD}{\operatorname{NRD}}
\newcommand{\EQ}{\operatorname{EQ}}
\newcommand{\E}{\mathbb E}
\newcommand{\one}{{\bf 1}}
\newcommand{\BCK}{\operatorname{BCK}}
\newcommand{\CYC}{\operatorname{CYC}}
\newcommand{\CYCs}{\CYC^*}
\newcommand{\rep}{\operatorname{rep}}
\newcommand{\oneinthree}{\operatorname{1-in-3}}
\newcommand{\R}{\mathbb{R}}
\newcommand{\Z}{\mathbb{Z}}
\newcommand{\ORDP}{\operatorname{OR-DP}}
\newcommand{\DP}{\operatorname{DP}}
\newcommand{\SATDP}{\operatorname{SAT-DP}}
\newcommand{\interpretation}[2]{\operatorname{\mathit{I}}_{#1}(#2)}
\newtheorem{theorem}{Theorem}
\numberwithin{theorem}{section}
\newtheorem{lemma}[theorem]{Lemma}
\newtheorem{claim}[theorem]{Claim}
\newtheorem{proposition}[theorem]{Proposition}
\newtheorem{corollary}[theorem]{Corollary}
\newtheorem{observation}[theorem]{Observation}
\newtheorem{conjecture}[theorem]{Conjecture}
\theoremstyle{definition}
\newtheorem{definition}[theorem]{Definition}
\newtheorem{remark}[theorem]{Remark}
\newtheorem{example}[theorem]{Example}
\DeclareMathOperator{\inv}{\operatorname{Inv}} %
\DeclareMathOperator{\dom}{\operatorname{dom}}
\newcommand{\qfppp}[1]{\ensuremath{\langle #1 \rangle_{\not \exists}}}
\newcommand{\minor}[1]{\ensuremath{[#1]_{\mathrm{min}}}}
\renewcommand{\paragraph}{%
  \@startsection{paragraph}{4}%
  {\z@}{6pt \@plus 1pt \@minus 1pt}{-5pt}%
  {\normalfont\normalsize\bfseries}%
}
\newcommand{\idx}{\mathsf{idx}}
\newcommand{\Oh}{\mathcal{O}}
\newcommand{\ncontainment}{\ensuremath{\mathsf{NP \not\subseteq coNP/poly}}\xspace}
\newcommand{\containment}{\ensuremath{\mathsf{NP  \subseteq coNP/poly}}\xspace}
\newcommand{\true}{\emph{true}\xspace}
\newcommand{\false}{\emph{false}\xspace}
\newcommand{\proj}{\operatorname{proj}}
\newcommand{\PAULI}{\operatorname{PAULI}}
\newenvironment{corollaryrestated}[1]{%
  \par\noindent\textbf{Corollary~\ref{#1}.}\itshape\ }{%
  \par\medskip
}
\newenvironment{theoremrestated}[1]{%
  \par\noindent\textbf{Theorem~\ref{#1}.}\itshape\ }{%
  \par\medskip
}
\begin{document}

\maketitle
\thispagestyle{empty}
\begin{abstract}
In the field of constraint satisfaction problems (CSP), a clause is called \emph{redundant} if its satisfaction is implied by satisfying all other clauses. An instance of $\CSP(P)$ is called \emph{non-redundant} if it does not contain any redundant clause. The non-redundancy (NRD) of a predicate~$P$ is the maximum number of clauses in a non-redundant instance of~$\CSP(P)$, as a function of the number of variables~$n$. Recent progress has shown that non-redundancy is crucially linked to many other important questions in computer science and mathematics including sparsification, kernelization, query complexity, universal algebra, and extremal combinatorics. Given that non-redundancy is a nexus for many of these important problems, the central goal of this paper is to more deeply understand non-redundancy.

Our first main result shows that for every rational number $r \ge 1$, there exists a finite CSP predicate $P$ such that the non-redundancy of $P$ is $\Theta(n^r)$. Previously, such a result was only known in some ad-hoc cases including when $r$ is integral. Our methods immediately extend to analogous results for CSP kernelization and CSP sparsification. Our second main result explores the concept of \emph{conditional} non-redundancy first coined by Brakensiek and Guruswami [STOC 2025]. We completely classify the conditional non-redundancy of all binary predicates (i.e., constraints on two variables) by connecting these non-redundancy problems to the structure of high-girth graphs in extremal combinatorics.

Inspired by these concrete results, we develop a general algebraic theory of conditional non-redundancy. In particular, building off the work of Carbonnel [CP 2022], we prove that conditional non-redundancy of a predicate is governed by universal algebraic objects known as partial promise pattern polymorphisms. As an application of this algebraic theory, we revisit the notion of \emph{Mal'tsev embeddings}, which is the most general technique known to date for establishing that a predicate has linear non-redundancy. Our main technical contribution here consists of the introduction of a family of \emph{Catalan polymorphisms}. We prove that these preserve any predicate that admits a Mal'tsev embedding. Using this tool, we resolve several questions from the literature on the structure of Mal'tsev embeddings. For example, we provide the first example of predicate with a Mal'tsev embedding that cannot be attributed to the structure of an Abelian group, but rather to the structure of the quantum Pauli group.

\end{abstract}

\pagebreak

\setcounter{tocdepth}{2} %
\enlargethispage{0.5cm}
\tableofcontents
\thispagestyle{empty}
\pagebreak
\setcounter{page}{1}

\section{Introduction}

What is the maximum number of possible edges in an $n$-vertex undirected graph that has no cycle? The answer is of course $n-1$, achieved when the graph is a tree. Viewing a graph as an instance of a constraint satisfaction problem (CSP) with equality constraints imposed on adjacent vertices, an acyclic graph is one where every equality constraint is \emph{non-redundant} in the sense that it is not implied by the collection of all other constraints. A tree is then a maximum-sized collection of non-redundant equality constraints. 

Generalizing this, for any relation $R \subseteq D^r$ of arity $r$ over a finite domain $D$, we say that an instance of $\CSP(R)$, with constraints $R$ applied to certain $r$-tuples of variables, is non-redundant if each of its constraints is not implied by the others. In other words, for each constraint there is an assignment violating only that constraint, so that dropping any single constraint in a non-redundant instance alters the set of satisfying assignments. The non-redundancy of $R$, which we denote by $\NRD(R, n)$, is the largest number of possible constraints in a non-redundant instance of $\CSP(R)$, as a function of the number $n$ of variables of the CSP instance. For non-trivial relations $\emptyset \neq R \subsetneq D^r$, $\NRD(R, n)$ lies in the range $[\Omega(n),\Oh(n^r)]$.

Non-redundancy of relations therefore naturally blends logic and extremal combinatorics. In addition, universal algebraic methods prevalent in the CSP dichotomy theory (e.g., \cite{barto2017Polymorphisms,bulatov2017Dichotomy,zhuk2020Proof}) play an important role. Non-redundancy is also intimately related to \emph{CSP sparsification}. Given an instance $\cI$ of $\CSP(R)$ with variable set $V$, a sparsified instance $\cI'$ consists of a (suitably reweighted) subset of constraints with the property that, for every assignment $\sigma : V \to D$, the weight of constraints satisfied by $\sigma$ in $\cI'$ is within $(1 \pm \eps)$ of the number of constraints satisfied by $\sigma$ in $\cI$.

A basic (and famous) example is the case
$R = \text{NEQ} = \{(0,1), (1,0)\} \subseteq \{0,1\}^2$ which corresponds to \emph{cut sparsification} in undirected graphs. When the graph is acyclic, no sparsification is possible at all, as for every edge there is a cut containing only that edge. Thus, non-redundant instances for $\overline{R} = \text{EQ} = \{(0,0),(1,1)\}$\footnote{Note that an assignment that witnesses that an $\overline{R}$-constraint is not redundant also witnesses the corresponding $R$-constraint cannot be dropped by a sparsifier since an assignment satisfying all but one $\overline{R}$-constraint satisfies exactly one $R$-constraint. %
}
provide basic obstacles for cut sparsification, and trees serve as a witness that $\Omega(n)$ edges are in general needed for cut sparsification of $n$-vertex graphs. The classic results of Benczur and Karger~\cite{DBLP:conf/stoc/BenczurK96} show that, surprisingly, one can always sparsify to $\tilde{\Oh}(n)$ edges while preserving all cut values within a constant multiplicative factor. 

This influential result has been extended in several exciting directions, including hypergraph sparsification~\cite{kogan2015,chen2020,KKTY2021,khanna2024optimal}, and the aforementioned CSP sparsification framework with classifications of all binary predicates~\cite{butti2020}, ternary Boolean predicates~\cite{khanna2024Characterizations}, and the recent development of sparsifying linear codes~\cite{khanna2024Code}. However, obtaining truly general CSP sparsification results is challenging, and there is little evidence that the existing techniques are scalable to predicates of arbitrary arity and domain.
A recent work by Brakensiek and Guruswami~\cite{brakensiek2024Redundancy}
showed the very general result that \emph{every} CSP can be sparsified down to its non-redundancy---specifically, for any instance of $\CSP(R)$, there is a sub-instance with at most $\tilde{\Oh}_\eps(\NRD(\overline{R}, n))$ reweighted constraints that preserves the number of original constraints satisfied by all assignments up to $(1 \pm \eps)$ multiplicative factors. 

This general result still leaves the question of determining $\NRD(R, n)$ for a given relation $R$, and more broadly building a comprehensive theory to understand non-redundancy of relations and their possible asymptotic behaviors and interrelationships. We systematically pursue these goals in this paper, and make numerous contributions advancing our understanding of non-redundancy on multiple fronts.

\subsection{Our Contributions}

We now present the contributions which we make to the theory of non-redundancy in this paper. In particular, we divide our contributions into three broad categories. First, we improve our understanding of the possible growth rates of non-redundancy functions. For instance, we show that for any rational number $p/q \ge 1$, there exists a predicate $R_{p,q}$ for which $\NRD(R_{p,q}, n) = \Theta_{p,q}(n^{p/q})$. Second, we shed new light on the nature of predicates $R$ for which $\NRD(R, n)$ grows as a linear function of $n$. Our contributions here include discovering a new universal algebraic object known as \emph{Catalan polymorphisms} which gives new insights into the nature of Mal'tsev embeddings, a general technique for establishing a predicate has linear non-redundancy. Third, to more deeply understand the logical underpinnings of non-redundancy, we develop a universal algebraic theory of a generalization of non-redundancy known as conditional non-redundancy.  As an application, we show that many hypergraph Tur{\'a}n problems in extremal combinatorics (including the notorious Erd{\H o}s box problem) can be recast in the language of conditional non-redundancy.

\subsubsection{Possible Growth Rates for Non-redundancy}

We begin by discussing our results on the possible growth rates of non-redundancy. For any non-trivial relation $R$ of arity $2$ (over any finite domain), it was known that $\NRD(R, n)$ is either $\Theta(n)$ or $\Theta(n^2)$~\cite{bessiere2020Chain,butti2020}. In \cite{brakensiek2024Redundancy}, the authors give the first example of a relation, with arity $3$, whose NRD is asymptotically not a polynomial function, and lies between $\Omega(n^{1.5})$ and $\Oh(n^{1.6})$.
This hints at the richness of non-redundancy and raises the question of whether the growth rate of NRD can have a fractional exponent. As our first result, we give a strong positive answer to this question, showing in fact that \emph{every} rational exponent occurs as the NRD of some relation.
\begin{theorem}[Every fractional exponent as NRD]
\label{thm:intro:every-rational}
For every rational number $p/q \ge 1$, there is a relation $R_{p,q}$ such that $\NRD(R_{p,q}, n)  = \Theta_{p,q}(n^{p/q})$.
\end{theorem}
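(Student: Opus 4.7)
The plan is to reduce the question to an extremal hypergraph problem. Recent breakthroughs in extremal combinatorics---starting with Bukh's theorem that every rational in $[1,2]$ is realized as the Tur\'an exponent of a finite graph family, and extended to higher uniformities via analogous realizations or by product/blow-up constructions---guarantee that for every rational $r = p/q \ge 1$ there exists a finite family $\cF_{p,q}$ of $k$-uniform hypergraphs (for some $k$ depending on $p/q$) with $\ex(n, \cF_{p,q}) = \Theta(n^{p/q})$. The goal is then to design a $k$-ary predicate $R_{p,q}$ so that, under the natural encoding ``vertices~$\leftrightarrow$~variables, hyperedges~$\leftrightarrow$~constraints'', the non-redundant instances of $\CSP(R_{p,q})$ correspond (up to constant factors) to $\cF_{p,q}$-free hypergraphs, whence $\NRD(R_{p,q},n)$ can be read off from $\ex(n,\cF_{p,q})$.

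Concretely, I would fix a finite domain $D$ and a relation $R_{p,q} \subseteq D^k$ tailored to $\cF_{p,q}$ with two properties: (i)~each individual constraint admits a private falsifying assignment that extends to satisfy every other constraint in any $\cF_{p,q}$-free instance; (ii)~once a copy of some $F \in \cF_{p,q}$ appears on a set of vertices, one of its constraints becomes implied by the others. A natural candidate is to take $R_{p,q}$ to be the solution set of a system of linear equations over a finite abelian group (say $\F_2^d$), chosen so that a signed sum of the constraints telescopes to a trivial identity exactly when the underlying hyperedges form a copy of some $F \in \cF_{p,q}$. With such a predicate in hand, the upper bound $\NRD(R_{p,q}, n) \le O(n^{p/q})$ falls out because any non-redundant instance is $\cF_{p,q}$-free and hence has at most $\ex(n, \cF_{p,q})$ constraints; the matching lower bound is obtained by instantiating the encoding on an extremal $\cF_{p,q}$-free hypergraph and verifying non-redundancy constraint by constraint.

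The main obstacle is engineering the predicate $R_{p,q}$ so that the ``redundancy $\iff$ contains a copy of some $F \in \cF_{p,q}$'' correspondence is tight in both directions, uniformly over all rationals $p/q$. On the upper-bound side one must rule out spurious redundancies coming from global unsatisfiability or from unrelated algebraic dependencies among constraints, and on the lower-bound side one needs enough degrees of freedom in $D$ to guarantee a private falsifying assignment for every constraint in an extremal instance. A secondary technical issue, needed to obtain the promised extensions to CSP kernelization and sparsification, is that the construction should be parametric and combinatorial in $\cF_{p,q}$ rather than relying on idiosyncratic features of one particular extremal family; this argues for a uniform group-theoretic or linear-algebraic encoding scheme that interacts cleanly with the sparsification and kernelization upper bounds already in the literature.
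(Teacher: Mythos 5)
Your plan takes a genuinely different route from the paper, but it contains a gap that appears fatal to the ``natural candidate'' you propose.

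Your central suggestion is to take $R_{p,q}$ to be the solution set of a system of linear equations over a finite Abelian group (say $\F_2^d$), arranged so that constraints become redundant exactly when a forbidden configuration $F \in \cF_{p,q}$ appears. But any relation that is the solution set of equations over an Abelian group admits an Abelian group embedding, hence a Mal'tsev polymorphism, and therefore has $\NRD(R_{p,q}, n) = \Oh(n)$ by \Cref{thm:Malt-embedding}. So for any target exponent $p/q > 1$ this candidate is dead on arrival: linear predicates are precisely the ``easy'' case, not the source of fractional exponents. The predicate realizing a fractional exponent must be genuinely nonlinear (the paper's $\ORDP_{p,q}$ is the projection relation $\DP_{p,q}$, which \emph{is} Abelian, intersected with a nonlinear $\OR_p$ condition; that intersection is what destroys the Mal'tsev polymorphism and creates room for a fractional exponent).

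The deeper issue is the ``redundancy $\iff$ contains some $F \in \cF_{p,q}$'' correspondence you hope to engineer. For linear predicates this correspondence cannot be tight for any finite family $\cF_{p,q}$: linear dependence among constraints is an unbounded-girth phenomenon, so a non-redundant instance is \emph{not} characterized by excluding any finite list of configurations. The paper makes exactly this point in \Cref{ex:r-LIN}: for $r$-ary linear equations, any finite forbidden family $\cH(Q)$ has $\ex_r(n, \cH(Q)) = \Omega(n^{r/2 - o(1)})$ even though $\NRD(R, n) = \Oh(n)$. So both directions of your ``$\iff$'' fail simultaneously. Even stepping away from linear encodings, insisting that non-redundant instances of a \emph{single} predicate coincide (up to constants) with $\cF$-free hypergraphs for a fixed finite $\cF$ is a strong structural demand that the paper's theory (Section~\ref{sec:hypergraph-turan}) shows is only guaranteed when one ranges over an infinite family of predicates, not a single one.

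Two smaller points. First, Bukh's theorem gives finite families of \emph{graphs} realizing any rational exponent in $[1,2]$; for higher arities the relevant result is Frankl's~\cite{frankl1986All}, which the paper cites in the remark after the theorem, and even Frankl's construction would need to be verified to produce families amenable to your encoding. Second, the paper bypasses all of this: its upper bound uses conditional non-redundancy (scaffolding by the purely linear $\DP_{p,q}$, which \emph{does} have linear NRD) together with a direct Kruskal--Katona shadow argument, and the lower bound is the elementary observation that any non-redundant $\OR_p$ instance of size $\binom{n}{p}$ lifts to $\ORDP_{p,q}$ by adding $n^q$ padding variables. No black-box extremal existence result is invoked, and the ``redundancy $\iff$ forbidden configuration'' equivalence is never needed.
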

The arity of the relation $R_{p,q}$ can be taken to be $p$.  In particular there is an arity $3$ relation $R$ with $\NRD(R, n) = \Theta(n^{1.5})$. We discuss this construction in more detail in the technical highlights (Section~\ref{subsec:tech-highlights}). Note that this result may be viewed as a logical analogue of a classical result in extremal combinatorics due to Frankl~\cite{frankl1986All}, who showed that every rational $p/q$ appears as the growth rate of some hypergraph Tur{\'a}n problem. 

The predicates used to prove \Cref{thm:intro:every-rational} also have significant consequences for the concept of \emph{kernelization}, which is a formalization of polynomial-time preprocessing with performance guarantees for NP-hard decision problems~\cite{FominLSZ19}. In the context of a CSP over a finite constraint language~$\Gamma$ consisting of one or more finite relations, a kernelization algorithm with size bound~$f(n)$ is a polynomial-time algorithm that reduces any $n$-variable instance of $\CSP(\Gamma)$ into an equisatisfiable instance on at most~$f(n)$ constraints. A priori, the NRD growth rate of the predicates in~$\Gamma$ seems unrelated to the best-possible kernelization size bound for~$\CSP(\Gamma)$. In one direction, bounds on NRD may be non-constructive and therefore not achievable by a polynomial-time algorithm; in the other direction, a kernelization algorithm is allowed to modify the instance by changing constraints (or even introducing new ones) to arrive at an equisatisfiable instance of small size, while a reduction to a non-redundant instance can only omit constraints. Surprisingly, it turns out that in many cases the best-possible kernelization size for~$\CSP(\Gamma)$ coincides with the NRD value of the most difficult predicate in~$\Gamma$ (up to~$n^{o(1)}$ factors).

To make this precise, define the \emph{kernelization exponent} of $\csp(\Gamma)$ as the infimum of all values~$\alpha$ for which~$\CSP(\Gamma)$ has a kernelization of size~$\Oh(n^{\alpha})$. Chen, Jansen, and Pieterse~\cite{chen2020BestCase} presented a framework based on bounded-degree polynomials to derive kernelization algorithms for various CSPs, which was later generalized by Lagerkvist and Wahlstr\"{o}m~\cite{lagerkvist2020Sparsification} using algebraic tools. Through matching lower bound constructions, conditional on the complexity-theoretic assumption \ncontainment, the exact kernelization exponent is known for many CSPs~\cite{DellM12,chen2020BestCase,lagerkvist2020Sparsification}. All kernelization exponents for~$\CSP(\Gamma)$ known to date are \emph{integers}. Even for kernelization algorithms outside the realm of CSPs, to the best of our knowledge all NP-hard problems for which matching kernelization upper and lower bounds are known (e.g.~\cite{DellM12,dell2014Satisfiability,JansenW24}) have an \emph{integral} kernelization exponent. This raises the question whether~$\CSP(\Gamma)$ has an integral kernelization exponent for each choice of~$\Gamma$. An adaptation of \Cref{thm:intro:every-rational} to the setting of kernelization, described in \Cref{app:ker}, provides a strong negative answer to this question. We show that for each choice of integers~$p \geq q \ge 1$, there is a constraint language~$\Gamma$ whose kernelization exponent is exactly~$\frac{p}{q}$ (assuming \ncontainment).

As our next result we show that the NRD of arity $3$ relations itself has a very rich behavior, and can precisely capture the famous extremal graph theory problem of the largest number of edges in a graph of girth at least $2k$, for any $k \ge 2$. This highlights that the study of non-redundancy may offer powerful new viewpoints on long-standing problems in extremal combinatorics. %

\begin{theorem}[Infinitely many ternary NRD exponents approaching $2$]
\label{thm:intro:all-girths}
For every integer $k\ge 2$, there is a ternary relation $R_{2k}$ such that $\NRD(R_{2k}, n)  = n \cdot  \Theta_k(\ex(n, \{C_3,\dots,C_{2k-1}\})$ where $\ex(n, \{C_3,\dots,C_{2k-1}\})$ is the largest number of edges in an $n$-vertex graph with no cycle of length less than $2k$. 
\end{theorem}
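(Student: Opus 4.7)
The plan is to construct a ternary relation $R_{2k}$ whose non-redundant $\CSP(R_{2k})$ instances correspond directly to graphs of girth at least $2k$ on the variable set, with the extra factor of $n$ coming from a ``label'' coordinate that can range over all variables. Throughout, I would think of a constraint $R_{2k}(u,v,w)$ as an ``edge $\{u,v\}$ decorated by label $w$'', choosing $R_{2k}$ (over a domain of size depending only on $k$) so that the three coordinates play asymmetric roles and the relation is symmetric in its first two arguments.

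For the lower bound $\NRD(R_{2k},n) = \Omega_k(n \cdot \ex(n,\{C_3,\ldots,C_{2k-1}\}))$, I would fix an extremal graph $G=(V,E)$ with $|V|=n$, girth at least $2k$, and $|E|=\ex(n,\{C_3,\ldots,C_{2k-1}\})$, and form the instance with variable set $V$ and constraint set $\{R_{2k}(u,v,w) : \{u,v\}\in E,\ w\in V\}$. To certify non-redundancy of a chosen constraint $R_{2k}(u,v,w)$, I would design a killer assignment using the BFS tree of $G$ rooted at $w$: since $G$ has girth at least $2k$, the truncation at depth $k-1$ is a genuine tree, so a value for each nearby variable can be defined as a function of its tree-distance from $w$ together with a local perturbation along the targeted edge $\{u,v\}$. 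The relation $R_{2k}$ should be crafted so that this assignment violates the chosen constraint while leaving all others satisfied, since constraints farther than $k-1$ from the root see only a trivial unperturbed pattern and nearby ones are protected by the tree structure.

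For the upper bound $\NRD(R_{2k},n) = O_k(n \cdot \ex(n,\{C_3,\ldots,C_{2k-1}\}))$, I would bucket the constraints of any non-redundant instance by their third variable, obtaining for each $w \in V$ a graph $H_w$ whose edges are the pairs $\{u,v\}$ with $R_{2k}(u,v,w)$ in the instance. The technical core is showing that each $H_w$ has girth at least $2k$: a cycle of length $\ell < 2k$ in $H_w$ should, by the algebraic structure of $R_{2k}$, force one constraint on the cycle to be implied by the remaining ones, contradicting non-redundancy. Summing $|E(H_w)| \leq \ex(n,\{C_3,\ldots,C_{2k-1}\})$ over the at most $n$ labels then yields the bound.

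The main obstacle is the simultaneous design of $R_{2k}$: it must be restrictive enough that any short cycle in some $H_w$ provably witnesses a redundancy, yet permissive enough that high-girth graphs admit the killer assignments required by the lower bound. I expect the correct $R_{2k}$ to encode a ``length-$k$ propagation'' relationship between $u$, $v$, and $w$ (for instance, an equation in a small abelian group or over a small alphabet) so that closed walks of length less than $2k$ in $H_w$ collapse algebraically to a redundancy while tree-like neighborhoods retain enough slack to host the killer assignments.
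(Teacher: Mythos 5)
Your overall architecture is right — build a ternary relation whose first two coordinates behave like the $2k$-cycle, whose third coordinate acts as a label, bucket a non-redundant instance by its label coordinate into graphs $H_w$, argue high girth for each $H_w$, and conversely use extremal high-girth graphs for the lower bound. This is essentially the paper's construction $R_{2k} := (C_{2k} \times \{0,1\}) \setminus \{(0,0,0)\}$. But there is one genuine missing idea, and it is not a technicality.

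The claim ``a cycle of length $\ell < 2k$ in $H_w$ forces one constraint on the cycle to be implied by the remaining ones'' is \emph{false} for unconditional non-redundancy. The witness that kills a constraint $y = (u,v,w)$ is only required to map $y$ to \emph{some} tuple outside $R_{2k}$; it need not map $y$ into the ``intended'' bad tuple $(0,0,0)$. Concretely, take $k=3$ and the four constraints $(u_1,v_1,w),(u_2,v_1,w),(u_2,v_2,w),(u_1,v_2,w)$, so $H_w$ has a $4$-cycle. The assignment $u_1,v_1 \mapsto 1$, $u_2,v_2 \mapsto 0$, $w \mapsto 1$ sends $(u_1,v_1,w)$ to $(1,1,1) \notin S_6 \supseteq R_6$ (since $(1,1) \notin C_6$) while the other three constraints are satisfied — and a symmetric argument kills each of the other three as well. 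So all four constraints are non-redundant even though $H_w$ has girth $4 < 6$. The cycle-collapse argument you want to make only works when the violating value is forced to lie in the scaffolding $S_{2k} \setminus R_{2k} = \{(0,0,0)\}$, i.e.\ for \emph{conditional} non-redundancy $\NRD(R_{2k} \mid S_{2k}, n)$.

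The missing piece is the triangle inequality for conditional NRD (Lemma \ref{lem:NRD-chain}): one partitions the constraints according to whether the killer assignment lands inside or outside $S_{2k}$, getting
\[
\NRD(R_{2k}, n) \le \NRD(R_{2k} \mid S_{2k}, n) + \NRD(S_{2k}, n).
\]
The first summand is handled by exactly your girth argument on each $H_w$ (Lemma \ref{lem:cond-R-S}); the second is $O(n^2)$ because $S_{2k} = C_{2k} \times \{0,1\}$ is a product of a binary and a unary relation (Proposition \ref{prop:S-n^2}), and $O(n^2)$ is dominated by $n \cdot \ex(n,\{C_3,\dots,C_{2k-1}\})$. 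Without this decomposition your upper bound does not close.

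Two smaller issues. First, in the lower bound your killer assignment should be built from a BFS rooted at the \emph{targeted edge} $\{u,v\}$ (assigning $x \mapsto \min(d(x,\{u,v\}), k-1)$), not from the label vertex $w$; the label coordinate plays only a trivial role ($w \mapsto 0$, all other labels $\mapsto 1$). Second, it is cleaner to keep the label set disjoint from $V$ (making the instance tripartite, as Lemma \ref{lemma:reduction:to:multipartite} permits), since otherwise a variable can appear both as an edge endpoint and as a label, over coordinates whose intended domains differ.
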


It is believed that $\ex(n, \{C_3,\dots,C_{2k-1})) = \Theta_k(n^{1+1/k})$ and it is known that it grows as $n^{1+\Theta(1)/k}$~\cite{furedi2013history}. In particular, this implies that $\NRD(R_{2k}, n) = n^{2+\Theta(1)/k}$, showing that infinitely many exponents are possible for ternary relations. 

Recall that the NRD of ternary relations lies in the range $[\Omega(n),\Oh(n^3)]$. It is known that for a ternary relation $R$, either $\NRD(R, n) = \Theta(n^3)$ or $\NRD(R, n)= \Oh(n^{2.75})$~\cite{carbonnel2022Redundancy}, so the NRD exponent is bounded away from $3$. Furthermore, \Cref{thm:intro:all-girths} shows that one can approach quadratic NRD arbitrarily closely. Can the NRD exponent of ternary relations approach $1$ arbitrarily closely? We will soon discuss a candidate family of predicates to address this question (Theorem~\ref{thm:intro:ternary-nrd-near-linear}) but first we describe our results for languages with {\em near-linear} NRD.

\subsubsection{On Linear Non-redundancy: Catalan polymorphisms and Embeddings}
Recall that by the recent general result of \cite{brakensiek2024Redundancy} a linear bound on $\NRD(\overline{R})$ implies a sparsifier for the associated $\CSP(R)$ of $n \polylog(n)$ size. Hence, obtaining near-linear bounds for non-redundancy seems highly desirable, albeit  
a significant challenge, and has been the primary motivation for earlier works on non-redundancy and sparsification~\cite{jansen2019Optimal,chen2020BestCase,lagerkvist2020Sparsification,kogan2015,filtser2017Sparsification,butti2020,chen2020,khanna2024Code,khanna2024Characterizations,khanna2024optimal}.

To hint at the richness of this question, consider the relation $\oneinthree := \{(1,0,0),(0,1,0),(0,0,1)\}$ $\subset \{0,1\}^3$ whose associated CSP is the classic NP-complete problem $\oneinthree\text{-SAT}$.
Note that $\oneinthree = L \cap \{0,1\}^3$ where $L = \{(a,b,c) \in (\Z/3\Z)^3 \mid a+b+c = 1\}$. Thus, a non-redundant instance of $\CSP(\oneinthree)$ is a fortiori also a non-redundant instance of $\CSP(L)$, with the same Boolean assignments as non-redundancy witnesses.  Being an affine relation,  $\NRD(L, n) = \Oh(n)$ since a maximal non-redundant instance is the same as a basis and  a basis must have linear size. Therefore, we also have that $\NRD(\oneinthree, n) = \Oh(n)$. Such \emph{embeddings} (\Cref{subsec:embeddings}), 
and more generally Abelian group embeddings (i.e., embeddings into relations consisting of solutions to an Abelian group equation),
have been {\em the} principal tool for establishing (near) linear bounds in both the NRD~\cite{jansen2019Optimal,chen2020BestCase,lagerkvist2020Sparsification} and the sparsification context~\cite{khanna2024Code,khanna2024Characterizations}. 
Curiously, it is additionally known that one can deduce linear NRD for a relation $R$ from so-called \emph{Mal'tsev embeddings}~\cite{maltsev1961CONSTRUCTIVE,lagerkvist2020Sparsification}, i.e., when $R \subseteq D^r$ is the projection of a relation $S \subseteq E^r$ that is closed under a Mal'tsev term, i.e., a ternary 
function $\phi : E^3 \to E$ satisfying the identities $\phi(x,x,y) = \phi(y,x,x) = y$. In this context $\phi$ is also called a {\em polymorphism}. The map $\phi(x,y,z) = x \cdot y^{-1} \cdot z$ over a group $(G, \cdot)$ is the quintessential example of a Mal'tsev term.

However, a fundamental open question regarding Mal'tsev embeddings was whether they are more powerful than Abelian group embeddings, and all known Mal'tsev embeddings prior to our work coincided with Abelian embeddings. As one of our main results, we show that non-Abelian embeddings are indeed more powerful than Abelian embeddings via a concrete predicate.

\begin{theorem}
    \label{thm:intro:maltsev-vs-abelian}
    There is a relation $\PAULI \subseteq D^6$ with $|D|=3$ that admits a Mal'tsev embedding, in fact a non-Abelian group embedding, but no Abelian group embedding.
\end{theorem}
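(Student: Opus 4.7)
The plan is to realize $\PAULI$ inside the qutrit Heisenberg group $H$ over $\F_3$, namely the non-Abelian group of $3\times 3$ upper unitriangular matrices generated by $X, Y$ with $Z = [X, Y]$ central of order $3$. I would choose a $3$-element set $D \subseteq H$ that is sensitive to non-commutativity (for instance $D = \{1, X, Y\}$), and define $\PAULI := S \cap D^6$ where $S \le H^6$ is the subgroup $\{(g_1, \dots, g_6) : g_1 g_2 \cdots g_6 = 1\}$ (or a close variant, perhaps adding a fixed element of $Z(H)$ on the right to calibrate the Boolean structure). This yields a non-Abelian group embedding by construction, and hence a Mal'tsev embedding via $\phi(g, h, k) = g h^{-1} k$, since cosets of subgroups in $H^6$ are closed under coordinatewise application of $\phi$.

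For the Mal'tsev/non-Abelian portion, the verification is essentially definitional: $S$ is a subgroup of $H^6$, and the embedding uses the full non-Abelian structure of $H$ in the defining word, which one checks by exhibiting a tuple whose satisfiability genuinely depends on the commutator $[X, Y] = Z$ being non-trivial. The arity $6$ arises naturally here: one needs at least one pair of factors to witness a commutator and enough additional factors to encode membership of each variable in~$D$ via the group word, which fails for smaller arities.

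To rule out an Abelian embedding, I would argue by contradiction: suppose $\PAULI = T \cap (D')^6$ for some coset $T$ of a subgroup in $A^6$ with $A$ Abelian and $D' \subseteq A$ identified bijectively with $D$. The plan is to exhibit two tuples $\mathbf{d}, \mathbf{d}' \in D^6$ related by swapping two coordinate values, such that the Heisenberg product for $\mathbf{d}$ equals $1$ while that for $\mathbf{d}'$ equals $Z^{\pm 1} \ne 1$, so that $\mathbf{d} \in \PAULI$ but $\mathbf{d}' \notin \PAULI$. Any Abelian lift of $\mathbf{d}$ through $T$ would yield, via the Mal'tsev operation $a - b + c$ of $A$ applied to three chosen lifted tuples, a further member of $T$ whose projection forces $\mathbf{d}'$ into $\PAULI$, a contradiction.

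The main obstacle will be that an Abelian embedding has complete freedom to choose the Abelian group $A$ and the inclusion $D \hookrightarrow A$, so we cannot assume any a priori group structure on $D$. The delicate step is therefore to upgrade the informal commutator obstruction into a polymorphism-level statement that applies uniformly across every possible $A$ and $D'$. I would accomplish this by invoking the polymorphism characterization of predicates admitting an Abelian embedding (such predicates are preserved by an infinite family of affine operations over cyclic $p$-groups) together with the Catalan polymorphism framework developed earlier in the paper: $\PAULI$ inherits the Catalan polymorphisms from its Mal'tsev embedding, yet a direct enumeration, using $[X, Y] = Z \ne 1$ to produce incompatible $6$-tuples, shows that no affine operation over any Abelian $A$ can simultaneously preserve $\PAULI$.
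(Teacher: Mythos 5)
Your proposal differs from the paper's both in the choice of host group and, more importantly, in the logic of the construction, and the difference is not merely cosmetic: the key step is left as a sketch that does not yet close.

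The paper's $\PAULI$ is not defined as an intersection $S \cap D^6$ for some subgroup $S$ chosen in advance; it is reverse-engineered from the gap between the arity-$5$ Catalan patterns $\Cat_5$ and the alternating-sum patterns $\AS_5$. The six rows of the defining matrix are precisely the argument tuples appearing in the six arity-$5$ identities, with $\AS_5$ having one extra identity $((x,y,z,x,y),z)$ that $\Cat_5$ lacks. By construction, applying any alternating-sum polymorphism to the five tuples of $\PAULI$ produces $(z,z,z,z,z,z) \notin \PAULI$, and by Proposition~\ref{prop:abelian-AS} this immediately rules out \emph{every} Abelian embedding, uniformly over all Abelian $A$ and all inclusions $D \hookrightarrow A$. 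The non-Abelian embedding is then verified by a concrete coset computation in the $16$-element Pauli group (whose generators satisfy $X^2 = Y^2 = Z^2 = I$, which is what makes the Coxeter/Catalan machinery apply cleanly). Your proposal runs in the opposite direction: first fix a non-Abelian group and an implicit relation, then try to show it has no Abelian embedding. That last step is exactly the content of the theorem, and you flag it yourself as ``delicate,'' but the resolution you offer (a commutator obstruction upgraded ``to a polymorphism-level statement that applies uniformly across every possible $A$'') is precisely what has to be \emph{proved}, not invoked. Nothing in your write-up shows that $S \cap D^6$ for $D = \{1,X,Y\}$ in the mod-$3$ Heisenberg group actually fails to be a coset intersection in some cleverly chosen Abelian group; since the Heisenberg group has Abelian quotient $H/Z(H) \cong \F_3^2$, you would at minimum need to show that modding out the center strictly enlarges the relation and that no other Abelian $A$ recovers it. Without the canonical $\Z^D$/alternating-sum test (Proposition~\ref{prop:abelian-AS}), the argument cannot be made uniform over all $A$.

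There is also a secondary mismatch worth noting. The qutrit Heisenberg group has generators of order $3$, whereas the Catalan polymorphisms are tied to the Coxeter group $\Cox(D)$, whose generators are involutions; the Pauli group (with $X^2 = Y^2 = Z^2 = I$) is a quotient of $\Cox(D)$ in the right way, and its odd-length coset is exactly where the embedding lands (cf.\ Remark~\ref{rem:coset}). Your Heisenberg choice does not connect to the Catalan framework you propose to lean on: there is no nontrivial homomorphism from $\Cox(D)$ into the Heisenberg group (its only involution is the identity), so the ``inherits Catalan polymorphisms'' step would be argued abstractly through the Mal'tsev term rather than through the group structure, and that loses the leverage you need. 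Concretely: to repair the proposal, you would need to (a) write out $S \cap D^6$ explicitly as a list of tuples, (b) apply the $\AS_m$ test against the free Abelian group $\Z^D$ to show the intersection with $D^6$ grows, and (c) verify the coset computation in $H$. Steps (b) and (c) are nontrivial calculations you have not carried out, and there is no a priori reason your candidate relation survives step (b).
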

The non-Abelian group hosting the claimed embedding is the 16-element Pauli group generated by the single qubit unitaries $X,Y,Z$, hence our choice of name $\PAULI$ for the relation.
The proof of the above hinges on a key concept that we introduce toward a better understanding of Mal'tsev terms, namely a higher arity generalization which we call \emph{Catalan polymorphisms}. In the group example above of a Mal'tsev term, the higher order alternating product $\psi_m(x_1,x_2,\dots,x_m) = x_1 \cdot x_2^{-1} \cdot x_3 \cdot x_4^{-1} \cdots \cdot x_{m-1}^{-1} \cdot x_m$ for odd $m \ge 3$ also satisfies a ``cancellation rule" 
$\psi_m(x_1,x_2,\dots,x_m)  =  \psi_{m-2}(x_1,\dots,x_{i-1},x_{i+2},\dots,x_m)$ whenever $x_i=x_{i+1}$. We call these \emph{Catalan identities} due to a close relationship with Catalan numbers.
As our main technical tool, we show (Theorem~\ref{thm:catalan}) that higher order polymorphisms obeying such Catalan identities exist on the basis of \emph{any} Mal'tsev term, not just in the group product case. In an Abelian group, the alternating product has extra non-contiguous cancellations (e.g., when $x_i = x_{i+3}$). Exploiting this, we define a relation over $\{x,y,z\}$ based on Catalan identities involving three variables $x,y,z$ that fails to be closed under any commutative alternating product and thus lacks an Abelian embedding.
However, the map $x \mapsto X$, $y \mapsto Y$ and $z \mapsto Z$ embeds the relation into the non-Abelian Pauli group. %

In contrast to the above separation, we show that over Boolean domains, Mal'tsev and Abelian group embeddings coincide, answering an open question of \cite{chen2020BestCase}. This proof also crucially uses the Catalan polymorphisms as a tool.

\begin{theorem}
    \label{thm:intro:Maltsev-Boolean}
Let $P \subseteq \{0,1\}^r$ be a relation which embeds into a relation $A \subseteq D^r$ that admits a Mal'tsev polymorphism. Then, $P$ also embeds into a finite Abelian group.
\end{theorem}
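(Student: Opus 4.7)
The approach is to reduce the problem to understanding the subalgebra $D_0 \subseteq D$ generated under the Mal'tsev polymorphism by the image of $\{0,1\}$, and then to argue that this subalgebra is already affine, i.e., an Abelian group on which the Mal'tsev operation is $x - y + z$. The main tool is the family of Catalan polymorphisms produced by \Cref{thm:catalan}.

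Let $\iota_0 \colon \{0,1\} \hookrightarrow D$ witness the Mal'tsev embedding, so $P = \iota_0^{-1}(A)$, and write $a = \iota_0(0)$, $b = \iota_0(1)$. Let $D_0 := \langle a, b \rangle_\phi \subseteq D$ denote the subalgebra generated by $\{a, b\}$ under the Mal'tsev polymorphism $\phi$. Since $\iota_0(P) \subseteq D_0^r$, we have $P = \iota_0^{-1}(A \cap D_0^r)$, so it suffices to exhibit an Abelian group structure $(D_0, +, 0)$ with $\phi(x, y, z) = x - y + z$: given such a structure, setting $G := D_0$, $\iota := \iota_0$, and $Q := A \cap D_0^r$ immediately produces the required Abelian embedding, and $G$ is finite whenever $D$ is.

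To establish that $D_0$ is affine, invoke \Cref{thm:catalan} to obtain Catalan polymorphisms $\psi_m$ of $A$ for every odd $m \ge 1$. Since these are built from the Mal'tsev operation $\phi$, the $\phi$-subalgebra $D_0$ is closed under every $\psi_m$. When applied to inputs drawn from $\{a, b\}$, the Catalan cancellation identity $\psi_m(\ldots, x, x, \ldots) = \psi_{m-2}(\ldots)$ reduces every odd-length binary word to a unique alternating normal form, so every element of $D_0$ is of the shape $\alpha^c_k := \psi_{2k+1}(c, \bar c, c, \bar c, \ldots, c)$ for some starting letter $c \in \{a, b\}$ (with $\bar c$ the other element) and some $k \ge 0$. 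I then fix $a$ as the identity element and define $x + y := \phi(x, a, y)$ and $-x := \phi(a, x, a)$. The main technical step is to verify using the Catalan identities that $+$ is well-defined, commutative, and associative on the $\alpha^c_k$, and that under this operation $\phi$ becomes the Abelian term $x - y + z$.

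The main obstacle lies in rigorously verifying commutativity of $+$. A generic two-generated Mal'tsev algebra need not be affine, and the Pauli example behind \Cref{thm:intro:maltsev-vs-abelian} exhibits exactly such a failure at domain size three. What makes the Boolean case special is that over a binary alphabet, every Catalan normal form is uniquely indexed by a (starting letter, length) pair; the non-Abelian obstructions that appear in Pauli rely on interactions among three or more distinct values and have no room to manifest when only two generators are available. Carrying out this ``Boolean collapse'' — for instance by induction on the Catalan complexities of $\alpha^c_j$ and $\alpha^{c'}_k$, using the cancellation identities to transpose them within a longer Catalan word — is the main technical content of the proof.
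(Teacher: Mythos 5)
Your plan — pass to the $\phi$-subalgebra $D_0 = \langle a,b\rangle_\phi$ and show it is affine — contains two gaps that are not just technical loose ends but places where the argument, as stated, is actually false.

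First, the claim that ``every element of $D_0$ is of the shape $\psi_{2k+1}(c,\bar c,c,\ldots,c)$'' does not hold. The subalgebra $D_0$ is closed under \emph{all} term operations in the clone $[\phi]$, not only the Catalan terms. For instance $\phi(\phi(a,b,a),a,b)$ lies in $D_0$ but is a different term from $\psi_5(a,b,a,a,b)$ (the latter simplifies to $a$ by Mal'tsev/Catalan cancellation, but the former is some unconstrained element of a general Mal'tsev algebra). Theorem~\ref{thm:catalan} tells you that Catalan terms \emph{exist} in $[\phi]$; it does not tell you that nested applications of $\phi$ can always be rewritten into a Catalan normal form. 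That rewriting would amount to asserting confluence of the term rewriting system generated by the Mal'tsev identities alone, which fails in general.

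Second, and more fundamentally, a two-generated subalgebra of a Mal'tsev algebra need not be affine. On any domain $D$ with $|D|\geq 3$ you can take the affine operation $x-y+z$ and redefine it on inputs not touched by the Mal'tsev identities (e.g.\ on inputs of the shape $(u,v,u)$ with $u\neq v$) to destroy the symmetry $\phi(x,y,z)=\phi(z,y,x)$; the result is still Mal'tsev, and the subalgebra generated by two elements of $D$ is generally all of $D$ and not affine. (When $A$ is a group coset, $D_0$ \emph{is} a coset of a cyclic subgroup and hence affine, but the hypothesis is an arbitrary Mal'tsev term.) So the sub-goal you set yourself is strictly stronger than what is true, and the ``Boolean collapse'' step you flag as the main technical content is the step that cannot be carried out.

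The move that makes this work — and the one the paper uses — is to \emph{not} try to realize the Abelian group inside $D$. The host Abelian group in general lives outside $D$ entirely. The argument should stay at the level of \emph{partial} polymorphisms of $P$ over $\{0,1\}$: by Theorem~\ref{thm:catalan} the clone $[\phi]$ contains Catalan terms preserving $A$, so their restrictions along $\iota_0$ give partial Catalan polymorphisms of $P$, i.e.\ $I_{\{0,1\}}(\Cat_m)\subseteq\pPol(P)$ (Proposition~\ref{prop:catalan-pattern}). The Boolean-specific combinatorial fact you are hunting for is then Observation~\ref{obs:Boolean-cancel}: if a $\pm1$ alternating sum of $m$ bits lies in $\{0,1\}$, there must be two equal adjacent bits. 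This gives $I_{\{0,1\}}(\Cat_m)=I_{\{0,1\}}(\Bal_m)$ (Lemma~\ref{lem:bal-cat}), so $P$ is balanced (Proposition~\ref{prop:balance}), and balanced Boolean relations embed into the free Abelian group and hence into a finite Abelian quotient (Proposition~\ref{prop:abelian-AS}). Your intuition that ``over a binary alphabet the non-Abelian obstructions have no room to manifest'' is exactly right — but the place it manifests is in the identification of two pattern sets over $\{0,1\}$, not in a structure theorem for the subalgebra $D_0$.
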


One of the main remaining open questions about Mal'tsev embeddings for non-Boolean domains is whether the existence of a Mal'tsev embedding over an infinite domain always implies the existence of a Mal'tsev embedding over a finite domain. While this question remains open, we do manage to close the gap between Mal'tsev and group embeddings and prove that a Mal'tsev embedding (over a finite or infinite domain) always implies an embedding into an infinite {\em Coxeter} group. This, again, uses the novel description of Mal'tsev embeddings via Catalan polymorphisms and represents an important step towards fully characterizing Mal'tsev embeddings. In particular, the question of whether an embedding over an infinite group can always can be brought down to the finite has a higher chance of being answerable by existing techniques in group theory than the corresponding question for arbitrary Mal'tsev terms.

While Mal'tsev embeddings suffice for linear NRD, it is not clear if they are the only reason governing linear NRD. We define a family of ternary relations, alluded to earlier, which demonstrate that if Mal'tsev embeddings are necessary for linear NRD, they are ``barely so," in the sense that one can obtain NRD at most $n^{1+\eps}$ for any $\eps > 0$ while lacking a Mal'tsev embedding.

\begin{theorem}
    \label{thm:intro:ternary-nrd-near-linear}
    For every odd $m$, there is a ternary predicate $\CYCs_m$ over domain size $m$ such that $\NRD(\CYCs_m, n) \le \Oh_m(n^{\frac{m}{m-1}}).$ Further, $\CYCs_m$ does not admit a Mal'tsev embedding. 
\end{theorem}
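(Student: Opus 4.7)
The plan is to handle the two assertions of the theorem independently: an extremal-combinatorial argument for the near-linear non-redundancy bound, and a separate algebraic argument leveraging the Catalan polymorphisms developed earlier in the paper to rule out a Mal'tsev embedding.

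First I would unpack the intended structure of $\CYCs_m$: a ternary relation on the cyclic domain $\Z/m\Z$ that is close to (but not quite) affine, designed so that in a suitable auxiliary graph built from any instance, a short cycle of constraints forces a telescoping identity that makes one constraint a consequence of the others. With this design in mind, the near-linear upper bound proceeds as follows: given a non-redundant instance $\cI$ on $n$ variables, I would associate to $\cI$ a (multi)graph $G_{\cI}$ on vertex set $[n]$ (possibly a bipartite refinement carrying a bounded-size gadget per constraint, so as to faithfully record the direction of propagation of values through the cyclic structure). The key structural lemma is that any cycle of length $2(m-1)$ in $G_{\cI}$ forces the corresponding constraints to telescope, producing a constraint implied by the others and contradicting non-redundancy. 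Applying the Bondy–Simonovits bound $\ex(n, C_{2(m-1)}) = \Oh_m(n^{1 + 1/(m-1)})$ to $G_{\cI}$ then yields $\NRD(\CYCs_m, n) \le \Oh_m(n^{m/(m-1)})$.

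For the absence of a Mal'tsev embedding, my plan is to argue by contradiction. Suppose $\CYCs_m$ embeds into a relation admitting a Mal'tsev polymorphism. By Theorem~\ref{thm:catalan}, this produces, for every odd $k \ge 3$, a $k$-ary (partial) polymorphism $\psi_k$ of $\CYCs_m$ satisfying the Catalan identities $\psi_k(x_1,\ldots,x_k) = \psi_{k-2}(x_1,\ldots,x_{i-1},x_{i+2},\ldots,x_k)$ whenever $x_i = x_{i+1}$. I would then exhibit an explicit ``test configuration'' of $m$ tuples $t_1,\ldots,t_m \in \CYCs_m$ laid out cyclically, such that iterated Catalan cancellation of adjacent equal entries inside $\psi_m(t_1,\ldots,t_m)$ can be carried out in two different orders. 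One order lands inside $\CYCs_m$ while the other lands outside, contradicting the polymorphism property. The odd-parity hypothesis on $m$ is what drives the asymmetry between the two cancellation orders.

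The main obstacle, as I see it, is establishing the ``$2(m-1)$-cycle kills a constraint'' claim at the heart of the upper bound: it requires promoting the merely \emph{partial} affine structure of $\CYCs_m$ to a genuine global telescoping identity around the cycle, and checking that enough cyclic-group cancellation survives the ``defect'' that distinguishes $\CYCs_m$ from a fully affine relation (which is precisely the defect used in the second half of the theorem to block a Mal'tsev embedding). A secondary challenge is to design the test configuration for the Mal'tsev argument so that the two orders of Catalan cancellation really do produce distinguishable tuples; this is where I expect the odd-$m$ hypothesis, together with a careful accounting of ``signs'' along the cycle, to force an irreconcilable discrepancy.
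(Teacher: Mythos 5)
Both halves of your proposal have genuine gaps.

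For the upper bound, the ``telescoping around a short cycle'' idea only applies to a constraint whose non-redundancy witness lands at $(0,0,0)$. But non-redundancy of $\CSP(\CYCs_m)$ permits the witnessing assignment $\sigma_y$ to send $y$ anywhere in $D^3 \setminus \CYCs_m$---for instance to a tuple whose coordinates do not sum to $0$---and then no cyclic-group telescoping is available. The paper handles this via the triangle inequality (\Cref{lem:NRD-chain}), writing $\NRD(\CYCs_m,n) \le \NRD(\CYCs_m \mid \CYC_m, n) + \NRD(\CYC_m,n)$. The conditional term does restrict witnesses to $(0,0,0)$, and that is where the girth/extremal argument lives (the correct threshold is girth $\ge 2m$, not $2(m-1)$, though this does not change the Bondy--Simonovits exponent). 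The second term requires showing $\NRD(\CYC_m,n)=\Oh_m(n)$, which the paper proves via an explicit Abelian group embedding of $\CYC_m$; this nontrivial step is entirely absent from your sketch. Without the conditional decomposition and the linear bound on $\NRD(\CYC_m,n)$, the argument does not close.

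For the absence of a Mal'tsev embedding, the ``two orders of cancellation yield conflicting answers'' strategy cannot work in principle: the Catalan identities of Definition~\ref{def:Catalan-identities} (equivalently, the well-definedness of reduced words in $\Cox(D)$, cf.\ Proposition~\ref{prop:catalan-pattern}) guarantee that any function satisfying them produces an output \emph{independent} of the cancellation order, so you can never derive two distinct single-symbol outcomes from the same input word. The actual contradiction is that a single application of $\Cat_{2m-1}$ to all $2m-1$ tuples of $\CYCs_m$, arranged as the columns of a $3\times(2m-1)$ matrix, has unique output $(0,0,0)\notin\CYCs_m$. The arity $m$ you propose is also too small: for $m=3$, $\Cat_3$ is just the partial Mal'tsev pattern, which \emph{does} preserve $\CYCs_3=\BCK$, so no $3$-tuple configuration can yield a violation; one genuinely needs arity $2m-1$. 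The role of odd $m$ is that it makes the row-by-row Catalan reduction of the constructed matrix terminate in a single symbol in each coordinate---not to create a discrepancy between cancellation orders.
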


Here, $\CYCs_m := \{(x,y,z) \in (\Z/m\Z)^3 : x+y+z = 0 \wedge y-x \in \{0,1\}\} \setminus \{(0,0,0)\}$. Once again, the Catalan terms that we introduced and related to Mal'tsev terms play a key role in proving the above statement. We
show that $\CYCs_m$ does not admit a Catalan polymorphism of arity $2m-1$ and thus also lacks a Mal'tsev embedding. This can be viewed as some evidence that $\CYCs_m$ might have super-linear NRD. Establishing such a lower bound remains an open question.

\subsubsection{A Theory of Conditional NRD with Connections to Hypergraph Tur{\'a}n}

An important tool to study NRD, including establishing aforementioned Theorems~\ref{thm:intro:every-rational}, \ref{thm:intro:all-girths}, and \ref{thm:intro:ternary-nrd-near-linear}, is the concept of \emph{conditional non-redundancy} defined in \cite{brakensiek2024Redundancy} (although implicit in \cite{bessiere2020Chain}). Here, when studying $\NRD(R, n)$ for $R \subsetneq D^r$, we focus on a relation $S \supsetneq R$ (often with just one more tuple) and require the non-redundant instance of $\CSP(R)$ to have the property that for every constraint, there is an assignment violating only that constraint by landing within $S \setminus R$ (as opposed to anywhere in $D^r \setminus R$). We denote such conditional non-redundancy by $\NRD(R \mid S, n)$. It is useful since it can be much easier to upper bound $\NRD(R \mid S, n)$ (which is a priori at most $\NRD(R, n)$), and for judicious choices of $S$, one can also deduce a good upper bound on $\NRD(R, n)$ via a ``triangle inequality'' $\NRD(R, n) \le \NRD(R \mid S, n) + \NRD(S, n)$. En route to Theorem~\ref{thm:intro:all-girths} we give a full \emph{characterization} of all possible behaviors of conditional redundancy for binary relations over arbitrary finite domains: For every $R \subsetneq S \subseteq D^2$, either $\NRD(R \mid S, n) = O_D(n)$, or for some $k \ge 2$, $\NRD(R \mid S, n) =  \Theta(\ex(n, \{C_3,\dots,C_{2k-1}\}))$. 

Conditional NRD appears to be a powerful tool for more fine-grained NRD results (necessary by Theorem~\ref{thm:intro:every-rational}) and it seems apt to study it in the {\em universal algebraic} setting. Here, the basic idea is to forget about concrete relations and focus on algebraic conditions that can be derived from ``higher order'' homomorphisms (\emph{polymorphisms}), of which the previously mentioned Mal'tsev term is a notable special case. 
However, in universal algebra one often assumes {\em total} polymorphisms that, roughly, correspond to classical gadget reductions between problems. For example, 1-in-3-SAT and 3-SAT are polynomial-time interreducible, and in fact have the same polymorphisms, but different NRD (linear versus cubic). A second complication in
the conditional $\NRD(R \mid S, n)$ setting is that we need to take the relationship between $R$ and $S$ into account. We prove that {\em partial promise polymorphisms}, applied componentwise on tuples from $R$ and producing a tuple in $\widetilde{S} = D^r \setminus (S \setminus R)$, govern $\NRD(R \mid S, n)$. 
However, describing $\NRD(R \mid S, n)$ by explicitly enumerating all possible partial promise polymorphisms is not realistic. As a comparison, imagine a world where we tried to understand calculus through large tables of function values, rather than using equations or  further abstractions such as only looking at the degree or the derivative of a polynomial. The same applies here: we need abstractions that allow us to concentrate on certain key properties. We develop a general theory 
and prove that the most significant algebraic properties come from (restricted) systems of linear identities/equations which we refer to as {\em polymorphism patterns} and let $\pattern(R,\widetilde{S})$ be the set of patterns satisfied by partial promise polymorphisms of $(R, \widetilde{S})$. See Figure~\ref{fig:3_cube} for an example of a pattern and a  satisfying partial function. We then manage to prove the following general statement.

\begin{figure}[h]
    \centering
    \begin{tikzpicture}[every node/.style={font=\small\boldmath}]

        \definecolor{eq1color}{RGB}{230, 120, 0} %
        \definecolor{eq2color}{RGB}{0, 100, 230} %
        \definecolor{eq3color}{RGB}{40, 180, 40} %

        \node[text=eq1color] (eq1) at (0,3) {$U_3(x,x,x,y,y,y,y) = x$};
        \node[text=eq1color] (bool1a) at (5,3) {$u_3(0,0,0,1,1,1,1) = 0$};
        \node[text=eq1color] (bool1b) at (9.5,3) {$u_3(1,1,1,0,0,0,0) = 1$};

        \node[text=eq2color] (eq2) at (0,2) {$U_3(x,y,y,x,x,y,y) = x$};
        \node[text=eq2color] (bool2a) at (5,2) {$u_3(0,1,1,0,0,1,1) = 0$};
        \node[text=eq2color] (bool2b) at (9.5,2) {$u_3(1,0,0,1,1,0,0) = 1$};

        \node[text=eq3color] (eq3) at (0,1) {$U_3(y,x,y,x,y,x,y) = x$};
        \node[text=eq3color] (bool3a) at (5,1) {$u_3(1,0,1,0,1,0,1) = 0$};
        \node[text=eq3color] (bool3b) at (9.5,1) {$u_3(0,1,0,1,0,1,0) = 1$};

        \node[text=black] (boolConst1) at (5,0) {$u_3(0,0,0,0,0,0,0) = 0$};
        \node[text=black] (boolConst2) at (9.5,0) {$u_3(1,1,1,1,1,1,1) = 1$};

    \end{tikzpicture}
    \caption{The 3-universal/cube pattern $U_3$ and the resulting partial Boolean function (named $u_3$). Each Boolean tuple is colored with the color of the identity from which it can be derived. The two constant Boolean tuples can be produced by any identity and are colored in black.}

    \label{fig:3_cube}
\end{figure}

\begin{theorem}\label{thm:intro:pattern-NRD}
Let $S_1 \subseteq T_1$ and $S_2 \subseteq T_2$ be pairs of relations. If \[\pattern(S_2, \widetilde{T_2}) \subseteq \pattern(S_1, \widetilde{T_1})\] then \[\NRD(S_1 \mid T_1, n) = \Oh(n + \NRD(S_2 \mid T_2, n)).\]
\end{theorem}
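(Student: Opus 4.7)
The plan is to prove the bound by a direct reduction: given a non-redundant instance $\mathcal{I}_1 = (V, \mathcal{C}_1)$ of $\CSP(S_1)$ conditioned on $T_1$ with $|V| = n$ and $|\mathcal{C}_1| = N$, I would construct a non-redundant instance $\mathcal{I}_2$ of $\CSP(S_2)$ conditioned on $T_2$ with $O(n)$ variables and at least $N - O(n)$ constraints. This immediately yields $N \le O(n) + \NRD(S_2 \mid T_2, O(n))$, and hence the claim upon taking the supremum over $\mathcal{I}_1$.

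The key translation device is the pattern--polymorphism correspondence developed earlier in the paper. For each $c \in \mathcal{C}_1$, fix a witness assignment $\tau_c : V \to D_1$ that places $c$ in $T_1 \setminus S_1$ but every other constraint in $S_1$, and bundle these witnesses into a single map $F : V \to D_1^{N}$ by $F(v) = (\tau_c(v))_{c \in \mathcal{C}_1}$. On the scope of any constraint $c \in \mathcal{C}_1$, the tuple $(F(v_{c,1}), \ldots, F(v_{c,r}))$ has $S_1$-projections at every coordinate except $c$, where it projects into $T_1 \setminus S_1$; this is precisely the signature of a partial promise polymorphism of $(S_1, T_1)$ realizing some polymorphism pattern $P \in \pattern(S_1, \widetilde{T_1})$. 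The containment hypothesis then supplies a matching partial promise polymorphism $g$ of $(S_2, T_2)$ realizing $P$. Using $g$ as a gadget, I would replace each $v \in V$ by a bounded number of copy variables indexed by the relevant coordinates of $g$'s input, and translate each constraint of $\mathcal{I}_1$ into a constraint of $\CSP(S_2)$ on those copies; the non-redundancy witness for the translated constraint is then $g$ applied coordinate-wise to the copies of the $\tau_c$-values, which by the pp-polymorphism property of $g$ lands in $\widetilde{T_2}$ exactly on $c$.

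The main obstacle, I expect, will be controlling the uniformity of this pattern transfer so that the loss is genuinely additive $O(n)$ rather than multiplicative in $N$. The realized pattern $P$ is a joint object across all $N$ constraints, whereas the containment hypothesis yields a $(S_2, T_2)$-polymorphism for each individually specified pattern. To bridge this gap, I plan a bucketing / pigeonhole argument: classify the coordinates of $F$ by the local combinatorial type they induce near each constraint's scope (a finite classification depending only on $|D_1|$ and the arities of $S_1, T_1$), apply the containment within each bucket separately, and stitch the resulting sub-instances together on a common variable set of size $O(n)$. Losses arise only from identifying coincident coordinates and from handling boundary gadgetry; charged per variable these contribute at most a constant each, matching the advertised $O(n)$ term. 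Once this setup is in place, verifying non-redundancy of $\mathcal{I}_2$ becomes routine: for each surviving constraint, the defining pp-polymorphism property of $g$ guarantees that the coordinate-wise image of $\tau_c$ lands in $\widetilde{T_2}$ on $c$ and in $S_2$ on every other constraint, so each constraint of $\mathcal{I}_2$ retains its own dedicated witness.
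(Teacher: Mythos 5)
The proposal diverges from the paper's actual route, and in its current form it has a gap that I think is fatal rather than merely technical.

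The paper proves this theorem in two modular steps. First, a Galois correspondence (Theorem~\ref{thm:fgppp-pattern}, proved via Lemma~\ref{lemma:can_fgppp_define}) converts the hypothesis $\pattern(S_2, \widetilde{T_2}) \subseteq \pattern(S_1, \widetilde{T_1})$ into a \emph{syntactic} object, namely an fgppp-definition of $(S_1,\widetilde{T_1})$ over $(S_2,\widetilde{T_2})$. Second, Theorem~\ref{thm:fgppp-nrd} shows that fgppp-definability transfers NRD bounds, by decomposing the definition into atomic relational operations (relaxations, equality atoms, conjunction, functional guarding) and analyzing each. The syntactic formula is the device that makes the reduction uniform across all constraints of the instance. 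You propose to bypass that entirely and translate constraint-by-constraint using a polymorphism $g$ as a ``gadget,'' but a polymorphism is a \emph{function} $D_2^m \to D_2$, not a relational formula, and there is no immediate way to ``replace each $v \in V$ by a bounded number of copy variables indexed by the relevant coordinates of $g$'s input'' and obtain constraints of $\CSP(S_2)$. The object you actually need is the one the Galois correspondence produces, and your sketch never constructs it.

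There is also a sign error in the core logical move. You say the bundled witness tuple $(F(v_{c,1}),\ldots,F(v_{c,r}))$ ``is precisely the signature of a partial promise polymorphism of $(S_1,T_1)$ realizing some polymorphism pattern $P \in \pattern(S_1, \widetilde{T_1})$,'' and then invoke the containment to get a matching polymorphism $g$ of $(S_2,T_2)$ realizing $P$. But this is backwards on two counts. If a constraint of a non-redundant instance could be realized by a pattern that is \emph{preserved} by $(S_1,\widetilde{T_1})$, then by the criterion of Lemma~\ref{lem:cpol-redundancy} that constraint would be \emph{redundant}. The patterns that the witnesses carve out are exactly the ones \emph{not} in $\pattern(S_1,\widetilde{T_1})$, and one then uses the contrapositive of the containment to conclude they are not in $\pattern(S_2,\widetilde{T_2})$ either; the useful object is a violating instantiation over $D_2$, not a preserving polymorphism. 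Moreover, for a non-redundancy witness for $(S_2,T_2)$ you want a tuple in $T_2 \setminus S_2$, which is precisely what a polymorphism of $(S_2,\widetilde{T_2})$ is \emph{forbidden} from producing; again what you want is a violation, not a polymorphism.

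Finally, the ``bucketing / pigeonhole argument'' you invoke to turn a per-constraint translation into a single consistent instance with only additive $O(n)$ loss is the crux of the whole difficulty, and you leave it entirely at the level of intent. The local ``pattern type'' around a constraint depends on the pattern of all $N$ columns of $F$, which is not a finite classification; you would at minimum need to show the witness assignments can be factored through a fixed finite gadget that works simultaneously for all constraints, and that is precisely the content of the fgppp-definition you skipped. Without recovering it, the reduction does not assemble into a single well-defined instance of $\CSP(S_2 \mid T_2)$.

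To repair this along your own lines, the cleanest fix is to prove (or cite) that the pattern containment yields an fgppp-definition, and then invoke the atomic-operation analysis: each of Propositions~\ref{prop:strict-relaxation-nrd}--\ref{prop:functional-guarding-nrd} handles one step with an explicit additive or constant-factor loss, and they compose to the stated $O(n + \NRD(S_2 \mid T_2, n))$.
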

The theorem is proven by first establishing a suitable {\em Galois correspondence} between $\pattern(\cdot, \cdot)$ and sets of relations closed under conjunctive formulas with functional, unary terms ({\em functionally guarded promise primitive positive definitions}, fgppp-definitions). This generalizes  \emph{fgpp-definitions} (see Carbonnel~\cite{carbonnel2022Redundancy}) to the promise setting.
To extend this even further we consider a stronger operation that maps \emph{$c$-tuples} of variables in one relation to variables in another, allowing us to, e.g., reduce a problem $\CSP(S_1 \mid T_1)$ on $n$ variables to a problem $\CSP(S_2 \mid T_2)$ on $\Oh(n^2)$ variables. This naturally gives rise to the kind of fractional non-redundancy exponents seen in Theorem~\ref{thm:intro:every-rational}. 
The algebraic counterpart is a novel $c$-ary {\em power} operator $P^c$ on a pattern $P$ where we group together $c$-tuples of variables, and view them as a single variable.
See Figure~\ref{fig:u_hypergraph} for an example when $c = 2$. This pattern is produced by  selecting pairs of variables from the columns of any ordered selection of two equations, and viewing it as a pattern over $V^2$ instead of $V$. In the figure $(x,x), (x,y), (y,x), (y,y)$ have been renamed to $a,b,c,d$. By first relating $c$-ary powers of patterns ($\pattern^c(S,T)$) to $c$-tupled fgppp-definitions we obtain the following link to conditional NRD.

\begin{theorem} \label{thm:intro:pattern-power}
Let $(S_1, T_1)$ and $(S_2, T_2)$ be promise relations over $D_1$, respectively, $D_2$, and let $c \geq 1$.
If \[\pattern^c(S_2, \widetilde{T}_2) \subseteq \pattern(S_1, \widetilde{T}_1)\] then \[\NRD(S_1 \mid T_1, n) = O_{D_1, D_2, r_1,r_2}(n + \NRD(S_2 \mid T_2, n^c)).\]
\end{theorem}

These notions generalize and unify all known algebraic reductions in the CSP literature that involve restricted existential quantification~\cite{brakensiek2024Redundancy, carbonnel2022Redundancy, chen2020BestCase,   lagerkvist2021Coarse}. In particular, the power operation makes it possible to tightly relate $\NRD(S_1 \mid T_1, n)$ and $\NRD(S_2 \mid T_2, n)$ even when they have different exponents. This generalizes and unifies techniques implicit in Theorem~\ref{thm:intro:every-rational} and appears to be a generally useful method for bounding NRD.
For example, if $U_3 \notin \pattern(S, \widetilde{T})$ then $\NRD(S \mid T, n) = \Omega(n^3)$
and if $U_3^2 \notin \pattern(S,\widetilde{T})$ then $\NRD(S \mid T, n) = \Omega(n^{3/2})$.

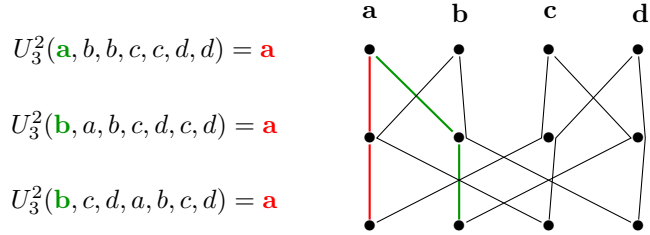
\begin{figure}[h]
    \centering
    \begin{tikzpicture}[every node/.style={font=\small}]
    \tikzstyle{vertex}=[inner sep=0pt]

\tikzstyle{hyperedge}=[thin,black]

    \definecolor{darkgreen}{RGB}{0,150,0}

        \node (eq1) at (0,2) {$U_3^2(\textbf{\textcolor{darkgreen}{a}},b,b,c,c,d,d) = \textbf{\textcolor{red}{a}}$};
        \node (eq2) at (0,1) {$U_3^2(\textbf{\textcolor{darkgreen}{b}},a,b,c,d,c,d) = \textbf{\textcolor{red}{a}}$};
        \node (eq3) at (0,0) {$U_3^2(\textbf{\textcolor{darkgreen}{b}},c,d,a,b,c,d) = \textbf{\textcolor{red}{a}}$};

        \node[vertex] (a1) at (3,2) {$\bullet$};
        \node[vertex, below=of a1] (a2) {$\bullet$};
        \node[vertex, below=of a2] (a3) {$\bullet$};
        \node[vertex] (l1) at (3,2.5) {$\textbf{a}$};
        
        \node[vertex, right=of a1] (b1) {$\bullet$};
        \node[vertex, below=of b1] (b2) {$\bullet$};
        \node[vertex, below=of b2] (b3) {$\bullet$};
        \node[vertex] (l2) at (4.2,2.5) {$\textbf{b}$};
        
        \node[vertex, right=of b1] (c1) {$\bullet$};
        \node[vertex, below=of c1] (c2) {$\bullet$};
        \node[vertex, below=of c2] (c3) {$\bullet$};
        \node[vertex] (l3) at (5.4,2.5) {$\textbf{c}$};
        
        \node[vertex, right=of c1] (d1) {$\bullet$};
        \node[vertex, below=of d1] (d2) {$\bullet$};
        \node[vertex, below=of d2] (d3) {$\bullet$};
        \node[vertex] (l4) at (6.6,2.5) {$\textbf{d}$};
        
        \draw[thick,color=red] (a1) -- (a2) -- (a3);
        \draw[thick,color=darkgreen] (a1) -- (b2) -- (b3);

        \draw[hyperedge] (b1) -- (a2.east) -- (c3);
        \draw[hyperedge] (b1) -- (b2.east) -- (d3);
        \draw[hyperedge] (c1) -- (c2.west) -- (a3);
        \draw[hyperedge] (c1) -- (d2.west) -- (b3);
        \draw[hyperedge] (d1) -- (c2.east) -- (c3);
        \draw[hyperedge] (d1) -- (d2.east) -- (d3);
    \end{tikzpicture}
    \caption{The polymorphism pattern $U_3^2$ corresponding to the square of $U_3$ from Figure~\ref{fig:3_cube}. For example, the first equation is obtained by combining the orange and blue equation in $U_3$. The corresponding 3-partite hypergraph is obtained by letting each partite set correspond to variables $a,b,c,d$ in the corresponding equation, and adding a hyperedge for each ``column'' of the equations.}%
    \label{fig:u_hypergraph}
\end{figure}

Given that polymorphism patterns strongly correspond to conditional non-redundancy, an interesting algebraic question is to fix a set of $r$-ary patterns $Q$ and study $\NRD(S \mid T, n)$ if $r$-ary $(S,T)$ is preserved by $Q$. We relate this to hypergraph Tur{\'a}n problems in the following way: by associating $Q$ to a set of hypergraphs $\cH(Q)$ one can prove that the maximum conditional non-redundancy possible exactly coincides with the size of the largest hypergraph on $n$ vertices excluding $\cH(Q)$ as subhypergraphs. This greatly expands on connections between non-redundancy and hypergraph Tur{\'a}n problems observed by previous works~\cite{bessiere2020Chain, carbonnel2022Redundancy,
brakensiek2024Redundancy}. See Figure~\ref{fig:u_hypergraph} for an example of this translation. To state the theorem formally we let $\inv^{= r}(Q)$ be the set of relation pairs of arity $r$ preserved by $Q$, which we assume is equipped with $r$ disjoint domain types ({\em sorts}) corresponding to the $r$ partite sets in a hypergraph. %

\begin{theorem}\label{thm:intro:hypergraph}
  Let $Q$ be a finite set of ($r$-ary multisorted) polymorphism patterns. Then there is a finite set $\cH(Q)$ of $r$-partite hypergraphs with the following property:
  for all $n \in N$,
  \begin{align*}
     \max_{(R,S) \in \inv^{= r}(Q)}\NRD(R \mid S, n) = \ex_r(n, \cH(Q)),
  \end{align*}
  where $\ex_r(n, \cH(Q))$ is the size of the largest $r$-partite $r$-uniform hypergraph on $n$ vertices excluding $\cH(Q)$.
\end{theorem}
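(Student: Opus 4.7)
The plan is to connect polymorphism patterns with hypergraph Turán via an explicit dictionary so that ``non-redundant conditional $\CSP$-instance'' becomes ``$\cH(Q)$-free $r$-partite $r$-uniform hypergraph''.

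First I define $\cH(Q)$. For each pattern $P \in Q$ (an $r$-ary multisorted pattern with $r$ identities written as rows), form an $r$-partite $r$-uniform hypergraph $H_P$: the $i$-th partite set $V_i$ has one vertex per distinct variable of the $i$-th identity, and every column of the equation table of $P$ (including the right-hand-side column) becomes a hyperedge $(v_1,\ldots,v_r) \in V_1 \times \cdots \times V_r$ by reading off the variable in each row. Set $\cH(Q) := \{H_P : P \in Q\}$. Any conditional instance $\cI$ of $\CSP(R \mid S)$ for $r$-ary multisorted $R, S$ on $n$ variables corresponds to an $r$-partite $r$-uniform hypergraph $G_{\cI}$ on the variable set, one hyperedge per constraint.

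For the upper bound, I fix $(R,S) \in \inv^{=r}(Q)$ and a non-redundant instance $\cI$ of $\CSP(R \mid S)$ and show that $G_{\cI}$ is $\cH(Q)$-free. Suppose $G_{\cI}$ contains a copy of $H_P$; the RHS hyperedge of this copy corresponds to a constraint $c_{\mathrm{out}}$, and the column hyperedges correspond to constraints $c_1,\ldots,c_k$. Non-redundancy of $c_{\mathrm{out}}$ yields an assignment $\sigma$ with $\sigma(c_j) \in R$ for each $j$ and $\sigma(c_{\mathrm{out}}) \in S \setminus R$. However, because $(R,S)$ is preserved by $P$, the polymorphism witnessing $P \in \pattern(R,\widetilde{S})$ applied column-wise to $\sigma(c_1),\ldots,\sigma(c_k)$ returns a tuple in $\widetilde{S} = D^r \setminus (S \setminus R)$ that equals $\sigma(c_{\mathrm{out}})$ by the combinatorial structure of $H_P$, contradicting $\sigma(c_{\mathrm{out}}) \in S \setminus R$. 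Therefore $|E(G_{\cI})| \le \ex_r(n,\cH(Q))$, and the inequality $\NRD(R \mid S, n) \le \ex_r(n,\cH(Q))$ follows uniformly over $(R,S) \in \inv^{=r}(Q)$.

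For the matching lower bound I construct a universal pair $(R^*,S^*) \in \inv^{=r}(Q)$ using the Galois correspondence between $\pattern$ and fgppp-definitions promised in Theorem~\ref{thm:intro:pattern-NRD}: on a suitable multisorted domain realising the free $Q$-model on boundedly many generators, take $R^*$ to be the smallest relation closed under all polymorphisms satisfying $Q$ that contains the canonical seed tuples, and let $S^* \setminus R^*$ consist of those tuples reachable as a ``single-edge violation'' when no full copy of any $H_P$ is available. For any $\cH(Q)$-free $r$-partite $r$-uniform hypergraph $G$ on $n$ vertices, the instance placing one $R^*$-constraint on each hyperedge is non-redundant: $\cH(Q)$-freeness precludes the polymorphism-based chain of implications that would force some constraint's tuple out of $S^* \setminus R^*$, so each edge admits a violating assignment landing in $S^* \setminus R^*$. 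The main obstacle is exactly this lower-bound construction: realising $(R^*, S^*)$ over a \emph{finite} multisorted domain of size depending only on $Q$, and verifying that $\cH(Q)$-freeness of $G$ really excludes every polymorphism-based redundancy (not just those coming from a single $H_P$ embedding). Finiteness is handled by exploiting that $Q$ is finite so the freely $Q$-generated multisorted algebra on a bounded generator set is finite; the second point requires a careful induction paralleling the closure arguments behind the Galois correspondence, and possibly enlarging $\cH(Q)$ by sub-hypergraphs that still carry valid $P$-assignments, so that the extremal number on the right-hand side matches the CSP quantity on the left exactly.
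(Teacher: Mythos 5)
Your upper-bound reasoning is essentially the paper's Lemma~5.12 (\texttt{lem:unit-pattern-hypergraph}): applying the pattern polymorphism column-wise to the constraints forming an embedded copy of the forbidden hypergraph forces the output constraint to be conditionally redundant. That part is sound in spirit.

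However, there is a genuine gap in your definition of $\cH(Q)$ that breaks the lower-bound direction, and the lower bound itself is left largely unconstructed. First, you define $\cH(Q)$ only from the patterns literally in $Q$, implicitly assuming each is a \emph{unit} pattern (one identity per sort). But a multisorted pattern $(P_1, \ldots, P_r)$ can have several identities per $P_i$; one must reduce to unit restrictions (Proposition~5.11). More importantly, the paper closes $Q$ under \emph{surjective minors} before forming the hypergraphs, setting $\cH(Q) := \{\cH(P) : P \in [Q]_{\surj}, P~\text{unit}\}$. Without this closure, $\cH(Q)$ is too small, so $\ex_r(n, \cH(Q))$ is too large, and the lower bound $\NRD \ge \ex_r(n, \cH(Q))$ fails: an $\cH(Q)$-free hypergraph $(X,Y)$ can still encode a variable-identified (contracted) copy of some $\cH(P')$ where $P'$ is a surjective minor of a $P \in Q$, and that copy witnesses redundancy that your construction cannot avoid. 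You gesture at ``possibly enlarging $\cH(Q)$ by sub-hypergraphs'' without realising that the correct enlargement is by minor contractions, not subgraphs.

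Second, your lower-bound sketch is not a proof. You propose a single ``universal'' pair $(R^*, S^*)$ built from a free $Q$-model, but this is not how the paper proceeds, and the obstacles you flag are real and unresolved. The paper instead fixes an extremal $\cH(Q)$-free hypergraph $(X,Y)$ and, for \emph{every} $Y' \subseteq Y$ and $y' \in Y'$, shows that the induced pattern $P(Y', y')$ lies outside $[Q]_{\min}$ (using the surjective-minor closure just described), then picks a separating relation pair $(R(Y',y'), S(Y',y')) \in \inv^{=r}(Q)$ that is not preserved by $P(Y',y')$, and tensors all of these together (Corollary~5.15). Minimal redundant subinstances yield exactly such a pattern (Lemma~5.13), so the tensor-product pair makes $(X,Y)$ conditionally non-redundant. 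This construction crucially depends on $n$ via $(X,Y)$, does not require any free algebra or a single universal language, and the key combinatorial claim (``$P(Y', y') \notin [Q]_{\min}$'') is precisely what the surjective-minor closure of $\cH(Q)$ is designed to secure.
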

For example, if $U_3^2 \in \pattern(R, \widetilde{S})$ then any conditionally non-redundant instance $(X,Y)$ of $\CSP(R \mid S)$ is free of the hypergraph in Figure~\ref{fig:u_hypergraph} (as well as other hypergraphs obtained by restricting the pattern $U^2_3$).
One interpretation of this result is that {\em finite} conditions $Q$, which algebraically seem rather weak when interpreted over a specific language in the promise setting, are best studied in the Tur{\'a}n setting, and contain notoriously hard open problems. For example, the cube pattern $U_3$ from Figure~\ref{fig:3_cube} results in the Tur{\'an} problem that is known as \emph{Erd\H{o}s box problem} ~\cite{erdos1964extremal,katz2002Remarks,gordeev2024Combinatorial}, with well-studied generalizations of the box problem corresponding to $U_k$ for arbitrary $k > 3$~\cite{erdos1964extremal,furedi2013history,conlon2021Random}. %
 As such, resolving conditional non-redundancy in the algebraically weak setting of finite sets of patterns cannot be done without simultaneously advancing other areas of mathematics. More optimistically, we hope these connections between hypergraph Tur{\'a}n problems and universal algebra could eventually lead to progress on these long-standing problems.

\subsection{Technical Highlights} \label{subsec:tech-highlights}

We now discuss the techniques utilized to prove the main results in this paper. Given the breadth of results we prove about NRD, we highlight a couple of key ideas from each area of contribution.

\subsubsection{NRD Exponents}

\paragraph{All Rational Exponents.} %
Our (first) proof of \Cref{thm:intro:every-rational} considers the predicate~$\ORDP_{p,q}$ on~$p+p^q$ variables over the domain~$\{0,1\} \cup \{0,1\}^q$. Intuitively, a tuple~$(x_1, \ldots, x_p, \tilde{x}_1, \ldots, \tilde{x}_{p^q})$ belongs to~$\ORDP_{p,q}$ when~$(x_1, \ldots, x_p) \in \OR_p$ (the arity-$p$ OR predicate) and the following \emph{direct product} relation holds: for each~$i \in [p^q]$, the value of the \emph{padding} variable~$\tilde{x}_i$ equals the bitstring of length~$q$ obtained by concatenating the values of~$x_{t_1}, \ldots, x_{t_q}$, where~$(t_1, \ldots, t_q) \in [p]^q$ corresponds to the base-$p$ representation of~$i$. This property ensures that whenever two constraints~$y, y'$ over~$\ORDP_{p,q}$ use the \emph{same} variable~$\tilde{x}_i$ at the same position~$i$, then to satisfy these constraints the values at positions~$(t_1, \ldots, t_q)$ of the first constraint must match the values at positions~$(t_1, \ldots, t_q)$ in the second. Intuitively, this allows us to identify the pairs of variables used on these~$q$ positions without changing the set of satisfying assignments. After applying all such identifications, each~$q$-tuple of OR-variables appearing on the first~$p$ positions of a constraint, is governed without loss of generality by a \emph{unique} padding variable~$\tilde{x}_i$. %
Conversely, we can model any collection of~$\OR_p$-constraints over~$n$ variables by introducing~$n^q$ fresh variables to be used as padding and plugging these into the later positions of~$\ORDP_{p,q}$ constraints. The latter implies that~$\NRD(\ORDP_{p,q}, n + n^q) \geq \NRD(\OR_p, n)$, which is easily seen to be~$\Omega_p(n^p)$ since any instance of~$\CSP(\OR_p)$ is non-redundant if no two clauses involve the same set of~$p$ variables. This leads to the lower bound~$\NRD(\ORDP_{p,q}, n) \geq \Omega_{p,q}(n^{\frac{p}{q}})$. However, the upper-bound on~$\NRD(\ORDP_{p,q}, n)$ requires more work.

The basic idea for the upper bound is that given a non-redundant instance of $\ORDP_{p,q}$ on variable set $X$ and constraint set $Y$ that has been simplified by the identifications described above, we can derive two set families. First, let $\mathcal{A} \subseteq \binom{X}{p}$ be the $p$-uniform\footnote{We can avoid repeated variables by assuming without loss of generality that our instance is \emph{multipartite}. See Section~\ref{sec:prelim} for more details.} set family containing $\{y_1, \hdots, y_p\}$ for $y \in Y$. Furthermore, let $\mathcal{B} \subseteq \binom{X}{q}$ be the $q$-uniform set family consisting of sets $T$ which are a subset of some $S \in \mathcal{A}$. By the Kruskal-Katona theorem~\cite{kruskal,katona},
we have that $|\mathcal{B}| \ge \Theta_{p,q}(|\mathcal{A}|^{q/p})$. By utilizing the concept of \emph{conditional} non-redundancy we may assume that $|\mathcal{A}|$ equals the number of constraints $|Y|$ (in a suitable conditional setting, two constraints that employ the \emph{same}~$p$ variables in their $\OR_p$ part cannot be violated separately from each other) while $|\mathcal{B}|$ equals the number of padding variables in the instance (since each~$q$-tuple of~$x_j$'s has a unique~$\tilde{x}_i$ associated to them). Therefore, $\NRD(\ORDP_{p,q}, n) = \Oh_{p,q}(n^{p/q}),$ as desired.

Note that the arity of this construction is rather large. We are able to reduce the arity of the construction all the way down to $p$ by \emph{puncturing} the construction. %
In particular, we consider a carefully-chosen subset of the coordinates of $\ORDP_{p,q}$ and argue that the asymptotic bound still holds. In particular, we are able to drop the first $p$ coordinates of $\ORDP_{p,q}$ corresponding to $\OR_p$. The main technical difference is instead of Kruskal-Katona we use \emph{Shearer's inequality} to prove the bound. See Section~\ref{subsec:smaller} for more details.

\paragraph{Graph Girth.} Unlike \Cref{thm:intro:every-rational}, the methods behind \Cref{thm:intro:all-girths} are connected to Tur{\'a}n theory. More precisely, for $k\ge 2$, we employ the following bipartite \emph{shoelace} representation (see \Cref{fig:girth}) of the cycle~$C_{2k}$ over two copies of the elements $\{0,\ldots,k-1\}$ as %
\[
    C_{2k} := \{(0,0),(0,1),(1,0),(1,2),(2,1), \hdots, (k-2,k-1),(k-1,k-2),(k-1,k-1)\} \subseteq \{0,\hdots, k-1\}^2.
\]
That is, $(x,y) \in C_{2k}$ if $(x,y) \in \{(0,0),(k-1,k-1)\}$ or $|x-y| = 1$. Also let $C^*_{2k} := C_{2k} \setminus \{(0,0)\}$ be the cycle minus an edge. A crucial observation we make is that \emph{bipartite} graphs are conditionally non-redundant instances of $\CSP(C^{*}_{2k} \mid C_{2k})$ if and only if they have girth at least $2k$ (see \Cref{fig:girth}). %
Since there is a constant-factor reduction from general to bipartite instances (see \Cref{lemma:reduction:to:multipartite}), we have that $\NRD(C^{*}_{2k} \mid C_{2k}, n) =  \Theta(\ex(n, \{C_3,\dots,C_{2k-1}\}))$. In fact, for general binary predicates $R \subsetneq S \subseteq D^2$, we have that $\NRD(R \mid S, n) = \Theta(\ex(n, \{C_3,\dots,C_{2k-1}\}))$, where $2k-1$ is the minimum length of a path between any pair of points $(x_0,y_0) \in S\setminus R$ within the bipartite graph on $2|D|$ vertices whose edges come from $R$. %
The special case of no path existing (i.e., infinite girth) corresponds to $\NRD(R \mid S, n) = \Oh(n)$.

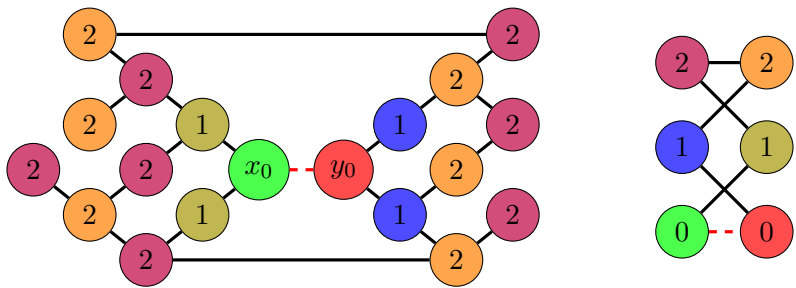
\begin{figure}[h]
\begin{center}
\begin{tikzpicture}[every node/.style={minimum size=0.7cm, draw=black, shape=circle},
                    every edge/.style={very thick}, scale=0.75]
\node[fill=green!70!white] (x_0) at (-0.5,0) {$x_0$};
\node[fill=red!70!white] (y_0) at (1,0) {$y_0$};

\node[fill=yellow!70!black] (a) at (-1.5, 0.8) {1};
\node[fill=yellow!70!black] (b) at (-1.5, -0.8) {1};

\node[fill=blue!70!white] (c) at (2, 0.8) {1};
\node[fill=blue!70!white] (d) at (2, -0.8) {1};

\node[fill=purple!70!white] (e) at (-2.5, 1.6) {2};
\node[fill=purple!70!white] (f) at (-2.5, 0.0) {2};
\node[fill=purple!70!white] (g) at (-2.5, -1.6) {2};

\node[fill=orange!70!white] (h) at (3, 1.6) {2};
\node[fill=orange!70!white] (i) at (3, 0.0) {2};
\node[fill=orange!70!white] (j) at (3, -1.6) {2};

\node[fill=orange!70!white] (k) at (-3.5, 2.4) {2};
\node[fill=orange!70!white] (l) at (-3.5, 0.8) {2};
\node[fill=orange!70!white] (m) at (-3.5, -0.8) {2};

\node[fill=purple!70!white] (n) at (4, 2.4) {2};
\node[fill=purple!70!white] (o) at (4, 0.8) {2};
\node[fill=purple!70!white] (p) at (4, -0.8) {2};

\node[fill=purple!70!white] (q) at (-4.5, 0) {2};

\draw[very thick,red,dashed] (x_0) -- (y_0);
\draw[very thick] (b) -- (x_0) -- (a);
\draw[very thick] (d) -- (y_0) -- (c);

\draw[very thick] (e) -- (a) -- (f);
\draw[very thick] (g) -- (b);

\draw[very thick] (i) -- (d) -- (j);
\draw[very thick] (h) -- (c);

\draw[very thick] (k) -- (n);
\draw[very thick] (g) -- (j);

\draw[very thick] (h) -- (o) -- (i);
\draw[very thick] (p) -- (j);

\draw[very thick] (g) -- (m) -- (f);
\draw[very thick] (k) -- (e) -- (l);

\draw[very thick] (q) -- (m);
\draw[very thick] (n) -- (h);

\node[fill=green!70!white] (l0) at (7,-1.1) {$0$};
\node[fill=red!70!white] (r0) at (8.5,-1.1) {$0$};
\node[fill=blue!70!white] (l1) at (7,0.4) {$1$};
\node[fill=yellow!70!black] (r1) at (8.5,0.4) {$1$};
\node[fill=purple!70!white] (l2) at (7,1.9) {$2$};
\node[fill=orange!70!white] (r2) at (8.5,1.9) {$2$};

\draw[draw=red,dashed,very thick] (l0) -- (r0);
\draw[very thick] (l0) -- (r1);
\draw[very thick] (l1) -- (r0);
\draw[very thick] (l1) -- (r2);
\draw[very thick] (l2) -- (r1);
\draw[very thick] (l2) -- (r2);

\end{tikzpicture}
\end{center}
\caption{This figure illustrates how a bipartite graph $G$ with girth $6$ (left) is a non-redundant instance of $\CSP(C^*_6 \mid C_6)$ (right). In particular, the colors illustrate a homomorphism from $G$ to $C_6$ such that only the target edge $(x_0, y_0)$ is mapped to $(0,0)$. This coloring is generalized to bipartite graphs of arbitrary girth in \Cref{lem:cycle-nrd}.} \label{fig:girth}%
\end{figure}

We can lift bounds on the conditional NRD of binary predicates into the non-conditional NRD of ternary predicates by considering the following family of ternary predicates. For any $k \ge 2$, define
\begin{align*}
R_{2k} &:= (C_{2k} \times \{0,1\}) \setminus \{(0,0,0)\}, & S_{2k} &:= C_{2k} \times \{0,1\}.
\end{align*}
Using techniques similar to those for analyzing $\NRD(C^*_{2k} \mid C_{2k}, n)$, we can prove that $\NRD(R_{2k} \mid S_{2k}, n) = \Theta(n \cdot \ex(n, \{C_3,\dots,C_{2k-1}\}))$. The key observation though is that $S_{2k}$ can be decomposed into a binary relation ($C_{2k}$) and a unary relation ($\{0,1\}$). As a result, we have that $\NRD(S_{2k}, n) = \Theta(n^2)$. Therefore, by the triangle inequality, we have that $\NRD(R_{2k}, n) \le \NRD(R_{2k} \mid S_{2k}, n) + \NRD(S_{2k}, n) = \Theta(n \cdot \ex(n, \{C_3,\dots,C_{2k-1}\})).$ Note this technique does not work for analyzing $\NRD(C^*_{2k} \mid C_{2k}, n)$ because for $k \ge 3$, we have that $\NRD(C_{2k}, n) = \Theta(n^2) > \NRD(C^*_{2k} \mid C_{2k}),$ so the triangle inequality would give an upper bound of $n^2$ instead of $n^{1+\Theta(1)/k}$. 

\subsubsection{Linear NRD and Catalan Polymorphisms}

We now dive deeper into the structure of the \emph{Catalan polymorphisms} which drive our main results on linear non-redundancy and Mal'tsev embeddings (Theorems~\ref{thm:intro:maltsev-vs-abelian}, \ref{thm:intro:Maltsev-Boolean}, and \ref{thm:intro:ternary-nrd-near-linear}). In particular for any domain $D$, we say that a sequence of operators $\{\psi_m : D^m \to D \mid m \in \N \text{ odd}\}$ are \emph{Catalan polymorphisms} (\Cref{def:Catalan-identities}) if they have the following properties: (1) for all $x \in D$, $\psi_1(x) = x$, and (2) for all odd $m \in \N$ at least $3$ and all $i \in [m-1]$, we have that for all $x_1, \hdots, x_m \in D$,
\begin{align}
x_i = x_{i+1} \implies \psi_m(x_1, \hdots, x_{m}) = \psi_{m-2}(x_1, \hdots, x_{i-1}, x_{i+2}, \hdots, x_{m}).\label{eq:catalan-intro}
\end{align}

Our critical technical result (Theorem~\ref{thm:catalan}) is that given \emph{any} operator $\varphi : D^3 \to D$ satisfying the Mal'tsev identities that $\varphi(x,x,y) = \varphi(y,x,x) = y$ for all $x,y \in D$, we can generate \emph{all} the Catalan polymorphisms by (1) composing $\varphi$ with itself and (2) identifying input coordinates as equal. As base cases, we have that  $\psi_1(x) = \varphi(x,x,x)$ and $\psi_3(x,y,z) = \varphi(x,y,z)$. For the first nontrivial case of $\psi_5$, we have that $\psi_5(x_1, x_2, x_3, x_4, x_5)$ is
\begin{align}
    \varphi(\varphi(x_1, x_2, \varphi(x_3, x_4, x_5)),
    \varphi(\varphi(x_1,x_2,x_3), x_3, \varphi(x_3,x_4,x_5)), \varphi(\varphi(x_1, x_2, x_3), x_4, x_5)).\label{eq-intro-cat-5}
\end{align}%
One can verify (\ref{eq:catalan-intro}) holds for (\ref{eq-intro-cat-5}) with some casework (see \Cref{ex:cat5}). This construction is generalized to all odd $m$ by using a recursive formula reminiscent of the recursive formula for the Catalan numbers. Importantly, \Cref{thm:catalan} may be seen as a ``group theoretic'' analogue of the classical complexity theory result of Valiant (see e.g., \cite{gupta1996Using} and references therein) that a majority gate on three bits can simulate an arbitrary majority gate. That is, we prove that a ``Mal'tsev gate'' can simulate an arbitrarily large ``Catalan gate,'' although the depth of the reduction is exponential in $m$. Note that Valiant's reduction is of much importance in the current study of promise CSPs (e.g., \cite{austrin2017sat,brakensiek2021Promise,barto2021Algebraica}), perhaps suggesting Catalan polymorphisms could also be used to advance the theory of promise CSPs. The Catalan polymorphisms also have natural partial analogues which turn out to coincide with the so-called {\em universal partial Mal'tsev polymorphisms}~\cite{lagerkvist2020Sparsification} arising from any Mal'tsev embedding. However, the latter are poorly understood even for small arities, and have no clear connection to group theory.

The true power of the Catalan polymorphisms is revealed in its applications. In particular, the cancellation laws of the Catalan polymorphisms show that any Mal'tsev embedding ``acts like'' an embedding into an infinite group. More precisely, we prove that every predicate with a Mal'tsev embedding also embeds into an infinite group known as the Coxeter group. This immediately resolves Theorem~\ref{thm:intro:Maltsev-Boolean}, as \cite{chen2020BestCase} already showed that the result is true when a Boolean predicate has an infinite group embedding.\footnote{Independently, \cite{khanna2024Characterizations} proved a similar result to \cite{chen2020BestCase} when the predicate has an \emph{Abelian} group embedding.} %

As previously discussed, the identities underlying Catalan polymorphisms were crucial in inspiring the construction of $\PAULI$ in \Cref{thm:intro:maltsev-vs-abelian}. In particular, for any operator $f \colon D^5 \to D$, consider the following six identities involving arbitrary $x,y,z \in D$.
\begin{align*}
        f(x,x,y,y,z) &= z,&
        f(x,x,z,y,y) &= z,&
        f(z,x,x,y,y) &= z,\\
        f(x,y,y,x,z) &= z,&
        f(z,x,y,y,x) &= z,&
        f(x,y,z,x,y) &= z.
\end{align*}
The first five identities are satisfied by any Catalan polymorphism $\psi_5$, but the last one can only be guaranteed when the operator arises from an Abelian group. As such, if we define $\PAULI \in \{x,y,z\}^6$ to consist of $5$ tuples corresponding to each argument of the identities above (e.g., the first argument implies $(x,x,z,x,z,x) \in \PAULI$), then we can deduce that $\PAULI$ does not embed into an Abelian group as that would imply that $(z,z,z,z,z,z) \in \PAULI$. However, as previously mentioned, $\PAULI$ embeds into the Pauli group by mapping each of $x,y,z$ to its corresponding generator. 
Thus, the insights from Catalan polymorphisms allow us to separate Abelian embeddings from non-Abelian embeddings and conclude that no analogue of Theorem~\ref{thm:intro:Maltsev-Boolean} exists even when the domain size is $3$.%

Finally, we note that Catalan polymorphisms greatly streamline proofs showing that a predicate lacks a Mal'tsev embedding. In particular, previous proofs that predicates lack Mal'tsev embeddings (in e.g., \cite{chen2020BestCase,bessiere2020Chain,lagerkvist2020Sparsification}) relied on constructing complicated Mal'tsev gadgets similar to that of (\ref{eq-intro-cat-5}), but the gadgets were found in an ad-hoc manner for the particular problem. With the Catalan polymorphisms being ``universal'' in the sense that they alone imply an infinite group embedding, to rule out an explicit predicate $P$ having a Mal'tsev embedding, we just need to check if there is an application of the Catalan polymorphisms to $P$ which results in a tuple not in $P$. This is precisely how we prove the lack of Mal'tsev embeddings in Theorem~\ref{thm:intro:ternary-nrd-near-linear}, while the upper bound follows from techniques developed to prove Theorem~\ref{thm:intro:all-girths}.

\subsubsection{Theory of conditional NRD}

Let us highlight some of the most important techniques in our algebraic approach for studying conditional non-redundancy. First, for a pair of relations $S \subseteq T \subseteq D^r$ we say that an $n$-ary partial function $f \colon X \to D$ (for some $X \subseteq D^n$ called the {\em domain} of $f$) is a {\em partial promise polymorphism} of $(S,T)$ if 
\[(f(s^{(1)}_1, \ldots, s^{(n)}_1), \ldots, f(s^{(1)}_r, \ldots, s^{(n)}_r) \in T\]
for each sequence $(s^{(1)}_1, \ldots, s^{(1)}_r), \ldots, (s^{(n)}_1, \ldots,  s^{(n)}_r) \in S$ where each involved application is defined.  If we let $\pPol(S,T)$ be the set of all such partial promise polymorphisms we obtain a well-behaved algebraic object which is closed under two natural algebraic notions:  identifying arguments ({\em variable identification minors}) and weakening a function by making it less defined (taking {\em subfunctions}). We call these objects (strong) {\em partial minions} since the corresponding algebraic objects in the total function setting are known as minions~\cite{barto2021Algebraica}. %
Here, it is interesting to observe that both {\em promise polymorphisms} and {\em partial polymorphisms} are well-studied algebraic objects: the former underpins the algebraic approach for studying {\em promise CSPs}~\cite{austrin2017sat,brakensiek2021Promise,barto2021Algebraica}, and the latter has seen important applications in fine-grained complexity of CSPs~\cite{lagerkvist2021Coarse} and kernelization/non-redundancy~\cite{bessiere2020Chain, carbonnel2022Redundancy, chen2020BestCase, lagerkvist2020Sparsification}. Curiously, the combination is virtually unexplored and only briefly appears in a technical lemma by Pippenger~\cite{pippenger2002}. 

While it is then relatively straightforward to establish that $\pPol(S, \widetilde{T})$ (recall that $\widetilde{T} = D^r \setminus (T \setminus S)$) governs the conditional non-redundancy $\NRD(S \mid T, n)$, obtaining concrete bounds based on completely classifying all partial promise polymorphisms seems daunting. Here, inspired by Lagerkvist \& Wahlstr\"om~\cite{lagerkvist2021Coarse} and Carbonnel~\cite{carbonnel2022Redundancy} we instead concentrate on the aforementioned {\em identities} $\pattern(S, \widetilde{T})$ satisfied by the partial functions in $\pPol(S, \widetilde{T})$. This abstraction allows us to concentrate on the functions in $\pPol(S, \widetilde{T})$ that actually affect $\NRD(S \mid T, n)$ to a significantly large degree and avoids exhaustive characterizations of partial promise polymorphisms.

In the algebraic approach for studying CSPs one often seeks a way to relate functional objects (such as patterns, or polymorphisms) with relational objects that can be used to obtain reductions~\cite{barto2017Polymorphisms}. Often, one wants a {\em Galois connection} between these sets which implies that the objects are dually related, in the sense that small functional objects correspond to large relational objects, and vice versa.
Hence, we set forth to develop a Galois connection between patterns and sets of relations closed under certain restricted first-order formulas. Here, on the relational side we consider conjunctive formulas where atoms are allowed to use functional expressions of bounded arity $c \geq 1$ ($c$-fgppp-definitions).
The reason for allowing such expressions is that they relate NRD within a power of $c$,  which, in the light of Theorem~\ref{thm:intro:every-rational}, seems necessary to capture even a fragment of the rich non-redundancy landscape. On the operational side we then introduce a $c$-ary {\em power} operation on a pattern $P$ ($P^c$). This operation is by design very similar to the power operation in universal algebra, well-known from Birkhoff's HSP theorem~\cite{algebrabook}. But applying it to patterns is strikingly different, and seemingly much more powerful, than to directly apply it to partial functions. We study powers of the cube pattern (from Figure~\ref{fig:3_cube}) in extra detail since the {\em absence} of this pattern gives a general and useful method for proving lower bounds on NRD.

In order to relate invariance under a finite set of patterns to hypergraph Tur{\'a}n problems we first extend the algebraic machinery to {\em multisorted} patterns in order to apply it to $r$-partite CSP instances. The key idea is then to interpret the patterns over the $r$ different sorts in an $r$-partite CSP instance, since the application of such a pattern to constraints in a CSP instance can be used to witness redundancy. For each finite set of $r$-sorted patterns $Q$ we then construct a corresponding set of hypergraphs $\cH(Q)$ (recall Figure~\ref{fig:u_hypergraph}) whose Tur{\'a}n number closely matches the maximum possible conditional non-redundancy of all $r$-ary relation pairs $S \subseteq T$ invariant under $Q$. Choosing the correct set $\cH(Q)$ is not completely trivial and requires subtle algebraic properties of $Q$. We provide several concrete examples of patterns and what this implies for the corresponding Tur{\'a}n problems and note that the assumption that $Q$ is finite is, algebraically speaking, a rather weak condition. This is in contrast to the CSP dichotomy theorem where tractability boiled down to the existence of a single condition, but more in line with tractability results for PCSPs which generally require an infinite family of polymorphisms~\cite{brakensiek2021Promise}. Fundamentally, the technical reason is that there is no general way to {\em compose} functions in this setting while preserving a given relation pair. 
Hence, algebraically oriented methods for analyzing conditional NRD generally require more than a finite set of conditions.

\subsection*{Organization}

In Section~\ref{sec:prelim}, we list some definitions and known results for non-redundancy and related problems. In Section~\ref{sec:frac}, we prove \Cref{thm:intro:every-rational} which shows that non-redundancy can capture any rational exponent at least one. In Section~\ref{sec:nrd-bin}, we classify the conditional non-redundancy of every binary promise language and as corollaries prove \Cref{thm:intro:all-girths} and the upper bound for \Cref{thm:intro:ternary-nrd-near-linear}. In Section~\ref{sec:theory}, we extend the techniques in Section~\ref{sec:frac} and Section~\ref{sec:nrd-bin} and formulate the essential components of a comprehensive algebraic theory of non-redundancy, based around promise partial polymorphisms definable by patterns.
In Section~\ref{sec:linear-nrd}, we build on Section~\ref{sec:theory} to give a number of new insights on the structure of Mal'tsev embeddings, including the theory of Catalan polymorphisms. Here, Theorems~\ref{thm:intro:maltsev-vs-abelian}, \ref{thm:intro:Maltsev-Boolean}, and \ref{thm:intro:ternary-nrd-near-linear} are proved. In Section~\ref{sec:powers-and-fgpp} we complete the algebraic theory with a new power operator and describe the resulting gadget reductions that preserve conditional NRD, thus 
proving \Cref{thm:intro:pattern-NRD} and \Cref{thm:intro:pattern-power}, among other results. 
We also apply the algebraic theory to better understand the connections between non-redundancy and hypergraph Tur{\'a}n theory, and prove e.g.\ \Cref{thm:intro:hypergraph}, as well for consequences for relations invariant under a well-known pattern in universal algebra. In Section~\ref{sec:concl}, we give some concluding remarks and discuss directions for future research. In Appendix~\ref{app:frac}, we give some proofs omitted from Section~\ref{sec:frac}. 

\subsection*{Acknowlegments}

JB and VG are supported in part by a Simons Investigator award and NSF grant CCF-2211972. JB is also supported by NSF grant DMS-2503280. BJ is supported by the Dutch Research Council (NWO) through Gravitation-grant NETWORKS-024.002.003.

\section{Preliminaries}\label{sec:prelim}

We begin the main part of the paper by introducing important notation. Throughout, for an $r$-ary tuple $t$ and $i \in [r]$ we write either $t_i$ or $t[i]$ for the $i$th component of $t$, depending on the context. We defer most of the algebraic notions to Section~\ref{sec:theory} where we also restate some basic notions in an algebraic context.

\subsection{Non-redundancy}

We now formally define the main notation we need to discuss non-redundancy. In particular, we adapt the notation used in previous works \cite{bessiere2020Chain,carbonnel2022Redundancy,brakensiek2024Redundancy}.  We use the terms \emph{predicate} and/or \emph{relation} to describe a set $R \subseteq D^r$, where $D$, the \emph{domain} is finite unless otherwise specified, and $r \ge 1$ is a positive integer. We call the elements $t \in R$ \emph{tuples} and say that $R$ is \emph{non-trivial} if it contains at least one tuple and does not contain all tuples in $D^r$. We consider a \emph{structure} to be a pair\footnote{Usually in the context of CSPs, a structure can have many more symbols, corresponding to a multitude of relations. However, in the context of non-redundancy and related questions one can always without loss of generality consider a single relation, since the non-redundancy of a language $\Gamma$ is within a constant factor of the largest non-redundancy of a single relation $R \in \Gamma$; see for example \cite{carbonnel2022Redundancy,khanna2024Code,brakensiek2024Redundancy}.} of the form $(D, R)$. A \emph{homomorphism} between structures $(D,R\subseteq D^r)$ and $(E,S \subseteq E^r)$ is a map $\sigma : D \to E$ such that for every $t \in R$, we have that $\sigma(t) := (\sigma(t_1), \hdots, \sigma(t_r)) \in S$. If $D$ and $E$ are clear from context, then we say that $\sigma$ is a homomorphism from $R$ to $S$.

Like in \cite{brakensiek2024Redundancy}, we view instances of a CSP as structures themselves. In particular, given a relation $R \subseteq D^r$, we can view any structure $\Psi := (X, Y \subseteq X^r)$ as an \emph{instance} of $\CSP(R)$. We usually call $X$ the \emph{variables} and $Y$ the \emph{clauses}, but we may also refer to $X$ as \emph{vertices} and $Y$ as \emph{(hyper)edges}. We also usually reserve the symbol $n := |X|.$ We let $\sat(R, Y)$ denote the set of homomorphisms from $(X,Y)$ to $(D, R)$ (note that $D$ and $X$ can be inferred from context). More traditionally, $\sat(R, Y)$ can be thought of as the set of assignments $\sigma : X \to D$ such that each clause $y \in Y$ has that $\sigma(y) \in R$. 

We say that a clause $y \in Y$ is \emph{non-redundant} if $\sat(R, Y) \subsetneq \sat(R, Y \setminus \{y\}).$ In other words, there exists a homomorphism $\sigma$ from $Y \setminus \{y\} \to R$ which does not extend to $Y$. We call such a map $\sigma$ a \emph{witness} for $y$. Semantically, this means that the clause $y$ cannot be logically inferred from $Y \setminus \{y\}$. We say that $(X,Y)$ is a non-redundant instance of $\CSP(R)$ if every $y \in Y$ is non-redundant. This leads to our definition of non-redundancy of a predicate.

\begin{definition}[Non-redundancy of a predicate~\cite{bessiere2020Chain}]\label{def:NRD-CSP}
Given a relation $R \subseteq D^r$,  the maximum number of clauses of a non-redundant instance of $\CSP(R)$ on $n$ variables is denoted by $\NRD(R, n)$.
\end{definition}

Often, it is more convenient to work with instances of $\CSP(R)$ which are \emph{multipartite}. More formally, for $R \subseteq D^r$, we say that an instance $(X,Y)$ of $\CSP(R)$ is $r$-partite if there is a partition of the variable set as~$X := X_1 \dot \cup \ldots \dot \cup X_r$, such that for each clause~$(y_1, \ldots, y_r) \in Y$ and each~$i \in [r]$ we have~$y_i \in X_i$. To avoid having to keep track of the exact arity, we say an instance of~$\CSP(R)$ is \emph{multipartite} if it is $r$-partite where~$r$ is the arity of~$R$. As we show below, we can obtain bounds on general non-redundancy by analyzing the non-redundancy of multipartite instances. Therefore it will be convenient to have notation for this quantity.

\begin{definition}[Non-redundancy of multipartite instances]\label{def:NRD-CSP:multipartite}
Given a relation $R \subseteq D^r$ and $n \in \N$, we define $\NRD^*(R, n)$ to be the maximum number of clauses of any \emph{multipartite} non-redundant instance of $\CSP(R)$ on $n$ variables. 
\end{definition}

Clearly $\NRD(R, n) \ge \NRD^*(R,n)$. As observed by \cite{brakensiek2024Redundancy}, the reverse inequality is approximately true.

\begin{lemma}[Implicit in \cite{brakensiek2024Redundancy}] \label{lemma:reduction:to:multipartite}
Let~$R$ be a relation of arity~$r$ over domain~$D$ and let~$\Psi = (X,Y)$ be a non-redundant instance of~$\CSP(R)$. Then there is a non-redundant $r$-partite instance of~$\CSP(R)$ on~$r|X|$ variables and~$|Y|$ clauses. Consequently,~$\NRD(R,n) \leq \NRD^*(R, r\cdot n)$.
\end{lemma}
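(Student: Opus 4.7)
The plan is to construct the multipartite instance by taking $r$ disjoint copies of the variable set, one per coordinate, and routing each clause through the corresponding copies. Concretely, for each $x \in X$ and each $i \in [r]$, introduce a fresh variable $x^{(i)}$, and set $X_i := \{x^{(i)} : x \in X\}$, so that $|X_1 \dot\cup \cdots \dot\cup X_r| = r|X|$. For every clause $y = (y_1, \ldots, y_r) \in Y$ of the original instance, define its multipartite counterpart $y' := (y_1^{(1)}, y_2^{(2)}, \ldots, y_r^{(r)})$, and let $Y' := \{y' : y \in Y\}$. The resulting instance $\Psi' := (X_1 \dot\cup \cdots \dot\cup X_r,\; Y')$ is $r$-partite by construction and has exactly $|Y|$ clauses (note $y \mapsto y'$ is injective since the map $X \to X_i$ given by $x \mapsto x^{(i)}$ is a bijection).

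The core step is verifying that non-redundancy is preserved. Fix an arbitrary clause $y \in Y$ and a witness $\sigma : X \to D$ of its non-redundancy in $\Psi$, i.e., $\sigma$ satisfies every clause of $Y \setminus \{y\}$ under $R$ but violates $y$. Lift $\sigma$ to an assignment $\sigma' : X_1 \dot\cup \cdots \dot\cup X_r \to D$ by setting $\sigma'(x^{(i)}) := \sigma(x)$ for every $x \in X$ and $i \in [r]$. For any clause $z' = (z_1^{(1)}, \ldots, z_r^{(r)}) \in Y'$ arising from $z = (z_1, \ldots, z_r) \in Y$, we have $\sigma'(z') = (\sigma(z_1), \ldots, \sigma(z_r)) = \sigma(z)$. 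Hence $\sigma'$ satisfies every clause of $Y' \setminus \{y'\}$ (because $\sigma$ satisfies every clause of $Y \setminus \{y\}$) and violates $y'$ (because $\sigma$ violates $y$). Thus $y'$ is non-redundant in $\Psi'$, and since this holds for every $y \in Y$, the instance $\Psi'$ is non-redundant.

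The inequality $\NRD(R, n) \leq \NRD^*(R, r \cdot n)$ is then immediate: take any non-redundant instance on $n$ variables achieving $\NRD(R,n)$ clauses, apply the construction to get a multipartite non-redundant instance on $r n$ variables with the same number of clauses. No step is really the main obstacle here; the only thing to be careful about is that the lifting $\sigma \mapsto \sigma'$ is well-defined regardless of which variables of $X$ happen to coincide within a clause of $Y$, since we separate them into different sorts by coordinate index, and that the violation/satisfaction status of each clause is preserved pointwise under this lift.
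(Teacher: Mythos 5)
Your construction is identical to the paper's: introducing $r$ disjoint copies $X \times [r]$ (written $x^{(i)}$ in your notation), routing clause $y$ to $((y_1,1),\ldots,(y_r,r))$, and lifting each witness by $\sigma'((x,i)) := \sigma(x)$. The verification matches as well, so this is the same argument.
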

\begin{proof}
Let~$(X,Y)$ be a non-redundant instance of~$\CSP(R)$. Define $X' := X \times [r]$ and map each $(y_1, \hdots, y_r) \in Y$ to $((y_1,1), \hdots, (y_r,r)) \in Y' \subseteq (X')^r$. It remains to argue that the instance~$(X',Y')$ is non-redundant. For any clause~$y' \in Y'$ corresponding to a clause~$y \in Y$, by the non-redundancy of~$(X,Y)$ there is an assignment~$\sigma_y \colon X \to D$ that satisfies all clauses~$Y \setminus \{y\}$ but not~$y$. It is easy to verify that~$\sigma'_y \colon X' \to D$ defined via~$\sigma'_y((x,i)) = \sigma(x)$ satisfies all clauses of~$Y' \setminus \{y'\}$ but not~$y'$. Hence~$(X',Y')$ is a non-redundant instance on~$r|X|$ variables and~$|Y|$ clauses. It follows that the non-redundancy of general instances on~$n$ variables is upper-bounded by the non-redundancy of multipartite instances on~$rn$ variables, so that~$\NRD(R,n) \leq \NRD^*(R,r\cdot n)$.
\end{proof}

We also now observe that if $c > 1$ is a constant, then $\NRD(R, n)$ and $\NRD(R, cn)$ are also related by a constant. %

\begin{lemma}[Implicit in e.g., \cite{carbonnel2022Redundancy}]\label{lem:NRD-scale}
For $R \subseteq D^r$ and for any $c > 1$, we have that
\[
    \NRD(R, cn) = O_{c}(\NRD(R, n)).
\]
\end{lemma}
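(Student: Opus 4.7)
The plan is to prove the scaling bound via a double-counting argument over all $n$-subsets of the variable set, combined with the basic observation that restricting a non-redundant instance to a subset of its variables yields a non-redundant instance on that subset.

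Concretely, fix an arbitrary non-redundant instance $(X, Y)$ of $\CSP(R)$ with $|X| = cn$, and for any $S \subseteq X$ let $Y_S \subseteq Y$ denote the subset of clauses whose variables all lie in $S$. The first step is to verify that $(S, Y_S)$ is itself non-redundant: for each $y \in Y_S$, the witness $\sigma_y \colon X \to D$ provided by the non-redundancy of $(X, Y)$ restricts to a map $\sigma_y|_S$ that satisfies $Y_S \setminus \{y\}$ but not $y$, since all variables of $y$ already lie in $S$. Hence $|Y_S| \leq \NRD(R, n)$ for every $n$-subset $S \subseteq X$. The second step is to double-count pairs $(y, S)$ with $y \in Y$, $|S| = n$, and all variables of $y$ contained in $S$. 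A clause $y$ using $k(y) \leq r$ distinct variables is contained in exactly $\binom{cn - k(y)}{n - k(y)}$ such subsets; since $c > 1$ this quantity is monotone decreasing in $k(y)$ and hence at least $\binom{cn - r}{n - r}$. Summing over all $n$-subsets yields
\[
    |Y| \cdot \binom{cn - r}{n - r} \;\leq\; \sum_{|S| = n} |Y_S| \;\leq\; \binom{cn}{n} \cdot \NRD(R, n),
\]
and the ratio $\binom{cn}{n}/\binom{cn - r}{n - r} = \prod_{i=0}^{r-1}(cn-i)/(n-i)$ is at most $(2c)^r$ once $n$ exceeds some threshold $n_0 = n_0(c, r)$, giving $\NRD(R, cn) \leq (2c)^r \NRD(R, n)$ in the large-$n$ regime.

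To finish, for $n < n_0$ both sides are bounded by constants depending only on $c$, $r$, and $|D|$ (via the trivial estimate $\NRD(R, m) \leq |D|^m$), so the inequality extends with a larger constant that is absorbed into the $O_c$ notation. The main subtlety is in handling clauses that repeat variables: the monotonicity of $\binom{cn - k}{n - k}$ in $k$ (which crucially uses $c > 1$) is what allows a single uniform estimate to go through without case-splitting on clause supports, and this is the step I would be most careful to verify cleanly.
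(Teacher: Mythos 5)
Your proof is correct and is essentially the same argument as the paper's, just phrased deterministically: the paper takes a uniformly random $n$-subset $X'$ and lower-bounds $\E[|Y \cap (X')^r|]$ by $(1-o(1))\,|Y|/c^r$, which is exactly the inequality you obtain by averaging your double-count $\sum_{|S|=n}|Y_S|$ over the $\binom{cn}{n}$ choices of $S$. Your version is slightly more explicit about the dependence on the number of distinct variables $k(y)$ and the monotonicity of $\binom{cn-k}{n-k}$, and about absorbing the small-$n$ regime into the implied constant, but the underlying idea — restricting a non-redundant instance to a random/arbitrary $n$-subset preserves non-redundancy, and each clause survives a roughly $c^{-r}$ fraction of such restrictions — is identical.
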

\begin{proof}
Let $(X,Y)$ be a non-redundant instance of $\CSP(R)$ with $|X| = cn$ and $|Y| = \NRD(R, cn)$. Let $X' \subseteq X$ be a random subset of $X$ of size $n$, and let $Y' := Y \cap X'^r$. %
If $n$ is sufficiently large, the probability that any $y \in Y$ is also in $Y'$ approximately $(1-o(1))/c^r$. Thus, in expectation we have that $\NRD(R, n) \ge (1-o(1)) \NRD(R, cn) / c^r$. This proves the asymptotic bound.
\end{proof}

We remark that the analogous bound holds for multipartite instances.

\subsection{Conditional Non-redundancy}

An important concept introduced in \cite{brakensiek2024Redundancy} is the notion of \emph{conditional} non-redundancy, which refines the notion of non-redundancy (Definition~\ref{def:NRD-CSP}) to impose more structure on the witness to a clause being non-redundant. More formally, given relations $P \subsetneq Q \subseteq D^r$ and a structure $(X, Y \subseteq X^r)$, we say that a clause $y \in Y$ is \emph{(conditionally) non-redundant} for $\CSP(P \mid Q)$ if there is a witness $\sigma \in \sat(P, Y \setminus \{y\}) \setminus \sat(P, Y)$ which also satisfies $\sigma \in \sat(Q, Y)$. In other words, although our witness $\sigma$ cannot satisfy $y$ (i.e., $\sigma(y) \notin P$), we want to make sure that $\sigma(y) \in Q$. As such, we call $Q$ the \emph{scaffolding} of $P$.

Analogous to (non-conditional) non-redundancy, we say that an instance $(X,Y)$ of $\CSP(P \mid Q)$ is conditionally non-redundant if every $y \in Y$ is (conditionally) non-redundant. Often we drop ``conditionally'' if it is clear from context (e.g., ``Let $(X,Y)$ be a non-redundant instance of $\CSP(P \mid Q)$''). Analogous to Definition~\ref{def:NRD-CSP}, we let $\NRD(P \mid Q, n)$ denote the size of the largest non-redundant instance of $\CSP(P \mid Q)$ on $n$ variables. The multipartite version $\NRD^*(P \mid Q, n)$ can be defined analogously, and we remark that Lemma~\ref{lemma:reduction:to:multipartite} and Lemma~\ref{lem:NRD-scale} can be extended to the conditional NRD setting with the same proof methods.

The key utility in studying conditional non-redundancy is that it satisfies a triangle inequality,%
which we quote as follows.

\begin{lemma}[Triangle inequality~\cite{brakensiek2024Redundancy}]\label{lem:NRD-chain}
For any predicates $R \subsetneq S \subsetneq T \subseteq D^r$ and $n \in \N$, we have that
\[
    \NRD(R \mid T, n) \le \NRD(R \mid S, n) + \NRD(S \mid T, n).
\]
In particular, if $T = D^r$, we have that
\[
    \NRD(R, n) \le \NRD(R \mid S, n) + \NRD(S, n).
\]
\end{lemma}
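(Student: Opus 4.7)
The plan is a partition argument: take a maximum conditionally non-redundant instance of $\CSP(R \mid T)$ and split its clauses into two pieces according to where the non-redundancy witness "lands" in $T\setminus R$, using the disjoint decomposition $T\setminus R = (S\setminus R) \,\dot\cup\, (T\setminus S)$ which holds because $R\subseteq S\subseteq T$. The resulting pieces should then witness non-redundancy for $\CSP(R\mid S)$ and $\CSP(S\mid T)$ respectively on the same variable set, giving the bound. The second inequality is the special case $T = D^r$ where $\NRD(S \mid D^r, n) = \NRD(S, n)$.

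More concretely, let $(X,Y)$ be a non-redundant instance of $\CSP(R\mid T)$ on $n$ variables with $|Y| = \NRD(R\mid T, n)$. For each $y\in Y$, by conditional non-redundancy there is a witness $\sigma_y\in \sat(R, Y\setminus\{y\})$ with $\sigma_y\in\sat(T, Y)$, so in particular $\sigma_y(y)\in T\setminus R$. Partition $Y = Y_1 \,\dot\cup\, Y_2$ by declaring $y\in Y_1$ if \emph{some} witness for $y$ has $\sigma_y(y)\in S\setminus R$, and placing $y\in Y_2$ otherwise; in the latter case every witness necessarily has $\sigma_y(y)\in T\setminus S$. For each $y$ pick such a $\sigma_y$ accordingly.

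I then claim $(X,Y_1)$ is a non-redundant instance of $\CSP(R\mid S)$ and $(X,Y_2)$ is a non-redundant instance of $\CSP(S\mid T)$. For $y\in Y_1$: $\sigma_y\in\sat(R, Y\setminus\{y\})\subseteq \sat(R, Y_1\setminus\{y\})$, and since $\sigma_y(y)\in S\setminus R$ while $R\subseteq S$ implies $\sat(R,\cdot)\subseteq \sat(S,\cdot)$, we have $\sigma_y\in \sat(S, Y_1)$, exactly the condition for $\CSP(R\mid S)$-non-redundancy. For $y\in Y_2$: using $R\subseteq S$, we get $\sigma_y\in\sat(R, Y\setminus\{y\})\subseteq \sat(S, Y_2\setminus\{y\})$; and since $\sigma_y\in \sat(T, Y)\supseteq \sat(T, Y_2)$ was guaranteed by the witness being for $\CSP(R\mid T)$, while $\sigma_y(y)\in T\setminus S$, we obtain the $\CSP(S\mid T)$-non-redundancy condition.

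Therefore $|Y_1|\le \NRD(R\mid S, n)$ and $|Y_2|\le \NRD(S\mid T, n)$, so $\NRD(R\mid T, n) = |Y_1|+|Y_2|\le \NRD(R\mid S, n)+\NRD(S\mid T, n)$, as desired. The second statement follows by taking $T = D^r$ since the condition $\sigma\in\sat(D^r, Y)$ is vacuous, making $\NRD(S\mid D^r, n) = \NRD(S, n)$. There is no serious obstacle; the only point requiring care is ensuring, for each $y\in Y_2$, that one selects a witness landing in $T\setminus S$ rather than $S\setminus R$, which is exactly what the definition of $Y_2$ guarantees.
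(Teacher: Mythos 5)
Your proof is correct, and the argument is the standard one: fix a maximum-size conditionally non-redundant instance $(X,Y)$ of $\CSP(R\mid T)$, partition $Y$ according to whether the chosen witness for $y$ lands in $S\setminus R$ or $T\setminus S$ (using the disjoint decomposition $T\setminus R = (S\setminus R)\,\dot\cup\,(T\setminus S)$), and check that each part is a conditionally non-redundant instance of the corresponding restricted problem. The checks you carry out — that $Y_1\setminus\{y\}\subseteq Y\setminus\{y\}$ inherits $R$-satisfiability, that $R\subseteq S\subseteq T$ propagates the witness correctly, and that the output tuple falls on the right side of the partition — are exactly what is needed. Note the paper itself does not give a proof of this lemma; it merely quotes it from the cited reference, so there is no in-paper argument to compare against, but your derivation is the natural one and is complete. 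One small remark: you do not actually need to allow several witnesses per clause and take "some witness"; it suffices to fix one witness $\sigma_y$ per $y$ up front and partition by where $\sigma_y(y)$ lands, which streamlines the presentation slightly without changing anything essential.
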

We note that the analogous bounds for multipartite instances, obtained by replacing~$\NRD$ by~$\NRD^*$, also hold by the exact same proof. 

\subsection{Embeddings}\label{subsec:embeddings}

Given two structures $(D,R)$ and $(E,S)$, we say that $R$ \emph{embeds} into $S$ if there exists an injective homomorphism $\sigma : D \to E$ from $R$ to $S$ which satisfies the additional property that $\sigma(R) = \sigma(D)^r \cap S$. 
As noticed by prior works (e.g., \cite{lagerkvist2020Sparsification,bessiere2020Chain}), if an embedding exists, then any non-redundant instance of $\CSP(R)$ is a non-redundant instance of $\CSP(S)$, so $\NRD(R, n) \le \NRD(S, n)$. As such, we can use embeddings to prove non-redundancy upper bounds.

Often, one seeks to find embeddings into structures $(E,S)$ admitting a special algebraic operation known as a \emph{Mal'tsev polymorphism} (or Mal'tsev term). That is, there is a map $\varphi : E^3 \to E$ such that for any $x,y \in E$, we have that $\varphi(x,y,y) = \varphi(y,y,x) = x$ and further for any $t_1, t_2, t_3 \in S$ we have that $\varphi (t_1, t_2, t_3) \in S$ (where $\varphi$ on tuples is applied componentwise). A canonical example is when $(E, \cdot)$ is a group and $S \subseteq E^r$ is a coset. Then, we can define $\varphi(x,y,z) = x \cdot y^{-1} \cdot z$. As we shall see much later (Section~\ref{sec:linear-nrd}), this canonical example is in a strong sense the \emph{only} example.

Mal'tsev polymorphisms and embeddings imply linear NRD upper bounds, as formalized below. 
Such a bound follows from a characterization made by Bulatov and Dalmau~\cite{bulatov2006Simple} of structures with a Mal'tsev polymorphism.

\begin{theorem}[\cite{lagerkvist2020Sparsification,bessiere2020Chain}]\label{thm:Malt-embedding}
If $R \subseteq D^r$ is a predicate with a Mal'tsev polymorphism, then $\NRD(R, n) \le |D|n.$ Likewise, if $R$ has an embedding into a structure $(E, S)$ with a Mal'tsev polymorphism then $\NRD(R, n) \le |E|n$.
\end{theorem}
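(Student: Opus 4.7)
The plan is to invoke the structural result of Bulatov and Dalmau for Mal'tsev-invariant relations, which shows that any such relation $T \subseteq D^n$ admits a \emph{compact representation}: a set of at most $|D| \cdot n$ tuples that generates $T$ under iterated application of $\varphi$. The crucial monotonicity property is that along any strictly decreasing chain of Mal'tsev-invariant sub-relations of $D^n$, the size of the smallest compact representation strictly drops; equivalently, any such chain has length at most $|D| \cdot n$.

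With this in hand, the first statement follows from a standard chain argument. Let $(X,Y)$ be a non-redundant instance of $\CSP(R)$ with $|X| = n$, and observe that for every $Y' \subseteq Y$ the solution set $\sat(R, Y') \subseteq D^X$ is invariant under the componentwise action of $\varphi$: given $\sigma_1, \sigma_2, \sigma_3 \in \sat(R, Y')$, the assignment $\varphi(\sigma_1, \sigma_2, \sigma_3)$ defined coordinatewise on $X$ satisfies every $y = (y_1,\dots,y_r) \in Y'$ because $R$ is closed under $\varphi$. Enumerating $Y = \{y_1,\dots,y_m\}$ in any order and using non-redundancy clause by clause yields a strictly decreasing chain
\[
D^X = \sat(R, \emptyset) \supsetneq \sat(R, \{y_1\}) \supsetneq \cdots \supsetneq \sat(R, \{y_1,\dots,y_m\}),
\]
of Mal'tsev-invariant relations of length $m$. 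Applying the Bulatov-Dalmau bound to this chain gives $m \le |D| \cdot n$, i.e.\ $\NRD(R,n) \le |D| n$.

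For the embedding case, suppose $\sigma: D \to E$ witnesses an embedding of $R$ into a relation $S \subseteq E^r$ closed under a Mal'tsev polymorphism. Any non-redundant instance $(X,Y)$ of $\CSP(R)$ is also non-redundant as an instance of $\CSP(S)$: given a witness $\tau : X \to D$ that violates only clause $y$, the composition $\sigma \circ \tau : X \to E$ is still a valid witness, because the embedding condition $\sigma(R) = \sigma(D)^r \cap S$ ensures that $\tau$ satisfies a clause over $R$ exactly when $\sigma \circ \tau$ satisfies it over $S$. Applying the first part to $S$ then gives $\NRD(R,n) \le \NRD(S,n) \le |E| \cdot n$.

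The main obstacle is extracting the precise compact-representation chain bound from Bulatov-Dalmau, since they state their result algorithmically (as a tractability theorem) rather than as a length bound on chains of Mal'tsev-invariant relations. One has to identify the right monotone invariant attached to the compact representation and verify that it is bounded by $|D| \cdot n$ and strictly drops on each proper restriction; the remaining steps are short and formal.
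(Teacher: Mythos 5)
Your argument matches the one the paper attributes to Bulatov--Dalmau via the cited works \cite{lagerkvist2020Sparsification,bessiere2020Chain}: the solution sets $\sat(R,Y')$ form a strictly decreasing chain of Mal'tsev-invariant subsets of $D^X$, and the chain-length bound $|D|\cdot n$ comes from Bulatov--Dalmau's compact-representation (signature) theory, which is exactly what the paper points to when it says the bound ``follows from a characterization made by Bulatov and Dalmau.'' You have correctly identified the only nontrivial step---extracting from the signature machinery a monotone invariant bounded by $|D|\cdot n$ that strictly drops along the chain---and the remaining pieces, including the embedding reduction via $\sigma(R)=\sigma(D)^r\cap S$, are sound.
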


We also note that embeddings are themselves a special case of a much broader class of reductions known as \emph{fgpp}-definitions which were first studied by Carbonnel~\cite{carbonnel2022Redundancy} and extended by us in Section~\ref{sec:theory} in the context of conditional non-redundancy.

\section{Every Possible Rational Exponent is Attainable}\label{sec:frac}

A key question about non-redundancy is the set of possible exponents which can appear. More precisely, for a finite domain $D$ and an arity $r \in \N$, we define the \emph{exponent} $\alpha(R)$ of a non-trivial relation $R \subseteq D^r$ to be the least $\alpha \ge 1$ such that $\NRD(R, n) = \Oh(n^{\alpha+o(1)})$. The main result of this section is that any rational number at least $1$ can be the exponent of a finite predicate.

\begin{theorem}\label{thm:frac-nrd}
For every rational number $p/q \ge 1$, there exists a nontrivial relation $\ORDP_{p,q}$ for which $\alpha(\ORDP_{p,q}) = p/q$. In fact, we have that $\NRD(\ORDP_{p,q}, n) = \Theta_{p,q}(n^{p/q})$.
\end{theorem}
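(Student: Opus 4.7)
The plan is to define $\ORDP_{p,q}$ as sketched in the technical highlights and prove matching upper and lower bounds of $n^{p/q}$, up to constants depending on $p,q$.

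First I would formalize the construction. Let $D := \{0,1\} \cup \{0,1\}^q$, and let $\ORDP_{p,q} \subseteq D^{p+p^q}$ consist of tuples $(x_1,\ldots,x_p,\tilde x_1,\ldots,\tilde x_{p^q})$ satisfying (i) $(x_1,\ldots,x_p) \in \OR_p$ (so $x_j \in \{0,1\}$ and not all zero), and (ii) for each index $i \in [p^q]$ with base-$p$ digits $(t_1,\ldots,t_q) \in [p]^q$, the padding symbol $\tilde x_i \in \{0,1\}^q$ equals the string $x_{t_1} x_{t_2} \cdots x_{t_q}$. Note that $\ORDP_{p,q}$ is nontrivial because it is neither empty nor everything, and that the padding values are functionally determined by the OR part.

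For the lower bound, I will exhibit a reduction from $\OR_p$. Given a non-redundant instance $(V,C)$ of $\CSP(\OR_p)$ on $n$ variables with $|C| = \Theta_p(n^p)$ clauses (obtained by taking all $p$-subsets), introduce a fresh padding variable $z_T$ for each ordered $q$-tuple $T \in V^q$, using $n^q$ additional variables. Replace each clause $(v_1,\ldots,v_p)\in C$ by the $\ORDP_{p,q}$-constraint whose first $p$ coordinates are $v_1,\ldots,v_p$ and whose $i$-th padding coordinate is $z_{(v_{t_1},\ldots,v_{t_q})}$ where $(t_1,\ldots,t_q)$ is the base-$p$ expansion of $i$. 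Any witness assignment for the original instance extends uniquely to the padding variables (by condition (ii)); conversely an assignment to the new instance projects down to a valid $\OR_p$-assignment. Non-redundancy is therefore inherited clause-by-clause. This yields $\NRD(\ORDP_{p,q}, n+n^q) \ge \Omega_p(n^p)$, hence $\NRD(\ORDP_{p,q}, N) = \Omega_{p,q}(N^{p/q})$ after reparameterizing.

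For the upper bound I will work with a conditionally non-redundant multipartite instance $(X,Y)$ of $\CSP(\ORDP_{p,q})$, using the reduction of Lemma~\ref{lemma:reduction:to:multipartite} and a carefully chosen scaffold $Q \supsetneq \ORDP_{p,q}$ that allows us to identify padding variables that must take equal values across constraints. Specifically, if two clauses $y,y' \in Y$ use the same padding variable $\tilde x_i$ at position $i$ and the underlying $q$-tuple of OR-variables disagrees, then this disagreement forces a contradiction under the scaffolding, so without loss of generality we can identify these OR-variable tuples. After saturating such identifications, each $q$-subset of OR-variables appearing together in any clause corresponds to a unique padding variable in the instance. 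Let $\mathcal A \subseteq \binom{X_{\mathrm{OR}}}{p}$ be the $p$-uniform family of OR-parts of clauses and $\mathcal B \subseteq \binom{X_{\mathrm{OR}}}{q}$ be the $q$-shadow $\partial^{p-q}\mathcal A$. By the Kruskal--Katona theorem, $|\mathcal B| \ge \Omega_{p,q}(|\mathcal A|^{q/p})$. Now $|\mathcal A|$ equals the number of clauses $|Y|$, and since each $q$-subset in $\mathcal B$ accounts for at least one padding variable in $X$, we get $n \ge |X_{\mathrm{OR}}| + |\mathcal B| \ge \Omega_{p,q}(|Y|^{q/p})$, i.e., $|Y| \le \Oh_{p,q}(n^{p/q})$, as required.

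The main obstacle I expect is justifying the variable identification step cleanly; this is exactly where conditional non-redundancy enters. One must choose a scaffold $Q$ so that ``two clauses sharing a padding variable at position $i$ but disagreeing on the corresponding OR coordinates'' is a disallowed witness pattern, and then argue by an averaging / swapping argument that the identified instance is still (conditionally) non-redundant with the same number of clauses, so that the counting argument above applies. Once this is in place, combining the lower and upper bounds gives $\NRD(\ORDP_{p,q},n) = \Theta_{p,q}(n^{p/q})$, and in particular $\alpha(\ORDP_{p,q}) = p/q$. The arity reduction to $p$ claimed after the theorem statement is then a separate puncturing argument using Shearer's inequality in place of Kruskal--Katona, which I would defer to a later subsection.
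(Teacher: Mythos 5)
Your proposal matches the paper's proof essentially step for step: the same relation $\ORDP_{p,q}$ over $\{0,1\} \cup \{0,1\}^q$, the same lower bound by lifting a complete $\OR_p$ instance with $n^q$ fresh padding variables, and the same upper bound via conditional non-redundancy against the scaffold $\DP_{p,q}$ (which has linear NRD through an Abelian embedding) combined with Kruskal--Katona applied to the $q$-shadow of the clause family. The variable-identification step you flag as the main obstacle is exactly the paper's ``reduced instances'' induction (Lemma~\ref{lem:or-dp:upper}), carried out by arguing that two clauses sharing a padding variable at position $i$ must agree on the OR-variables governed by $i$ under any $\DP_{p,q}$-satisfying assignment, so merging those variables preserves conditional non-redundancy and the clause count.
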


\begin{remark}
This result is similar in spirit to the hypergraph Tur{\'a}n result of Frankl \cite{frankl1986All}. 
\end{remark}

\subsection{Defining the Relations}

We start by defining the predicates with the prescribed exponents and their domains.

\begin{definition}
For~$p, q \in \mathbb{N}$ with~$p \geq q$, we define~$D_{p,q} := \{0,1\} \cup \{ (b) \mid b \in \{0,1\}^q \}$.
\end{definition}

Hence the domain~$D_{p,q}$ consists of the Boolean values~$0$ and~$1$, together with one element~$(b)$ for each bitstring of length~$q$. Note that when~$q = 1$, the elements~$(0)$ and~$(1)$ corresponding to bitstrings of length~$1$ are distinct from the elements~$0$ and~$1$ denoting \true and \false. We now define two relations over $D_{p,q}$: $\ORDP_{p,q}$ and $\DP_{p,q}$. Here, $\ORDP_{p,q}$ is the relation referred to in Theorem~\ref{thm:frac-nrd} and $\DP_{p,q}$ is a scaffolding for $\ORDP_{p,q}$ which will assist in upper-bounding $\NRD(\ORDP_{p,q}, n)$ via Lemma~\ref{lem:NRD-chain}.

\begin{definition} \label{def:Rpq}
For~$p, q \in \mathbb{N}$ with~$p \geq q$, we define relations $\ORDP_{p,q} \subsetneq \DP_{p,q} \subseteq D_{p,q}^{p + p^q}$ as follows. Let~$\idx_{p,q} \colon [p^q] \to [p]^q$ be a bijection between the integers~$\{1, \ldots, p^q\}$ and $q$-tuples over~$\{1, \ldots p\}$. The only tuples contained in $\ORDP_{p,q}$ and $\DP_{p,q}$ are of the form~$T = (x_1, \ldots, x_p, \tx_1, \ldots, \tx_{p^q})$ where~$x_i \in \{0,1\}$ for each~$i \in [p]$ and~$\tx_i \in \{(b) \mid b \in \{0,1\}^q\}$ for each~$i \in [p^q]$. %
A tuple~$T$ of this form is contained in~$\DP_{p,q}$ if any only if the following condition is met:
\begin{enumerate}
    \item For each~$i \in [p^q]$, let~$S^{(i)} = (s^{(i)}_1, \ldots, s^{(i)}_q) = \idx_{p,q}(i) \in [p]^q$ be the $q$-tuple indexed by~$i$. The value of~$\tx_i$ in tuple~$T$ should be equal to the bitstring~$(x_{s^{(i)}_1}, \ldots, x_{s^{(i)}_q})$. This constraint acts like a ``direct product," hence the name $\DP_{p,q}$.\label{constraint:dp}
\end{enumerate}
We say that a tuple $T \in \DP_{p,q}$ is also in $\ORDP_{p,q}$ if and only if the following additional condition is met:
\begin{enumerate}[resume]
    \item At least one of the values~$x_i$ for~$i \in [p]$ is equal to~$1$. (So the~$x_i$'s are not all~$0$; this part of the constraint acts as~$\mathsf{OR}_p$.)\label{constraint:or}
\end{enumerate}
\end{definition}
In other words, a tuple~$T$ is contained in~$\ORDP_{p,q}$ if its first~$p$ values are Boolean and at least one of them is~$1$, while the remaining~$p^q$ variables each take a bitstring as value, and the bitstring of variable~$\tx_i$ is exactly the concatenation of the values of the $x$-variables in the positions encoded by~$\idx_{p,q}(i)$. Note that for each tuple~$T \in \ORDP_{p,q}$, the values of the first~$p$ entries of the tuple uniquely determine the value of the remaining~$p^q$ entries. We leverage this ``redundancy'' introduced by the $\tx_i$ variables in order to precisely connect the non-redundancy of the predicate $\OR_p$ with the non-redundancy of our constructed predicate $\ORDP_{p,q}$. In particular, by utilizing Kruskal-Katona theorem, we will be able to tightly quantify the impact these $\tx_i$ variables have on the redundancy. %

\begin{example}
  For~$p=3$ and~$q=2$, the relation~$\ORDP_{3,2}$ has arity $3+3^2=12$. If we project it down to a more manageable arity by only keeping the ``essential'' padding variables for tuples $(1,2)$, $(1,3)$ and $(2,3)$ we get the 6-ary relation
\begin{align*}
     \ORDP_{3,2}' = \{ \quad &
     (1, 0, 0, (10), (10), (00)), \\
     & (0, 1, 0, (01), (00), (10)), \\
     & (0, 0, 1, (00), (01), (01)), \\ 
     & (1, 1, 0, (11), (10), (10)), \\ 
     & (1, 0, 1, (10), (11), (01)), \\ 
     & (0, 1, 1, (01), (01), (11)), \\
     & (1, 1, 1, (11), (11), (11)) \quad \}
 \end{align*}
 when using the bijection~$\idx_{p,q}$ given by listing the size-2 subsets of~$[3]$ in lexicographic order. The relation $\DP_{3,2}$ has the additional tuple $(0, 0, 0, (00), \ldots, (00))$.
\end{example}

We proceed by analyzing the non-redundancy exponent of $\ORDP_{p,q}$. 

\subsection{Lower Bound on \texorpdfstring{$\NRD(\ORDP_{p,q}, n)$.}{NRD(ORDPpq,n)}}

We start with proving a lower bound for the non-redundancy of $\ORDP_{p,q}$, as the result is a bit easier to obtain and illustrates the intent of the construction.

\begin{lemma}\label{lem:lowerbound-nrd}
For each fixed~$p,q \in \mathbb{N}$ with~$p \geq q$, we have that 
\[\NRD(\ORDP_{p,q}, n+n^q)\ge \NRD(\OR_p, n) \ge \binom{n}{p}.\]
Therefore, $\NRD(\ORDP_{p,q}, n) \ge \Omega_{p,q}(n^{p/q}).$
\end{lemma}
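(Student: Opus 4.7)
The plan is a direct reduction that transforms any non-redundant instance of $\CSP(\OR_p)$ on $n$ variables into a non-redundant instance of $\CSP(\ORDP_{p,q})$ on $n + n^q$ variables with the same number of clauses. Introduce the original variables $v_1, \ldots, v_n$ together with $n^q$ fresh \emph{padding} variables $\tilde{u}_\tau$, one for each tuple $\tau \in [n]^q$. For each $\OR_p$-clause applied to an ordered tuple $(v_{i_1}, \ldots, v_{i_p})$, form an $\ORDP_{p,q}$-clause whose first $p$ entries are $(v_{i_1}, \ldots, v_{i_p})$ and whose $k$-th later entry, for $k \in [p^q]$, is the padding variable $\tilde{u}_{(i_{s_1}, \ldots, i_{s_q})}$ where $(s_1, \ldots, s_q) = \idx_{p,q}(k)$. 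This wiring is the canonical one suggested by the shape of $\ORDP_{p,q}$: each padding slot stores the concatenated Boolean values read off at the positions it indexes.

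The key observation is that any Boolean witness $\sigma \colon \{v_1, \ldots, v_n\} \to \{0,1\}$ of non-redundancy of a target clause $y$ in the $\OR_p$-instance extends canonically to the padding layer via $\tilde{u}_{(j_1, \ldots, j_q)} \mapsto (\sigma(v_{j_1}), \ldots, \sigma(v_{j_q}))$. By construction this extension satisfies the direct-product condition~(\ref{constraint:dp}) of Definition~\ref{def:Rpq} at every clause of the new instance. The OR condition~(\ref{constraint:or}) is then satisfied at every clause other than the image of $y$, and it fails exactly there because $\sigma$ violated $y$ by zeroing out its $p$ OR-variables. Consequently the image of $y$ is the unique violated clause of the extended instance, and $\NRD(\ORDP_{p,q}, n + n^q) \ge \NRD(\OR_p, n)$.

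The lower bound $\NRD(\OR_p, n) \ge \binom{n}{p}$ comes from the standard instance containing one $\OR_p$-clause per $p$-subset of $\{v_1, \ldots, v_n\}$ (with any fixed ordering per subset): to witness non-redundancy of the clause on subset $V'$, set all variables in $V'$ to $0$ and the rest to $1$; every other clause then contains a variable set to $1$ because it shares at most $p-1$ variables with $V'$. For the asymptotic statement, given any large $N$ choose $n = \lfloor (N/2)^{1/q} \rfloor$ so that $n + n^q \le 2n^q \le N$; monotonicity of $\NRD$ in the variable count combined with the previous inequality yields $\NRD(\ORDP_{p,q}, N) \ge \binom{n}{p} = \Omega_{p,q}(N^{p/q})$.

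I do not anticipate a significant obstacle. The only conceptual ingredient is that condition~(\ref{constraint:dp}) of $\ORDP_{p,q}$ is a deterministic function of the Boolean coordinates, so assignments to the OR-variables extend uniquely and consistently to the padding layer, making non-redundancy in the padded instance equivalent to non-redundancy of the underlying $\CSP(\OR_p)$ instance.
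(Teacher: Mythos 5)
Your proposal is correct and follows essentially the same argument as the paper: lift each $\OR_p$-clause to an $\ORDP_{p,q}$-clause by wiring padding variables indexed by the relevant $q$-tuples of original variables, and extend each non-redundancy witness to the padding layer by concatenating the Boolean values it reads. The paper works directly with the all-$p$-subsets instance while you first state the lifting as a general reduction and then specialize, but the construction and verification are the same.
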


\begin{proof}

For arbitrary~$n$, let $(X,Y)$ be the instance of $\CSP(\OR_p)$ with $|X| = \{x_1, \ldots, x_n\}$ and $Y = \{(x_{i_1}, \ldots, x_{i_p}) \mid i_1 < \ldots < i_p\}$. Then~$|Y| = \binom{n}{p}$ and the instance~$(X,Y)$ is non-redundant, since for any choice of~$i_1 < \ldots < i_p$ the assignment that maps~$x_{i_1}, \ldots, x_{i_p}$ to~$0$ and all other variables to~$1$, satisfies all clauses except~$(x_{i_1}, \ldots, x_{i_p})$. Consider $X' := X \cup X^{q}$. We define a set $Y' \subseteq (X')^{p + p^q}$. For each $y \in Y$, we construct a corresponding $y' \in Y'$ as follows.
\begin{itemize}
    \item if~$i \in [p]$, then $y'_i = y_i$.
    \item if~$i > p$, then~$y'_i = (y_{t_1}, \hdots, y_{t_q}) \in X^q$, where $t_1, \hdots, t_q\in [p]$ are chosen such that
    \[
        (t_1, \hdots, t_q) = \idx_{p,q}(i-p),
    \]
    where we recall that $\idx_{p,q}$ is a bijection between $[p^q]$ and $[p]^q$.
\end{itemize}

It remains to argue that $(X', Y')$ is a non-redundant instance of $\CSP(\ORDP_{p,q})$. Consider each $y \in Y$ and its corresponding $y' \in Y'$. Since $(X,Y)$ is a non-redundant instance of $\CSP(\OR_p)$, there exists an assignment $\sigma_y : X \to \{0,1\}$ which satisfies every clause $y' \in Y$ except $y$ itself. We now lift this to a map $\sigma'_{y'} : X' \to D_{p,q}$ as follows:
\begin{itemize}
    \item For each $x \in X' \cap X$, we set $\sigma'_{y'}(x) := \sigma_y(x) \in D_{p,q} \cap \{0,1\}$.
    \item For each $\tx := (x_1, \hdots, x_q) \in X' \cap X^q$, we set $\sigma'_{y'}(\tx) := (\sigma_y(x_1), \hdots, \sigma_y(x_q)) \in D_{p,q} \cap \{(b) \mid b \in \{0,1\}^q\}$.
\end{itemize}
For each $z' \in Y'$ with corresponding $z \in Y$, observe that $\sigma'_{y'}$ satisfies Property~\ref{constraint:dp} of \Cref{def:Rpq} for $z'$ since for every $i \in [p^q]$ with $(t_1, \hdots, t_q) = \idx_{p,q}(i)$, we have that
\[
    \sigma'_{y'}(z'_{i+p}) = (\sigma_y(z'_{t_1}), \hdots, \sigma_y(z'_{t_q})) = (\sigma'_{y'}(z'_{t_1}), \hdots, \sigma'_{y'}(z'_{t_q})).
\]
However, $\sigma'_{y'}$ satisfies Property~\ref{constraint:or} of \Cref{def:Rpq} for $z'$ if and only if $\sigma_y$ satisfies $z$. In other words, $\sigma'_{y'}$ satisfies $z'$ if and only if $z' \neq y'$. Thus, since the choice of $y'$ was arbitrary, we have that $(X', Y')$ is a non-redundant instance of $\CSP(\ORDP_{p,q})$ of size $|Y'| = |Y| = \binom{n}{p}$, as desired. The asymptotic bound on $\NRD(\ORDP_{p,q}, n)$ then immediately follows.
\end{proof}

\begin{remark}
This lower bound technique can be made much more general using the theory of \emph{$c$-fgpp reductions}. See Section~\ref{sec:theory} for more details.
\end{remark}

\subsection{Upper Bound on \texorpdfstring{$\NRD(\ORDP_{p,q}, n)$}{NRD(ORDPpq,n)}}

In this section we prove an upper bound on $\NRD(\ORDP_{p,q}, n)$ using the Kruskal-Katona theorem. We first introduce some relevant terminology.

A \emph{set system} over a universe~$U$ is a collection~$\mathcal{F}$ of subsets of~$U$. A set system is $p$-uniform for~$p \in \mathbb{N}$ if all sets in~$\mathcal{F}$ have cardinality~$p$. The \emph{$q$-shadow} of a $p$-uniform set family~$\mathcal{F}$ over~$U$ is the $q$-uniform set family~$\{X' \in \binom{U}{q} \mid \exists X \in \mathcal{F} \colon X' \subseteq X\}$. So the sets in the shadow of~$\mathcal{F}$ are precisely those sets that can be obtained from a member of~$\mathcal{F}$ by removing $p-q$ elements. Kruskal and Katona proved lower bounds on the size of the $q$-shadow of a uniform set family.

\begin{theorem}[Kruskal-Katona theorem, simplified form~\cite{kruskal,katona}] \label{kruskal-katona}
Let~$p,q \in \mathbb{N}$ with~$p \geq q$. Let~$U$ be a finite set, let~$\mathcal{A}$ be a $p$-uniform set family over~$U$, and let~$\mathcal{B}$ be the $q$-shadow of~$\mathcal{A}$. If~$|\mathcal A| = \binom{x}{p}$, then~$|\mathcal B| \geq \binom{x}{q}$.
\end{theorem}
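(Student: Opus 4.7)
The plan is to follow the classical compression argument that reduces the problem to analyzing the colexicographically initial segment. Recall that the \emph{colex order} on $\binom{U}{p}$ is defined by $A <_{\text{colex}} B$ iff $\max(A \triangle B) \in B$. I would aim to prove the strengthened statement that, over all $p$-uniform families $\mathcal{A} \subseteq \binom{U}{p}$ of a fixed cardinality $m$, the $q$-shadow is minimized when $\mathcal{A}$ is the initial segment of length $m$ in the colex order. Once this is established, the hypothesis $|\mathcal{A}| = \binom{x}{p}$ is exactly the length for which the colex-initial segment coincides with $\binom{[x]}{p}$ (all $p$-subsets of the first $x$ elements of $U$ under some fixed linear order), whose $q$-shadow is $\binom{[x]}{q}$, of size $\binom{x}{q}$, delivering the claimed bound.

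To prove the shadow-minimization statement I would introduce, for each pair $i < j$ of elements of $U$, the \emph{compression operator} $C_{ij}$ acting on a family $\mathcal{A}$ by replacing each $A \in \mathcal{A}$ with $(A \setminus \{j\}) \cup \{i\}$ whenever $j \in A$, $i \notin A$, and the replacement set is not already in $\mathcal{A}$; otherwise $A$ is left in place. The key step is a case analysis showing that the $q$-shadow of $C_{ij}(\mathcal{A})$ has size at most that of $\mathcal{A}$: for every $q$-set $B' \in \partial C_{ij}(\mathcal{A})$ one exhibits a distinct ``preimage'' $q$-set in $\partial \mathcal{A}$, treating separately the subcases based on whether $i$ and $j$ belong to $B'$. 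A standard potential argument (e.g., tracking $\sum_{A \in \mathcal{A}} \sum_{v \in A} v$) shows that iterating the compressions $C_{ij}$ for all $i < j$ terminates in a \emph{left-compressed} family; and a further swapping/shifting step, again non-increasing in shadow size, reduces any left-compressed family to the colex-initial segment of the same cardinality.

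With the minimization in hand, I would finish by the direct computation mentioned above: the first $\binom{x}{p}$ members of $\binom{U}{p}$ in colex order are precisely the $p$-subsets of the set $\{u_1,\ldots,u_x\}$ of the $x$ smallest elements of $U$, so the $q$-shadow is exactly $\binom{\{u_1,\ldots,u_x\}}{q}$, of cardinality $\binom{x}{q}$. Combining with the minimization inequality $|\partial \mathcal{A}| \geq |\partial \mathcal{A}^{\text{colex}}|$ gives $|\mathcal{B}| \geq \binom{x}{q}$.

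The main obstacle is the case analysis underlying the claim that compression does not increase the shadow size. One must carefully match each $q$-subset of a compressed set $A' = (A \setminus \{j\}) \cup \{i\}$ to a distinct $q$-subset already in $\partial \mathcal{A}$, and the construction of this injection depends on whether the $q$-set under consideration contains $i$, contains $j$, or avoids both; the tricky case is when $i$ is in the $q$-set and the ``natural'' swap $i \leftrightarrow j$ lands on a $q$-set that was already counted. Handling this cleanly is the technical heart of the argument; the remaining reductions (iteration to a stable point, recognition of the colex-initial segment, and the binomial computation) are routine once that lemma is in place.
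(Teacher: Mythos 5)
The paper cites the Kruskal--Katona theorem as a known result and does not prove it, so your proposal is being judged against the classical literature rather than against an in-paper argument. Your sketch follows the well-known shadow-compression strategy, and the opening moves are sound: the $(i,j)$-compression operator $C_{ij}$ does not increase the shadow size (this requires the case analysis you flag as the ``technical heart,'' and it can be made to work), the potential argument shows iteration terminates, and the shadow of the colex-initial segment of cardinality $\binom{x}{p}$ (for integer $x$) is exactly $\binom{x}{q}$.

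The gap is in the sentence claiming that ``a further swapping/shifting step, again non-increasing in shadow size, reduces any left-compressed family to the colex-initial segment of the same cardinality.'' That is false. The family $\mathcal{A}=\{\{1,2,3\},\{1,2,4\},\{1,2,5\},\{1,3,4\}\}\subseteq\binom{[5]}{3}$ is stable under every $C_{ij}$ (it is left-compressed), yet it is not the colex-initial segment of size $4$, which is $\{\{1,2,3\},\{1,2,4\},\{1,3,4\},\{2,3,4\}\}$; moreover the $2$-shadow of $\mathcal{A}$ has size $8$ while the colex-initial segment's shadow has size $6$, so no shadow-preserving exchange can get you from one to the other. Pairwise $(i,j)$-compressions alone do not reach the extremal configuration; this is precisely where the substantive work of Kruskal--Katona sits. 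The standard ways to close this gap are either (a) replace the $C_{ij}$ by the stronger Daykin--Frankl ``set compressions'' (compression with respect to pairs of subsets $U,V$ of equal size), after which the stable families really are colex-initial, or (b) stop once $\mathcal{A}$ is shifted and finish by induction on $n=|U|$: partition $\mathcal{A}$ into those sets containing the largest element $n$ and those that do not, observe (using shiftedness) that deleting $n$ from the former gives a $(p-1)$-family contained in the $(p-1)$-shadow of the latter, and apply the inductive hypothesis. As written, the proposal elides this step, and the theorem does not follow without it.
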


Using this ingredient, we prove an upper bound on $\NRD(\ORDP_{p,q}, n)$.

\begin{theorem} \label{thm:upperbound-nrd}
For each fixed~$p,q \in \mathbb{N}$ with~$p \geq q$, there exist a constant $C_{p,q} > 0$ such that $\NRD(\ORDP_{p,q}, n) \leq C_{p,q} \cdot n^{\frac{p}{q}}$. %
\end{theorem}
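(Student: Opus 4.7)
My plan is to decompose the bound via the triangle inequality (Lemma~\ref{lem:NRD-chain}) applied with scaffolding $\DP_{p,q}$:
\[
\NRD(\ORDP_{p,q}, n) \;\le\; \NRD(\ORDP_{p,q} \mid \DP_{p,q}, n) \;+\; \NRD(\DP_{p,q}, n),
\]
and bound each term separately. For the second term, I would observe that $\DP_{p,q}$ is essentially the graph of the $\mathbb{F}_2$-linear map sending a Boolean $p$-tuple to the vector of its $q$-wise concatenations, so by interpreting $D_{p,q}$ inside a finite $\mathbb{F}_2$-vector space (e.g., tag-bit plus bits, yielding $\mathbb{F}_2^{q+1}$), the relation $\DP_{p,q}$ embeds into an affine subspace. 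Affine subspaces admit the Mal'tsev polymorphism $\varphi(x,y,z) = x - y + z$, so by Theorem~\ref{thm:Malt-embedding} we get $\NRD(\DP_{p,q}, n) = O_{p,q}(n)$.

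The main work is to show $\NRD(\ORDP_{p,q} \mid \DP_{p,q}, n) = O_{p,q}(n^{p/q})$. Using (the conditional version of) Lemma~\ref{lemma:reduction:to:multipartite}, I may assume the non-redundant instance $(X,Y)$ is multipartite, with $X = X_1 \dot\cup \cdots \dot\cup X_p \dot\cup \tilde X_1 \dot\cup \cdots \dot\cup \tilde X_{p^q}$. Since $\DP_{p,q} \setminus \ORDP_{p,q}$ consists of a single tuple (the all-zero OR-part with padding $(0^q)$ everywhere), a conditional witness for a clause $y$ must send each variable in $\{y_1,\dots,y_p\}$ to $0$, while still satisfying every other clause as an $\ORDP_{p,q}$-tuple (so some OR-variable of each other clause is $1$). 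I now carry out two identifications that do not alter conditional non-redundancy:
\begin{enumerate}
    \item If two clauses $y, y'$ share the padding variable at position $p+i$, then by Property~\ref{constraint:dp} the values $(y_{t_1},\dots,y_{t_q})$ and $(y'_{t_1},\dots,y'_{t_q})$ must coincide under every $\DP_{p,q}$-assignment for $(t_1,\dots,t_q)=\idx_{p,q}(i)$; merging these pairs of OR-variables does not change the set of (conditional) witnesses. After exhaustively applying this, each $q$-tuple of OR-variables that appears in some clause has a \emph{unique} padding variable associated with it.
    \item If two clauses $y,y'$ have identical OR-parts $(y_1,\dots,y_p)=(y'_1,\dots,y'_p)$, then any conditional witness for $y$ simultaneously zeros out the OR-part of $y'$, contradicting $\sigma \in \sat(\ORDP_{p,q}, Y \setminus \{y\})$; hence all clauses must have distinct OR-parts.
\end{enumerate}

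Now associate to the (simplified) instance two set systems over the universe $X_1 \cup \cdots \cup X_p$: the $p$-uniform family $\mathcal{A} := \{\{y_1,\dots,y_p\} : y \in Y\}$, and its $q$-shadow $\mathcal{B}$. Step (2) gives $|\mathcal{A}| = |Y|$. Step (1) gives an injection from $\mathcal{B}$ into the set of padding variables, so $|\mathcal{B}| \le n$. The Kruskal-Katona theorem (Theorem~\ref{kruskal-katona}) yields $|\mathcal{A}| \le C'_{p,q}\, |\mathcal{B}|^{p/q}$, whence $|Y| \le C'_{p,q}\, n^{p/q}$, completing the proof after absorbing the $O_{p,q}(n)$ from the $\DP_{p,q}$ term.

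The main obstacle is the rigorous justification of identification step~(1) in the conditional setting: I must verify that merging forced-equal OR-variables preserves conditional non-redundancy (not just non-redundancy), i.e., that no witness for a remaining clause is destroyed and no two previously distinct clauses collapse into an unwitnessable pair. This should follow cleanly because merging variables that are provably equal under every $\DP_{p,q}$-assignment leaves $\sat(\DP_{p,q}, \cdot)$ and $\sat(\ORDP_{p,q}, \cdot)$ unchanged, so witnesses survive verbatim; and step~(2) ensures no collisions among OR-parts after the merges.
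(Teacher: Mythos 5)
Your proposal is correct and follows essentially the same route as the paper's proof of \Cref{thm:upperbound-nrd}: triangle inequality with $\DP_{p,q}$ as scaffolding, an affine/Mal'tsev embedding to bound $\NRD(\DP_{p,q},n) = \Oh_{p,q}(n)$, a multipartite reduction together with the two merging/uniqueness steps to reach a ``reduced'' instance, and Kruskal--Katona to bound the number of clauses by the number of padding variables. The paper packages the merging step as an induction on $n$ rather than an exhaustive rewrite, and the crux you flag---that merging forced-equal OR-variables cannot collapse two distinct clauses---is proved there by precisely the argument you gesture at: if two clauses became identical after a merge they must share the governing padding variable, so a conditional witness zeroing one clause's OR-part would also zero the other's, contradicting conditional non-redundancy.
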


We achieve this upper bound by applying the triangle inequality of \Cref{lem:NRD-chain} with $\DP_{p,q}$ as scaffolding. By \Cref{lemma:reduction:to:multipartite} and \Cref{lem:NRD-scale}, it suffices to consider multipartite instances. The bulk of the technical work is then devoted to bounding $\NRD^*(\ORDP_{p,q} \mid \DP_{p,q}, n)$.

\begin{lemma} \label{lem:or-dp:upper}
    For each fixed~$p,q \in \mathbb{N}$ with~$p \geq q$ it holds that $\NRD^*(\ORDP_{p,q} \mid \DP_{p,q}, n) \leq \Oh_{p,q}(n^{\frac{p}{q}})$.
\end{lemma}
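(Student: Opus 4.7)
The plan is to apply the Kruskal-Katona theorem (Theorem~\ref{kruskal-katona}) to a pair of set families extracted from any multipartite conditionally non-redundant instance $(X, Y)$ of $\CSP(\ORDP_{p,q} \mid \DP_{p,q})$ with $|X| = n$ and partition $X = X_1 \dot\cup \cdots \dot\cup X_{p+p^q}$. First I would put the instance into a canonical form in which no two distinct variables of the same partite set $X_j$ are forced to take the same value under every $\DP_{p,q}$-satisfying assignment of $Y$. This canonicalization is harmless: identifying a pair of forced-equal variables preserves $\sat(\DP_{p,q}, Y)$, and it cannot collide two distinct clauses into one without contradicting conditional non-redundancy, since two clauses differing only in one position (containing $v$ versus $v'$) would then satisfy exactly the same $\ORDP_{p,q}$-constraint under every valid assignment, leaving no witness to distinguish them.

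Next, define the $p$-uniform set family $\mathcal{A} := \{A_y : y \in Y\}$ with $A_y := \{y_1, \ldots, y_p\} \subseteq X_1 \cup \cdots \cup X_p$, and let $\mathcal{B}$ denote its $q$-shadow. The two key claims are $(i)$ $|\mathcal{A}| = |Y|$ and $(ii)$ $|\mathcal{B}| \leq n$. Combined with the standard form of Kruskal-Katona, which gives $|\mathcal{A}| \leq O_{p,q}(|\mathcal{B}|^{p/q})$, these yield $|Y| \leq O_{p,q}(n^{p/q})$. Claim $(i)$ follows because multipartiteness means $A_y$ determines the tuple $(y_1, \ldots, y_p)$: if two clauses $y \neq y'$ had $A_y = A_{y'}$, then any conditional witness $\sigma_y$ for $y$ would be forced to assign $0$ to $y_1, \ldots, y_p$ and hence also to $y'_1, \ldots, y'_p$, violating the OR-component of $\ORDP_{p,q}$ at $y'$ and contradicting that $\sigma_y$ is a valid conditional witness.

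The most delicate step is $(ii)$, where the direct-product structure of $\DP_{p,q}$ becomes essential. For each $B \in \mathcal{B}$, let $s_1 < \cdots < s_q$ be the partite indices of its elements, let $i = \idx_{p,q}^{-1}(s_1, \ldots, s_q)$, pick any clause $y$ with $B \subseteq A_y$, and set $\phi(B) := \tilde{y}_i$, the padding variable of $y$ at position $p+i$. I claim $\phi$ is well-defined and injective into $X_{p+1} \cup \cdots \cup X_{p+p^q}$. If $\phi(B) = \phi(B')$, the shared padding position $p+i$ forces the partite index sets to coincide, so we may write $B = \{v_1, \ldots, v_q\}$ and $B' = \{v'_1, \ldots, v'_q\}$ with $v_j, v'_j \in X_{s_j}$. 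Applying the $\DP_{p,q}$ constraint at position $p+i$ to both witnessing clauses forces $(\sigma(v_1), \ldots, \sigma(v_q)) = \sigma(\tilde{y}_i) = \sigma(\tilde{y}'_i) = (\sigma(v'_1), \ldots, \sigma(v'_q))$ for every $\sigma \in \sat(\DP_{p,q}, Y)$. Hence each $v_j$ is forced equal to $v'_j$, and canonicalization gives $v_j = v'_j$, so $B = B'$; well-definedness of $\phi$ follows by the same argument applied to any two witnessing clauses. Thus $|\mathcal{B}| \leq n$. The main conceptual obstacle is the canonicalization step: without it, many padding variables could redundantly encode the same $q$-tuple of $\OR$-variables, breaking injectivity of $\phi$, and the switch from unconditional to \emph{conditional} non-redundancy is exactly what makes this simplification sound.
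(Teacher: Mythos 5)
Your proof is correct and follows essentially the same strategy as the paper's: apply Kruskal--Katona to the $p$-uniform family of $\OR$-variable sets and its $q$-shadow, using the direct-product structure of $\DP_{p,q}$ to inject the shadow into the padding variables. The only structural difference is that you replace the paper's explicit induction on $n$ with a syntactic "reduced" condition by a one-shot canonicalization based on the semantic property "forced equal under every $\DP_{p,q}$-satisfying assignment"; this is a slightly stronger condition than the paper's reducedness, but your justification that identifications preserve conditional non-redundancy and never merge clauses is sound, so the argument goes through in the same way.
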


\begin{proof}
Let~$c_{p,q} \geq 1$ be any constant such that~$\binom{m}{p} \leq c_{p,q} \cdot \binom{m}{q}^{\frac{p}{q}}$ for all $m \in \N$ with $m \geq p$, which exists since~$\binom{m}{q} \geq 1$ and~$\binom{m}{r} \in \Theta(m^r)$ for all~$r \in \N$. To prove the lemma, we will use induction on the number of variables~$n$ to show that $\NRD^*(\ORDP_{p,q} \mid \DP_{p,q}, n) \leq 2p \cdot c_{p,q} \cdot n^{\frac{p}{q}}$. So consider a multipartite instance~$(X,Y)$ of~$\CSP(\ORDP_{p,q})$ on~$n$ variables that is a non-redundant instance of $\CSP(\ORDP_{p,q} \mid \DP_{p,q}).$ Choose~$X_1 \dot \cup X_2 \dot \cup \ldots \dot \cup X_{p+p^q} = X$ such that~$Y \subseteq X_1 \times X_2 \times \ldots \times X_{p+p^q}$. We will refer to the variables~$\bigcup _{i > p} X_i$ as \emph{padding variables}.

The base case deals with instances on~$n \leq p+p^q$ variables. Since each variable can only appear on a single coordinate in a multipartite instance, such instances have at most one clause. Hence~$|Y| \leq 1 \leq 2p \cdot c_{p,q} \cdot n^{\frac{p}{q}}$. %

For the induction step, assume the instance has~$n > p+p^q$ variables and assume that the bound holds for multipartite non-redundant instances with fewer variables. We say that our instance $(X,Y)$ is \emph{reduced} if it has the following property:
\begin{itemize}
    \item \textbf{Reduced:} for any $i \in [p^q]$ with its associated tuple~$(t_1, \hdots, t_q) = \idx_{p,q}(i)$, for any $y,y' \in Y$, the following implication holds: if $y_{i+p} = y'_{i+p}$, then $y_{t_j} = y'_{t_j}$ for all~$j \in [q]$.
\end{itemize}
Next, we show how to derive the desired bound on~$|Y|$ by induction if~$(X,Y)$ is not reduced. Afterwards, we will derive the bound for reduced instances. 

\subparagraph{Non-reduced instances.} Suppose there exists an index~$i \in [p^q]$ and clauses~$y,y' \in Y$ such that~$y_{i+p} = y'_{i+p}$ while $y_{t_j} \neq y'_{t_j}$ for some~$j \in [p]$. So clauses~$y,y'$ use the same padding variable~$y_{i+p} = y'_{i+p}$ that governs a tuple of positions including~$t_j$, while the variables~$y_{t_j}$ and~$y'_{t_j}$ appearing at that position differ. Consider the instance~$(\widetilde{X}, \widetilde{Y})$ of~$\CSP(\ORDP_{p,q})$ derived from~$(X,Y)$ as follows. Let~$\widetilde{X} := X \setminus \{y'_{t_j}\}$ be obtained by removing variable~$y'_{t_j}$ and let~$\widetilde{Y}$ contain a copy~$\widetilde{y}$ of each~$y \in Y$, in which the occurrence of variable~$y'_{t_j}$ is replaced by~$y_{t_j}$ (if it occurs at all).

It is clear that~$|\widetilde{X}| = n-1$. Next we argue that~$|\widetilde{Y}| = |Y|$. Assume for a contradiction that~$|\widetilde{Y}| < |Y|$, which implies that~$Y$ contains two distinct clauses~$y^1, y^2$ such that~$y^2$ is obtained from~$y^1$ by replacing variable~$y'_{t_j}$ (which occurs at position~$t_j$ due to the multipartite property) by variable~$y_{t_j}$. Note in particular that this implies~$y^1_{p+i} = y^2_{p+i}$, so that the two clauses use the same variable at the padding position for tuple~$\idx_{p,q}(i)$. Since~$(X,Y)$ is $\DP_{p,q}$-conditionally non-redundant, there is an assignment~$\sigma$ that $\DP_{p,q}$-satisfies all clauses of~$Y$, while it~$\ORDP_{p,q}$ satisfies~$Y \setminus \{y^1\}$ but not~$y^1$. We now use the structure of the scaffolding predicate~$\DP_{p,q}$ to reach a contradiction. Since~$\sigma$ $\DP_{p,q}$-satisfies~$y^1$, by Property~\ref{constraint:dp} of \Cref{def:Rpq} the $j$-th bit of~$\sigma(y^1_{p+i})$ equals the value of~$\sigma(y^1_{t_j})$. Analogously, the $j$-th bit of~$\sigma(y^2_{p+i})$ equals the value of~$\sigma(y^2_{t_j})$. Since~$y^1_{p+i} = y^2_{p+i}$ we find~$\sigma(y^1_{t_j}) = \sigma(y^2_{t_j})$. As~$\sigma(y^1) \in \DP_{p,q} \setminus \ORDP_{p,q}$, by Property~\ref{constraint:or} of \Cref{def:Rpq} the first~$p$ variables of~$y^1$ evaluate to~$0$ under~$\sigma$. Since~$y^2$ can be obtained from~$y^1$ by replacing~$y_{t_j}$ by~$y'_{t_j}$, which has the same value under~$\sigma$, we find that the first~$p$ variables of~$y^2$ also evaluate to~$0$ under~$\sigma$. But then~$\sigma(y^2) \notin \ORDP_{p,q}$, a contradiction.

From the previous argument we conclude that~$|\widetilde{Y}| = |Y|$. To bound~$|Y|$ in terms of~$|X|$, it therefore suffices to bound~$|\widetilde{Y}|$ in terms of~$|\widetilde{X}|$ via induction. To that end, we argue that~$(\widetilde{X}, \widetilde{Y})$ is again $\DP_{p,q}$-conditionally non-redundant. So consider an arbitrary clause~$\tilde{y}^* \in \widetilde{Y}$ that was obtained from some~$y^* \in Y$ by replacing~$y'_{t_j}$ (if it occurs) by~$y_{t_j}$. Since the original instance~$(X,Y)$ was non-redundant, there is an assignment~$\sigma^*$ that satisfies~$Y \setminus \{y^*\}$ but not~$y^*$. Moreover, since~$(X,Y)$ is~$\DP_{p,q}$-conditionally non-redundant, we know~$\sigma^*$ $\DP_{p,q}$-satisfies all clauses in~$Y$. In particular, this means that Property~\ref{constraint:dp} of \Cref{def:Rpq} holds for clauses~$y$ and~$y'$ under~$\sigma^*$. Recall that our choice of~$y,y'$ as a violation of being reduced ensured that~$y_{t_j} \neq y'_{t_j}$ while~$y_{p+i} = y'_{p+i}$, with~$t_j \in \idx_{p,q}(i)$. Property~\ref{constraint:dp} therefore ensures that~$\sigma^*(y_{t_j}) = \sigma^*(y'_{t_j})$.  Consequently,  replacing an occurrence of~$y'_{t_j}$ in a clause of~$Y$ by~$y_{t_j}$ to obtain a clause of~$\widetilde{Y}$ does not affect to which tuple~$\sigma^*$ maps the clause. Since~$\tilde{\sigma}^*$ is just~$\sigma^*$ restricted to the remaining variables, it follows that~$\tilde{\sigma}^*$ is a witness for~$\tilde{y}^*$. This establishes that~$(\widetilde{X}, \widetilde{Y})$ is $\DP_{p,q}$-conditionally non-redundant. Since it has~$n-1$ variables, we may apply induction to conclude that~$|Y| = |\widetilde{Y}| \leq 2p \cdot c_{p,q} \cdot |\widetilde{X}|^{\frac{p}{q}} \leq 2p \cdot c_{p,q} \cdot n^{\frac{p}{q}}$.

\subparagraph{Reduced instances.} Finally, we show how to derive the desired bound on~$|Y|$ when~$(X,Y)$ is reduced. Since~$(X,Y)$ is $\DP_{p,q}$-conditionally non-redundant, for each clause~$y \in Y$ its projection onto the first~$p$ coordinates~$(y_1, \ldots, y_p)$ is unique: if two tuples~$y,y' \in Y$ have the same variables at the first~$p$ positions, it is impossible to~$\DP_{p,q}$-satisfy both~$y,y'$ while~$\ORDP_{p,q}$-satisfying exactly one of them. The only way to falsify~$\ORDP_{p,q}$ while satisfying~$\DP_{p,q}$ is to have the first~$p$ variables map to~$0$. That would, however, also $\ORDP_{p,q}$-violate any other clause with the same first~$p$ variables.

Since~$(X,Y)$ is multipartite, each variable~$x \in X_j$ can only appear at one position~$j$ in clauses of~$Y$. Therefore, the set system~$\mathcal{A} := \{ A_y = \{y_1, \ldots, y_p\} \mid y \in Y\}$ is~$p$-uniform and contains a unique set for each~$y \in Y$. We may therefore bound~$|Y|$ via~$|\mathcal{A}|$. To do so, we consider the $q$-shadow~$\mathcal{B}$ of~$\mathcal{A}$. Let~$X' := X_{p+1} \dot \cup X_{p+2} \dot \cup \ldots \dot \cup X_{p+p^q}$ be the padding variables of~$X$. We argue that~$|\mathcal{B}| \leq |X'|$ by showing that there is an injection~$f \colon \mathcal{B} \to X'$ from~$\mathcal{B}$ to~$X'$. For each set~$B \in \mathcal{B}$, we define its value~$f(B)$ as follows. Since~$\mathcal{B}$ is the $q$-shadow of~$\mathcal{A}$, there exists~$A_y = \{y_1, \ldots, y_p\} \in \mathcal{A}$ with~$B \subseteq A_y$. In fact, there might be multiple such~$A_y$; fix one arbitrarily. Let~$t_1, \ldots, t_q$ be the~$q$ indices for which~$y_{t_j} \in B$, which correspond to the elements of~$A_y$ that generated its $q$-shadow~$B$. Let~$i := \idx_{p,q}^{-1}(t_1, \ldots, t_q) \in [p^q]$ be the index of the $q$-tuple. We set~$f(B) := y_{p+i}$, i.e., we map the $q$-set~$B$ in the shadow to the padding variable in the set~$X'$ that appears at the position of clause~$y$ controlling the tuple of variables that form the shadow. The fact that~$f$ is an injection now follows directly from the fact that~$(X,Y)$ is multipartite and reduced.

Since~$f$ is an injection from~$\mathcal{B}$ to~$|X'|$ we infer~$|\mathcal{B}| \leq |X'|$: the number of distinct padding variables in the instance is at least as large as the $q$-shadow~$\mathcal{B}$ of~$\mathcal{A}$. Let~$m \geq p$ be the largest integer such that~$|\mathcal{A}| \geq \binom{m}{p}$. This implies that~$|\mathcal{A}| < \binom{m+1}{p} \leq 2p \binom{m}{p}$. By the Kruskal-Katona theorem (\Cref{kruskal-katona}) we have~$|X'| \geq |\mathcal{B}| \geq \binom{m}{q}$. Hence we can now derive:
\begin{align*}
    |Y| = |\mathcal{A}| \leq 2p \cdot \binom{m}{p} \leq 2p \cdot c_{p,q} \cdot \binom{m}{q}^{\frac{p}{q}} \leq 2p \cdot c_{p,q} \cdot |X'|^{\frac{p}{q}} \leq 2p \cdot c_{p,q} \cdot n^{\frac{p}{q}},
\end{align*}
which concludes the proof.
\end{proof}

To facilitate a bound on~$\NRD^*(\ORDP_{p,q}, n)$ via conditional non-redundancy, we now derive a bound on~$\NRD^*(\DP_{p,q}, n)$ by essentially showing it is an affine constraint.

\begin{lemma} \label{lem:nrd:or-dp-star:linear}
    For each fixed~$p,q \in \mathbb{N}$ with~$p \geq q$ it holds that $\NRD^*(\DP_{p,q}, n) \leq \Oh_{p,q}(n)$.
\end{lemma}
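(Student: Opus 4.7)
The plan is to prove the lemma by exhibiting a Mal'tsev polymorphism for $\DP_{p,q}$ and then invoking Theorem~\ref{thm:Malt-embedding} to conclude $\NRD(\DP_{p,q}, n) \leq |D_{p,q}| \cdot n = (2 + 2^q)n = \Oh_{p,q}(n)$. Since the multipartite non-redundancy is always upper-bounded by the general non-redundancy, this will immediately yield the stated bound on $\NRD^*(\DP_{p,q}, n)$. Morally, this is just the statement that $\DP_{p,q}$ is the solution set of a system of $\mathbb{F}_2$-linear equations between the Boolean variables and the bits of the padding variables, which matches the hint that $\DP_{p,q}$ is ``essentially affine.''

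To construct the polymorphism, I would define $\varphi \colon D_{p,q}^3 \to D_{p,q}$ via componentwise XOR on same-type triples: for Booleans $a,b,c \in \{0,1\}$, set $\varphi(a,b,c) = a \oplus b \oplus c$; and for length-$q$ bitstrings, set $\varphi((u),(v),(w)) = (u \oplus v \oplus w)$, where $\oplus$ is bitwise XOR. On mixed-type triples I would extend $\varphi$ to satisfy the Mal'tsev identities $\varphi(x,x,y) = \varphi(y,x,x) = y$ whenever they apply, and define $\varphi$ arbitrarily on all remaining inputs. This extension is consistent: the two Mal'tsev rules only constrain triples with two equal arguments, they never disagree (when both apply, i.e., $x = y$, both force $\varphi(x,x,x) = x$), and they agree with the XOR definition on uniform triples since $a \oplus a \oplus b = b$.

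The key step is verifying that $\varphi$ is actually a polymorphism of $\DP_{p,q}$. The decisive observation is that each coordinate of a $\DP_{p,q}$-tuple has a fixed type: positions $1, \ldots, p$ always carry Boolean values and positions $p+1, \ldots, p+p^q$ always carry bitstring values. Hence applying $\varphi$ componentwise to three tuples $T^1, T^2, T^3 \in \DP_{p,q}$ only ever invokes the XOR rules. To check that $\varphi(T^1, T^2, T^3) \in \DP_{p,q}$, note that Property~\ref{constraint:dp} of Definition~\ref{def:Rpq} (requiring the $j$-th bit of $\tilde x_i$ to equal $x_{s^{(i)}_j}$) is an $\mathbb{F}_2$-linear relation between Boolean bits, and is therefore preserved under triple XOR applied coordinatewise.

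The main ``obstacle'' is really just ensuring that the definition of $\varphi$ on mixed-type inputs can be made consistent with the Mal'tsev identities; but as explained above, the identities only overlap harmlessly on triples where all three entries coincide, so this causes no issue. Since mixed-type triples never arise when $\varphi$ is applied componentwise to $\DP_{p,q}$-tuples, the arbitrary choices on mixed triples are invisible to the polymorphism check. Having established the Mal'tsev polymorphism, Theorem~\ref{thm:Malt-embedding} delivers $\NRD(\DP_{p,q}, n) \leq |D_{p,q}| \cdot n$, completing the proof.
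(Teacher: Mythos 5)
Your proof is correct, and it takes a route that is close in spirit to the paper's but a bit more direct in its packaging. The paper constructs an explicit injection $\sigma \colon D_{p,q} \to (\Z/2\Z)^{q+2}$ and a coset $H \subseteq ((\Z/2\Z)^{q+2})^{p+p^q}$, checks that $\sigma$ is an embedding (i.e., $\sigma(\DP_{p,q}) = H \cap \sigma(D_{p,q})^{p+p^q}$), and then applies the second clause of Theorem~\ref{thm:Malt-embedding} (the ``embeds into a Mal'tsev structure'' clause) to get $\NRD^*(\DP_{p,q},n) \le \NRD^*(H,n) = \Oh_{p,q}(n)$. You instead notice that the Mal'tsev term on that Abelian group pulls back to a Mal'tsev term $\varphi$ defined directly on $D_{p,q}$: since each coordinate of a $\DP_{p,q}$-tuple is homogeneously typed (Boolean on the first $p$ positions, bitstring thereafter), componentwise application only ever meets same-type triples, where $\varphi$ is coordinatewise XOR, and the $\mathbb{F}_2$-linearity of Property~\ref{constraint:dp} of Definition~\ref{def:Rpq} then gives preservation. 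This lets you invoke the first clause of Theorem~\ref{thm:Malt-embedding} directly, skipping the embedding bookkeeping. The underlying observation — that $\DP_{p,q}$ is affine over $\mathbb{F}_2$ — is identical; what your version buys is the elimination of the auxiliary structure $(E,S)$ and the verification that the map is a genuine embedding. Both are fine; yours is slightly leaner, the paper's makes the Abelian-group nature of $\DP_{p,q}$ a bit more visible for later reference.
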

\begin{proof}
We prove this with an application of \Cref{thm:Malt-embedding}. More concretely, the projection constraints defining $\DP_{p,q}$ can be viewed as projection constraints inside of the Abelian group $G := ((\Z/2\Z)^{q+2}, +)$. We define a coset $H \subseteq G^{p + p^q}$ as follows. We define $H$ to be the set of $(h^1, \hdots, h^{p+p^q}) \in G^{p+p^q}$ satisfying the following conditions:
\begin{itemize}
\item[(i)] If $i \in [p]$, then $h^i = (0,0,\hdots, 0)$ or $h^i = (1,0,0,\hdots, 0)$.
\item[(ii)] If $i \in [p^q]$, then $h^{i+p} = (0,1,x_1, \hdots, x_q)$ for some $x_1, \hdots, x_q \in \Z/2\Z$.
\item[(iii)] For any $i \in [p^q]$ and $(t_1, \hdots, t_q) = \idx_{p,q}(i)$, we have for every $j \in [q]$ that $h^{i+p}_{j+2} = h^{t_j}_1.$
\end{itemize}
It is straightforward to verify that $H$ is a coset. Consider the map $\sigma : D_{p,q} \to (\Z/2\Z)^{q+2}$ defined as follows:
\begin{itemize}
\item For $b \in D_{p,q}$, we set $\sigma(b) = (b, 0, 0, \hdots, 0)$.
\item For $(b_1, \hdots, b_q) \in D_{p,q}$, we set $\sigma(b_1, \hdots, b_q) = (0,1,b_1, \hdots, b_q)$.
\end{itemize}
By inspection, properties (i)--(iii) ensure that $\sigma(\DP_{p,q}) = H \cap \sigma(D_{p,q})^{p+p^q}.$ In particular, for any $T = (x_1, \hdots, x_p, \tx_1, \hdots, \tx_{p^q}) \in H \cap \sigma(D_{p,q})^{p+p^q}$, we have that (i) ensures that $x_1, \hdots, x_p$ correspond to $\sigma(0)$ or $\sigma(1)$, (ii) ensures that $\tx_1, \hdots, \tx_{p^q}$ each correspond to some $\sigma(b_1, \hdots, b_q)$, and (iii) ensures Property~\ref{constraint:dp} is satisfied in the definition of $\DP_{p,q}$ (\Cref{def:Rpq}). Therefore, $\sigma$ is an embedding of $\DP_{p,q}$ into $H$. 

Thus, since $H$ is an Abelian coset (and thus has a Mal'tsev polymorphism), by \Cref{thm:Malt-embedding} we have that $\NRD^*(\DP_{p,q}, n) \leq \NRD^*(H, n)\leq \Oh_{p,q}(n)$, as desired.
\end{proof}

By using the established upper-bounds together with the triangle inequality for conditional non-redundancy and the reduction to multipartite instances, the desired bound on~$\NRD(\ORDP_{p,q}, n)$ now follows easily.

\begin{lemma} \label{lem:upperbound:rpq}
    For each fixed~$p,q \in \mathbb{N}$ with~$p \geq q$ it holds that $\NRD(\ORDP_{p,q}, n) \leq \Oh_{p,q}(n^{\frac{p}{q}})$.
\end{lemma}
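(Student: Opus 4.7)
The plan is a direct assembly of the three previously established pieces: the triangle inequality for conditional non-redundancy (Lemma~\ref{lem:NRD-chain}), the conditional upper bound $\NRD^*(\ORDP_{p,q} \mid \DP_{p,q}, n) \leq \Oh_{p,q}(n^{p/q})$ from Lemma~\ref{lem:or-dp:upper}, and the linear bound $\NRD^*(\DP_{p,q}, n) \leq \Oh_{p,q}(n)$ from Lemma~\ref{lem:nrd:or-dp-star:linear}. Since $\DP_{p,q} \supsetneq \ORDP_{p,q}$, the triangle inequality (for multipartite instances, as noted immediately after Lemma~\ref{lem:NRD-chain}) yields
\begin{equation*}
\NRD^*(\ORDP_{p,q}, n) \;\leq\; \NRD^*(\ORDP_{p,q} \mid \DP_{p,q}, n) + \NRD^*(\DP_{p,q}, n) \;\leq\; \Oh_{p,q}(n^{p/q}) + \Oh_{p,q}(n) \;=\; \Oh_{p,q}(n^{p/q}),
\end{equation*}
where the last equality uses that $p \geq q \geq 1$, so $p/q \geq 1$ and the polynomial term dominates the linear term.

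Next, I would lift this multipartite bound to an unrestricted bound. By Lemma~\ref{lemma:reduction:to:multipartite} applied to the arity $r = p + p^q$ of $\ORDP_{p,q}$, we have
\begin{equation*}
\NRD(\ORDP_{p,q}, n) \;\leq\; \NRD^*(\ORDP_{p,q}, (p+p^q)\cdot n) \;\leq\; \Oh_{p,q}\bigl((p+p^q)^{p/q}\cdot n^{p/q}\bigr) \;=\; \Oh_{p,q}(n^{p/q}),
\end{equation*}
where we absorb the constant $(p+p^q)^{p/q}$ into the $\Oh_{p,q}$ notation since $p$ and $q$ are fixed. Alternatively, one may invoke Lemma~\ref{lem:NRD-scale} after the substitution to the same effect.

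There is no real obstacle here: all the technical effort was already spent in proving Lemma~\ref{lem:or-dp:upper} (the Kruskal--Katona argument together with the reducing step for non-reduced instances) and in the Abelian-coset embedding behind Lemma~\ref{lem:nrd:or-dp-star:linear}. The only mild care needed is to confirm that the constant $(p+p^q)^{p/q}$ introduced by going from $n$ to $(p+p^q)n$ variables is indeed absorbed into a single $\Oh_{p,q}(\cdot)$ constant (it is, since $p, q$ are fixed). Combined with the matching lower bound $\NRD(\ORDP_{p,q}, n) \geq \Omega_{p,q}(n^{p/q})$ from Lemma~\ref{lem:lowerbound-nrd}, this completes the proof of Theorem~\ref{thm:frac-nrd} for this choice of relation.
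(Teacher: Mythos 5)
Your proposal is correct and follows essentially the same approach as the paper: both assemble Lemmas~\ref{lem:NRD-chain}, \ref{lem:or-dp:upper}, \ref{lem:nrd:or-dp-star:linear}, and~\ref{lemma:reduction:to:multipartite}, with the only (immaterial) difference being that you apply the multipartite triangle inequality before the reduction from general to multipartite instances, while the paper does it in the opposite order.
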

\begin{proof}
    The lemma follows from the following derivation:
    \begin{align*}
        \NRD&(\ORDP_{p,q},n)\\
        &\leq \NRD^*(\ORDP_{p,q}, (p+p^q) n) & \text{By \Cref{lemma:reduction:to:multipartite}.} \\
        &\leq \NRD^*(\ORDP_{p,q} \mid \DP_{p,q}, (p+p^q) n) + \NRD^*(\DP_{p,q}, (p+p^q) n) & \text{By \Cref{lem:NRD-chain}.} \\
        &\leq \Oh_{p,q}(((p+p^q)n)^{\frac{p}{q}}) + \Oh_{p,q}((p+p^q) n) & \text{\Cref{lem:or-dp:upper} and~\ref{lem:nrd:or-dp-star:linear}.} \\
        &\leq \Oh_{p,q}(n^{\frac{p}{q}}). & \qedhere
    \end{align*}
\end{proof}

The proof of \Cref{thm:frac-nrd} now follows directly from the lower bound in \Cref{lem:lowerbound-nrd} with the matching upper bound in \Cref{lem:upperbound:rpq}.

\subsection{Connections to CSP Kernelization}\label{subsec:kernel}

The methods used to prove \Cref{thm:frac-nrd} can be used to prove an even stronger fact: for every $p \ge q$, there exists a constraint language $\SATDP_{p,q}$ such that instances of $\CSP(\SATDP_{p,q})$ on $n$ variables can be efficiently kernelized to have at most $\Oh(n^{p/q})$ clauses. The constraint language $\SATDP_{p,q}$ consists of two relations: the aforementioned $\ORDP_{p,q}$ as well as $\CUT := \{(0,1),(1,0)\} \subseteq \{0,1\}^2 \subseteq D_{p,q}^2$ which is simply the negation relation on the Boolean values~$0$ and~$1$.

The reason we need to add this extra relation is that $\CSP(\ORDP_{p,q})$ has a near-trivial polynomial time algorithm: if the Boolean variables and the padding variables (as defined in the proof of \Cref{thm:upperbound-nrd}) overlap, we know the instance is unsatisfiable. Otherwise, we can assign $1$ to each Boolean variable and $(1,\hdots, 1)$ to each padding variable. Note that instances of $\CSP(\OR_{p})$ have the analogous issue.

By introducing $\CUT$, one can then reduce CNF-$p$-SAT to $\CSP(\SATDP_{p,q})$, allowing the application of the kernelization lower bound for CNF-$p$-SAT for $p\ge 3$ by Dell and van Melkebeek proved~\cite{dell2014Satisfiability}. See Appendix~\ref{app:ker} for the full details on porting \Cref{thm:frac-nrd} to CSP kernelization.

\subsection{An Alternative Construction with Smaller Arity}\label{subsec:smaller}

We conclude this section by noting that \Cref{thm:frac-nrd} can be proved for predicates of (much) smaller arity by \emph{puncturing} $\ORDP_{p,q}$ (i.e., dropping coordinates). For any $p, q \in \N$ with $p \ge q$. Let $E_{p,q} := \{0,1\}^q$. Let $\cF \subseteq \binom{[p]}{q}$ be a family of sets. Define the operation $\proj_{\cF} : \{0,1\}^p \to E_{p,q}^{\cF}$ as follows %
\[
    \proj_{\cF}(x) := (x|_{S} : S \in \cF).
\]
Given this operation, we can define a pair of predicates analogous to $\ORDP_{p,q}$ and $\DP_{p,q}$ from \Cref{def:Rpq}.
We define the following pair of predicates.
\begin{align*}
    \ORDP_{\cF} &:= \left\{\proj_{\cF}(x) : x \in \OR_p \right\} \subset E_{p,q}^{\cF}\\
    \DP_{\cF} &:= \left\{\proj_{\cF}(x) : x \in \{0,1\}^p \right\} \subset E_{p,q}^{\cF}.
\end{align*}
Unlike $\ORDP_{p,q}$ and $\DP_{p,q}$, $\ORDP_{\cF}$ and $\DP_{\cF}$ no longer have Boolean variables keeping track of the underlying Boolean assignment. This allows for the predicates to have even smaller arity (with a tradeoff of a somewhat trickier proof).

We derive non-redundancy bounds for a specific class of set families $\cF$. We say that $\cF$ is \emph{$q/p$-regular} if there exists a probability distribution over $\cF$ such that for every $i \in [p]$, the probability that a sample $S$ from $\cF$ satisfies $i \in S$ is exactly $q/p$. Our bound is as follows.

\begin{theorem}\label{thm:frac-nrd-alt}
For any $p \ge q \ge 1$ and for any $q/p$-regular $\cF \subseteq \binom{[p]}{q}$, $\NRD(\OR_{\cF}, n) = \Theta_q(n^{p/q})$.
\end{theorem}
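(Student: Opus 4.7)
The plan is to adapt the two-sided bound for $\ORDP_{p,q}$ (Lemmas~\ref{lem:lowerbound-nrd} and~\ref{lem:upperbound:rpq}) to this punctured setting, replacing Kruskal--Katona by fractional Shearer/Loomis--Whitney so that the $q/p$-regularity of $\cF$ is exactly what supplies the required cover condition.

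For the lower bound $\NRD(\ORDP_{\cF}, n) \ge \Omega_q(n^{p/q})$, I would take $m$ abstract meta-indices and create one clause per $p$-subset $c = \{i_1 < \cdots < i_p\} \subseteq [m]$, introducing for each $q$-subset $T \subseteq [m]$ a single padding variable $v_T$. The slot-$S$ variable of clause $c$ is $v_{\{i_s : s \in S\}}$, yielding $\binom{m}{p}$ clauses over at most $\binom{m}{q}$ variables. Non-redundancy of a target clause $c^*$ is witnessed by assigning each $v_T$ the Boolean $q$-tuple $(x_t)_{t \in T}$, where globally $x_j = 0$ iff $j \in c^*$. Since $q/p$-regularity (with $q \ge 1$) forces $\bigcup \cF = [p]$, the slot values at any clause $c$ uniquely reconstruct the Boolean tuple $(x_{c_\ell})_{\ell = 1}^p$, which equals $0^p$ exactly when $c = c^*$, the unique violated clause. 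Setting $n = \binom{m}{q}$ then yields the claimed lower bound.

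For the upper bound $\NRD(\ORDP_{\cF}, n) \le O_{p,q}(n^{p/q})$, I would use the triangle inequality (\Cref{lem:NRD-chain}) with scaffolding $\DP_{\cF}$. The latter has $\NRD(\DP_{\cF}, n) = O_{p,q}(n)$ by an Abelian embedding mirroring \Cref{lem:nrd:or-dp-star:linear}: each element of $E_{p,q}$ embeds into a small power of $\Z/2\Z$ and $\DP_{\cF}$ becomes a coset. The heart of the argument is bounding $\NRD^*(\ORDP_{\cF} \mid \DP_{\cF}, n)$ on a multipartite instance $(X, Y)$ with $X = \bigsqcup_{S \in \cF} X_S$. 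I would reconstruct implicit meta-Boolean coordinates as equivalence classes of pairs $(y, i) \in Y \times [p]$: declare $(y, i) \sim (y', i)$ whenever some $S \ni i$ satisfies $y_S = y'_S$, and close transitively (within each fixed $i$). Writing $M_i$ for the resulting classes and $\mu(y) := (\mu_1(y), \ldots, \mu_p(y))$, two facts drive the argument: (i) $\mu \colon Y \to \prod_i M_i$ is injective --- any non-redundancy witness $\sigma$ for $y$ forces the Boolean tuple reconstructed at $y$ to be $0^p$, and the equivalence propagates this to every $y'$ with $\mu(y') = \mu(y)$, contradicting $\ORDP_{\cF}$-satisfaction at $y'$; and (ii) for each $S \in \cF$, the projection $\mu(Y)|_S$ has size at most $|X_S|$, via the well-defined map $v \in X_S \mapsto (\mu_i(y))_{i \in S}$ for any $y$ with $y_S = v$. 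Applying fractional Shearer with weights $w_S = (p/q)\mu^*(S)$ for the $q/p$-regular distribution $\mu^*$ gives $\sum_{S \ni i} w_S = (p/q)\cdot(q/p) = 1$ for every $i \in [p]$, hence $|Y| = |\mu(Y)| \le \prod_S |\mu(Y)|_S|^{w_S} \le \prod_S |X_S|^{w_S}$, and weighted AM--GM bounds this by $\bigl(\sum_S \mu^*(S)|X_S|\bigr)^{p/q} \le n^{p/q}$.

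The main obstacle is observation (i): ensuring that the equivalence relation on $(y, i)$ pairs genuinely encodes all Boolean equalities forced by $\DP_{\cF}$-consistency, so that two clauses sharing the same $\mu$-image cannot be separated by any $\DP_{\cF}$-satisfying assignment. This is the analogue of the reduction step for non-reduced instances in \Cref{lem:or-dp:upper}, but since $\ORDP_{\cF}$ no longer exposes the meta-Boolean variables directly, the equivalence-class formalism has to carry all the bookkeeping. Once (i) is established the $q/p$-regularity enters in exactly one place --- the Shearer cover condition --- and the rest is a standard adaptation of the $\ORDP_{p,q}$ argument.
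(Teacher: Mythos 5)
Your proposal is correct and follows essentially the same approach as the paper: lower bound by a combinatorial design on $q$-subsets (the paper uses $\cF \times [t]^q$, a cosmetic difference), upper bound by passing to $\DP_{\cF}$ via the triangle inequality, then applying fractional Shearer to the per-coordinate equivalence classes (these are exactly the connected components of the auxiliary hypergraphs $Y_i$ in Lemma~\ref{lem:or-ub}). One small improvement over the paper's write-up: you observe directly that $v \in X_S \mapsto (\mu_i(y))_{i\in S}$ is well-defined, avoiding the paper's WLOG step of merging $\DP_{\cF}$-indistinguishable variables in $X_S$.
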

The most notable difference between the proof of \Cref{thm:frac-nrd} and \Cref{thm:frac-nrd-alt} is that the former uses Kruskal-Katona in the upper bound while the latter uses Shearer's inequality. Details are available in Appendix~\ref{app:frac-alt}.%

\section{Graph Girth and Binary Conditional NRD Classification}\label{sec:nrd-bin}

One of the first classification theorems in the context of non-redundancy is the classification of \emph{binary} relations. In particular, given any relation $R \subseteq D^2$, we either have that $\NRD(R, n) = O_D(n)$ or $\NRD(R, n) = \Omega_D(n^2)$. This was proved in the context of non-redundancy by \cite{bessiere2020Chain} and an analogous result was proved in the context of sparsification by \cite{butti2020}.

In this section, we seek to generalize these results to the setting of \emph{conditional} non-redundancy. In particular, given a domain $D$ and $R \subsetneq S \subseteq D^2$, we seek to classify the possible asymptotics of $\NRD(R \mid S, n)$. As we shall see, unlike the dichotomy theorem for `unconditional' non-redundancy, there are infinitely many possible exponents possible for $\NRD(R \mid S, n)$, corresponding to the possible girths a bipartite graph may have.
This classification is given in Section~\ref{subsec:binary-class}.

As an application, we construct two families of (unconditional) ternary predicates. The first family of predicates constructed in Section~\ref{subsec:inf-ternary} proves that infinitely many NRD exponents are possible for ternary predicates--in contrast to the constructions in Section~\ref{sec:frac} which only produce finitely many examples for each arity. The second family of predicates, constructed in Section~\ref{subsec:approach-linear}, shows that ternary predicates which lack of Mal'tsev embeedding can have non-redundancies getting arbitrarily close to linear.

\subsection{Classification of Binary Conditional NRD}\label{subsec:binary-class}

Recall that $\NRD(R \mid S, n)$ captures the maximum possible size of a non-redundant instance of $\CSP(R)$ where the non-satisfying assignments must come from $S \setminus R$ (i.e., all witnessing solutions are satisfying assignments to $\CSP(S)$). First, we reduce to the case that $|S \setminus R| = 1$.

\begin{proposition}\label{prop:scaffold-one}
For any $R \subsetneq S \subseteq D^r$, we have that
\[
    \max_{t \in S \setminus R} \NRD(R \mid R \cup \{t\}, n) \le \NRD(R \mid S, n)  \le |D|^r\max_{t \in S \setminus R} \NRD(R \mid R \cup \{t\}, n).
\]
\end{proposition}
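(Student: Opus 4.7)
The plan is to prove the two inequalities separately, with both following from fairly direct manipulations of witnesses.

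For the lower bound, I would fix any $t \in S \setminus R$ and argue that any non-redundant instance of $\CSP(R \mid R \cup \{t\})$ is also a non-redundant instance of $\CSP(R \mid S)$. Indeed, suppose $(X,Y)$ is non-redundant for $\CSP(R \mid R \cup \{t\})$. Then for every clause $y \in Y$ there is a witness $\sigma_y \in \sat(R, Y \setminus \{y\})$ with $\sigma_y(y) \in (R \cup \{t\}) \setminus R = \{t\} \subseteq S \setminus R$. This same witness shows $y$ is non-redundant for $\CSP(R \mid S)$. Hence $|Y| \le \NRD(R \mid S, n)$, and taking the maximum over $t$ gives the left inequality.

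For the upper bound, let $(X,Y)$ be any non-redundant instance of $\CSP(R \mid S)$. For each $y \in Y$, fix a witness $\sigma_y$ that satisfies $Y \setminus \{y\}$ under $R$ and has $\sigma_y(y) \in S \setminus R$; call $t_y := \sigma_y(y)$. This partitions the clauses according to which tuple of $S \setminus R$ they are mapped to: for each $t \in S \setminus R$, let $Y_t := \{ y \in Y : t_y = t\}$. The key observation is that $(X, Y_t)$ is a non-redundant instance of $\CSP(R \mid R \cup \{t\})$: for each $y \in Y_t$, the witness $\sigma_y$ still $R$-satisfies all of $Y_t \setminus \{y\} \subseteq Y \setminus \{y\}$ and maps $y$ precisely to $t \in (R \cup \{t\}) \setminus R$. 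Consequently $|Y_t| \le \NRD(R \mid R \cup \{t\}, n)$ for every $t$, and summing over the at most $|S \setminus R| \le |D|^r$ choices of $t$ yields
\[
|Y| \;=\; \sum_{t \in S \setminus R} |Y_t| \;\le\; |D|^r \cdot \max_{t \in S \setminus R} \NRD(R \mid R \cup \{t\}, n).
\]
Taking the maximum over all non-redundant $(X,Y)$ gives the right inequality.

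Neither direction presents a real obstacle; the only point requiring care is to verify that restricting the witness $\sigma_y$ from the ambient instance $Y$ to the sub-instance $Y_t$ preserves its status as a witness (which it trivially does, since it satisfies a superset of the clauses needed). No algebraic machinery or extremal input is necessary here; the proposition is essentially a bookkeeping statement that reduces the study of binary (or general) conditional non-redundancy to the case $|S \setminus R| = 1$, which is precisely the normalization that Section~\ref{subsec:binary-class} will exploit for the bipartite-graph girth characterization.
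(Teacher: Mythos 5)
Your proof is correct and follows essentially the same approach as the paper: the lower bound by observing a witness with $\sigma_y(y) = t$ is a fortiori a witness for the larger scaffolding $S$, and the upper bound by partitioning $Y$ according to the tuple $t_y \in S \setminus R$ hit by the chosen witness, noting each part is a non-redundant instance of $\CSP(R \mid R \cup \{t\})$. You simply spell out the details that the paper leaves implicit.
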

\begin{proof}
  The lower bound is trivial. For the upper bound, consider any non-redundant instance $(X,Y \subseteq X^r)$ of $\CSP(R \mid S)$ and note that we can partition the $y \in Y$ based on which $t_y \in S \setminus R$ is assigned by the witness corresponding to $y$. If we partition $Y$ into $|S \setminus R|$ parts according to the value of $t_y$, then
  $(X, Y)$ can be viewed as the disjoint union of non-redundant instances of $\CSP(R \mid R \cup \{t\})$ for $t \in S \setminus R$.
\end{proof}

We now work toward describing the binary characterization. Given an integer $k \ge 2$, we let $C_{2k} \subseteq \{0,1,\hdots, k-1\}^2$ denote the $2k$-cycle. More explicitly, we use a bipartite representation where the two copies of the domain $\{0,\ldots,k-1\}$ represent different vertices, and define
\[
    C_{2k} := \{(0,0),(0,1),(1,0),(1,2),(2,1), \hdots, (k-2,k-1),(k-1,k-2),(k-1,k-1)\}.
\]
We also let $C^*_{2k} := C_{2k} \setminus \{(0,0)\}$. 
\begin{remark}\label{rem:tC2k}
Equivalently, the $2k$-cycle can be represented by the following predicate:
\[
    \widetilde{C}_{2k} := \{(0,0),(0,1),(1,1),(1,2),(2,2), \hdots, (k-1,k-1),(k-1,0)\}
\]
with $\widetilde{C}^*_{2k} = \widetilde{C}_{2k} \setminus \{(0,0)\}.$ This latter representation will be of use in Section~\ref{subsec:approach-linear}. Since the bipartite graphs $C_{2k}$ and $\widetilde{C}_{2k}$ are isomorphic, one can show that $\NRD(C^*_{2k} \mid C_{2k}, n) = \Theta(\NRD(\widetilde{C}^*_{2k} \mid \widetilde{C}_{2k}, n))$. Such equivalences are formalized in the theory of fgppp-definitions in Section~\ref{sec:theory}.
\end{remark}

We now classify the (bipartite) non-redundant instances of $\CSP(C^*_{2k} \mid C_{2k})$. See Figure~\ref{fig:girth} for an illustration of the proof technique.

\begin{lemma}\label{lem:cycle-nrd}
For any $k \ge 2$, a bipartite graph $G = (A,B,E\subseteq A \times B)$ is a non-redundant instance of $\CSP(C^*_{2k} \mid C_{2k})$ iff $G$ has girth at least $2k$.
\end{lemma}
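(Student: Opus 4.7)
The plan is to prove both directions of the biconditional, exploiting the fact that $C_{2k}$ is literally a $2k$-cycle (viewed as a bipartite graph on two copies of $\{0,\ldots,k-1\}$) and that $C^*_{2k} = C_{2k} \setminus \{(0,0)\}$ is the path of length $2k-1$ obtained by deleting one edge from that cycle.

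For the ($\Rightarrow$) direction, I would argue contrapositively: suppose $G$ contains a cycle $C$ of length $2\ell < 2k$ (necessarily even, since $G$ is bipartite) and pick any edge $e \in C$. Any witness $\sigma$ for $e$ is, by definition of $\CSP(C^*_{2k}\mid C_{2k})$, a homomorphism $\sigma\colon G \to C_{2k}$ sending $e$ to $(0,0)$ and every other edge into $C^*_{2k}$. Restricting $\sigma$ to the cycle $C$, the edges of $C \setminus \{e\}$ form a walk of length $2\ell - 1$ inside the path $C^*_{2k}$ whose endpoints are the two $0$-labelled vertices at opposite ends of that path. But in $C^*_{2k}$ those two endpoints sit at distance exactly $2k-1 > 2\ell - 1$, which is a contradiction.

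For the ($\Leftarrow$) direction, I would construct, for each $e = (a^*, b^*) \in E$, an explicit witness inspired by the BFS colouring drawn in Figure~\ref{fig:girth}:
\[
\sigma_e(v) \;=\; \min\!\bigl(d_{G-e}(v,\{a^*,b^*\}),\, k-1\bigr).
\]
Clearly $\sigma_e(a^*) = \sigma_e(b^*) = 0$, so $e$ maps to $(0,0)$. I would then verify that every other edge $(a,b)$ maps into $C^*_{2k}$. The triangle inequality gives $|\sigma_e(a) - \sigma_e(b)| \le 1$, which, together with the cap at $k-1$, settles all cases where the two values differ by exactly $1$ or both equal $k-1$. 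The only case left to exclude is $\sigma_e(a) = \sigma_e(b) = \ell$ with $\ell < k-1$: here the parity constraints of the bipartite graph force the realising shortest paths in $G - e$ from $a$ and from $b$ to attach to opposite elements of $\{a^*, b^*\}$, so concatenating them with the edge $(a,b)$ yields a walk in $G - e$ from $a^*$ to $b^*$ of length $2\ell + 1$. The girth hypothesis gives $d_{G-e}(a^*, b^*) \ge 2k - 1$ (otherwise a shorter $G{-}e$-path closed up with $e$ would violate girth), whence $2\ell + 1 \ge 2k - 1$, contradicting $\ell < k - 1$.

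The main obstacle is the parity bookkeeping in this last step: one must verify that the bipartition parities of $d_{G-e}(a, a^*)$, $d_{G-e}(a, b^*)$, $d_{G-e}(b, a^*)$, $d_{G-e}(b, b^*)$ force the two $\ell$-realising shortest paths to connect $a$ to one of $\{a^*, b^*\}$ and $b$ to the other, so that gluing them through $(a,b)$ really produces an $a^*$--$b^*$ walk (and not, say, a closed walk at $a^*$). Once this bookkeeping is dispatched, the two directions combine to give the claimed characterization of girth via conditional non-redundancy.
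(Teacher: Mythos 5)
Your proof is correct and takes essentially the same approach as the paper's: the $\Rightarrow$ direction shows a cycle shorter than $2k$ cannot admit a homomorphism to $C_{2k}$ placing exactly one edge on $(0,0)$, and the $\Leftarrow$ direction uses the BFS distance to $\{a^*,b^*\}$, capped at $k-1$, as the witness colouring. Your final case analysis (using bipartite parity to realize an $a^*$--$b^*$ walk of length $2\ell+1$ in $G-e$, then comparing against $d_{G-e}(a^*,b^*)\ge 2k-1$) is a minor and equally valid reformulation of the paper's step, which instead bounds a closed trail of length at most $2k-2$ and extracts a short simple cycle from it.
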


\begin{remark}
Note that since $C_{4} = \{0,1\}^2$, a non-redundant instance of $\CSP(C^*_{4} \mid C_4)$ is precisely a non-redundant instance of $\CSP(\OR_2)$. Hence, that particular case was already handled by previous works (e.g., \cite{filtser2017Sparsification,bessiere2020Chain}).
\end{remark}

\begin{proof}
First assume that $G$ is a non-redundant instance but has a cycle of length $\ell < 2k$. Since $G$ is bipartite, $\ell$ is even. Thus, there exists an injective homomorphism $f : C_{\ell} \to G$. Since $G$ is non-redundant, for each $e \in C_{\ell}$, there exists a map $g_e : G \to C_{2k}$ such that $(g_e\circ f)^{-1}((0,0)) = \{(0,0)\}$. However, since $\ell < 2k$, every homomorphism from $C_{\ell}$ to $C_{2k}$ is constant, so we have a contradiction. Thus, every non-redundant instance of $\CSP(C^*_{2k} \mid C_{2k})$ has girth at least $2k$.

Now, assume that $G$ has girth at least $2k$. Pick any edge $e := (x_0, y_0)$ of $G$, we seek to construct a homomorphism $f_e : G \to C_{2k}$ such that $f_e^{-1}(0,0) = \{e\}$. For every $v \in A \cup B$ (the vertex set of $G$), let $d_e(v)$ be the distance from $v$ to $\{x_0, y_0\}$ in $G$. In particular, $d_e(x_0) = d_e(y_0) = 0$. Any vertex in a different connected component than $\{x_0,y_0\}$ is assigned a distance of $\infty$. Our homomorphism $f_e$ is then
\[
    f_e(v) := \min (d_e(v), k-1).
\]
To see why this is valid, consider any edge $(x,y)$ of $G$. If $d_e(x),d_e(y) \ge k-1$, we are fine as $(f_e(x), f_e(y)) = (k-1,k-1) \in C_{2k}$. Otherwise, note that by the triangle inequality $|d_e(x) - d_e(y)| \le 1$. If $|d_e(x) - d_e(y)| = 1$, then we have that $(f_e(x), f_e(y)) \in C_{2k}$. Otherwise, assume for sake of contradiction that $d_e(x) = d_e(y) \le k-2$. In that case, there must be a trail from $x$ to $\{x_0,y_0\}$ to $y$ back to $x$ of length $d_e(x) + d_e(y) + 2 \le 2k-2$. Since $d_e(x) = d_e(y)$, the shortest paths from $x$ and $y$ to $\{x_0,y_0\}$ cannot use the edge $(x,y)$. Thus, this trail must include a simple cycle of length at most $2k-2$, a contradiction.   
\end{proof}

As an immediate corollary of Lemma~\ref{lem:cycle-nrd} and \Cref{lemma:reduction:to:multipartite}, we have that the non-redundancy of $\CSP(C^*_{2k} \mid C_{2k})$ is asymptotically equal to the maximum size of bipartite graphs of girth at least $2k$.

\begin{corollary}\label{cor:C2k}
For all $k\ge 2$,
\[
 \NRD(C^*_{2k} \mid C_{2k}, n) = \Theta_D(\ex(n, \{C_3, C_4, C_5, \hdots, C_{2k-1}\})).
\]
In particular, by the estimates of \cite{furedi2013history}, we have that there exists a universal constant $c > 0$ such that.
\[
\NRD(C^*_{2k} \mid C_{2k}, n) \in [\Omega_D(n^{1+\frac{c}{k-1}}), O_D(n^{1+\frac{1}{k-1}})].
\]
\end{corollary}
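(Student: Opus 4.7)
The plan is to deduce the corollary by combining Lemma~\ref{lem:cycle-nrd}, which identifies the bipartite conditionally non-redundant instances of $\CSP(C^*_{2k} \mid C_{2k})$ with bipartite graphs of girth at least $2k$, together with Lemma~\ref{lemma:reduction:to:multipartite} (in its conditional form, noted right after its proof) to pass from general to bipartite instances, and standard extremal graph theory to relate the bipartite and general host graphs. Both directions of the asymptotic equivalence should then follow essentially immediately.

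For the upper bound, I would first apply the conditional version of Lemma~\ref{lemma:reduction:to:multipartite} to write $\NRD(C^*_{2k} \mid C_{2k}, n) \le \NRD^*(C^*_{2k} \mid C_{2k}, 2n)$. Any bipartite non-redundant instance on $2n$ variables is, by Lemma~\ref{lem:cycle-nrd}, exactly a bipartite graph on $2n$ vertices of girth at least $2k$, and hence has at most $\ex(2n, \{C_3, \ldots, C_{2k-1}\})$ edges. Because this extremal number grows polynomially in $n$, one has $\ex(2n, \ldots) = O_k(\ex(n, \ldots))$, which yields the stated upper bound. For the lower bound, I would take an extremal graph $H$ on $n$ vertices of girth at least $2k$ with $m = \ex(n, \{C_3, \ldots, C_{2k-1}\})$ edges, and pass to its bipartite double cover $\tilde H$. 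This graph $\tilde H$ is bipartite on $2n$ vertices with $2m$ edges and still has girth at least $2k$, since each odd cycle of $H$ lifts to a single cycle of twice the length in $\tilde H$ while each even cycle lifts to two disjoint cycles of the same length. Lemma~\ref{lem:cycle-nrd} then certifies $\tilde H$ as a non-redundant instance, giving $\NRD(C^*_{2k} \mid C_{2k}, 2n) \ge 2m$, and a constant-factor rescaling in $n$ (valid for conditional NRD by the conditional analogue of Lemma~\ref{lem:NRD-scale}) delivers the matching lower bound.

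The concrete polynomial sandwich then follows by plugging in the classical bounds surveyed in \cite{furedi2013history}: the Moore-bound argument yields $\ex(n, \{C_3, \ldots, C_{2k-1}\}) = O_k(n^{1 + 1/(k-1)})$, while explicit algebraic constructions (such as the Lazebnik--Ustimenko graphs or Ramanujan graphs) give a matching $\Omega(n^{1 + c/(k-1)})$ lower bound for some universal constant $c > 0$. There is no genuine obstacle in this argument; the only mild subtlety is the bipartite-versus-general extremal distinction, which is handled cleanly by the bipartite double cover preserving girth when it is even and strictly increasing it when odd.
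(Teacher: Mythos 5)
Your proof is correct and fills in exactly the details the paper leaves implicit when it says the corollary is ``immediate'' from Lemma~\ref{lem:cycle-nrd} and Lemma~\ref{lemma:reduction:to:multipartite}. One small caveat: the claim $\ex(2n,\{C_3,\dots,C_{2k-1}\}) = O_k(\ex(n,\{C_3,\dots,C_{2k-1}\}))$ does not follow just from ``polynomial growth'' (the exact growth rate of the cycle Tur\'an number is unknown); the clean justification is the same random-deletion argument used in Lemma~\ref{lem:NRD-scale}, since an induced $n$-vertex subgraph of an extremal $2n$-vertex graph retains a constant fraction of edges and keeps girth at least $2k$. For the lower bound one can also avoid the $2n \to n$ rescaling entirely by taking a bipartite spanning subgraph of the extremal $H$ with at least half its edges (max cut), which stays on $n$ vertices and has girth at least $2k$, but your double-cover route is equally valid once Lemma~\ref{lem:NRD-scale} is invoked as you do.
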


\begin{remark}
It is a long-standing conjecture in combinatorics that the constant $c$ can be taken to be $1$ for all $k \ge 2$, but this is only known to be true for $k \in \{2,3,4,6\}$. See Section 4 of \cite{furedi2013history} for a detailed discussion.
\end{remark}

We now complete the classification by showing that \Cref{lem:cycle-nrd} captures (essentially) all cases.

\begin{theorem}\label{thm:binary-class}
For every $R \subsetneq S \subseteq D^2$, we either have that $\NRD(R \mid S, n) = O_D(n)$, or there exists $k\ge 2$ such that 
\[
\NRD(R \mid S, n) = \Theta_D(\NRD(C^*_{2k} \mid C_{2k}, n)).
\]
\end{theorem}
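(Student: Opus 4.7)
The plan is to pass to a minimal core instance and then imitate the proof of Lemma~\ref{lem:cycle-nrd}. By Proposition~\ref{prop:scaffold-one} it suffices to analyze $\NRD(R \mid R \cup \{e_0\}, n)$ for a single distinguished tuple $e_0 = (x_0, y_0) \in S \setminus R$, and by the multipartite analogue of Lemma~\ref{lemma:reduction:to:multipartite} one may further restrict to multipartite instances $(X_L \dot\cup X_R, Y)$. Viewing $R$ as the edge set of a bipartite graph $G_R$ on $D_L \dot\cup D_R$, define $k$ to be the smallest integer such that $G_R$ contains a walk of length $2k-1$ from $x_0$ to $y_0$, setting $k := \infty$ if no such walk exists; since $e_0 \notin R$ we automatically have $k \geq 2$. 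I claim $\NRD(R \mid R \cup \{e_0\}, n) = \Theta_D(\NRD(C^*_{2k} \mid C_{2k}, n))$ when $k < \infty$ and $\NRD(R \mid R \cup \{e_0\}, n) = O_D(n)$ when $k = \infty$, which suffices.

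For the lower bound in the case $k < \infty$, fix a shortest walk $v_0, v_1, \dots, v_{2k-1}$ from $x_0$ to $y_0$ in $G_R$; being shortest it is automatically a simple path. Since $C^*_{2k}$ is itself a simple path of length $2k-1$ between its two boundary vertices, there is a bipartite-graph homomorphism $\phi$ from $C_{2k}$ into $R \cup \{e_0\}$ that identifies the $C^*_{2k}$-path with $v_0, \ldots, v_{2k-1}$; in particular $\phi$ sends the closing edge $(0,0)$ to $e_0$, and vertex-injectivity of $\phi$, inherited from simplicity of the walk, ensures $\phi^{-1}(e_0) = \{(0,0)\}$. Composing witnesses from any multipartite non-redundant instance of $\CSP(C^*_{2k} \mid C_{2k})$ with $\phi$ certifies that the same underlying bipartite graph is a non-redundant instance of $\CSP(R \mid R \cup \{e_0\})$, yielding $\NRD^*(R \mid R \cup \{e_0\}, n) \geq \NRD^*(C^*_{2k} \mid C_{2k}, n)$; the multipartite reductions and Corollary~\ref{cor:C2k} then upgrade this to the desired $\Omega_D$-bound.

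For the upper bound I show that the bipartite graph $(X_L \dot\cup X_R, Y)$ has girth at least $2k$, after which Corollary~\ref{cor:C2k} completes the proof. Suppose for contradiction there is a cycle of length $2\ell < 2k$, realized by an injection $f \colon C_{2\ell} \hookrightarrow (X_L \cup X_R, Y)$, and pick any edge $e$ of this cycle. The composition $\sigma_e \circ f$ is a bipartite homomorphism from $C_{2\ell}$ into $G_R \cup \{e_0\}$; crucially, since $\sigma_e$ maps $e$ to $e_0$ while every other edge of $Y$ is sent into $R$, and since $f$ is edge-injective, exactly one edge of $C_{2\ell}$ maps to $e_0$. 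The corresponding closed walk of length $2\ell$ in $G_R \cup \{e_0\}$ therefore traverses $e_0$ exactly once; deleting that single step leaves a walk of length $2\ell - 1$ in $G_R$ connecting $x_0$ to $y_0$. By definition of $k$, this forces $2\ell - 1 \geq 2k - 1$, i.e.\ $\ell \geq k$, contradicting $\ell < k$.

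The $k = \infty$ case is immediate from the same argument: the concluding walk in $G_R$ cannot exist at all, so no cycle is possible and the bipartite instance is a forest with $|Y| \leq 2n - 1 = O_D(n)$. The main obstacle I anticipate is bookkeeping rather than new ideas: one must keep the bipartite-versus-general interpretation of the binary relations consistent (so that the two occurrences of the domain element $0$ in the tuple $(0,0) \in C_{2k}$ correspond to distinct bipartite vertices while $\phi$ remains a genuine CSP homomorphism), and translate between $\NRD$ and $\NRD^*$ using Lemmas~\ref{lemma:reduction:to:multipartite} and~\ref{lem:NRD-scale} while absorbing $|D|$-dependent constants into the $\Theta_D$ and $O_D$ notation.
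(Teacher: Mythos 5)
Your proof is correct and follows essentially the same path as the paper: reduce to a single distinguished tuple $e_0 \in S \setminus R$ via Proposition~\ref{prop:scaffold-one}, define $k$ via the shortest $x_0$--$y_0$ path in $R$ viewed as a bipartite graph, use the shortest path (being simple) to build the injective homomorphism $C_{2k} \to R \cup \{e_0\}$ for the lower bound, and for the upper bound argue that any short cycle in a non-redundant instance would yield a too-short $x_0$--$y_0$ walk in $R$ via a witness assignment. The only cosmetic difference is that you conclude the upper bound by going through the extremal number via Corollary~\ref{cor:C2k} rather than invoking Lemma~\ref{lem:cycle-nrd} directly, which is equivalent.
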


\begin{proof}
By Proposition~\ref{prop:scaffold-one}, we may assume without loss of generality that $S \setminus R = \{t_0\}$. Let $(x_0, y_0) := t_0$ and let $\ell$ be the length of the shortest path from $x_0$ to $y_0$ in $R$ (when viewed as the adjacency matrix of a bipartite graph). Since $t_0 \not\in R$, we either have that $\ell = \infty$ (i.e., a path does not exist), or $\ell$ is an odd integer at least $3$.

If $\ell = \infty$, consider a bipartite non-redundant instance $G := (A, B, E \subseteq A \times B)$ of $\CSP(R \mid S)$. For each edge $e := (x,y) \in E$, there must exist a homomorphism $f_e : A \cup B \to D$ from $G$ to $S$ for which $f_e^{-1}((x_0,y_0)) = \{(x, y)\}$. Since $x_0$ and $y_0$ are not connected in $R$, we have that if we delete $e$ from $G$, then $x$ and $y$ are in separate connected components. Therefore, $G$ must be acyclic. By \Cref{lemma:reduction:to:multipartite}, we have that $\NRD(R \mid S, n) = \Oh(n)$.

Otherwise, assume that $\ell$ is an odd integer at least $3$. Thus, there exists an integer $k\ge 2$ such that $\ell = 2k-1$. We claim that $\NRD(R \mid S, n) = \Theta_D(\NRD(C^*_{2k} \mid C_{2k}, n)).$ By definition of $k$, there exists an injective homomorphism $h : C_{2k} \to S$ such that $h(0,0) = t_0$. Thus, every non-redundant instance of $\NRD(C^*_{2k} \mid C_{2k}, n)$ is a non-redundant instance of $\NRD(R \mid S, n)$.

Conversely, let $G := (A, B, E \subseteq A \times B)$ be a bipartite non-redundant instance of $\NRD(R \mid S, n)$. For any $e := (x,y) \in G$, consider a homomorphism $f_e : G \to S$ such that $f_e^{-1}(t_0) = \{e\}$. Since the shortest path from $x_0$ to $y_0$ in $R$ is at most $\ell = 2k-1$, any path from $x$ to $y$ in $G \setminus \{e\}$ must also have length at least $2k-1$. That is, every cycle of $G$ containing $e$ has length at least $2k$. Since $e$ is arbitrary, $G$ has girth at least $2k$.

Therefore, by Lemma~\ref{lem:cycle-nrd}, every \emph{bipartite} non-redundant instance of $\CSP(R \mid S)$ is a non-redundant instance of $\CSP(C^*_{2k} \mid C_{2k})$. Therefore, by \Cref{lemma:reduction:to:multipartite}, we have that $\NRD(R \mid S, n) = \Theta(\NRD(C^*_{2k} \mid C_{2k}, n)).$
\end{proof}

\begin{remark}
The correspondence between $(R,S)$ and $(C^*_{2k}, C_{2k})$ in the proof of \Cref{thm:binary-class} can be viewed as an \emph{fgppp-equivalence}. See Section~\ref{sec:theory} for more details.
\end{remark}

\subsection{Infinitely Many NRD Exponents for Ternary Predicates}\label{subsec:inf-ternary}

As an immediate application of Lemma~\ref{lem:cycle-nrd}, we can prove that infinitely many exponents are possible for (non-conditional) ternary predicates. For $k \ge 2$, let $S_{2k} := C_{2k} \times \{0,1\} \subseteq \{0,1,\hdots, k-1\}^3$ and $R_{2k} := S_{2k} \setminus \{(0,0,0)\}$. We now prove the following

\begin{lemma}\label{lem:cond-R-S}
For all $k \ge 2$, $\NRD(R_{2k} \mid S_{2k}, n) = \Theta(n \cdot \NRD(C^*_{2k} \mid C_{2k}, n))$.
\end{lemma}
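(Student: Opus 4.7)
The plan is to establish matching upper and lower bounds by exploiting the product structure $S_{2k} = C_{2k} \times \{0,1\}$, with the single missing tuple $(0,0,0)$ aligning the conditional non-redundancy of the ternary predicate with $k$ copies of the binary one.

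For the lower bound, I would start from a near-optimal non-redundant instance $G = (A \dot\cup B, E)$ of $\CSP(C^*_{2k} \mid C_{2k})$ on $\lfloor n/2 \rfloor$ variables, which by \Cref{cor:C2k} has $|E| = \Theta(\NRD(C^*_{2k} \mid C_{2k}, n))$ edges. Introduce a fresh set $Z$ of $\lceil n/2 \rceil$ Boolean variables, place them on the third coordinate, and form the clause set $Y := \{(a,b,z) : (a,b) \in E,\ z \in Z\}$, giving $|Y| = \Theta(n \cdot \NRD(C^*_{2k} \mid C_{2k}, n))$ clauses on $n$ variables. To witness conditional non-redundancy of an arbitrary clause $(a,b,z_0) \in Y$ arising from edge $e = (a,b) \in E$, I would take the witness $f_e \colon A \cup B \to \{0,\ldots,k-1\}$ guaranteed by \Cref{lem:cycle-nrd} (which sends $G$ into $C_{2k}$ and has $f_e^{-1}(0,0) = \{e\}$) and extend it by $g(z_0) := 0$ and $g(z') := 1$ for $z' \neq z_0$. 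Then $(a,b,z) \mapsto (f_e(a), f_e(b), g(z))$ lands in $S_{2k}$ everywhere and equals $(0,0,0)$ exactly on the chosen clause.

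For the upper bound, by the multipartite version of \Cref{lemma:reduction:to:multipartite} it suffices to bound $\NRD^*(R_{2k} \mid S_{2k}, n)$. Let $(X,Y)$ be a multipartite $(R_{2k} \mid S_{2k})$-non-redundant instance with partition $X = X_1 \dot\cup X_2 \dot\cup X_3$. For each $z \in X_3$, let $E_z := \{(a,b) : (a,b,z) \in Y\}$, which I will interpret as a bipartite instance of $\CSP(C^*_{2k} \mid C_{2k})$ on the vertex set $X_1 \dot\cup X_2$. The key claim is that $E_z$ is itself conditionally non-redundant: given $(a,b) \in E_z$, take a witness $\sigma$ for the clause $(a,b,z) \in Y$. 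Since $\sigma$ must send $(a,b,z)$ to $(0,0,0)$, in particular $\sigma(z) = 0$; hence for every other clause $(a',b',z) \in Y$ involving the same third-coordinate variable $z$, the requirement $(\sigma(a'),\sigma(b'),0) \in R_{2k}$ forces $(\sigma(a'),\sigma(b')) \in C_{2k} \setminus \{(0,0)\} = C^*_{2k}$. Thus $\sigma\!\restriction_{X_1 \cup X_2}$ witnesses non-redundancy of $(a,b)$ in the binary instance $(X_1 \dot\cup X_2, E_z)$. Consequently $|E_z| \le \NRD^*(C^*_{2k} \mid C_{2k}, n)$ for every $z$, and summing gives $|Y| = \sum_{z \in X_3} |E_z| \le |X_3| \cdot \NRD^*(C^*_{2k} \mid C_{2k}, n) \le n \cdot \Theta(\NRD(C^*_{2k} \mid C_{2k}, n))$ by \Cref{lemma:reduction:to:multipartite} applied to the binary predicate as well.

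The only real subtlety, and the part I would double-check carefully, is the conditional-semantics bookkeeping in the upper bound: the witness $\sigma$ for $(a,b,z_0) \in Y$ assigns values to all of $X_3$, but when restricted to the sub-instance $E_{z_0}$ we use only the fact that $\sigma(z_0) = 0$ and ignore the clauses with other third-coordinates, which is legitimate precisely because those clauses impose no condition on $\sigma\!\restriction_{X_1 \cup X_2}$ beyond what is already implied. The lower-bound construction is straightforward once the right witness extension is found, and the rescaling between $n$ and $n/2$ absorbs into the $\Theta(\cdot)$ via \Cref{lem:NRD-scale}.
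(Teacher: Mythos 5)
Your proposal is correct and follows essentially the same strategy as the paper's proof: the lower bound is the Cartesian product of a near-optimal bipartite instance with a fresh vertex set, witnessed by extending $f_e$ to send the chosen third-coordinate variable to $0$ and all others to $1$; the upper bound slices the tripartite instance along the third coordinate and observes that each slice $E_z$ is conditionally non-redundant for $(C^*_{2k} \mid C_{2k})$. The only differences are cosmetic bookkeeping — the paper builds the lower-bound instance on $2n$ variables and rescales via Lemma~\ref{lem:NRD-scale}, while you split $n$ in half directly — and a misattributed citation (the existence of a large bipartite non-redundant instance follows from the multipartite reduction Lemma~\ref{lemma:reduction:to:multipartite}, not Corollary~\ref{cor:C2k}).
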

\begin{proof}
We divide the proof into the lower bound and the upper bound on $\NRD(R_{2k} \mid S_{2k}, n)$.
\paragraph{Lower Bound.} First, we show that $\NRD(R_{2k} \mid S_{2k}, 2n) \ge \Omega(n \cdot \NRD(C^*_{2k} \mid C_{2k}, n))$ -- note that $\NRD(R_{2k} \mid S_{2k}, 2n) = \Theta(\NRD(R_{2k} \mid S_{2k}, n))$ by Lemma~\ref{lem:NRD-scale}.  In particular, by \Cref{lemma:reduction:to:multipartite} there existas a bipartite non-redundant instance  $G = (A_1, A_2, E)$ of $\CSP(C^*_{2k} \mid C_{2k})$ with $\Omega(\NRD(C^*_{2k} \mid C_{2k}, n))$ edges. Let $A_3$ be a set of size $n$ disjoint from $A_1$ and $A_2$. Let $H = (A_1, A_2, A_3, E \times A_3)$ be a tripartite hypergraph with $\Omega(n \cdot \NRD(C^*_{2k} \mid C_{2k}, n))$ hyperedges. We claim that $H$ is a non-redundant instance of $R_{2k} \mid S_{2k}$.

Fix a hyperedge $(x_1, x_2, x_3) \in E \times A_3$. Since $G$ is a non-redundant instance of $\CSP(C^{*}_{2k} \mid C_{2k})$, we have that there exists a map $f_{x_1,x_2} : A_1 \cup A_2 \to \{0,  1, \hdots, k-1\}$ such that for all $e \in E$, $f_{x_1,x_2}(e) \in C_{2k}$ and $f_{x_1,x_2}(e) = (0,0)$ iff $e = (x_1, x_2)$. Consider then the map $g_{x_1,x_2,x_3} : A_1 \cup A_2 \cup A_3 \to \{0, 1, \hdots, k-1\}$ defined by
\[
    g_{x_1,x_2,x_3}(y) = \begin{cases}
    f_{x_1,x_2}(y) & y \in A_1 \cup A_2\\
    0 & y = x_3\\
    1 & y \in A_3 \setminus \{x_3\}
    \end{cases}.
\]
In particular, for any $e := (y_1, y_2, y_3) \in E \times A_3$, we have that $(g_{x_1,x_2,x_3}(y_1), g_{x_1,x_2,x_3}(y_2)) = (f_{x_1,x_2}(y_1), f_{x_1,x_2}(y_2)) \in C_{2k}$ and $g_{x_1,x_2,x_3}(y_3) \in \{0,1\}$ so $g_{x_1,x_2,x_3}(e) \in S_{2k}$. Furthermore, $g_{x_1,x_2,x_3}(e) = (0,0,0)$ if and only if $(y_1,y_2) = (x_1,x_2)$ and $y_3 = x_3$. Thus, $H$ is indeed a non-redundant instance of $\CSP(R_{2k} \mid S_{2k})$.

\paragraph{Upper Bound.} Next, we show that $\NRD(R_{2k} \mid S_{2k}, n) \le n \cdot \NRD(C^*_{2k} \mid C_{2k}, n)$. By \Cref{lemma:reduction:to:multipartite}, it suffices consider a tripartite non-redundant instance $H := (A_1, A_2, A_3, F \subseteq A_1 \times A_2 \times A_3)$ of $\CSP(R_{2k} \mid S_{2k})$. Fix $x_3 \in A_3$ and let $E_{x_3} := \{(x_1,x_2) : (x_1, x_2, x_3) \in F\}$. We seek to prove that $G_{x_3} := (A_1, A_2, E_{x_3})$ is a non-redundant instance of $\CSP(C^*_{2k} \mid C_{2k})$. This suffices to prove the upper bound as then
\[
|F| \le \sum_{x_3 \in A_3} |E_{x_3}| \le \sum_{x_3 \in A_3} \NRD(C^*_{2k} \mid C_{2k}, n) = n \cdot \NRD(C^*_{2k} \mid C_{2k}, n).
\]
In particular, consider $(x_1, x_2) \in E_{x_3}$. Since $H$ is a tripartite non-redundant instance of $\CSP(R_{2k} \mid S_{2k})$, there exists a homomorphism $f_e : H \to S_{2k}$ which sends only $(x_1,x_2,x_3)$ to $(0,0,0)$. In particular,  $f_e(x_3) = 0$. Therefore, for any other $(x,y) \in E_3$, we must have $(f_e(x), f_e(y)) \in C^*_{2k}$. Thus, $G_{x_3}$ is a non-redundant instance of $\CSP(R_{2k} \mid S_{2k})$.
\end{proof}

Furthermore, it is not hard to bound the non-redundancy of $S_{2k}$.

\begin{proposition}\label{prop:S-n^2}
For all $k\ge 2$, $\NRD(S_{2k}, n) = \Oh(n^2)$.
\end{proposition}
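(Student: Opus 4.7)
The plan is to apply the triangle inequality for conditional non-redundancy (Lemma~\ref{lem:NRD-chain}) with an intermediate scaffolding $U_{2k} := C_{2k} \times \{0, 1, \ldots, k-1\}$, obtained from $S_{2k}$ by dropping the unary constraint on the third coordinate. Clearly $S_{2k} \subsetneq U_{2k} \subseteq \{0,\ldots,k-1\}^3$ with $U_{2k} \setminus S_{2k} = C_{2k} \times \{2, \ldots, k-1\}$, so it suffices to establish the two bounds $\NRD(U_{2k}, n) \le \Oh(n^2)$ and $\NRD(S_{2k} \mid U_{2k}, n) \le \Oh(n)$, and sum them.

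For the first bound, $U_{2k}$ imposes no constraint whatsoever on the third coordinate, so a clause $U_{2k}(u, v, w)$ is logically equivalent to the binary constraint $C_{2k}(u, v)$. In a non-redundant instance of $\CSP(U_{2k})$, any two clauses sharing the same pair $(u, v)$ as their first two variables impose identical constraints on every assignment, so one of them is redundant. Hence the number of clauses is at most the number of ordered pairs of variables, i.e.\ $n^2$.

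For the second bound, consider a non-redundant instance $(X, Y)$ of $\CSP(S_{2k} \mid U_{2k})$. For each clause $y = (y_1, y_2, y_3) \in Y$ with its witness $\sigma_y$, the requirement $\sigma_y(y) \in U_{2k} \setminus S_{2k}$ forces $\sigma_y(y_3) \notin \{0, 1\}$; simultaneously, for every other clause $y' \in Y \setminus \{y\}$ the membership $\sigma_y(y') \in S_{2k}$ forces $\sigma_y(y_3') \in \{0, 1\}$. Hence no two distinct clauses can share the same third coordinate, so each variable appears as the third coordinate of at most one clause, giving $|Y| \le n$.

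Combining the two bounds via Lemma~\ref{lem:NRD-chain} yields
\[
  \NRD(S_{2k}, n) \;\le\; \NRD(S_{2k} \mid U_{2k}, n) + \NRD(U_{2k}, n) \;\le\; n + n^2 \;=\; \Oh(n^2),
\]
as required. There is no substantial obstacle here; the key observation powering the proof is that $S_{2k}$ differs from $U_{2k}$ solely through a unary constraint on a single coordinate, which localizes the witness-violation and cleanly decouples the binary part (first two variables, producing the $n^2$ term) from the unary part (third variable, producing the linear term).
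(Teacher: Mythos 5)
Your proof is correct, and the approach is essentially the same decomposition as the paper's, just carried out by hand. The paper observes $S_{2k}(x_1,x_2,x_3) \equiv C_{2k}(x_1,x_2) \wedge B(x_3)$ and then invokes Carbonnel's result that the NRD of a conjunction is at most the sum of the NRDs of its conjuncts, giving $\NRD(C_{2k},n) + \NRD(B,n) = \Oh(n^2) + \Oh(n)$. Your argument re-derives exactly that bound from the paper's own triangle inequality (Lemma~\ref{lem:NRD-chain}): the scaffolding $U_{2k}$ is the conjunction with the unary constraint dropped, so $\NRD(U_{2k},n) \le n^2$ recovers the $C_{2k}$ term (with a fictitious third coordinate) and $\NRD(S_{2k}\mid U_{2k},n) \le n$ recovers the $B$ term. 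This is more self-contained than the paper's one-liner, and it makes the $n^2 + n$ split explicit, which is instructive; the paper's route buys brevity by citing Carbonnel's Proposition~2, which is proved by the same kind of argument.

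One small point worth flagging: for $k=2$ you have $C_4 = \{0,1\}^2 = D^2$, hence $S_4 = U_4 = D^3$ and the claimed strict inclusion $S_{2k} \subsetneq U_{2k}$ fails, so Lemma~\ref{lem:NRD-chain} does not apply as stated. But then $S_4$ is the trivial full relation, $\NRD(S_4,n) = 0$, and the proposition is immediate; so this is a degenerate case that should simply be dispatched separately rather than passed through the triangle inequality. For $k \ge 3$ the inclusions are strict and the argument goes through as you wrote it.
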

\begin{proof}
Let $B = \{0,1\}$ be the unary relation restricting a predicate to be $0$ or $1$. By definition of $S_{2k}$, we have the logical equivalence
\[
    S_{2k}(x_1, x_2, x_3) \equiv C_{2k}(x_1,x_2) \wedge B(x_3).
\]
That is, $\{C_{2k}, B\}$ fgpp-defines $S_{2k}$. By a bound of Carbonnel (cf.~\cite[Prop.~2]{carbonnel2022Redundancy}), we have that $\NRD(S_{2k}, n) \le \NRD(C_{2k}, n) + \NRD(B, n) = \Oh(n^2),$ as $C_{2k}$ is a binary constraint and $B$ is a unary constraint.
\end{proof}

Combining \Cref{lem:cond-R-S} and \Cref{prop:S-n^2} using the triangle inequality (\Cref{lem:NRD-chain}) and invoking \Cref{cor:C2k}, we can deduce the main result of this section.

\begin{theorem}\label{thm:inf-ternary-exp}
For all $k \ge 2$, $\NRD(R_{2k}, n) = \Theta(n \cdot \NRD(C^*_{2k} \mid C_{2k}, n)) \in [n^{2+\frac{\Omega(1)}{k-1}}, n^{2+\frac{1}{k-1}}]$.
\end{theorem}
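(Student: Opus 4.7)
The plan is to combine the machinery already assembled in this section: the conditional bound in \Cref{lem:cond-R-S}, the scaffolding bound in \Cref{prop:S-n^2}, the triangle inequality in \Cref{lem:NRD-chain}, and the quantitative estimates for binary cycle girth in \Cref{cor:C2k}. The only real work is checking that the conditional term dominates the scaffolding term, so that the triangle inequality is tight up to constants.

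\textbf{Lower bound.} By definition, every conditionally non-redundant instance of $\CSP(R_{2k} \mid S_{2k})$ is also a non-redundant instance of $\CSP(R_{2k})$, so
\[
  \NRD(R_{2k}, n) \;\geq\; \NRD(R_{2k} \mid S_{2k}, n) \;=\; \Theta\bigl(n \cdot \NRD(C^*_{2k} \mid C_{2k}, n)\bigr),
\]
where the equality is \Cref{lem:cond-R-S}.

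\textbf{Upper bound.} Apply \Cref{lem:NRD-chain} with $R = R_{2k}$, $S = S_{2k}$, and $T = D^3$ to obtain
\[
  \NRD(R_{2k}, n) \;\leq\; \NRD(R_{2k} \mid S_{2k}, n) + \NRD(S_{2k}, n).
\]
By \Cref{lem:cond-R-S} the first summand is $\Theta(n \cdot \NRD(C^*_{2k} \mid C_{2k}, n))$, while \Cref{prop:S-n^2} bounds the second summand by $O(n^2)$. The only thing to check is that the second term is absorbed by the first. This is where \Cref{cor:C2k} enters: it gives $\NRD(C^*_{2k} \mid C_{2k}, n) \geq \Omega(n^{1 + c/(k-1)})$ for some absolute constant $c > 0$, so
\[
  n \cdot \NRD(C^*_{2k} \mid C_{2k}, n) \;\geq\; \Omega(n^{2 + c/(k-1)}) \;\gg\; O(n^2),
\]
and the triangle-inequality bound collapses to $O(n \cdot \NRD(C^*_{2k} \mid C_{2k}, n))$, matching the lower bound.

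\textbf{Numerical range.} Multiplying the bounds of \Cref{cor:C2k} through by $n$ yields
\[
  n \cdot \NRD(C^*_{2k} \mid C_{2k}, n) \;\in\; \bigl[\, \Omega\bigl(n^{2 + \Omega(1)/(k-1)}\bigr),\; O\bigl(n^{2 + 1/(k-1)}\bigr) \,\bigr],
\]
which combined with the two bounds above yields the theorem.

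The main obstacle in this plan is essentially trivial, namely verifying that $\NRD(C^*_{2k} \mid C_{2k}, n)$ is strictly super-linear so that the $O(n^2)$ scaffolding contribution from \Cref{prop:S-n^2} does not dominate; this is cleanly handled by the $n^{1 + \Omega(1)/(k-1)}$ lower bound already recorded in \Cref{cor:C2k} (originally from the extremal graph theory survey~\cite{furedi2013history}). Everything else is bookkeeping across the previously established lemmas.
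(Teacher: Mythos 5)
Your proof is correct and takes the same route the paper intends: the paper compresses the argument into one sentence ("Combining Lemma~\ref{lem:cond-R-S} and Proposition~\ref{prop:S-n^2} using the triangle inequality (Lemma~\ref{lem:NRD-chain}) and invoking Corollary~\ref{cor:C2k}"), and your write-up simply fills in the same steps — including the necessary observation that the $\Omega(n^{2+c/(k-1)})$ lower bound from Corollary~\ref{cor:C2k} absorbs the $O(n^2)$ scaffolding term, which is indeed the one non-trivial check. No gap, no deviation.
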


\begin{remark}
Note that $\NRD(R_6, n) = \Theta(n^{2.5})$. In particular, $2.5$ is currently the largest known non-redundancy exponent less than $3$ for a ternary predicate. By a result of Carbonnel~\cite{carbonnel2022Redundancy}, such an exponent must be less than $2.75$.
\end{remark}

\subsection{Ternary Predicates Approaching Linear NRD}\label{subsec:approach-linear}

\begin{figure}
    \centering
    \begin{tikzpicture}[every node/.style={font=\small}]
    \tikzstyle{vertex}=[inner sep=0pt]
    \tikzstyle{hyperedge}=[thin,blue]
\node[vertex] (a1) at (7,2) {$\bullet$};
\node[vertex, right=of a1] (b1) {$\bullet$};
\node[vertex, right=of b1] (c1) {$\bullet$};
\node[vertex, right=of c1] (d1) {$\bullet$};
\node[vertex, right=of d1] (e1) {$\bullet$};

\node[vertex, below=of a1] (a2) {$\bullet$};
\node[vertex, right=of a2] (b2) {$\bullet$};
\node[vertex, right=of b2] (c2) {$\bullet$};
\node[vertex, right=of c2] (d2) {$\bullet$};
\node[vertex, right=of d2] (e2) {$\bullet$};

\node[vertex, below=of a2] (a3) {$\bullet$};
\node[vertex, right=of a3] (b3) {$\bullet$};
\node[vertex, right=of b3] (c3) {$\bullet$};
\node[vertex, right=of c3] (d3) {$\bullet$};
\node[vertex, right=of d3] (e3) {$\bullet$};

\draw[thick,color=red] (a1.west) -- (a2.west) -- (a3.west);

\draw[hyperedge] (a1.east) -- (b2.west) -- (e3.west);

\draw[hyperedge] (b1.west) -- (b2.east) -- (d3.west);

\draw[hyperedge] (b1.east) -- (c2.west) -- (c3.west);

\draw[hyperedge] (c1.west) -- (c2.east) -- (b3.east);

\draw[hyperedge] (c1.east) -- (d2.west) -- (a3.east);

\draw[hyperedge] (d1.west) -- (d2.east) -- (e3.east);

\draw[hyperedge] (d1.east) -- (e2.west) -- (d3.east);

\draw[hyperedge] (e1.east) -- (e2.east) -- (c3.east);

\draw[hyperedge] (e1.west) -- (a2.east) -- (b3.west);
    \end{tikzpicture}
    \caption{A hypergraph representation of $\CYC_5$, with the red hyperedge being the one to delete to get $\CYCs_5$. Note that if we delete any row from this figure, we get a graph isomorphic to $C_{10}$.}
    \label{fig:CYC-m}
\end{figure}
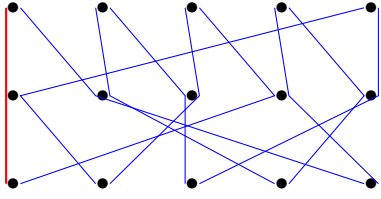

To conclude this section, we give another ternary adaptation of $\CSP(C^*_{2k} \mid C_{2k})$. However, this time the non-redundancies are approaching linear. For odd $m \in \N$ define
\[
    \CYC_m := \{(x,y,z) \in (\Z/m\Z)^3 : x+y+z = 0 \wedge y-x \in \{0,1\}\}.
\]
Further define $\CYCs_{m} := \CYC_m \setminus \{(0,0,0)\}$.  See Figure~\ref{fig:CYC-m} for an illustration. We seek to prove the following
\begin{theorem}\label{thm:CYC-upper}
For all odd $m \ge 3$, we have that $\NRD(\CYCs_m, n) \le \Oh_m(n^{\frac{m}{m-1}}).$ 
\end{theorem}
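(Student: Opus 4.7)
The plan is to combine the triangle inequality for conditional non-redundancy (Lemma~\ref{lem:NRD-chain}) with the binary classification obtained in Section~\ref{subsec:binary-class}, mirroring the strategy behind Lemma~\ref{lem:cond-R-S}. I would write
\[
\NRD(\CYCs_m, n) \le \NRD(\CYCs_m \mid \CYC_m, n) + \NRD(\CYC_m, n),
\]
and focus the main effort on the conditional term $\NRD(\CYCs_m \mid \CYC_m, n)$, whose bound should propagate from the binary conditional-NRD classification for cycles.

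For the conditional bound, I would reduce to multipartite instances via Lemma~\ref{lemma:reduction:to:multipartite}, and consider a tripartite non-redundant instance $(X_1, X_2, X_3, Y)$ of $\CSP(\CYCs_m \mid \CYC_m)$. Since the only tuple in $\CYC_m \setminus \CYCs_m$ is $(0,0,0)$, any witness $\sigma_y$ for a clause $y = (y_1,y_2,y_3)$ must satisfy $\sigma_y(y) = (0,0,0)$. Two properties then make the projection $\pi(y) = (y_1, y_2)$ into a clean reduction to the binary classification. First, $\pi$ is injective on $Y$: if $y$ and $y' = (y_1, y_2, y'_3)$ shared the first two coordinates, the affine identity $x + y + z = 0$ would force $\sigma_y(y'_3) = 0$, giving $\sigma_y(y') = (0,0,0) \notin \CYCs_m$, a contradiction. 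Second, the restriction $\sigma_y|_{X_1 \cup X_2}$ witnesses that the resulting bipartite graph $G = (X_1, X_2, \pi(Y))$ is a non-redundant instance of $\CSP(\widetilde{C}^*_{2m} \mid \widetilde{C}_{2m})$ (after the equivalence of Remark~\ref{rem:tC2k}), because $\sigma_y$ sends the marked edge $\pi(y)$ to $(0,0)$ and every other edge to $\widetilde{C}^*_{2m}$ via the constraint $y-x \in \{0,1\}$ and the exclusion of $(0,0,0)$ under $\CYCs_m$. Lemma~\ref{lem:cycle-nrd} then forces $G$ to have girth at least $2m$, and Corollary~\ref{cor:C2k} together with the extremal-graph bound of \cite{furedi2013history} gives $|Y| = |\pi(Y)| = O_m(n^{m/(m-1)})$.

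The hard part will be controlling the remaining term $\NRD(\CYC_m, n)$, since the naive estimates available (the fgpp-decomposition $\CYC_m(x,y,z) \equiv \widetilde{C}_{2m}(x,y) \wedge (x+y+z=0)$ yielding $O(n^2)$ via Carbonnel's additivity, and a potential Mal'tsev embedding for the coset-like part) each fall short: the former is too weak for large $m$, and the latter is not apparent from any Abelian embedding. My plan to close this gap is to bypass the triangle inequality in favour of Proposition~\ref{prop:scaffold-one}, bounding each $\NRD(\CYCs_m \mid \CYCs_m \cup \{t\}, n)$ for $t \in (\Z/m\Z)^3 \setminus \CYCs_m$ separately: the case $t = (0,0,0)$ is handled above; any $t$ violating $x+y+z=0$ gives $O(n)$, because the witness already satisfies the affine (Mal'tsev) relation on every clause except the distinguished one, making the instance non-redundant for the affine CSP as in Lemma~\ref{lem:nrd:or-dp-star:linear}; and the subtle case is $t = (a,b,-a-b)$ with $b - a \notin \{0,1\}$. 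For this last case, the single-pair bipartite projection to coordinates $(1,2)$ only yields $O(n^2)$ because the bipartite distance from $a$ to $b$ in $\widetilde{C}_{2m}$ can be as short as $3$; however, the affine identity ties together all three coordinate projections, and by Figure~\ref{fig:CYC-m} the relation $\CYC_m$ projects to a $2m$-cycle in each of the three pairs $(1,2), (1,3), (2,3)$, with the witness assignment simultaneously constrained in all three. Combining the three induced girth conditions---or equivalently expressing the invariants via the multisorted pattern machinery of Theorem~\ref{thm:intro:hypergraph}---should recover the desired $O_m(n^{m/(m-1)})$ bound by a hypergraph-Tur\'an-type argument.
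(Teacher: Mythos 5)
Your Step 1 reduction of $\NRD(\CYCs_m \mid \CYC_m, n)$ to the binary classification is essentially the paper's argument, and is correct (in the paper it is phrased via linearity of the hypergraph, but the injectivity observation is the same). Your handling of tuples $t$ violating $x+y+z=0$ via relaxation to the affine predicate is also fine. The gap is in the remaining case: tuples $t=(a,b,-a-b)$ with $b-a\notin\{0,1\}$, which you yourself flag as the subtle one and then do not actually close. Your suggestion that ``combining the three induced girth conditions\dots should recover the desired bound by a hypergraph-Tur\'an-type argument'' is not a proof, and in fact the naive combination does not work: by linearity each pairwise projection preserves the clause count, so all you get is $|Y|=O(n^{1+1/(k-1)})$ where $2k-1$ is the \emph{maximum} of the three bipartite distances, and for generic $t$ (e.g.\ $t=(0,2,-2)$ with $m\ge 5$) none of those distances is anywhere near $2m-1$. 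Some more refined, multi-sorted Tur\'an argument would be needed, and you have not exhibited one.

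What you missed is precisely what the paper's Step~2 supplies: $\CYC_m$ (not just $\CYCs_m$) has a finite Abelian embedding, namely into $(\Z/2\Z)^{\Z/m\Z}$ viewed as the power set of $\Z/m\Z$ under symmetric difference, sending $i\mapsto\{i\}$. The proof that the generated subgroup of $((\Z/2\Z)^{\Z/m\Z})^3$ meets the singleton tuples exactly in $\CYC_m$ uses the path/cycle structure of the binary projection $\widetilde{C}_{2m}$ plus a counting argument on the third coordinate. Once you have $\NRD(\CYC_m,n)=O_m(n)$, the triangle inequality with scaffolding $\CYC_m$ finishes the job, and Proposition~\ref{prop:scaffold-one} is unnecessary. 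Your remark that a Mal'tsev embedding for $\CYC_m$ ``is not apparent from any Abelian embedding'' turned out to be the actual obstruction in your write-up; the resolution is that the embedding does exist, it just needs to be constructed.
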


\begin{remark}
This proof never uses the fact that $m$ is odd. We state it this way because the corresponding ``lower bound'' of \Cref{thm:CYC-Mal'tsev} only holds for odd $m$. In fact, for even $m \ge 2$, one can show that $\NRD(\CYCs_m) = \Oh_m(n)$ (e.g., $\CYCs_2$ is $2$-in-$3$-SAT).
\end{remark}

\begin{proof}
Similar to the proof of \Cref{thm:inf-ternary-exp}, we split the proof of our upper bound into two parts. First, we show that $\NRD(\CYCs_m\mid \CYC_m, n) = \Oh_m(n^{\frac{m}{m-1}})$. Second, we show that $\NRD(\CYC_m, n) = \Oh_m(n)$. These imply the upper bound on $\NRD(\CYCs_m, n)$ by \Cref{lem:NRD-chain}.

\paragraph{Step 1.} By \Cref{lemma:reduction:to:multipartite}, it suffices to show that a tripartite non-redundant instance $H := $ $(A_1, A_2,$ $A_3,$ $F \subseteq A_1 \times A_2 \times A_3)$ of $\CSP(\CYCs_m \mid \CYC_m)$ has at most $\Oh_m(n^{\frac{m}{m-1}})$ hyperedges.

First, notice that $H$ is a \emph{linear} hypergraph; that is, no two hyperedges overlap in exactly two vertices. Assume for sake of contradiction that $e = (x_1, x_2, x_3)$ and $f = (y_1, y_2, y_3)$ have two vertices in common. Since $H$ is non-redundant, there must be a map that sends $e$ to $(0,0,0)$ but $f$ to some element of $\CYCs_m$. However, $\CYCs_m$ lacks any elements with two or more zeros, a contradiction.

Since $H$ is a linear hypergraph, we may define $E := \{(x_1, x_2) : \exists x_3 \in A_3, (x_1, x_2, x_3) \in F\}$ and note that $|E| = |F|$. Thus, by \Cref{cor:C2k} and \Cref{rem:tC2k}, it suffices to prove that $G := (A_1, A_2, E)$ is a non-redundant instance of $\CSP(\widetilde{C}_{2m}^* \mid \widetilde{C}_{2m})$.

For any $e := (x_1, x_2, x_3) \in F$ let $f_e : A_1 \cup A_2 \cup A_3 \to \{0,1,\hdots, m-1\}$ be the homomorphism which maps $H$ to $\CYC_m$ but $e$ is the only hyperedge mapped to $(0,0,0)$. Note that $\widetilde{C}^*_{2m} = \{(z_1,z_2) : \exists z_3, (z_1,z_2,z_3) \in \CYC_m\}$. Therefore, $f_e$ restricted to $A_1 \cup A_2$ is a homomorphism from $G$ to $\widetilde{C}_{2m}$ which only maps $(x_1,x_2)$ to $(0,0)$. Thus, $G$ is a non-redundant instance of $\CSP(\widetilde{C}_{2m}^* \mid \widetilde{C}_{2m})$, as desired.

\paragraph{Step 2.} To finish, we show that $\NRD(\CYC_m, n) = \Oh_m(n)$. We do this by constructing an explicit Abelian embedding of $\CYC_m$. Let $D := \Z/m\Z$ and consider the group $G = (\Z/2\Z)^D$ which we identify with the set of subsets of $D$ equipped with the symmetric difference operator $\oplus$. Furthermore, we identify each element $i \in D$ with its singleton $\{i\} \in G$. 

Let $Q$ be the subgroup of $G^3$ generated by the elements of $\CYC_m$. By definition, $\CYC_m \subseteq Q \cap D^3$. We claim that $\CYC_m = Q \cap D^3$. Consider any triple $(a_1,a_2,a_3) \in D^3$ such that there exists a subset $S \subseteq \CYC_m$ such that 
\[
    (\{a_1\}, \{a_2\}, \{a_3\}) = \bigoplus_{(b_1,b_2,b_3) \in S} (\{b_1\}, \{b_2\}, \{b_3\}).
\]
A key observation is the following. For any $T \subseteq \widetilde{C}_{2m}$, consider the analogous expression $(B_1, B_2) := \bigoplus_{(b_1, b_2) \in T} (\{b_1\}, \{b_2\})$. If we interpret $T$ as a collection of paths in $\widetilde{C}_{2m}$, then $B_1$ and $B_2$ are precisely the sets of endpoints of the paths (restricted to the two halves of the bipartite graph). Thus, since $\CYC_m$ restricted to its first two coordinates is precisely $\widetilde{C}_{2m}$, the projection of $S$ to its first two coordinates must correspond precisely to a path between $a_1$ (on the left) and $a_2$ (on the right) in $\widetilde{C}_{2m}$. Since each tuple of $\CYC_m$ is uniquely determined by its first two coordinates, we have narrowed down $S$ to the following two possibilities:
\begin{align*}
S_0 &:= \{(a_1, a_1, -2a_1), (a_1-1, a_1, -2a_1+1), (a_1-1, a_1-1, -2a_1+2), \hdots, (a_2, a_2, -2a_2)\}\\
S_1 &:= \{(a_1, a_1+1, -2a_1-1), (a_1+1, a_1+1, -2a_1-2), \hdots, (a_2-1, a_2, -2a_2+1)\}
\end{align*}
One can verify that $\bigoplus_{(b_1,b_2,b_3) \in S_i} (\{b_1\}, \{b_2\}, \{b_3\})$ gives the same value for $i \in \{0,1\}$. Note that $S_0 \sqcup S_1$ is a partition of $\CYC_m$. Thus, at least one of $S_0$ or $S_1$ has cardinality at most $m$. Notice that the third coordinate of $S_0$ increases by $1$ for each successive tuple, and the third coordinate of $S_1$ decreases by $1$ for each successive tuple. Thus, for the set $S_i$ of size at most $m$, the expression $\bigoplus_{(b_1,b_2,b_3) \in S_i} (\{b_1\}, \{b_2\}, \{b_3\}).$ will sum $|S_i|$ distinct values in the third coordinate. The only way this can equal $\{a_3\}$ is for $|S_i| = 1$ and $S_i = \{(a_1, a_2, a_3)\}$. Thus, $(a_1,a_2,a_3) \in \CYC_m$, as desired.

Thus, we have proven that $\CYC_m$ has an Abelian group embedding, so $\NRD(\CYC_m, n) \le \Oh_m(n)$.
\end{proof}

\begin{remark}
The argument used in Step 1 can be interpreted as a special case of Theorem~\ref{thm:fgppp-nrd} with the fgppp-definition of $(\CYCs_{m}\mid\CYC_m)(x_1,x_2,x_3) \equiv (\widetilde{C}^*_{2m}\mid \widetilde{C}_{2m})(x_1,x_2)$.
\end{remark}

\begin{remark}
One can check that $\CYCs_{3} = \{111,222,012,120,201\}$ which is equal to the $\BCK$ predicate first considered by \cite{bessiere2020Chain}. The best upper bound on $\NRD(\BCK, n)$ in the literature is $\Oh(n^{1.5}\log n)$ by \cite{brakensiek2024Redundancy}. Theorem~\ref{thm:CYC-upper} mildly improves this upper bound to $\Oh(n^{1.5})$. %
\end{remark}

Sadly, we do not know of a nontrivial lower bound for the non-redundancy $\CYCs_m$. However, we give evidence that a nontrivial lower bound should exist by proving that $\CYCs_m$ lacks a Mal'tsev embedding.

\begin{theorem}\label{thm:CYC-Mal'tsev}
For all odd $m \ge 3$, $\CYCs_m$ lacks a Mal'tsev embedding. 
\end{theorem}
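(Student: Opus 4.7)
My plan is to assume the existence of a Mal'tsev embedding $\sigma\colon \CYCs_m \hookrightarrow S$, with $S \subseteq E^3$ invariant under some Mal'tsev term $\varphi$ on $E$, and derive a contradiction using the arity-$(2m-1)$ Catalan polymorphism $\psi_{2m-1}$ on $E$ supplied by Theorem~\ref{thm:catalan}. By construction $\psi_{2m-1}$ preserves $S$, and its Catalan cancellation identities from Definition~\ref{def:Catalan-identities} hold unconditionally on $E$. I aim to exhibit $2m-1$ tuples of $\CYCs_m$ whose coordinatewise image under $\psi_{2m-1}$ is forced by Catalan reductions to evaluate to $(\sigma(0),\sigma(0),\sigma(0))$; since this lies in $\sigma(D)^3 \cap S = \sigma(\CYCs_m)$, this would force $(0,0,0)\in\CYCs_m$, contradicting the definition of $\CYCs_m$.

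Enumerate the $2m$ tuples of $\CYC_m$ around its cycle via $c_{2i}=(i,i,-2i)$ and $c_{2i+1}=(i,i+1,-2i-1)$ modulo $m$ for $i=0,\ldots,m-1$, so that $c_0=(0,0,0)$ is the excluded tuple and the coordinates satisfy $c_j[1]=\lfloor j/2\rfloor$, $c_j[2]=\lceil j/2\rceil$, and $c_j[3]\equiv -j\pmod{m}$. Writing $m=2\ell+1$, I will consider the sequence of $2m-1$ tuples
\[
\tau \;=\; \bigl(c_1,c_2,\ldots,c_{m-1},\; c_{2m-1},c_{2m-2},\ldots,c_{m+1},\; c_m\bigr),
\]
obtained by traversing the first half of the cycle forward, the second half backward, and ending at $c_m$. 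This ordering is engineered so that each of the three coordinate sequences has enough adjacent agreements to collapse fully under Catalan cancellations.

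The bulk of the argument is to verify the three coordinate sequences separately. Coordinate~$3$ reads $m-1,m-2,\ldots,1,1,2,\ldots,m-1,0$; only the central positions $(m-1,m)$ are originally adjacent with equal values, but successive cancellations repeatedly pair $(m-k,m+k-1)$ for $k=1,\ldots,m-1$ in a nested fashion, leaving $\sigma(0)$ at position $2m-1$. Coordinate~$2$ reads $1,1,2,2,\ldots,\ell,\ell,0,m-1,m-1,\ldots,\ell+1,\ell+1$, in which $2\ell$ pairs are already adjacent and cancel directly, isolating $\sigma(0)$ at position $m$. Coordinate~$1$ reads $0,1,1,2,2,\ldots,\ell-1,\ell-1,\ell,2\ell,2\ell,\ldots,\ell+1,\ell+1,\ell$; canceling the $\ell-1$ adjacent pairs in the first half and the $\ell$ adjacent pairs in the second half reduces the expression to $\psi_3(\sigma(0),\sigma(\ell),\sigma(\ell))$ on positions $1,2\ell,2m-1$, which equals $\sigma(0)$ by the Mal'tsev identity $\varphi(x,y,y)=x$. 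Hence $\psi_{2m-1}$ coordinatewise maps $\sigma(\tau)$ to $(\sigma(0),\sigma(0),\sigma(0))\in S\cap\sigma(D)^3 = \sigma(\CYCs_m)$, yielding the desired contradiction. The main subtlety is designing $\tau$ so that the Catalan residues for the three coordinates are placed at three \emph{different} positions $(1, m, 2m-1)$ while remaining realizable through valid adjacent-cancellation patterns; in coordinate~$1$ this requires a final non-adjacent cancellation at positions $(2\ell, 2m-1)$, which only becomes available after the earlier cancellations have shortened the intervening subsequence.
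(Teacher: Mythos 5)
Your proof is correct and follows the same strategy as the paper: assume a Mal'tsev embedding $\sigma\colon\CYCs_m\hookrightarrow S$, invoke the Catalan polymorphism $\psi_{2m-1}$ supplied by Theorem~\ref{thm:catalan}, and exhibit an ordering of all $2m-1$ tuples of $\CYCs_m$ on which each coordinate collapses under Catalan cancellations to $\sigma(0)$, forcing $(\sigma(0),\sigma(0),\sigma(0))\in S\cap\sigma(D)^3=\sigma(\CYCs_m)$ and hence $(0,0,0)\in\CYCs_m$, a contradiction. The only difference is the witnessing permutation --- the paper interleaves even- and odd-indexed tuples of the cycle so the residual $\sigma(0)$'s land at internal positions, while you traverse the first half forward, the second half backward, and end with the midpoint $c_m$, placing the residues at positions $1$, $m$, $2m-1$; both orderings make all three coordinate sequences fully Catalan-reducible, and your verification of the three cancellation patterns is sound.
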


The proof of Theorem~\ref{thm:CYC-Mal'tsev} is rather technical as it first requires developing new properties about the operators in an algebra with a Mal'tsev term. As such, we defer the proof of Theorem~\ref{thm:CYC-Mal'tsev} to Section~\ref{subsec:excl}.

\section{Towards An Algebraic Approach: Partial Promise Polymorphisms and Patterns} 
\label{sec:theory}

In this section we take the first steps for developing a novel algebraic theory for studying conditional non-redundancy. It turns out that the fundamental algebraic object are {\em partial polymorphisms} between relation pairs $(S,T)$. This polymorphism notion merges and generalizes the following two notions.

\begin{enumerate}
    \item 
    Polymorphisms between relation pairs $(A,B)$ that have been fundamental in studying the approximation variant of CSP known as {\em promise CSP}~\cite{brakensiek2021Promise,barto2021Algebraica}.
    \item 
    Partial polymorphisms of relations that have been used to study fine-grained complexity questions~\cite{lagerkvist2021Coarse}, kernelization~\cite{lagerkvist2020Sparsification}, and non-redundancy~\cite{bessiere2020Chain,carbonnel2022Redundancy}.
\end{enumerate}

We begin in Section~\ref{subsec:galois} by proving that our partial polymorphism notion on the functional side yields sets of partial functions closed under the formation of {\em minors}. On the relational side we instead get {\em quantifier-free} primitive positive definitions, and we obtain a full Galois correspondence between these notions. 
Via this connection it is straightforward to show that partial promise polymorphisms  govern conditional non-redundancy. 
Moving up one abstraction level we in Section~\ref{sec:patterns} concentrate on {\em identities}, or {\em patterns}, satisfied by partial promise polymorphisms.

    \subsection{A Galois Connection} \label{subsec:galois}
    We begin by developing a Galois connection between sets of partial functions closed under the formation of {\em minors} and sets of relations closed under a certain type of conjunctive formulas. Throughout, we assume that all domains $D_1$ and $D_2$ are finite, although many concepts generalize to e.g.\ {\em $\omega$-categorical} structures.
    A {\em relation pair} over finite domains $D_1$ and $D_2$ is a tuple $(S, T)$ where $S \subseteq D_1^r$ and $T \subseteq D_2^r$  for some $r \geq 0$ (note that $D_1^0 = D_2^0 = \emptyset$). If, in addition,  there exists a homomorphism $h \colon D_1 \to D_2$, i.e., $h(t) \in T$ for each $t \in S$ then $(S,T)$ is said to be a {\em promise relation}, and is sometimes written $(S,T, h)$. A {\em promise language} is a set of relation pairs where there exist $h \colon D_1 \to D_2$ such that $(S,T, h)$ is a promise relation for each relation pair $(S,T)$, and a {\em language} is simply a set of relation pairs. Note that all relation pairs in a language might not have the same arity. In the forthcoming applications we often assume that $D_1 \subseteq D_2$ in which case $h$ can always be defined as the identity function $h(x) = x$ over $D_1$. If the domain for a promise relation $(S,T)$ is not specified we always assume that $S \subseteq T$. We often write $(\cla, \clb)$ for a language and thus write $(S,T) \in (\cla, \clb)$ to denote a specific relation pair.

Let $D_1$ and $D_2$ be two finite sets. An $n$-ary {\em partial function} from $D_1$ to $D_2$ is a map $f \colon X \to D_2$ for $X \subseteq D_1^n$, and we write $\dom(f) = X$ for the set of values where $f$ is defined. We frequently express $x \notin \dom(f)$ by $f(x) = \bot$. A {\em subfunction} of $f$ is a partial function $g$ where $\dom(g) \subseteq \dom(f)$ and such that $g(\mathbf{x}) = f(\mathbf{x})$ for every $\mathbf{x} \in \dom(g)$.
We write $\pi^n_i$ for the $i$-ary {\em projection}, in approximation contexts often called {\em dictator} satisfying $\pi^n_i(\mathbf{x}) = \mathbf{x}_i$. The domain is usually taken from the context and not explicitly denoted. A {\em partial projection} (or {\em partial dictator}) is a subfunction of a projection. When $D_1 = D_2 = D$ then it is natural to combine functions in the following sense: 
for an $n$-ary partial function $f$ and $m$-ary partial functions $g_1, \ldots, g_n$ over the same universe $D$ we define composition of $f, g_1, \ldots, g_n$ as \[(f \circ (g_1, \ldots, g_n))(x_1, \ldots, x_m) = f(g_1(x_1, \ldots, x_m), \ldots, g_n(x_1, \ldots, x_m))\] for all sequences of arguments $x_1, \ldots, x_m \in A$ such that (1) each $(x_1, \ldots, x_m) \in \dom(g_i)$ and (2) $g_1(x_1, \ldots, x_m), \ldots, g_n(x_1, \ldots, x_m) \in \dom(f)$. Composition of total functions is then just a special case when all involved functions $f, g_1, \ldots, g_n$ are total, i.e., $\dom(f) = D^n$ and each $\dom(g_i) = D^m$. For a set of (total or partial) functions $F$ over $A$ we write $[F]$ for the smallest set of (total or partial) functions over $A$ which contains $F$, is closed under composition, and which contains all (total or partial) projections. Such sets are sometimes called (strong partial) {\em clones} and naturally correspond to the term algebra induced by a given algebra.

When working with partial functions from $D_1$ to $D_2$ it is generally not possible to compose functions, and the relevant algebraic closure condition is instead formation of {\em partial minors}.

\begin{definition}
A $n$-ary (variable-identification) {\em minor} $f_{/h}$ of an $m$-ary partial function $f \colon X \to D_2$, where $h \colon [m] \to [n]$, is a function of the form $f_{/h}(x_1, \ldots, x_n) = f(x_{h(1)}, \ldots, x_{h(m)})$.    
\end{definition}

We now define a {\em partial minor}, often just called a {\em minor}, $g$ of $f$ as a subfunction of $g'$ for a variable identification minor $g'$ of $f$ (observe that $g'$ is technically a subfunction of itself, so any variable identification minor is also a minor). 

\begin{definition}
A set of partial functions $F$ from $D_1$ to $D_2$ is said to be a {\em strong partial minion} if it is closed under minors, i.e., if $f \in F$ and $g$ is a minor of $f$ then $g \in F$. We write $\minor{F}$ for the smallest strong partial minion that contains $F$.
\end{definition}

Note that the subfunction condition in the definition of a partial minor clearly sets apart strong partial minions from minions of total functions and prevents the former to be defined under the unifying umbrella of {\em abstract minions}~\cite{brakensiek2020Powera}. %
However, we will see that many notions carry over to the partial setting and that strong partial minions can be described via relations.
Thus, we say that an $n$-ary partial function $f \colon \dom(f) \to D_2$ is a {\em partial polymorphism} of an $r$-ary relation pair $(S,T)$ if, for each sequence $x^{(1)}, \ldots, x^{(n)} \in S$, either

\begin{enumerate}
    \item 
    there exists $1 \leq i \leq r$ such that $(x^{(1)}_i, \ldots, x^{(n)}_i) \notin \dom(f)$, or
    \item 
    $(f(x^{(1)}_1, \ldots, x^{(n)}_1), \ldots, f(x^{(1)}_r, \ldots, x^{(n)}_r)) \in T$.
\end{enumerate}

Similarly, $f$ is a partial polymorphism of a language $(\cla, \clb)$ if it is a partial polymorphism of every $(S,T) \in (\cla, \clb)$, in which case we also say that $(\cla, \clb)$ is {\em invariant} under $f$.

\begin{definition}
For a language $(\cla, \clb)$ we let $\pPol(\cla, \clb)$ be the set of all partial polymorphisms from $\cla$ to $\clb$.
\end{definition}

If $(\cla, \clb) = \{(S,T)\}$ is singleton then we write $\pPol(S,T)$ rather than $\pPol(\{(S,T)\})$, and if $\cla = \clb$, simply $\pPol(\cla)$. Let us also remark that sets of the form $\pPol(\cla)$ are closed under functional composition as well as subfunctions.

\begin{definition}
    For a set of partial functions $F$ from $D_1$ to $D_2$ we let \[\inv(F) = \{(S,T) \mid r \geq 0, S \subseteq D_1^r, T \subseteq D_2^r, F \subseteq \pPol(S,T)\}.\]
\end{definition}

We observe that $\inv(F)$ is a promise language if $\minor{F}$ contains at least one unary, total function.

Let us now relate the $\inv(\cdot)$ and $\pPol(\cdot)$ operators together as follows. To simplify the definition we temporarily work in the setting of relational structures in a prescribed signature $\tau$ containing relation symbols and their associated arities.
A {\em quantifier-free primitive positive formula} (qfpp-formula) over a relational signature $\tau$ is a conjunctive formula $\varphi(x_1, \ldots, x_r)$ with free variables $x_1, \ldots, x_r$ where each atom is of the form $R_i(\mathbf{x}^i)$ (for some $(R_i, r_i) \in \tau$ and $r_i$ tuple of variables $\mathbf{x}^i$), an equality of the form $x = y$, for $x,y \in \{x_1, \ldots, x_r\}$, or $\mathsf{f}(x_i)$ for some $x_i \in \{x_1, \ldots, x_r\}$, where $\mathsf{f}$ is a special notation for the empty unary relation. Each combination of a qfpp-formula $\varphi(x_1, \ldots, x_n)$ and a $\tau$-structure $\cla$ induces a relation $\{(a_1, \ldots, a_r) \mid \varphi^{\cla}(a_1, \ldots, a_r)\}$, where equality $=$ is interpreted in the obvious way, and where $\sf{f}$ is interpreted as $\emptyset$. The sole reason for introducing $\sf{f}$ is that the relation pair $(\emptyset, \emptyset)$ is invariant under every partial polymorphism and thus should always be definable. In the literature one sometimes assumes the existence of a symbol $\sf{t}$ always interpreted as the full relation, but note that we can simulate a constraint $\sf{t}$ via an equality constraint $x = x$. 

We lift this notion to relation pairs as follows.
Let $(\cla, \clb)$ be a language over $D_1, D_2$ and $(\cla', \clb')$ a language over $D_1, D_2$. We say that $(\cla', \clb')$ is 

\begin{enumerate}
    \item {\em quantifier-free primitive positive definable} (qfpp-definable) in $(\cla, \clb)$ if, for each relational symbol $R$ (of arity $r$) of $(\cla', \clb')$, there is a qfpp-formula $\varphi_R$  such that
    \[
    R^{\cla'} = \{(a_1, \ldots, a_{r}) \in D_1^{r} \mid \varphi^{\cla}_R(a_1, \ldots, a_{r})\},
    \]  
    and, similarly, 
    \[
        R^{\clb'} = \{(a_1, \ldots, a_{r}) \in D_2^{r} \mid \varphi^{\clb}_R(a_1, \ldots, a_{r})\},  
    \]

    \item a {\em strict relaxation} of $(\cla, \clb)$ if $\cla, \cla', \clb, \clb'$ are similar relational structures and, for every relational symbol $R$,
    \[
    R^{\cla'} \subseteq R^{\cla} \quad \text{and} \quad R^{\clb} \subseteq R^{\clb'}.
    \]
\end{enumerate}

More generally $(\cla, \clb)$ is {\em quantifier-free primitive positive promise definable} (qfppp-definable) in $(\cla', \clb')$ if the former can be obtained by a sequence of qfpp-definitions and strict relaxations of the latter.
We write $\qfppp{\cla, \clb}$ for the smallest set of relation pairs containing $(\cla, \clb)$ and which is closed under qfppp-definitions, and similarly to the $\pPol(\cdot, \cdot)$  operator write $\qfppp{S,T}$ for a singleton language $(\cla, \clb) = \{(S,T)\}$. 

\begin{lemma} \label{lemma:is_a_minion}
    Let $(\cla, \clb)$ be a language. Then 
    $\pPol(\cla, \clb)$ is a strong partial minion ($\pPol(\cla, \clb) = \minor{\pPol(\cla, \clb)}$).
\end{lemma}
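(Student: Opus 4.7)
The plan is to unfold the definition: a strong partial minion is a set closed under partial minors, and a partial minor of $f$ is defined as a subfunction of a variable-identification minor of $f$. Hence it suffices to verify separately that $\pPol(\cla, \clb)$ is closed under (i) passing to subfunctions and (ii) taking variable-identification minors, since composing these two closure properties yields closure under arbitrary partial minors.

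First I would handle subfunctions. Suppose $f \in \pPol(\cla, \clb)$ is $n$-ary and $g$ is a subfunction of $f$. Fix any $(S,T) \in (\cla, \clb)$ of arity $r$ and any $x^{(1)}, \ldots, x^{(n)} \in S$, and set $c_i := (x^{(1)}_i, \ldots, x^{(n)}_i)$ for $i \in [r]$. If some $c_i \notin \dom(g)$ we are done by clause (1) of the partial-polymorphism definition. Otherwise every $c_i$ lies in $\dom(g) \subseteq \dom(f)$, and since $g$ agrees with $f$ on $\dom(g)$ the tuple $(g(c_1), \ldots, g(c_r))$ equals $(f(c_1), \ldots, f(c_r))$, which lies in $T$ by $f \in \pPol(S,T)$.

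Next I would handle variable-identification minors. Suppose $f \in \pPol(\cla, \clb)$ is $m$-ary and $h \colon [m] \to [n]$ produces the $n$-ary minor $f_{/h}$ whose domain is exactly the set of $(a_1, \ldots, a_n)$ such that $(a_{h(1)}, \ldots, a_{h(m)}) \in \dom(f)$. For $(S,T) \in (\cla, \clb)$ of arity $r$ and any $y^{(1)}, \ldots, y^{(n)} \in S$, the $i$th column produced by applying $f_{/h}$ coordinatewise is by definition $f(y^{(h(1))}_i, \ldots, y^{(h(m))}_i)$. But $y^{(h(1))}, \ldots, y^{(h(m))}$ is just a length-$m$ sequence of tuples from $S$ (possibly with repetitions), so the partial-polymorphism property of $f$ supplies the required dichotomy directly: either some column lies outside $\dom(f)$, which by the description of $\dom(f_{/h})$ means that column lies outside $\dom(f_{/h})$, or the output tuple lies in $T$.

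The proof thus reduces to unfolding two definitions, and the main (and modest) obstacle is bookkeeping: one must match the ``either some column is undefined or the output lies in $T$'' disjunction against $\dom(g) \subseteq \dom(f)$ in the subfunction step and against the preimage description of $\dom(f_{/h})$ in the minor step. There is no substantive technical difficulty, and in particular one never needs to invoke any structural assumption on $(\cla, \clb)$ beyond it being a language of relation pairs.
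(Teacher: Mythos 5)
Your proposal is correct and matches the paper's proof: both arguments unfold the definition of a partial minor as a subfunction of a variable-identification minor and verify the two closure properties directly from the definition of $\pPol(\cla, \clb)$. The only cosmetic difference is that you spell out the subfunction and minor cases as two separately verified lemmas, while the paper folds them into a single short paragraph.
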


\begin{proof}
The inclusion $\pPol(\cla, \clb) \subseteq \minor{\pPol(\cla, \clb)}$ is trivial since every partial function is a minor of itself. For the other direction, let $g$ be an $n$-ary minor of an $m$-ary $f \in \pPol(\cla, \clb)$, i.e., $g$ is a subfunction of a variable-identification minor $f_{/h}$ of $f$ for some $h \colon [m] \to [n]$. For each $(S,T) \in (\cla, \clb)$ it follows that each application $f_{/h}(t^{(1)}, \ldots, t^{(m)})$ for $t^{(1)}, \ldots, t^{(m)} \in S$ is equivalent to $f$ applied as $f(t^{h(1)}, \ldots, t^{h(m)})$, which is either undefined or included in $T$. Hence, $f_{/h}$ preserves $(S,T)$. Similarly, it is easy to see that any subfunction $g$ of $f_{/h}$ also preserves any relation invariant under $f_{/h}$, and we conclude that $\pPol(\cla, \clb) = \minor{\pPol(\cla, \clb)}$.
\end{proof}

Note that if $(\cla, \clb)$ over $D_1, D_2$ is a promise language, witnessed by a function $h \colon D_1 \to D_2$, then $h \in \pPol(\cla, \clb)$, but if $(\cla, \clb)$ is not a promise language then we are not guaranteed the existence of a total, unary function in $\pPol(\cla, \clb)$.

\begin{lemma} \label{lemma:is_qfpp_closed}
Let $D_1$ and $D_2$ be two finite sets. Then $\qfppp{\inv(F)} = \inv(F)$ for any set $F$ of partial functions from $D_1$ to $D_2$. 
\end{lemma}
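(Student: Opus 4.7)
The inclusion $\inv(F) \subseteq \qfppp{\inv(F)}$ is immediate from the definition of the closure operator $\qfppp{\cdot}$, since $\inv(F)$ is the candidate starting language to which qfppp-definitions are applied. For the nontrivial inclusion, it suffices to verify that the two constructions used to build up qfppp-definitions, namely qfpp-definitions and strict relaxations, both preserve membership in $\inv(F)$; closure under finite iteration then follows automatically. So I plan to treat these two cases separately.

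For \textbf{strict relaxations}, suppose $(\cla', \clb')$ is a strict relaxation of $(\cla, \clb) \in \inv(F)$, so for each relation symbol $R$ we have $R^{\cla'} \subseteq R^{\cla}$ and $R^{\clb} \subseteq R^{\clb'}$. Take any $f \in F$ and any tuples $t^{(1)}, \ldots, t^{(n)} \in R^{\cla'}$; since these also lie in $R^{\cla}$ and $f \in \pPol(R^{\cla}, R^{\clb})$, the componentwise application of $f$ is either undefined on some coordinate or lies in $R^{\clb} \subseteq R^{\clb'}$. Hence $f \in \pPol(R^{\cla'}, R^{\clb'})$, and $(\cla', \clb') \in \inv(F)$.

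For \textbf{qfpp-definitions}, suppose $(\cla', \clb')$ is qfpp-defined in $(\cla, \clb) \in \inv(F)$, via formulas $\{\varphi_R\}$. Fix an $n$-ary $f \in F$ and a relation symbol $R$ of arity $r$ in $(\cla', \clb')$. Pick tuples $t^{(1)}, \ldots, t^{(n)} \in R^{\cla'}$ and suppose $f$ is defined coordinatewise on them, producing $u = (f(t^{(1)}_1, \ldots, t^{(n)}_1), \ldots, f(t^{(1)}_r, \ldots, t^{(n)}_r))$. I need to show $u \in R^{\clb'}$, i.e., $\varphi_R^{\clb}(u)$ holds. Since $\varphi_R$ is a conjunction, it suffices to check each atom. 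For a relational atom $R_i(\mathbf{x}^i)$ over $(\cla, \clb)$, the projections of $t^{(1)}, \ldots, t^{(n)}$ to the coordinates named by $\mathbf{x}^i$ lie in $R_i^{\cla}$ by assumption; because $f \in \pPol(R_i^{\cla}, R_i^{\clb})$, the corresponding projection of $u$ lies in $R_i^{\clb}$, as needed. Equality atoms $x = y$ propagate because $f$ is a function: if the $x$- and $y$-coordinates agree on each $t^{(j)}$, they agree on $u$. Finally, the atom $\mathsf{f}(x)$ is vacuously handled: no tuple in $R^{\cla'}$ can satisfy $\mathsf{f}$ (since $\mathsf{f}^{\cla} = \emptyset$), so either this atom never appears in any satisfied conjunction or $R^{\cla'} = \emptyset$ and the statement is vacuous on the $\clb$-side because the constraint $\mathsf{f}^{\clb} = \emptyset$ also forces $R^{\clb'} = \emptyset$, with no tuples of $R^{\cla'}$ to map.

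There is no genuine obstacle here; the proof is a routine verification of the Galois-style closure property, and the only mild subtlety is bookkeeping around the empty-relation symbol $\mathsf{f}$ and the fact that we are dealing with \emph{partial} rather than total polymorphisms, so at each step I must allow the application of $f$ to be undefined on some coordinate (in which case the partial-polymorphism condition is trivially met). Combining the two cases and iterating gives $\qfppp{\inv(F)} \subseteq \inv(F)$, completing the proof.
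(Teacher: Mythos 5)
Your proof is correct and takes essentially the same approach as the paper: verify that qfpp-definitions and strict relaxations each preserve membership in $\inv(F)$, handling atoms (relational atoms, equality, $\mathsf{f}$) individually and noting that applications of $f$ may be undefined. The only cosmetic difference is that you check a general conjunction atom-by-atom in one pass, whereas the paper enumerates atomic building blocks (reindexing, equality, conjunction, $\mathsf{f}$) as separate cases and then appeals to induction; these are the same argument.
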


\begin{proof}
    The inclusion $\inv(F) \subseteq \qfppp{\inv(F)}$ is trivial since any relation can qfpp-define itself. For the other direction, consider relation symbols $R$ of arity $r$ and $R'$ of arity $r'$, interpreted by $R^{\cla}, R'^{\cla}$ in $\cla$ over $D_1$ and $R^{\clb}, R'^{\clb}$ over $D_2$, and the following qfpp-definitions $\varphi$.
    \begin{enumerate}
    \item 
     $R(x_1, \ldots, x_r) \land \mathsf{f}(x_i)$  for some $x \in [r]$. In this case the statement is trivial.
    \item
      $R(x_1, \ldots, x_r) \land (x_i = x_j)$ for $i,j \in [r]$. Then $\varphi^{\cla} = \{(d_1, \ldots, d_r) \mid (d_1, \ldots, d_r) \in S, d_i = d_j\}$  and $\varphi^{\clb} = \{(d_1, \ldots, d_r) \mid (d_1, \ldots, d_r) \in R^{\clb}, d_i = d_j\}$. Consider an $n$-ary $f \in F$ and an application $f(t^{(1)}, \ldots, t^{(n)}) = t$ for $t^{(1)}, \ldots, t^{(n)} \in \varphi^{\cla}$ where every application $f(t^{(1)}_k, \ldots, t^{(n)}_k)$, $1 \leq k \leq r$, is defined. Observe that $t_i = t_j$, i.e., the $i$th element of $t$ equals the $j$th element of $t$, which implies that $t \in \varphi^{\clb}$ (since $t \in R^{\clb}$).
    \item
      $R(x_{h(1)}, \ldots, x_{h(r)})$ for $h \colon [r] \to [s]$ for some $s \geq 1$. We sketch the most important arguments. Note that one for each $(d_1, \ldots, d_s) \in \varphi^{\cla}$ (respectively, $\varphi^{\clb}$) can define a corresponding tuple $(d_{h(1)}, \ldots, d_{h(r)}) \in \varphi^{\cla}$ (respectively, $\varphi^{\clb}$). Thus, if $f(t^{(1)}, \ldots, t^{(n)}) = t$  for some $n$-ary $f \in F$ and $t^{(1)}, \ldots, t^{(n)} \in \varphi^{\cla}$ then one can define tuples which when applied to $f$ yields a tuple in $\varphi^{\clb}$.
    \item
      $R(x_{h(1)}, \ldots, x_{h(r)}) \land R'(x_{h'(1)}, \ldots, x_{h'(r')})$ for $h \colon [r] \to X$ and $h' \colon [r'] \to Y$. First, assume that $X \cap Y = \emptyset$, i.e., the two constraints have no overlapping variables. Then $\varphi^{\cla}$ (respectively, $\varphi^{\clb}$) is simply the Cartesian product of two relations that are invariant under $F$ from the preceding case, and the claim easily follows. Similarly, if $X = Y$ then the resulting relation is the intersection of two invariant constraints. The general case follows via similar arguments. 
    \end{enumerate}
This generalizes to arbitrary qfpp-definitions via a simple inductive argument.
Last, consider a strict relaxation $(S',T')$ of some $(S,T) \in \inv(F)$, i.e., $S' \subseteq S$ and $T' \supseteq T$. Assume there exists an $n$-ary $f \in F$ and tuples $t^{(1)}, \ldots, t^{(n)} \in S' \subseteq S$ such that $f(t^{(1)}, \ldots, t^{(n)}) = t \notin T'$. It follows that $f$ does not preserve $(S,T)$, either, since $t^{(1)}, \ldots, t^{(n)} \in S \supseteq S'$ and since $t \notin T \subseteq T'$.
\end{proof}

With similar arguments we obtain the following two characterizations.

\begin{lemma} \label{lemma:ppol_preserves_qfpp}
Let $(\cla, \clb)$ be a language. Then $\pPol(\qfppp{\cla, \clb}) = \pPol(\cla, \clb)$.
\end{lemma}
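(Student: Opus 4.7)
The plan is to derive this lemma as an almost immediate corollary of Lemma~\ref{lemma:is_qfpp_closed}, which is the substantive half of the Galois connection. The statement is symmetric to Lemma~\ref{lemma:is_qfpp_closed} in the usual Galois pattern ($\pPol$ and $\inv$ form a Galois connection, so their compositions are closure operators, and both closure operators are ``already covered'' by whichever direction does the real work).

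First I would dispatch the easy inclusion $\pPol(\qfppp{\cla,\clb}) \subseteq \pPol(\cla,\clb)$. This is immediate from the observation that $(\cla,\clb) \subseteq \qfppp{\cla,\clb}$ (every language qfppp-defines itself by trivial formulas $R(x_1,\ldots,x_r)$), combined with the obvious antimonotonicity: if $F$ preserves a larger language it preserves every sublanguage.

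For the reverse inclusion $\pPol(\cla,\clb) \subseteq \pPol(\qfppp{\cla,\clb})$, I would instantiate Lemma~\ref{lemma:is_qfpp_closed} with the set of partial functions $F := \pPol(\cla,\clb)$. That lemma gives $\qfppp{\inv(F)} = \inv(F)$, i.e.\ $\inv(\pPol(\cla,\clb))$ is closed under qfppp-definitions and strict relaxations. By the very definition of $\pPol$ we have $(\cla,\clb) \subseteq \inv(\pPol(\cla,\clb))$. Applying qfppp-closure to both sides yields $\qfppp{\cla,\clb} \subseteq \qfppp{\inv(\pPol(\cla,\clb))} = \inv(\pPol(\cla,\clb))$. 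Unfolding the definition of $\inv$, this says exactly that every $f \in \pPol(\cla,\clb)$ is a partial polymorphism of every relation pair in $\qfppp{\cla,\clb}$, i.e.\ $\pPol(\cla,\clb) \subseteq \pPol(\qfppp{\cla,\clb})$, which is what we wanted.

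There is no real obstacle here: the entire content of the Galois connection was absorbed into Lemma~\ref{lemma:is_qfpp_closed}, where the case analysis over the four atomic qfpp-construction steps (the falsum atom, equality identification, coordinate duplication/projection, and conjunction of two atoms, plus strict relaxation) had to be carried out. The present lemma is the formal ``dual'' statement and needs no further combinatorics; the only thing to be careful about is to apply Lemma~\ref{lemma:is_qfpp_closed} to the correct set $F = \pPol(\cla,\clb)$ and to use the containment $(\cla,\clb) \subseteq \inv(\pPol(\cla,\clb))$ before taking qfppp-closure, rather than after.
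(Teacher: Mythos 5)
Your proof is correct, and it takes a genuinely different route from what the paper (implicitly) does. The paper states Lemma~\ref{lemma:ppol_preserves_qfpp} with only the remark ``With similar arguments we obtain the following two characterizations,'' which points to redoing the same case analysis over the atomic qfppp-construction steps (falsum, equality identification, variable reindexing, conjunction, strict relaxation) that appears in the proof of Lemma~\ref{lemma:is_qfpp_closed}, but now fixing an $f \in \pPol(\cla,\clb)$ and tracking that each construction preserves the property of being preserved by $f$. You instead treat Lemma~\ref{lemma:is_qfpp_closed} as a black box and derive the result purely by closure-operator formalities: the easy inclusion from $(\cla,\clb) \subseteq \qfppp{\cla,\clb}$ plus antitonicity of $\pPol$; the hard inclusion from the chain $\qfppp{\cla,\clb} \subseteq \qfppp{\inv(\pPol(\cla,\clb))} = \inv(\pPol(\cla,\clb))$ obtained by monotonicity of $\qfppp{\cdot}$, the unit inequality $(\cla,\clb) \subseteq \inv(\pPol(\cla,\clb))$ of the Galois connection, and Lemma~\ref{lemma:is_qfpp_closed} applied to $F = \pPol(\cla,\clb)$, then unfolding $\inv$. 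What your approach buys is economy and robustness: the combinatorial content is genuinely done exactly once, in Lemma~\ref{lemma:is_qfpp_closed}, and if the list of atomic construction steps ever changed, only that one lemma would need to be revisited. What the paper's approach (as implied) buys is symmetry of presentation and perhaps a more self-contained statement, at the cost of repeating structurally identical case analysis. Both the unit inequality and the monotonicity of the closure operator that you invoke are immediate from the definitions, so there is no gap; the only thing worth making explicit in a write-up is that $\inv(\pPol(\cla,\clb))$ and $\qfppp{\cla,\clb}$ are formed over the same fixed pair of domains $D_1, D_2$, so the comparison is well-typed, which it is by the paper's definitions.
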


\begin{lemma} \label{lemma:minors_preserve_relations}
Let $F$ be a set of partial functions over $A,B$. Then $\inv(F) = \inv(\minor{F})$.
\end{lemma}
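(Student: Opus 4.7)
The plan is to prove the two inclusions separately, with one direction being immediate and the other following directly from \Cref{lemma:is_a_minion}. In fact, this lemma is essentially a restatement of the fact that $\pPol(S,T)$ is always a strong partial minion.

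First I would note the inclusion $\inv(\minor{F}) \subseteq \inv(F)$, which is immediate from the definitions: since $F \subseteq \minor{F}$ (every function is a minor of itself), any relation pair preserved by all functions in $\minor{F}$ is in particular preserved by all functions in $F$.

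For the reverse inclusion, take any $(S,T) \in \inv(F)$, so that by definition $F \subseteq \pPol(S,T)$. By \Cref{lemma:is_a_minion}, $\pPol(S,T)$ is a strong partial minion, i.e., $\pPol(S,T) = \minor{\pPol(S,T)}$. Since $\minor{\cdot}$ is monotone with respect to set inclusion (the minor closure of a larger set contains the minor closure of a smaller one), we get
\[
\minor{F} \subseteq \minor{\pPol(S,T)} = \pPol(S,T),
\]
which says exactly that $(S,T) \in \inv(\minor{F})$. Hence $\inv(F) \subseteq \inv(\minor{F})$, completing the proof.

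There is no substantive obstacle here; the entire content of the statement is already encapsulated in \Cref{lemma:is_a_minion}, and this lemma simply packages that fact in the relational form that will be needed for the forthcoming Galois correspondence between strong partial minions and qfppp-closed languages.
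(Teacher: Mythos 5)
Your proof is correct and matches what the paper leaves implicit: the paper simply states that \Cref{lemma:minors_preserve_relations} follows "with similar arguments," and your argument is exactly the natural way to fill that in, reducing the nontrivial inclusion to the already-proved \Cref{lemma:is_a_minion} together with monotonicity of $\minor{\cdot}$. No gaps.
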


To complete our description we additionally need the following two lemmas.

\begin{lemma} \label{lemma:pippenger1} 
  Let $F$ be a strong partial minion over domains $D_1$, $D_2$. For any partial function $g \colon X \to D_2$ (for some $n \geq 1$ and $X \subseteq D_1^n$) where $g \notin F$ there 
  \begin{enumerate}
  \item 
        exists a relation pair $(S,T)$ such that $F \subseteq \pPol(S,T)$ and $g \notin \pPol(S,T)$, and
      \item 
      if $F$ contains a unary, total function $h$ then $(S,T)$ is a promise relation.
  \end{enumerate}
\end{lemma}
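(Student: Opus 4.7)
The plan is a partial-promise adaptation of Pippenger's Galois diagonalization (which is cited in the excerpt). Let $g \colon X \to D_2$ be $n$-ary with $X \subseteq D_1^n$ and $g \notin F$, and enumerate $X = \{x^{(1)}, \ldots, x^{(r)}\}$. I will take $S \subseteq D_1^r$ to consist of the $n$ ``column tuples''
\[
a^{(i)} = (x^{(1)}_i, x^{(2)}_i, \ldots, x^{(r)}_i) \in D_1^r \qquad (i = 1, \ldots, n),
\]
so that feeding the tuples $a^{(1)}, \ldots, a^{(n)}$ componentwise to any $n$-ary function $f$ produces, in coordinate $k$, exactly $f(x^{(k)})$. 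On the target side I will define
\[
T = \bigl\{ (f(x^{(1)}), \ldots, f(x^{(r)})) : f \in F,\ f\ \text{is $n$-ary and defined on all of } X\bigr\} \subseteq D_2^r.
\]

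Checking $F \subseteq \pPol(S,T)$ then goes as follows. An arbitrary $m$-ary $f' \in F$ applied to some choice $a^{(i_1)}, \ldots, a^{(i_m)}$ of tuples from $S$ evaluates in coordinate $k$ to $f'(x^{(k)}_{i_1}, \ldots, x^{(k)}_{i_m}) = f'_{/h}(x^{(k)})$ for $h \colon j \mapsto i_j$. By \Cref{lemma:is_a_minion}-style closure, $f'_{/h}$ is an $n$-ary member of $F$, and whenever the application is totally defined on all $x^{(k)}$ its restriction to $X$ is again in $F$ (as a subfunction), witnessing that the produced tuple lies in $T$. Conversely, $g$ itself applied componentwise to $a^{(1)}, \ldots, a^{(n)}$ is fully defined since each column argument is exactly an element of $\dom(g) = X$, yielding the tuple $t_g = (g(x^{(1)}), \ldots, g(x^{(r)}))$. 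If $t_g \in T$ then some $n$-ary $f \in F$ defined on $X$ satisfies $f(x^{(k)}) = g(x^{(k)})$ for all $k$; but then the subfunction $f|_X$, being a minor of $f$, lies in $F$ and equals $g$, contradicting $g \notin F$. Hence $g \notin \pPol(S,T)$.

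For the promise part, suppose $h \colon D_1 \to D_2$ is a total unary function in $F$. For every $i \in [n]$ the variable-identification minor of $h$ along $1 \mapsto i$ is the total $n$-ary function $(y_1, \ldots, y_n) \mapsto h(y_i)$, which belongs to $F$, is defined on all of $X$, and maps $x^{(k)}$ to $h(x^{(k)}_i)$. The corresponding element of $T$ is therefore $(h(x^{(1)}_i), \ldots, h(x^{(r)}_i)) = h(a^{(i)})$, so $h$ applied componentwise sends every tuple of $S$ into $T$, making $(S,T)$ a promise relation.

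The main subtlety I expect is keeping the arities and the two closure operations of a strong partial minion separate: one needs the variable-identification clause to justify why taking an arbitrary $m$-ary $f' \in F$ acts through an $n$-ary minor, and one needs the subfunction clause to pass from a totally defined $f \in F$ agreeing with $g$ on $X$ to $g$ itself. Once this bookkeeping is in place, the construction of $S$ as the ``column diagonal'' of $X$ and of $T$ as the set of images produced by $F$ is forced by the requirement of distinguishing $g$ from $F$ while respecting that the source and target domains may differ.
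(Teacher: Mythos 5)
Your proposal is correct and follows essentially the same construction as the paper: $S$ is the set of ``column tuples'' encoding the domain $X$ of $g$, and $T$ is the set of all images of those tuples under functions in $F$ (your definition using $n$-ary members of $F$ defined on $X$ coincides with the paper's more verbose definition ranging over all $m$-ary $f \in F$ and all selections from $S$, precisely because $F$ is closed under variable-identification minors and subfunctions). Your verification of the three claims matches the paper's reasoning, with slightly more explicit bookkeeping of the two closure operations.
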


\begin{proof}
  First, consider the relation $S = \{t^{(1)}, \ldots, t^{(n)}\}$ whose arguments (in, say, lexicographical order) enumerates the domain $X$ of $g$, i.e., for the $i$th tuple $t \in X$ we have that $(t^{(1)}_i, \ldots, t^{(n)}_i) = t$. Second, consider the relation $T$ formed by, for every $f \in F$ of arity $m$, and all tuples $t^{(i_1)}, \ldots, t^{(i_m)} \in S$, adding $f(t^{(i_1)}, \ldots, t^{(i_m)})$ whenever each application is defined. 

  By definition, $g$ cannot preserve $(S,T)$ since it would then be equivalent to a partial function in $F$. Similarly, by construction, $(S,T)$ is invariant under every function in $F$. Note that if $F$ contains a unary, total function $h$, then by construction, $h$ is a homomorphism from $S$ to $T$, and $(S,T)$ is thus a promise relation.
\end{proof}  

\begin{lemma} \label{lemma:pippenger2}
    Let $(\cla, \clb)$ be a  language. Let $(S,T)$ be a  relation pair over $A,B$ such that $(S,T) \notin \qfppp{\cla, \clb}$. Then there exists $f \in \pPol(\cla, \clb)$ such that $f \notin \pPol(S, T)$.
\end{lemma}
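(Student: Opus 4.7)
The plan is to prove the contrapositive by a Pippenger-style free construction. Write $S = \{s_1, \ldots, s_m\} \subseteq A^r$ and, for each coordinate $i \in [r]$, form the column $c_i := (s_1[i], \ldots, s_m[i]) \in A^m$; set $X := \{c_1, \ldots, c_r\}$. The empty case $S = \emptyset$ is handled trivially by the formula $\mathsf{f}(x_1)$, which qfppp-defines any pair of the form $(\emptyset, T)$, so assume $m \geq 1$.

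The first step is to introduce the canonical qfpp-formula $\varphi_S(x_1, \ldots, x_r)$, taken as the conjunction of every atom $R(x_{h(1)}, \ldots, x_{h(r_R)})$ with $(s_k[h(1)], \ldots, s_k[h(r_R)]) \in R^{\cla}$ for all $k \in [m]$, together with every equality atom $x_i = x_j$ satisfying $s_k[i] = s_k[j]$ for all $k$. Only finitely many such atoms matter modulo logical equivalence since $A^r$ is finite, so $\varphi_S$ can be chosen as a finite conjunction. I would then establish the characterization
\[
(S,T) \in \qfppp{\cla, \clb} \;\Longleftrightarrow\; \varphi_S^{\clb} \subseteq T.
\]
The $(\Leftarrow)$ direction uses $\varphi_S$ as a qfpp-definition followed by a strict relaxation, since by construction $\varphi_S^{\cla} \supseteq S$ and by hypothesis $\varphi_S^{\clb} \subseteq T$. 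For $(\Rightarrow)$, I first factor any qfppp-definition as a single qfpp-step followed by a single strict relaxation, using monotonicity of positive formulas under shrinking the $\cla$-side and growing the $\clb$-side: this yields a qfpp-formula $\varphi$ with $\varphi^{\cla} \supseteq S$ and $\varphi^{\clb} \subseteq T$. Every atom of $\varphi$ is then also an atom of $\varphi_S$, so $\varphi_S$ logically implies $\varphi$ and $\varphi_S^{\clb} \subseteq \varphi^{\clb} \subseteq T$.

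Given this characterization, the hypothesis $(S,T) \notin \qfppp{\cla, \clb}$ furnishes a tuple $t \in \varphi_S^{\clb} \setminus T$. Define the $m$-ary partial function $f : X \to B$ by $f(c_i) := t[i]$. This is well-defined, because $c_i = c_j$ forces the equality atom $x_i = x_j$ into $\varphi_S$, which in turn forces $t[i] = t[j]$. Applying $f$ columnwise to the tuples $s_1, \ldots, s_m \in S$ produces $(f(c_1), \ldots, f(c_r)) = t \notin T$, so $f \notin \pPol(S,T)$.

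Finally, to verify $f \in \pPol(\cla, \clb)$, consider any $(R^{\cla}, R^{\clb}) \in (\cla, \clb)$ of arity $r_R$ and tuples $a^{(1)}, \ldots, a^{(m)} \in R^{\cla}$ such that every column $(a^{(1)}_k, \ldots, a^{(m)}_k)$ lies in $X$, say equal to $c_{g(k)}$. Then $a^{(i)}_k = s_i[g(k)]$ for all $i, k$, so $(s_i[g(1)], \ldots, s_i[g(r_R)]) \in R^{\cla}$ for every $i \in [m]$. This is exactly the condition for $R(x_{g(1)}, \ldots, x_{g(r_R)})$ to occur as an atom of $\varphi_S$, and hence $t \in \varphi_S^{\clb}$ yields $(t[g(1)], \ldots, t[g(r_R)]) = (f(c_{g(1)}), \ldots, f(c_{g(r_R)})) \in R^{\clb}$, establishing the polymorphism condition. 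The main technical hurdle is the equivalence $(S,T) \in \qfppp{\cla, \clb} \iff \varphi_S^{\clb} \subseteq T$ — in particular, the factorization of a mixed sequence of qfpp-steps and relaxations into a qfpp-step followed by one relaxation; once that is in hand, the distinguishing polymorphism is read off directly from a witness $t$ and its verification is pure atom matching.
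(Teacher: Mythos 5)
Your argument is correct and supplies the self-contained proof that the paper replaces with a one-line citation to Proposition~2.5 of Pippenger (2002). The underlying construction is the standard Galois-connection argument: build the canonical formula $\varphi_S$ (the conjunction of all atoms over $(\cla,\clb)$ that hold of every tuple of $S$), show that $(S,T)\in\qfppp{\cla,\clb}$ exactly when $\varphi_S^{\clb}\subseteq T$, and read the separating partial polymorphism $f$ off a witness $t\in\varphi_S^{\clb}\setminus T$ by setting $f(c_i)=t[i]$ on the columns of $S$. The two nontrivial steps you flag both check out. The normalization of a qfppp-derivation into a single qfpp-step followed by a single strict relaxation is valid because qfpp-formulas, having only positive atoms, are monotone in the interpretations of the relation symbols, so a strict relaxation commutes past a qfpp-definition step (shrinking the $\cla$-interpretations only shrinks $\varphi^{\cla}$, and growing the $\clb$-interpretations only grows $\varphi^{\clb}$); combined with composability of consecutive qfpp-steps and transitivity of relaxation, this gives the normal form. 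And the verification that $f\in\pPol(\cla,\clb)$ is precisely the atom-matching condition that went into defining $\varphi_S$, so it follows from $t\in\varphi_S^{\clb}$. One minor imprecision: to truncate $\varphi_S$ to a finite conjunction you need finiteness of both $A^r$ and $B^r$, not just $A^r$ --- two atoms can induce the same relation over $\cla$ yet different relations over $\clb$, so atom equivalence must be tested as a pair of relations; since both domains are finite this is harmless. Whereas the paper defers to the literature, your version makes the formula-closure and witness extraction explicit, which is useful because the promise/strict-relaxation aspect is not present in the classical single-structure setting, and the monotonicity observation is exactly what makes the argument carry over.
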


\begin{proof}
This follows from Proposition 2.5 in Pippenger~\cite{pippenger2002} since the construction does not use existential quantification.
\end{proof}

We may now conclude that sets of the form $\inv(\pPol(\cla, \clb))$ are precisely the sets containing $(\cla, \clb)$ closed under qfppp-definitions, and, similarly, that all strong partial minions $\minor{F}$ can be described as the set of polymorphisms of the relations invariant under $F$ ($\pPol(\inv(F))$).

\begin{theorem}
    Let $(\cla, \clb)$ be a language over $D_1, D_2$. Then $\qfppp{\cla, \clb} = \inv(\pPol(\cla, \clb))$.
\end{theorem}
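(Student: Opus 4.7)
The plan is to prove the two inclusions $\qfppp{\cla, \clb} \subseteq \inv(\pPol(\cla, \clb))$ and $\inv(\pPol(\cla, \clb)) \subseteq \qfppp{\cla, \clb}$ separately, leaning directly on the lemmas already established in this subsection. The structure is the standard ``easy direction via closure, hard direction via separation'' template for Galois connections.

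For the forward inclusion, I would take an arbitrary $(S,T) \in \qfppp{\cla,\clb}$ and any $f \in \pPol(\cla, \clb)$. By Lemma~\ref{lemma:ppol_preserves_qfpp}, we have $\pPol(\qfppp{\cla, \clb}) = \pPol(\cla, \clb)$, so $f$ is a partial polymorphism of every relation pair in $\qfppp{\cla,\clb}$, and in particular $f \in \pPol(S,T)$. Since this holds for all $f \in \pPol(\cla, \clb)$, we get $\pPol(\cla, \clb) \subseteq \pPol(S,T)$, which is exactly the statement $(S,T) \in \inv(\pPol(\cla, \clb))$.

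For the reverse inclusion I would argue by contrapositive: suppose $(S,T) \notin \qfppp{\cla,\clb}$. Then Lemma~\ref{lemma:pippenger2} applied directly to the language $(\cla,\clb)$ and the relation pair $(S,T)$ yields some $f \in \pPol(\cla, \clb)$ with $f \notin \pPol(S,T)$. This witness shows that $\pPol(\cla,\clb) \not\subseteq \pPol(S,T)$, hence $(S,T) \notin \inv(\pPol(\cla,\clb))$, completing the inclusion.

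Essentially all of the technical content is absorbed into the previously established lemmas (particularly Lemmas~\ref{lemma:ppol_preserves_qfpp} and~\ref{lemma:pippenger2}, the latter of which cites Pippenger's separation construction), so the proof itself reduces to a one-paragraph assembly. The main obstacle, were the lemmas not already in place, would be Lemma~\ref{lemma:pippenger2}: producing a separating partial polymorphism for a relation pair that fails to be qfppp-definable requires a careful enumeration of the ``free'' indicator structure on $\cla$ of a suitable arity and verification that it evaluates into $\clb$ only on tuples forced by the qfppp-closure, which is precisely what Pippenger's argument delivers in the non-promise case and what carries over here because qfppp-definitions use no existential quantifiers.
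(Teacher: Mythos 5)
Your proof is correct and matches the paper's own argument step for step: the forward inclusion unwinds Lemma~\ref{lemma:ppol_preserves_qfpp} exactly as the paper intends, and the reverse inclusion is the paper's contradiction argument via Lemma~\ref{lemma:pippenger2}, rephrased as a contrapositive. No issues.
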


\begin{proof}
The first direction follows from Lemma~\ref{lemma:ppol_preserves_qfpp}. Second, let $(S,T) \in \inv(\pPol(\cla, \clb))$. If $(S,T) \notin \qfppp{\cla, \clb}$ then Lemma~\ref{lemma:pippenger2} implies that there is $f \in \pPol(\cla, \clb)$ such that $f \notin \pPol(S,T)$, contradicting that $(S,T)$ is invariant under every partial function in $\pPol(\cla, \clb)$.
\end{proof}

\begin{theorem}
Let $F$ be a set of partial functions from $D_1$ to $D_2$. Then $\minor{F} = \pPol(\inv(F))$.
\end{theorem}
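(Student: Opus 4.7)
The plan is to mirror the structure of the previous theorem, using the dual pair of technical lemmas. For the easy inclusion $\minor{F} \subseteq \pPol(\inv(F))$, I would first note that by Lemma~\ref{lemma:minors_preserve_relations} we have $\inv(F) = \inv(\minor{F})$. Thus every $g \in \minor{F}$ preserves each relation pair in $\inv(\minor{F}) = \inv(F)$ by the very definition of $\inv(\cdot)$, giving $g \in \pPol(\inv(F))$.

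For the nontrivial inclusion $\pPol(\inv(F)) \subseteq \minor{F}$, I would proceed by contrapositive. Suppose $g \notin \minor{F}$. The key observation is that $\minor{F}$ is itself a strong partial minion, since by definition it is closed under taking minors (minors of minors are minors). Hence Lemma~\ref{lemma:pippenger1}, applied to the strong partial minion $\minor{F}$ and the partial function $g \notin \minor{F}$, produces a relation pair $(S,T)$ such that $\minor{F} \subseteq \pPol(S,T)$ while $g \notin \pPol(S,T)$. In particular, since $F \subseteq \minor{F}$, we have $F \subseteq \pPol(S,T)$, which by definition of $\inv$ means $(S,T) \in \inv(F)$. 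Therefore $g$ fails to preserve a relation pair in $\inv(F)$, so $g \notin \pPol(\inv(F))$, completing the contrapositive.

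I expect no real obstacle here: the main technical content has already been isolated in Lemma~\ref{lemma:pippenger1} (the ``free structure'' construction enumerating the domain of $g$ and closing under $F$), and Lemma~\ref{lemma:minors_preserve_relations}. The only subtlety is to invoke Lemma~\ref{lemma:pippenger1} on $\minor{F}$ rather than on $F$ itself, and to verify that $\minor{F}$ is indeed a strong partial minion, which is immediate from the definition of the minor-closure operator $\minor{\cdot}$.
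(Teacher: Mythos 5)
Your proof is correct and follows essentially the same strategy as the paper: Lemma~\ref{lemma:minors_preserve_relations} for the easy inclusion and Lemma~\ref{lemma:pippenger1} applied to the strong partial minion $\minor{F}$ for the converse. You are slightly more explicit than the paper in spelling out that $\minor{F}$ is indeed a strong partial minion (a hypothesis of Lemma~\ref{lemma:pippenger1}) and in noting $F \subseteq \minor{F}$ to conclude $(S,T) \in \inv(F)$, but these are just welcome elaborations of the same argument.
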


\begin{proof}
First, let $f \in \minor{F}$. Then $f$ preserves every relation in $\inv(F)$ (by Lemma~\ref{lemma:minors_preserve_relations}). Second, let $f \in \pPol(\inv(F))$. If $f \notin \minor{F}$ then Lemma~\ref{lemma:pippenger1} implies that there is $(S,T)$ such that $\minor{F} \subseteq \pPol(S,T)$ but $f \notin \pPol(S,T)$, i.e., $(S,T) \in \inv(F)$, contradicting the assumption that $f \in \pPol(\inv(F))$.
\end{proof}

By combining everything together thus far we obtain a Galois connection between $\inv(\cdot)$ and $\pPol(\cdot, \cdot)$ which implies the following duality.

\begin{theorem} \label{thm:galois}
    Let  $(\cla, \clb)$ and  $(\cla', \clb')$ be languages over $D_1, D_2$. Then \[\pPol(\cla, \clb) \subseteq \pPol(\cla', \clb') \]
if and only if 
\[ (\cla', \clb') \subseteq \qfppp{\cla, \clb}.\]
\end{theorem}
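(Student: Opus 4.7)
The plan is to derive this Galois duality as a direct consequence of the two characterizations just established, namely $\qfppp{\cla, \clb} = \inv(\pPol(\cla, \clb))$ and $\minor{F} = \pPol(\inv(F))$, together with Lemma~\ref{lemma:ppol_preserves_qfpp}. No new combinatorial work is required; everything reduces to chasing the operators $\pPol(\cdot)$, $\inv(\cdot)$, and $\qfppp{\cdot}$ through their known interactions. Both directions are short, and the main conceptual content has already been absorbed by the lemmas of this subsection (in particular the Pippenger-style separation lemmas \ref{lemma:pippenger1} and \ref{lemma:pippenger2}).

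For the forward direction, I will assume $\pPol(\cla, \clb) \subseteq \pPol(\cla', \clb')$ and pick an arbitrary relation pair $(S, T) \in (\cla', \clb')$. Since $\pPol(\cla', \clb') \subseteq \pPol(S, T)$ holds trivially from the definition of $\pPol$ applied to a singleton sublanguage, chaining gives $\pPol(\cla, \clb) \subseteq \pPol(S, T)$, i.e.\ $(S,T) \in \inv(\pPol(\cla, \clb))$. Invoking the identity $\inv(\pPol(\cla, \clb)) = \qfppp{\cla, \clb}$ established just above then yields $(S, T) \in \qfppp{\cla, \clb}$, which gives the desired containment $(\cla', \clb') \subseteq \qfppp{\cla, \clb}$.

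For the backward direction, I will assume $(\cla', \clb') \subseteq \qfppp{\cla, \clb}$ and take any $f \in \pPol(\cla, \clb)$. Lemma~\ref{lemma:ppol_preserves_qfpp} tells us $\pPol(\cla, \clb) = \pPol(\qfppp{\cla, \clb})$, so $f$ already preserves every relation pair in $\qfppp{\cla, \clb}$, and in particular every relation pair in the subset $(\cla', \clb')$. Hence $f \in \pPol(\cla', \clb')$, as required.

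I do not anticipate an obstacle here: the two preceding theorems bundle the substantive content, and this statement simply repackages them as a two-sided Galois connection. The only thing to be careful about is that both sides use the same operator $\pPol$ on a language interpreted as invariance on every member pair, so that the reduction to the singleton $(S,T)$ in the first direction and the uniform use of Lemma~\ref{lemma:ppol_preserves_qfpp} in the second direction go through without a domain mismatch between $D_1$ and $D_2$.
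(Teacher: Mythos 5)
Your proof is correct and takes essentially the same route as the paper. The only stylistic difference is that you route the forward direction through the already-established identity $\qfppp{\cla,\clb}=\inv(\pPol(\cla,\clb))$ rather than re-invoking Lemma~\ref{lemma:pippenger2} directly (the paper does the latter, by contradiction), and you state the backward direction as a direct containment rather than by contradiction; the logical content is identical in both directions.
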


\begin{proof}
      First, assume that $\pPol(\cla, \clb) \subseteq \pPol(\cla', \clb')$. Assume there exists $(S,T) \in (\cla', \clb')$ such that $(S,T) \notin \qfppp{\cla, \clb}$. By Lemma~\ref{lemma:pippenger2} there exists $f \in \pPol(\cla, \clb)$ such that $f \notin \pPol(S,T) \supseteq \pPol(\cla', \clb')$, contradicting the assumption that $\pPol(\cla, \clb) \subseteq \pPol(\cla', \clb')$.

        For the other direction, assume that $(\cla', \clb') \subseteq \qfppp{\cla, \clb}$. Assume there exists $f \in \pPol(\cla, \clb)$ such that $f \notin \pPol(\cla', \clb')$, and let $(S,T) \in (\cla', \clb')$  be a promise relation witnessing this, i.e., $f \notin \pPol(S,T))$. 
  By assumption, $(S,T) \in \qfppp{\cla, \clb}$, and 
  Lemma~\ref{lemma:ppol_preserves_qfpp} then implies that $\pPol(\cla, \clb) = \pPol(\qfppp{\cla, \clb}) \subseteq \pPol(S,T)$, contradicting the assumption that $f \in \pPol(\cla, \clb)$.
\end{proof}

Recall that for a promise relation $(S, T)$ over $D$ we write $\widetilde{T}$ for  $D^r \setminus (T \setminus S)$. We can now prove that the conditional non-redundancy of a relation is determined by the partial polymorphisms from $S$ to $\widetilde{T}$, i.e., we are interested in partial polymorphisms $f$ such that if $f(t_1, \ldots, t_n) = t$ is defined then $t \in S$ {\em if} $t \in T$. 

\begin{proposition}\label{prop:gadget-nrd}
Let $(S_1, T_1)$ and $(S_2, T_2)$ be promise relations over a domain $D$. If  $\pPol(S_2, \widetilde{T_2}) \subseteq \pPol(S_1, \widetilde{T_1})$ then $\NRD(S_1 \mid T_1, n) = \Oh(\NRD(S_2 \mid T_2, n))$.
\end{proposition}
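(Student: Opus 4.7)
The plan is to combine the Galois correspondence of Theorem~\ref{thm:galois} with a gadget reduction. By the theorem, the hypothesis $\pPol(S_2, \widetilde{T_2}) \subseteq \pPol(S_1, \widetilde{T_1})$ yields $(S_1, \widetilde{T_1}) \in \qfppp{S_2, \widetilde{T_2}}$, which unpacks (after composing qfpp-steps and pushing the single strict relaxation to the end) into a qfpp-formula $\varphi(x_1, \ldots, x_r)$ with $c$ atoms $A_1(x_{\pi_1}), \ldots, A_c(x_{\pi_c})$ over the relation $S_2$ such that $S_1 \subseteq \varphi^{S_2}$ and $\widetilde{T_1} \supseteq \varphi^{\widetilde{T_2}}$. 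Equality atoms in $\varphi$ can be absorbed by identifying the corresponding positions of $S_1$ in a preprocessing step, and an $\mathsf{f}$-atom forces $S_1 = \emptyset$, trivializing the statement; the constant $c$ depends only on $\varphi$.

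Given a non-redundant instance $(X, Y)$ of $\CSP(S_1 \mid T_1)$ on $n$ variables (multipartite, by \Cref{lemma:reduction:to:multipartite}) and a witness $\sigma_y$ for each clause $y \in Y$, I use the fact that $\sigma_y(y) \in T_1 \setminus S_1$ lies outside $\widetilde{T_1} \supseteq \varphi^{\widetilde{T_2}}$ to select an atom index $i(y) \in [c]$ with $\sigma_y(y_{\pi_{i(y)}}) \in T_2 \setminus S_2$. Set $y'_y := A_{i(y)}(y_{\pi_{i(y)}})$, viewed as an $S_2$-constraint on variables in $X$, and define $Y' := \{y'_y : y \in Y\}$ as a set of clauses. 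I claim $(X, Y')$ is a conditionally non-redundant instance of $\CSP(S_2 \mid T_2)$: given $y' \in Y'$ originating from some clause $y$ with $y'_y = y'$, take $\sigma_y$ as the witness; by construction $\sigma_y(y') \in T_2 \setminus S_2$, and any other $z' = y'_w \in Y' \setminus \{y'\}$ forces $w \neq y$ (otherwise $y'_w = y' = z'$), so $\sigma_y(w) \in S_1 \subseteq \varphi^{S_2}$ forces every atom of $\varphi$ applied to $w$ — including $z'$ — to evaluate into $S_2$.

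The main obstacle, and the only place where the structure of $\varphi$ is really used, is controlling the collapse of $|Y|$ into a set $|Y'|$. I will show that the fibers of $y \mapsto y'_y$ have size at most $c$, by proving that at most one clause per fiber can use a given atom index. Suppose $y_1 \neq y_2$ are in the same fiber via the same index $i$, so $i(y_1) = i(y_2) = i$ and $y_{1, \pi_i} = y_{2, \pi_i}$. Then $\sigma_{y_1}$ would send the common variable tuple $y_{1,\pi_i} = y_{2,\pi_i}$ simultaneously to $T_2 \setminus S_2$ (by the defining choice of $i(y_1) = i$) and into $S_2$ (because $\sigma_{y_1}(y_2) \in S_1 \subseteq \varphi^{S_2}$ forces all atoms of $\varphi$ on $y_2$ to land in $S_2$), a contradiction. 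Hence $|Y'| \geq |Y|/c$, and combining with the non-redundancy of $(X, Y')$ gives $|Y| \leq c \cdot |Y'| \leq c \cdot \NRD(S_2 \mid T_2, n)$, which yields $\NRD(S_1 \mid T_1, n) = \Oh(\NRD(S_2 \mid T_2, n))$ after absorbing the constant $c$.
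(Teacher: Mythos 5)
Your overall strategy matches the paper's: apply the Galois correspondence (Theorem~\ref{thm:galois}) to turn the polymorphism containment into a qfppp-definition of $(S_1,\widetilde{T_1})$ over $(S_2,\widetilde{T_2})$, then transfer a conditionally non-redundant instance of $\CSP(S_1\mid T_1)$ to one of $\CSP(S_2\mid T_2)$ by picking, for each clause $y$, one $S_2$-atom where the witness $\sigma_y$ lands in $T_2\setminus S_2$. Your fiber-size argument is precisely the injectivity argument the paper uses later in Proposition~\ref{prop:conjunction-nrd}; in fact your argument even shows the map $y\mapsto y'_y$ is fully injective (the contradiction you derive does not use $i(y_1)=i(y_2)$, only that the atom-tuples coincide), so $|Y'|=|Y|$, stronger than the $|Y'|\ge|Y|/c$ you state. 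Pushing the strict relaxation to the end of the qfppp-chain is valid, since qfpp-definability is monotone in the defining relation pair.

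There is, however, a genuine gap in your treatment of equality atoms. You assert they can be "absorbed by identifying the corresponding positions of $S_1$ in a preprocessing step," but this does not work at the instance level. While $S_1\subseteq\varphi^{S_2}$ forces every tuple of $S_1$ to satisfy $t_i=t_j$ whenever $x_i=x_j$ is an atom, the witness value $\sigma_y(y)\in T_1\setminus S_1$ need \emph{not} satisfy it: one may have $\sigma_y(y_i)\neq\sigma_y(y_j)$, and then $\sigma_y(y)\notin\varphi^{\widetilde{T_2}}$ is explained entirely by the violated equality atom, so there is no index $i(y)$ with $\sigma_y(y_{\pi_{i(y)}})\in T_2\setminus S_2$ — your selection step has nothing to pick. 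Replacing $y_j$ by $y_i$ in the clause (if that is what you intend) destroys the witness in exactly this case. The correct handling, as in the paper's Proposition~\ref{prop:equality-nrd}, is to show that the set $Y_{\EQ}$ of clauses whose witnesses violate the equality forms an acyclic graph on $X$ (each such edge is a bridge with respect to the coloring $\psi_y$), hence $|Y_{\EQ}|\le n-1$; these are set aside, and the $+\Oh(n)$ term is absorbed because $\NRD(S_2\mid T_2,n)=\Omega(n)$ whenever $S_2\subsetneq T_2$ are nonempty. Worth noting: the paper's own proof of this proposition is an explicit sketch that "ignore[s] handling relaxations and equality constraints" and defers to Theorem~\ref{thm:fgppp-nrd}, so you are attempting to fill in precisely the step where the extra care is needed.
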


\begin{proof}
We provide a proof sketch since this proposition is superseded by Theorem~\ref{thm:fgppp-nrd} in Section~\ref{sec:gadget_reductions}, and ignore handling relaxations and equality constraints. First, via Theorem~\ref{thm:galois} the condition that $\pPol(S_1, \widetilde{T_1}) \subseteq \pPol(S_2, \widetilde{T_2})$ implies that $(S_2, T_2)$ qfpp-defines $(S_1, T_1)$. 
Consider a conditional non-redundant instance $(X,Y)$ of $\CSP(S_1 \mid T_1)$ on $n$ variables. We replace every constraint of $S_1$ with the corresponding constraints of $S_2$, as prescribed by the witnessing qfpp-definition. For each constraint $y \in Y$, there then exists an assignment $\sigma_y : X \to D$ for which $\sigma_y(y') \in T_1$ for all $y' \in Y \setminus \{y\}$ but $\sigma_y(y) \in T_1 \setminus S_1$. 

In particular, $\sigma_y$ will $S_2$-satisfy every constraint which replaces each $y' \in Y \setminus \{y\}$. However, for the clauses replacing $y$, we have that
\[
    0 = (D^r \setminus (T_1 \setminus S_1))(\sigma_y(y)) = \bigwedge_{i=1}^t (D^r \setminus (T_2 \setminus S_2))(\sigma_y(y|_{S_i})).
\]
Thus, complementing both sides, we then have that
\[
    1 = (T_2 \setminus S_2)(\sigma_y(y)) = \bigvee_{i=1}^t (T_1 \setminus S_1)(\sigma_y(y|_{S_i}))
\]
and at least one of the constraints replacing $y$ will have $\sigma_y$ assign it to some value in $T_1 \setminus S_1$. Pick one such clause arbitrarily for each $y \in Y$. The $\sigma_y$'s witness that the resulting instance is non-redundant for $\CSP(S_1 \mid T_1)$.
\end{proof}

From now on, we primarily work in the signature-free setting and when specifying that an $r$-ary relation $R$ over $D$ is definable via a combination of a qfpp-formula $\varphi_R$ and a structure $\Gamma$ over $D$ as $R = \{(a_1, \ldots, a_r) \in D^r \mid \varphi^{\Gamma}_{R}(a_1, \ldots, a_r)\}$ we express this as
\[R(x_1, \ldots, x_r) \equiv \varphi^{\Gamma}_R(x_1, \ldots, x_r)\]
and do not make a sharp distinction between structures $\Gamma$ and sets of relations $\Gamma$.

\subsection{Polymorphism Patterns}
\label{sec:patterns}

Fully describing $\pPol(\cla, \clb)$ even for Boolean promise languages $(\cla, \clb)$ appears to be out of reach with current algebraic methods. Fortunately, we can abstract away specific operations and concentrate on {\em identities} satisfied by the operations instead, similar to
the universal algebraic approach for studying classical complexity of (P)CSPs where one typically looks only at {\em linear} identities $f(x_1, \ldots, x_n) \approx g(y_1, \ldots, y_m)$~\cite{barto2017Polymorphisms}. However, the identities relevant in our setting have an even more restricted form where the right-hand side consists of a single variable, and can concisely be defined as follows in the signature-free setting.  %

\begin{definition}
    An $n$-ary {\em polymorphism pattern} over a set of variables $V$ for $n \geq 1$ is a subset of $\{(t,x) \mid t = (x_1, \ldots, x_n) \in V^n, x \in V\}$.
\end{definition}

\begin{example} \label{ex:maltsev}
A {\em Mal'tsev} term $\phi$ can be defined by $P = \{((x,x,y), y), ((y,x,x), y)\}$ over $V = \{x,y\}$ and a {\em majority} term can be defined by $P' = \{((x,x,y), x), ((x,y,x), x), ((y,x,x), x)\}$.
\end{example}

\begin{definition}
    Let $D$ be a set and let $P$ be an $n$-ary polymorphism pattern. 
    We define $\interpretation{D}{P}$ as the set of $n$-ary partial functions $f$ such that:
    \begin{enumerate}
            \item 
            if $((x_1, \ldots, x_n), x) \in P$ where $x \in \{x_1, \ldots, x_n\}$, then for every $g \colon \{x_1, \ldots, x_n\} \to D$ we have  \[f(g(x_1), \ldots, g(x_n)) = g(x),\]
            \item 
        if $((x_1, \ldots, x_n), y) \in P$ where $y \notin \{x_1, \ldots, x_n\}$ then $(g(x_1), \ldots, g(x_n)) \in \dom(f)$ for every $g \colon \{x_1, \ldots, x_n\} \to D$, and
    \end{enumerate}
     which is undefined otherwise.
\end{definition}

Functions in $\interpretation{D}{P}$ are sometimes called {\em pattern polymorphisms}.

\begin{example} \label{ex:interpretation}
    If we revisit Example~\ref{ex:maltsev} then for the Mal'tsev term $P = \{((x,x,y), y), (y,x,x), y)\}$ we have $\interpretation{\{0,1\}}{P} = \{m\}$ for 
    \begin{enumerate}
        \item 
        $m(1,1,0) = m(0,1,1) = m(0,0,0) = 0$, and
        \item 
        $m(0,0,1) = m(1,0,0) = m(1,1,1) = 1$, 
    \end{enumerate}
    and which is undefined for $(0,1,0)$ and $(1,0,1)$. Note that this function is {\em idempotent} and {\em self-dual} in the sense that negating the input variables corresponds to negating the output. 

    In contrast, for $P' = \{((x,x,y), x), ((x,y,x), x), ((y,x,x), x)\}$ we get $\interpretation{\{0,1\}}{P'} = \{m'\}$ where $m'$ is the totally defined majority function on the Boolean domain. However, for any $D$ with at least three elements  $\interpretation{D}{P'}$ will contain a properly partial function.
\end{example}

If $\interpretation{D}{P} = \emptyset$ then $P$ is {\em inconsistent} over $D$ and if $\interpretation{D}{P}$ contains all $n$-ary functions over $D$ then $P$ is said to be {\em trivial} over $A$. A pattern which is not inconsistent is simply said to be {\em consistent}.
We observe that for a consistent pattern $P$ we have $|\interpretation{D}{P}| = 1$ {\em unless} $P$ contains an identity $((x_1, \ldots, x_n), y)$ for $y \notin \{x_1, \ldots, x_n\}$, in which case this identity induces $|D|$ different partial functions $f_1, \ldots, f_{|A|}$ where $f_i(g(x_1), \ldots, g(x_n)) = i$ for every $g \colon \{x_1, \ldots, x_n\} \to D$. In other words the specific value for this identity is arbitrary, and can be uniquely defined as a {\em multifunction} $f$ such that $f(g(x_1), \ldots, g(x_n)) = D$. However, to avoid overloading the technical machinery more than needed, we stick with the viewpoint that $\interpretation{A}{P}$ is a set of functions.
We remark that patterns $P$ such that $|\interpretation{D}{P}|>1$ are typically too restrictive to be interesting in the general setting, but they are relevant in the multisorted setting discussed below. 

We proceed by relating $\interpretation{D}{P}$ to the property of {\em commuting} with all maps which, intuitively, shows that the domain of any function in $\interpretation{D}{P}$ is highly symmetric.

\begin{definition}
    Let $f$ be an $n$-ary partial function over $D$ and let $g \colon D \to D$ be a unary map. We say that $f$ {\em commutes} with $g$ if $f(g(x_1), \ldots, g(x_n)) = g(f(x_1, \ldots, x_n))$ for all $(x_1, \ldots, x_n) \in \dom(f)$. 
\end{definition}

\begin{example}
    Recall that the partial Boolean Mal'tsev operation $m$ was idempotent and self-dual. One alternative way to define this notion is that $m$ commutes with the two Boolean constant maps and with Boolean negation $g(x) = 1 - x$.
\end{example}

We lift this notion to pairs of functions (possibly defined over different domains) as follows.

\begin{definition}
    Let $f$ and $f'$ be $n$-ary partial functions over $D_1$ and $D_2$, respectively. We say that $f,f'$ {\em commute} with $g \colon D_1 \to D_2$ if 
    \[f'(g(x_1), \ldots, g(x_n)) = g(f(x_1, \ldots, x_n))\] and is defined for all $(x_1, \ldots, x_n) \in \dom(f)$.
\end{definition}

\begin{proposition} \label{prop:commutes}
    Let $P$ be an $n$-ary polymorphism pattern and let $D_1$ and $D_2$ be finite sets. For every $p \in \interpretation{D_1}{P}$ there exists $p' \in \interpretation{D_2}{P}$ such that $p,p'$ commutes with every unary $g \colon D_1 \to D_2$.
\end{proposition}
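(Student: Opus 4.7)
The plan is to build $p'$ by mirroring the ``pattern recipe'' that produces $p$. Every tuple in the domain of a function in $\interpretation{D}{P}$ has the form $h(\mathbf{x}) = (h(x_1), \ldots, h(x_n))$ for some identity $((x_1, \ldots, x_n), y) \in P$ and realization map $h$, and the value there is either forced to $h(x_i)$ (case~1, when $y = x_i$) or free within $D$ (case~2). I would define $p'$ on the case-1 tuples of $D_2^n$ via the forced values of $P$, which automatically fulfills the case-1 obligations of membership in $\interpretation{D_2}{P}$. On the remaining case-2 tuples, I would use the multifunction viewpoint highlighted in the remark after the definition of $\interpretation{D}{P}$ (at a case-2 tuple the output is all of $D_2$), and select from it in a way guided by $p$: for a representative pair $(g, \bar{h})$ realizing the tuple, set $p'((g \circ \bar{h})(\mathbf{x})) := g(p(\bar{h}(\mathbf{x})))$.

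To check commutation with an arbitrary $g \colon D_1 \to D_2$, take $\mathbf{t} = \bar{h}(\mathbf{x}) \in \dom(p)$ via some identity $((x_1, \ldots, x_n), y)$ and map $\bar{h} \colon \{x_1,\ldots,x_n\} \to D_1$. Then $(g(t_1), \ldots, g(t_n)) = (g \circ \bar{h})(\mathbf{x})$ realizes the same identity over $D_2$, so it lies in $\dom(p')$. In case~1 both sides of the commutation equality reduce to $(g \circ \bar{h})(x_i) = g(t_i)$, matching automatically from the forced values. In case~2, representing $p(\mathbf{t})$ as $\bar{h}(y)$ under the natural extension of $\bar{h}$ gives $p'(g(\mathbf{t})) = (g \circ \bar{h})(y) = g(\bar{h}(y)) = g(p(\mathbf{t}))$ by the construction of $p'$.

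The hard part will be the consistency of the case-2 definition: a single tuple in $D_2^n$ may be reachable by many pairs $(g, \bar{h})$, and since $p'$ must commute with \emph{every} $g$ simultaneously, all such realizations must force the same value at that tuple. The resolution, as signaled by the multifunction remark, is to work at the multifunction level throughout: every case-2 output equals $D_2$, so each required value $g(p(\mathbf{t}))$ trivially lies within the prescribed output, and one obtains a single-valued $p'$ by a consistent selection afterwards. Making this selection uniformly in $g$, and verifying that the case-1 forcings do not conflict with the case-2 selections when a tuple is simultaneously reachable from identities of both types, is the delicate technical step I expect to occupy the bulk of the proof.
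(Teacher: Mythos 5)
Your proposal takes essentially the same route as the paper's proof: split into case-1 identities (where the output value is forced, so commutation with $g$ holds automatically by transporting the forcing $h$ to $g\circ h$) and case-2 identities (where the output is free and $p'$ should be chosen to match). You go further than the paper by actually writing down the intended rule $p'((g\circ\bar h)(\mathbf{x})) := g(p(\bar h(\mathbf{x})))$ and by explicitly flagging the well-definedness question: a single tuple in $D_2^n$ may be realized by several pairs $(g,\bar h)$, and all of them must demand the same value.

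The place where your writeup falls short is the claimed resolution. Appealing to the multifunction remark does not settle well-definedness: that remark only says the permitted output at a free tuple is all of $D_2$, i.e.\ it describes what values are \emph{allowed}, not that the \emph{required} values $g(p(\bar h(\mathbf{x})))$ across different realizing pairs coincide — and a priori they need not, since different unary maps $g$ can transport the same free value $p(\bar h(\mathbf{x}))$ to different elements of $D_2$. You acknowledge this as ``the delicate technical step I expect to occupy the bulk of the proof,'' and that is accurate: your sketch stops exactly at the step that needs the argument. For what it is worth, the paper's own proof of this proposition is at a similar level of informality — it verifies the commutation identity per pattern and per tuple and then asserts that a suitable $p'$ exists, without assembling those local choices into one global function — so you have correctly located both the structure and the difficulty; the gap is that the multifunction observation does not close it, and neither you nor the paper's text actually does.
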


\begin{proof}
Let $p \in \interpretation{D_1}{P}$. We observe that if $((x_1, \ldots, x_n), x) \in P$ where $x \in \{x_1, \ldots, x_n\}$ then, for every $p' \in \interpretation{D_2}{P}$, we must have that $p'(g(b_1), \ldots, g(b_n)) = g(p(b_1, \ldots, b_n))$ for every $g \colon D_1 \to D_2$ and $b_1, \ldots, b_n$ such that there is $h \colon \{x_1, \ldots, x_n\} \to D_1$, where $b_i = h(x_i)$. 

On the other hand, assume that $((x_1, \ldots, x_n), y) \in P$ for $y \notin \{x_1, \ldots, x_n\}$. Let $b_1, \ldots, b_n \in D_1$ such that there is $h \colon \{x_1, \ldots, x_n\} \to D_1$ where $h(x_i) = b_i$ for every $1 \leq i \leq n$, i.e., $b_1, \ldots, b_n$ matches the pattern, as witnessed by the function $h$. There then exists $p' \in \interpretation{D_2}{P}$ such that $g(p(b_1, \ldots, b_n)) = p'(g(b_1), \ldots, g(b_n))$ for every $g \colon D_1 \to D_2$.
\end{proof}

We will often work in the following setting: consider finite, disjoint sets $D_1, \ldots, D_k$ for some $k \geq 1$, and a relation $R \subseteq (D_1 \cup \ldots \cup D_k)^r$ with an associated signature $s \colon [k] \to [r]$ which for each argument returns the corresponding sort, i.e., if $h(i) = j$ then $t_i \in D_j$ for each $t \in R$.
Note that we for each $D$ and $r \geq 1$ can construct $r$ disjoint copies $D_1, \ldots, D_r$ of $R$ and thus simply speak about $r$-ary multisorted relations over $D$. To ease the notation we thus sometimes simply say that $R$ is an $r$-ary multisorted relation over $D$, rather than over disjoint copies $D_1, \ldots, D_r$.  It is then convenient not only to let the pattern act independently on each domain, but to allow different patterns for each domain type, which is in line with multisorted algebras in the CSP literature~\cite{bulatov2003}.

\begin{definition}
    A \emph{multisorted pattern} is a tuple $(P_1,\ldots,P_r)$ where $P_i$ is an $n$-ary polymorphism pattern for each $i \in [r]$. We say that  $P$ preserves an $m$-ary multisorted promise relation $(S,T)$ over $D_1 \cup \ldots \cup D_r$ with signature $s \colon [m] \to [r]$ if
    \[(p_{s(1)}(x^{(1)}_1, \ldots, x^{(n)}_1), \ldots, p_{s(m)}(x^{(1)}_m, \ldots, x^{(n)}_m)) \in T\] for all $p_{s(1)} \in \interpretation{D_{s(1)}}{P_{s(1)}}, \ldots, p_{s(m)} \in \interpretation{D_{s(m)}}{P_{s(m)}}$ and $x^{(1)}, \ldots, x^{(n)} \in S$ where each $(x^{(1)}_i, \ldots, x^{(n)}_i) \in \dom(p_{s(i)})$.
\end{definition}

This naturally allows us to speak about the set of patterns preserving a given promise relation.

\begin{definition}
    For a (multisorted) promise relation $(S,T)$ over $D$ ($D_1 \cup \ldots \cup D_r$) we let \[\pattern(S,T) = \{P \mid P \text{ is an } n \text{-ary  pattern}, n \geq 1, (S,T) \in \inv(\{P\})\}\] and 
    \[\mpattern(S,T) = \{(P_1, \ldots, P_r) \mid (P_1, \ldots, P_r) \text{ is a multisorted pattern}, (S,T) \in \inv(\{P\})\}.\]
\end{definition}

We are now interested in properties of $\pattern(\cdot)$ (and $\mpattern(\cdot)$) when viewed as algebraic objects themselves. With the aim of showing that these sets form minor like objects we define the following closure operator for (multisorted) patterns.

\begin{definition}
    Let $P = (P_1, \ldots, P_r)$ be a multisorted pattern where each $P_i$ has arity $n$. A {\em variable identification minor} of $P$ is a multisorted pattern $P_{/h} = (P_{1/h}, \ldots, P_{r/h})$ where $P_{i/h} = \{(x_{1}, \ldots, x_{m}), x)) \mid (x_{h(1)}, \ldots, x_{h(n)}), x) \in P_i\} $ for some $h \colon [n] \to [m]$ and $m \geq 1$. 
\end{definition}

Similarly to minors of partial functions we then say that $P'$ is a {\em minor} of a multisorted pattern $P$ if it is a subset of a variable identification minor of $P$.

\begin{definition}
    For a set of multisorted patterns $Q$ we let $[Q]_{\min} = \{P' \mid P' \text{ is a minor of } P \in Q\}$.
\end{definition}

Simultaneously, it then also makes sense to describe the relations invariant under a set of multisorted patterns.

\begin{definition}
    For a set of (multisorted) polymorphism patterns $Q$ and $D$ we write $\inv_D(Q)$ for the set of all (multisorted) $(S,T)$ over $D$ preserved by each $P \in Q$, and $\inv(Q) = \bigcup_{i \geq 1, |D| = i} \inv_D(Q)$ for the set of all (multisorted) relations, over some finite domain, preserved by $Q$.
\end{definition}

We can relate these operators together as follows.

\begin{proposition} \label{prop:minors_preserve}
    For any set $Q$ of (multisorted) polymorphism patterns $\inv(Q) = \inv([Q]_{\min})$.
\end{proposition}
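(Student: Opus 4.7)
The plan is to establish the two inclusions separately. The containment $\inv([Q]_{\min}) \subseteq \inv(Q)$ is immediate, since $Q \subseteq [Q]_{\min}$: every pattern is trivially its own minor via the identity map $h \colon [n] \to [n]$. The substantive work is to establish the reverse inclusion $\inv(Q) \subseteq \inv([Q]_{\min})$.

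For this, I would fix $(S,T) \in \inv(Q)$ and an arbitrary $P' \in [Q]_{\min}$, so by definition $P' \subseteq P_{/h}$ for some $P \in Q$ and some $h \colon [n] \to [m]$, and show $(S,T) \in \inv(\{P'\})$. The argument mirrors the function-theoretic proof of Lemma~\ref{lemma:is_a_minion} at the level of patterns, by decomposing the minor operation into two constituent steps: first the variable-identification step $P \rightsquigarrow P_{/h}$, and then the subset step $P_{/h} \rightsquigarrow P'$.

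For the variable-identification step, the key observation is that applying any $\tilde p \in \interpretation{D}{P_{/h}}$ coordinatewise to tuples $y^{(1)}, \ldots, y^{(m)} \in S$ with each column in $\dom(\tilde p)$ is semantically equivalent to applying a corresponding $p \in \interpretation{D}{P}$ coordinatewise to the reindexed tuples $y^{(h(1))}, \ldots, y^{(h(n))} \in S$, via the defining relation $\tilde p(z_1, \ldots, z_m) = p(z_{h(1)}, \ldots, z_{h(n)})$. The construction of $P_{/h}$ is set up precisely so that every identity $\tilde p$ is forced to satisfy arises by projection from a $P$-identity, hence this $p$ indeed lies in $\interpretation{D}{P}$; invoking $(S,T) \in \inv(\{P\})$ then gives $(\tilde p(c_1), \ldots, \tilde p(c_r)) \in T$. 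For the subset step, any $p' \in \interpretation{D}{P'}$ extends to some $\tilde p \in \interpretation{D}{P_{/h}}$ by assigning values on the extra inputs prescribed by the larger $P_{/h}$-domain (forced or arbitrary according to the type of identity), so $p'$ is a subfunction of $\tilde p$. Since subfunctions of preserving partial functions themselves preserve relations --- as already used in Lemma~\ref{lemma:is_a_minion} --- this transfers preservation from $\tilde p$ to $p'$, and hence to the entire pattern $P'$.

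The main delicacy lies in the bookkeeping during the variable-identification step: identities in $P_{/h}$ contain variables at positions $j \notin \im(h)$ that are not tied to any $P$-variable, and these must be treated as fresh (the natural convention in the pattern framework) so that lifting $\tilde p$ to $p \in \interpretation{D}{P}$ does not impose spurious identity constraints on $p$ via coincidental coincidences of variable names. Once this convention is made explicit, the reduction to the function-level statement of Lemma~\ref{lemma:minors_preserve_relations} becomes routine.
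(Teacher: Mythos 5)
Your proof is correct and follows essentially the same route as the paper's: the trivial inclusion from $Q \subseteq [Q]_{\min}$, then the observation that each partial function in $\interpretation{D}{P'}$ is a (partial) minor of a function in $\interpretation{D}{P}$, reducing the pattern-level claim to the function-level fact of Lemma~\ref{lemma:minors_preserve_relations}. The only difference is presentational — you spell out the variable-identification and subset steps separately and flag the fresh-variable bookkeeping explicitly, whereas the paper states the same two-step argument more tersely.
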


\begin{proof}
    Since $Q \subseteq [Q]_{\min}$ it trivially follows that $\inv([Q]_{\min}) \subseteq \inv(Q)$. For the other direction, pick $(S,T) \in \inv(Q)$ over some $D = D_1 \times \ldots \times D_r$. Let $P' \in [Q]_{\min}$ be a multisorted polymorphism pattern. First, assume that $P'$ is a variable identification minor of some $P \in Q$. Then $P'$ must preserve $(S,T)$, too, since each function in $\interpretation{D}{P'}$ is a variable identification minor of a function in $\interpretation{D}{P}$, and we may simply apply Proposition~\ref{lemma:minors_preserve_relations}. Similarly, if $P'' \subseteq P'$, then $P''$ preserves $(S,T)$ since every partial function in $\interpretation{D}{P''}$ is a subfunction of a partial function in $\interpretation{D}{P}$.
\end{proof}

Similarly, we obtain the following proposition via an abstraction of Lemma~\ref{lemma:is_a_minion} on the pattern level.

\begin{proposition}
$\mpattern(S,T) = [\mpattern(S,T)]_{\min}$ for any multisorted promise relation $(S,T)$.
\end{proposition}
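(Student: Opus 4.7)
The plan is to prove the two inclusions separately, and the first one is immediate: every multisorted pattern $P$ is a variable identification minor of itself via the identity map $h = \id_{[n]}$, and hence trivially a minor of itself, giving $\mpattern(S,T) \subseteq [\mpattern(S,T)]_{\min}$.

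For the reverse inclusion $[\mpattern(S,T)]_{\min} \subseteq \mpattern(S,T)$, I take an arbitrary $P' = (P_1', \ldots, P_r')$ that is a minor of some $P = (P_1, \ldots, P_r) \in \mpattern(S,T)$. By definition there exists $h \colon [n] \to [m]$ such that $P_i' \subseteq P_{i/h}$ for every $i \in [r]$. The goal is to show that for every choice of $p_i' \in \interpretation{D_i}{P_i'}$ and every sequence $x^{(1)}, \ldots, x^{(m)} \in S$ on which the coordinate-wise application is defined, the resulting tuple lies in $T$.

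The first key step is a pattern-to-function correspondence: for each $p_i' \in \interpretation{D_i}{P_i'}$, I will exhibit some $p_i \in \interpretation{D_i}{P_i}$ such that $p_i'$ is a subfunction of the variable identification minor $p_{i,h}$, where $p_{i,h}(y_1, \ldots, y_m) = p_i(y_{h(1)}, \ldots, y_{h(n)})$. This is obtained by unpacking the definition of $\interpretation{D_i}{\cdot}$: each identity $((x_1, \ldots, x_m), x) \in P_{i/h}$ arose from a unique identity $((x_{h(1)}, \ldots, x_{h(n)}), x) \in P_i$, so the values forced on $p_i$ by the identities in $P_i$ uniquely determine (up to the nondeterminism discussed after Example~\ref{ex:interpretation} for identities whose right-hand side is a fresh variable) the values of $p_{i,h}$ on the inputs prescribed by $P_{i/h}$. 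I will choose $p_i$ so that these choices agree with $p_i'$.

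The second key step then lifts preservation through minors. If the coordinate-wise application of $p_1', \ldots, p_r'$ to $(x^{(1)}, \ldots, x^{(m)}) \in S^m$ is defined, then it coincides with the coordinate-wise application of $p_{1,h}, \ldots, p_{r,h}$ to the same tuples, which in turn equals the application of $p_1, \ldots, p_r$ to the reindexed sequence $(x^{(h(1))}, \ldots, x^{(h(n))}) \in S^n$. Since $P \in \mpattern(S,T)$, the resulting tuple lies in $T$, completing the argument. This step is the multisorted analogue of Lemma~\ref{lemma:is_a_minion}, and the proof is essentially the same.

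The main obstacle will be the bookkeeping associated with the ``undefined otherwise'' clause in the definition of $\interpretation{D_i}{\cdot}$: a function in $\interpretation{D_i}{P_i'}$ may have a strictly smaller domain than what one would obtain by first taking a function in $\interpretation{D_i}{P_i}$ and then forming its variable identification minor, so ``subfunction'' rather than ``equals'' is essential in Step~1. Handling the case where some identity of $P_i$ has a free right-hand variable (and hence admits multiple interpretations in $\interpretation{D_i}{P_i}$) requires choosing $p_i$ coherently with the values already fixed by $p_i'$; here the freedom inherent in such identities means that any consistent extension works.
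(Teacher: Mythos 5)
Your proposal follows the paper's implied route (abstracting Lemma~\ref{lemma:is_a_minion} to the pattern level), but Step~1 contains a genuine gap: the claimed pattern-to-function correspondence fails precisely in the case you dismiss at the end.

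You argue that for each $p_i' \in \interpretation{D_i}{P_i'}$ one can pick $p_i \in \interpretation{D_i}{P_i}$ so that $p_i'$ is a subfunction of the variable-identification minor $p_{i,h}$, and you handle nondeterminism by ``choosing $p_i$ coherently with the values already fixed by $p_i'$''. But the nondeterminism that causes trouble does not live in $P_i$; it lives in $P_i'$. Taking a \emph{subset} $P_i' \subseteq P_{i/h}$ can delete a forced-output identity while retaining a free-output identity that covers the same input tuples. Then $p_i'$ is free to make a choice on a tuple where \emph{every} $p_{i,h}$ (for $p_i \in \interpretation{D_i}{P_i}$) is forced to a different value, and no ``consistent extension'' exists. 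Concretely, take $r = 1$, $D_1 = \{0,1\}$, $V = \{x,y\}$, $P_1 = \{((x,x),y),\ ((x,y),y)\}$, $h = \id_{[2]}$ (so $P_{1/h} = P_1$), and $P_1' = \{((x,x),y)\} \subseteq P_{1/h}$. The identity $((x,y),y)$ forces every $p_1 \in \interpretation{D_1}{P_1}$ to be the projection $\pi^2_2$, so $p_1(a,a) = a$ for both $a$. But $\interpretation{D_1}{P_1'}$ contains the partial function $f_0$ with $f_0(a,a) = 0$ for all $a$ (and undefined off the diagonal), and $f_0$ is not a subfunction of $\pi^2_2$ since $f_0(1,1) = 0 \ne 1$. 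So Step~1 cannot be carried out.

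This is not merely a presentational issue; it suggests that the statement itself needs care. Taking $(S,T) = (\{(0,1),(1,0)\},\{(0,1),(1,0)\})$ over a single sort, the pattern $(P_1)$ preserves $(S,T)$ (its only interpretation is $\pi^2_2$, a projection), while the minor $(P_1')$ does not: applying $f_0 \in \interpretation{D_1}{P_1'}$ coordinate-wise to the tuple $(0,1) \in S$ (repeated twice) produces $(0,0) \notin T$. Thus $\mpattern(S,T)$ is not closed under forming subsets of patterns as defined, and ``subset of a variable identification minor'' for patterns is not a faithful analogue of ``subfunction of a variable identification minor'' for partial functions: deleting identities from a pattern can \emph{enlarge} the set of behaviours of the interpretations (by turning forced values into free ones), which is precisely the opposite of what the subfunction operation does. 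A correct argument would either need a restricted notion of which subsets count as minors, or a semantic definition of minor tied to $\interpretation{D}{\cdot}$, neither of which your proposal (or the paper's one-line remark) supplies.
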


Thus, patterns can be used to define partial promise polymorphisms and form reasonable algebraic objects in their own right. It is natural to then ask what happens with the general algebraic picture if we \emph{only} consider partial promise polymorphisms definable by patterns. A limited case of this setting was considered by Carbonnel~\cite{carbonnel2022Redundancy} who introduced {\em functionally guarded pp-definitions} (fgpp-definitions) as a relational counter part to patterns. Here, the basic building blocks over some $r$-ary relation $R$ over a domain $D$ are  atoms of the form $R(g_1(x_1), \ldots, g_r(x_r))$ where $x_1, \ldots, x_r$ are variables and $g_1, \ldots, g_r \colon D \to D$ are unary functions (maps). Each combination of a map and a variable can then be seen as a generalization of a Boolean literal to arbitrary maps over arbitrary finite domains.  This notion generalizes to maps between arbitrary finite domains and we say that $(S_1, T_1)$ over $D_1$ has a \emph{functionally guarded promise pp-definition} (fgppp-definition) over $(S_2, T_2)$ over $D_2$ if 
there are maps $g_1, \ldots, g_m \colon D_1 \to D_2$ and qfpp-formulas $\varphi^{S_2}$ and $\varphi^{T_2}$ (differing only in the sense that the former uses $S_2$ and the latter $T_2$) such that

    \[S(x_1, \ldots, x_r) \equiv \varphi^{\cla}(\ell_1, \ldots, \ell_k)\]
    and  
        \[T(x_1, \ldots, x_r) \equiv \varphi^{\clb}(\ell_1, \ldots, \ell_k)\] 
where each $\ell_i$ is a combination of a map $g_i \in \{g_1, \ldots, g_m\}$ and a variable from $\{x_1, \ldots, x_r\}$. In the multisorted setting we associate each sort $1 \leq i \leq r$ with a set of maps and only allow maps in position $i$ to use maps of that type. Concretely, we obtain the following links between (multisorted) patterns and fgppp-definitions.

\begin{theoremrestated}{thm:fgppp-pattern}
Let $(S_1, T_1)$ be a promise relation over $D_1$ and $(S_2, T_2)$ a promise relation over $D_2$. Then $\pattern(S_2, T_2) \subseteq \pattern(S_1, T_1)$ if and only if $(S_2, T_2)$ fgppp-defines $(S_1, T_1)$.
\end{theoremrestated}

\begin{corollaryrestated}{cor:pattern_galois}
    Let $(S_1, T_1)$ be a multisorted promise relation over disjoint $D_1, \ldots, D_r$ and $(S_2, T_2)$ a multisorted promise relation over disjoint $E_1, \ldots, E_s$. Then we have $\mpattern(S_2, T_2) \subseteq \mpattern(S_1, T_1)$ if and  only if $(S_2, T_2)$ fgppp-defines $(S_1, T_1)$.
\end{corollaryrestated}

We defer the reader to Section~\ref{sec:fgpp} for the proofs since fgppp-definitions turn out to be a special case of a more powerful concept where non-unary functions are allowed.

\section{Catalan Polymorphisms and Linear NRD}\label{sec:linear-nrd}

In this section, we use  pattern polymorphisms to study the properties of predicates which have \emph{linear} non-redundancy. In other words, $\NRD(R, n) = O_{R}(n)$. In the literature, the primary known reason for a predicate $P \subseteq D^r$ having linear redundancy is the existence of a \emph{Mal'tsev embedding} \cite{lagerkvist2020Sparsification,bessiere2020Chain}. To recall Section~\ref{subsec:embeddings}, we say that $P$ has a Mal'tsev embedding if there exists a (finite) domain $E \supseteq D$ and a predicate $Q \subseteq E^r$ such that
\begin{itemize}
\item $Q \cap D^r = P$, and
\item There exits a \emph{Mal'tsev term} $\varphi \in \Pol(Q)$. That is, $\varphi$ is ternary and satisfies the identity $\varphi(x,x,y) = \varphi(y,x,x) = y$.
\end{itemize}

For any predicate $R$ admitting a Mal'tsev embedding over a finite domain $E$ one can prove that  $\NRD(R,n) = \Oh_E(n)$ by applying the simple algorithm for Mal'tsev constraints~\cite{bulatov2006Simple} in order to compute a sufficiently compact basis~\cite{lagerkvist2020Sparsification}.

\begin{remark}
As pointed out in \cite{bessiere2020Chain}, a slightly larger class of predicates has linear redundancy: the class of predicates which can be fgpp-defined from predicates with Mal'tsev embeddings. However, they prove that any such predicate has what is called an \emph{infinite} Mal'tsev embedding (i.e., the domain $E$ is infinite). 
\end{remark}

A special class of Mal'tsev embeddings is the \emph{group embeddings}. In particular, imagine that $(E, \cdot, {}^{-1})$ is a group. Then, $\varphi(x,y,z) = x\cdot y^{-1}\cdot z$ is a Mal'tsev term. Under this stipulation, the property of $\varphi \in \Pol(Q)$ is equivalent to $Q$ being a coset of the group. We call such embeddings \emph{group} embeddings. If the group happens to be an Abelian group, we call it an \emph{Abelian} embedding. For these special classes of Mal'tsev embeddings we can also make the distinction between embeddings over a finite or infinite domain. In particular, it is known that Abelian embeddings over an infinite domain (e.g. $(\mathbb Z, +)$) always imply an Abelian embedding over a finite domain--see for instance Section~6.1 of \cite{khanna2024Characterizations}.

The goal of this section is to advance the theory of Mal'tsev embeddings by deriving new properties. In particular, we make the following contributions:
\begin{itemize}
\item We introduce a new family of polymorphism identities known as the \emph{Catalan identities}. In particular, we show that every clone with a Mal'tsev term has \emph{Catalan terms} (or \emph{Catalan polymorphisms}) which satisfy these identities.
\item As an immediate corollary, we show the Catalan identities imply that every finite predicate with an infinite Mal'tsev embedding has an embedding into an infinite Coxeter group.
\item Furthermore, we can prove that every Boolean predicate with an infinite Mal'tsev embedding is in fact \emph{balanced} (i.e., has an Abelian embedding). This resolves an open question of \cite{chen2020BestCase}. 
\item Conversely, we construct a predicate over domain size three with a finite group embedding but no Abelian embedding. Combining this result with the main theorem of \cite{brakensiek2024Redundancy}, this gives the first explicit example of a predicate with near-linear sparsifiability that doesn't derive from the framework of \cite{khanna2024Characterizations}. 
\item Furthermore, the Catalan identities streamline existing proofs in the literature that certain predicates lack infinite Mal'tsev embeddings.
\end{itemize}

See Figure~\ref{fig:linear} for an overview of existing results alongside our new results.

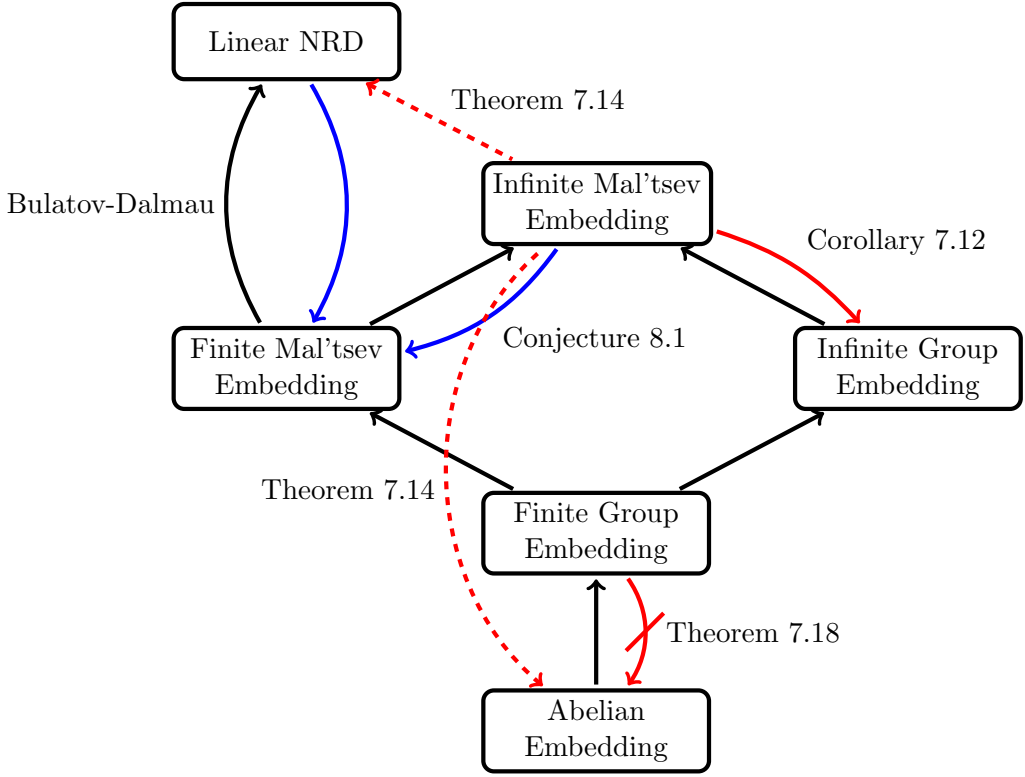
\begin{figure}[t]

\begin{center}
  \begin{tikzpicture}[auto, node distance=1.5cm,
    short/.style={shorten >=0.5mm, shorten <=0.5mm},
    normal_edge/.style={draw=black, ultra thick, short},
    theorem_edge/.style={draw=red, ultra thick, short},
    conj_edge/.style={draw=blue, ultra thick, short},
    class/.style={rectangle, rounded corners, draw, ultra thick,
      minimum width = 3 cm, minimum height = 1 cm}
    ]
  
   \node [class] (linNRD) {Linear NRD};
   \node [class, below right = of linNRD, align = center] (infMalt) {Infinite Mal'tsev\\Embedding};
   \node [class, below left = of infMalt, align = center] (finMalt) {Finite Mal'tsev\\Embedding};
   \node [class, below right = of infMalt, align = center] (infGroup) {Infinite Group\\Embedding};
   \node [class, below right = of finMalt, align = center] (finGroup) {Finite Group\\Embedding};
   \node [class, below = of finGroup, align = center] (affine) {Abelian\\Embedding};

   \draw [normal_edge,  <-] (linNRD) edge[bend right] node[left] {Bulatov-Dalmau} (finMalt);
   \draw [conj_edge, ->] (linNRD) edge[bend left] (finMalt);

   \draw [theorem_edge, <-, dashed] (linNRD) edge node[above right] {Theorem~\ref{thm:Mal'tsev-balanced}} (infMalt);
   \draw [normal_edge, <-] (finMalt) edge (finGroup);
   \draw [normal_edge, <-] (infGroup) edge (finGroup);
   \draw [normal_edge, <-] (infMalt) edge (finMalt);
   \draw [normal_edge, ->] (infGroup) edge (infMalt);
   \draw [theorem_edge, <-] (infGroup) edge[bend right=15] node[above right] {Corollary~\ref{cor:inf-malt-group}} (infMalt);
   \draw [normal_edge, <-] (finGroup) edge (affine);
   \draw [theorem_edge, ->] (finGroup) edge[bend left=35] coordinate(x) node[right] {\ Theorem~\ref{thm:non-Abelian}} (affine);
   \draw [red, ultra thick] (x)+(2.5mm,2.5mm) edge (x);
   \draw [red, ultra thick] (x)+(-2.5mm,-2.5mm) edge (x);
   \draw [conj_edge, <-] (finMalt) edge[bend right=20] node[below right] {Conjecture~\ref{conj:inf-to-fin}} (infMalt);
   \draw [theorem_edge, <-, dashed] (affine) edge[bend left=50] node[below left] {Theorem~\ref{thm:Mal'tsev-balanced}} (infMalt);

\end{tikzpicture}
\end{center}

    \caption{
Black edges are known results. Red edges are new results proved in this paper. Blue edges are conjectures. Dashed edges apply to Boolean predicates only. Note that the notion of an ``Abelian embedding'' is the same for finite and infinite groups due to a result of \cite{khanna2024Characterizations}.}\label{fig:linear}
\end{figure}

\subsection{Catalan Identities and Polymorphisms}

Recall that a ternary operator $\varphi : D^3 \to D$ is a Mal'tsev term if it satisfies the identities $\varphi(x,x,y) = \varphi(y,x,x) = y$ for all $x,y \in D$. One can generalize the Mal'tsev identity to arbitrary odd arity as follows. 

\begin{definition}[Catalan identities]\label{def:Catalan-identities}
We say that a sequence of terms $\{\psi_1, \psi_3, \psi_5, \hdots\}$ such that $\psi_m$ has arity $m$ for all odd $m \in \N$ satisfies the \emph{Catalan identities} if the following properties hold.
\begin{itemize}
\item For all $x$, $\psi_1(x) = x$.
\item For all odd $m \in \N$ at least $3$ and all $i \in [m-1]$, we have that
\begin{align}
\text{for all $x_1, \hdots, x_m$, } x_i = x_{i+1} \implies \psi_m(x_1, \hdots, x_{m}) = \psi_{m-2}(x_1, \hdots, x_{i-1}, x_{i+2}, \hdots, x_{m}).\label{eq:catalan}
\end{align}
\end{itemize}
Such terms are called \emph{Catalan terms} or \emph{Catalan polymorphisms}.
\end{definition}

Note that Definition~\ref{def:Catalan-identities} applies equally well to total and partial functions. In the latter case we naturally still require that $\psi_1$ is totally defined and the condition $\psi_m(x_1, \hdots, x_{m}) = \psi_{m-2}(x_1, \hdots, x_{i-1}, x_{i+2}, \hdots, x_{m})$ then means that $\psi_m$ and $\psi_{m-2}$ are both defined and agree or that both equal $\bot$.

\begin{remark}
We coin the name \emph{Catalan identities} because of the close relationship between these identities and the Catalan numbers (see, e.g., Chapter 14 of \cite{van2001course}). In particular, one definition of the $n$th Catalan number $c_n$ is the number of syntactically valid strings consisting of $n$ pairs of left and right parentheses. If one ``flattens'' the Catalan identities by recursively applying (\ref{eq:catalan}) $(m-1)/2$ times, we get exactly $c_{(m+1)/2}$ inequivalent identities. For example, when $m=5$ we get $c_3 = 5$ identities: 
\begin{align*}
\psi_5(x,x,y,y,z) &= z\\
\psi_5(x,x,z,y,y) &= z\\
\psi_5(z,x,x,y,y) &= z\\
\psi_5(x,y,y,x,z) &= z\\
\psi_5(z,x,y,y,x) &= z
\end{align*}
Furthermore, the derivation of Catalan terms from a Mal'tsev term (equations (\ref{eq:f-2k+1}) and (\ref{eq:g-2k+1})) follows a recursive structure similar to that of the recursion defining the Catalan numbers: $c_n = c_0 c_{n-1} + c_1 c_{n-2} + \cdots + c_{n-1} c_0$.
\end{remark}

The main result of this subsection is to show that any clone (over a finite or infinite domain) with a Mal'tsev term also has all the Catalan terms. Recall from Section~\ref{subsec:galois} that we for a set of functions $F$ let $[F]$ be the clone generated by $F$, which in particular contains all possible term operations over $F$. To ease the notation we write $[f]$ rather than $[\{f\}]$ for a singleton set $F = \{f\}$.

\begin{theorem}\label{thm:catalan}
Let $D$ be a (possibily infinite) domain and let $\varphi : D^3 \to D$ be a Mal'tsev term. Then there exist $f_1, f_3, f_5, \hdots \in [\varphi]$ satisfying the Catalan identities.
Conversely, if a sequence of terms $f_1, f_3, \ldots$ satisfies the Catalan identities, then $f_3$ is a Mal'tsev term.
\end{theorem}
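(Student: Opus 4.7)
The converse direction is immediate: assume $f_1, f_3, \ldots$ satisfies the Catalan identities. Instantiating (\ref{eq:catalan}) with $m=3$ and $i=1$ yields $f_3(x,x,y) = f_1(y) = y$, and with $i=2$ yields $f_3(y,x,x) = f_1(y) = y$. Hence $f_3$ is a Mal'tsev term.

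For the forward direction, I would proceed by strong induction on odd $m$. The base cases are $f_1(x) := \varphi(x,x,x) = x$ (using the Mal'tsev identity to verify totality and that $f_1$ is the identity) and $f_3 := \varphi$. For the inductive step, assuming $f_1, f_3, \ldots, f_{m-2} \in [\varphi]$ are already constructed and satisfy all the Catalan identities among themselves, I would define
\[
f_m(x_1,\ldots,x_m) := \varphi(A_m, B_m, C_m),
\]
where $A_m$, $B_m$, $C_m$ are all elements of $[\varphi]$. Guided by the explicit $f_5$ formula in the excerpt, I would take $A_m := \varphi(x_1, x_2, f_{m-2}(x_3,\ldots,x_m))$ (a ``right-expansion''), $C_m := \varphi(f_{m-2}(x_1,\ldots,x_{m-2}), x_{m-1}, x_m)$ (a ``left-expansion''), and $B_m$ a ``bridge'' term of the form $\varphi(f_j(x_1,\ldots,x_j), x_j, f_{m-j+1}(x_j,\ldots,x_m))$ for an odd split point $j$ (possibly iterated, if a single split does not suffice for larger $m$). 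The Catalan name then refers to the way this split mirrors the Catalan recurrence $c_n = \sum_{i=0}^{n-1} c_i c_{n-1-i}$: each decomposition of $f_m$ records a choice of where to glue a shorter left Catalan term to a shorter right one.

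To verify (\ref{eq:catalan}) for the candidate $f_m$, I would do case analysis on the position $i \in [m-1]$ where $x_i = x_{i+1}$. When $i \geq 3$, the coincidence lies inside the argument range of $f_{m-2}$ inside $A_m$, so the inductive hypothesis reduces $A_m$ to $\varphi(x_1, x_2, f_{m-4}(x_3,\ldots,\widehat{x_i}, \widehat{x_{i+1}}, \ldots, x_m))$, which in turn equals the target value $f_{m-2}(x_1,\ldots, \widehat{x_i}, \widehat{x_{i+1}}, \ldots, x_m)$. A parallel analysis shows that $B_m$ and $C_m$ reduce to the same value, and the outer Mal'tsev identity $\varphi(a,a,a)=a$ collapses $\varphi(A_m,B_m,C_m)$ to this common value. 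The symmetric range $i \leq m-3$ is handled analogously via $C_m$ and the identity $\varphi(a,a,b)=b$, while the remaining boundary case(s) exploit the specific structure of $B_m$ at the split.

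The main obstacle is the design of $B_m$: it must be chosen so that the cancellation indices $i$ near the split point of $B_m$ — where neither $A_m$ nor $C_m$ alone collapses cleanly — are handled correctly. This is exactly where the recursion into smaller Catalan terms is essential: the split in $B_m$ separates the would-be cancellation into a left subword and a right subword, each already governed by a smaller Catalan term via the inductive hypothesis, and $\varphi$ then glues them together. Once the bridge is properly designed (for $m=5$ this is the explicit formula in the excerpt, and for larger $m$ an analogous Catalan-recursive construction), each Catalan identity for $f_m$ reduces to a routine application of $\varphi(a,a,b) = \varphi(b,a,a) = b$ together with the inductive hypothesis on $f_{m-2}$.
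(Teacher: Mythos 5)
Your treatment of the converse is fine and matches what the paper implicitly uses: taking $m=3$ in \eqref{eq:catalan} with $i=1$ gives $\psi_3(x,x,y)=y$ and with $i=2$ gives $\psi_3(x,y,y)=x$, which are exactly the Mal'tsev identities for $f_3$.

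The forward direction, however, has a structural gap. You define $f_m := \varphi(A_m, B_m, C_m)$ — a single application of the ternary $\varphi$ with three arguments — whereas the paper's construction is $f_{2k+1} := f_{2k-1}(g^{2k+1}_2, \ldots, g^{2k+1}_{2k})$: an application of the \emph{previous} Catalan term $f_{m-2}$ (of arity $m-2$) to $m-2$ many $g$-terms, of which your $A_m$ and $C_m$ are just $g^m_2$ and $g^m_{m-1}$. For $m=5$ the two happen to agree because $f_3 = \varphi$, but from $m=7$ on they diverge, and the 3-argument template genuinely fails. Concretely, take $m=7$ and $x_1 = x_2$. Then $A_7 = \varphi(x_1,x_1,f_5(x_3,\ldots,x_7)) = f_5(x_3,\ldots,x_7)$, and by the induction hypothesis $C_7 = \varphi(f_5(x_1,x_1,x_3,x_4,x_5),x_6,x_7) = \varphi(f_3(x_3,x_4,x_5),x_6,x_7)$. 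With the bridge $B_7 = \varphi(f_3(x_1,x_2,x_3),x_3,f_5(x_3,\ldots,x_7))$ (split $j=3$), substituting $x_1=x_2$ gives $B_7 = \varphi(x_3,x_3,f_5(x_3,\ldots,x_7)) = f_5(x_3,\ldots,x_7) = A_7$. Hence $\varphi(A_7,B_7,C_7) = \varphi(A_7,A_7,C_7) = C_7 = \varphi(f_3(x_3,x_4,x_5),x_6,x_7)$, which is \emph{not} the required target $f_5(x_3,\ldots,x_7)$. The split $j=5$ fares no better. So your identity \eqref{eq:catalan} with $i=1$ already breaks at $m=7$.

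The reason this cannot be patched by merely redesigning $B_m$ is that the outer arity must grow: the paper's key step (Case 3 of its Cancellation Lemma, the case $j \in \{i,i+1\}$) shows that when $x_i = x_{i+1}$, the two consecutive arguments $g^{2k+1}_i$ and $g^{2k+1}_{i+1}$ of the \emph{outer} term become equal, and the inductive Catalan identity of the outer $f_{2k-1}$ cancels that pair. With a fixed ternary outer $\varphi$, there is no room for a pair of consecutive arguments indexed by $i \in \{2, \ldots, m-2\}$ to cancel. Your phrase ``possibly iterated, if a single split does not suffice for larger $m$'' gestures at the problem but does not supply the needed idea, which is precisely to replace the outer $\varphi$ by $f_{m-2}$ and to supply $m-2$ arguments $g^m_2,\ldots,g^m_{m-1}$; that change is what makes the full case analysis (and the exponential depth the paper alludes to) go through.

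Finally, a small but telling sign the template is off: you invoke ``$\varphi(a,a,a)=a$ collapses $\varphi(A_m,B_m,C_m)$'' in the case $i\ge 3$, but that requires $A_m$, $B_m$, $C_m$ to all reduce to a \emph{common} value, whereas the Mal'tsev identities you actually have available are $\varphi(a,a,b)=b$ and $\varphi(a,b,b)=a$, which collapse only when two \emph{adjacent} arguments coincide. Tracking which pair coincides in which case is exactly what the paper's Case 1, Case 2, Case 3 split does, and it only works with the higher-arity outer term.
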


Consider the following sequence of operators $\{f_{2i+1}\}_{i \ge 0}$ such that $f_{2i+1}$ has arity $2i+1$.
\begin{align}
f_1(x_1) &:= x_1\nonumber\\
f_3(x_1, x_2, x_3) &:= \varphi(x_1, x_2, x_3)\nonumber\\
f_{2k+1}(x_1, \hdots, x_{2k+1}) &:=  f_{2k-1}(g^{2k+1}_2(x_1, \hdots, x_{2k+1}), \hdots, g^{2k+1}_{2k}(x_1, \hdots, x_{2k+1})),\label{eq:f-2k+1}
\end{align}
where for $k \ge 2$ and $j \in [2k+1]$ we have that
\begin{align}
    g^{2k+1}_j(x_1, \hdots, x_{2k+1}) &:= \begin{cases}
    \varphi(f_{j-1}(x_1, \hdots, x_{j-1}), x_{j}, f_{2k+1-j}(x_{j+1}, \hdots, x_{2k+1})) & \text{$j$ even}\\
    \varphi(f_{j}(x_1, \hdots, x_{j}), x_{j}, f_{2k+2-j}(x_{j}, \hdots, x_{2k+1})) & \text{$j$ odd}.
    \end{cases}\label{eq:g-2k+1}
\end{align}
\begin{remark}
Note that for $j \in \{1, 2k+1\}$, we define $g^{2k+1}_j$, even though these do not appear in (\ref{eq:f-2k+1}). This is to facilitate the inductive argument in the proof of Lemma~\ref{lem:cancel}. Note by (\ref{eq:g-2k+1}) we have that
\begin{align}
g^{2k+1}_1(x_1, \hdots, x_{2k+1}) = \varphi(x_1, x_1, f_{2k+1}(x_1, \hdots, x_{2k+1})) = f_{2k+1}(x_1, \hdots, x_{2k+1})\label{eq:g-2k+1-1}\\
g^{2k+1}_{2k+1}(x_1, \hdots, x_{2k+1}) = \varphi(f_{2k+1}(x_1, \hdots, x_{2k+1}), x_{2k+1}, x_{2k+1}) = f_{2k+1}(x_1, \hdots, x_{2k+1})\label{eq:g-2k+1-2k+1}
\end{align}
\end{remark}
As a concrete example, when $k=2$, we have that (\ref{eq:f-2k+1}) becomes
\begin{align*}
    f_5(x_1, x_2, x_3, x_4, x_5) = \varphi(&\varphi(x_1, x_2, \varphi(x_3, x_4, x_5)),\\
    &\varphi(\varphi(x_1,x_2,x_3), x_3, \varphi(x_3,x_4,x_5)),\\ &\varphi(\varphi(x_1, x_2, x_3), x_4, x_5)).
\end{align*}

Theorem~\ref{thm:catalan} immediately follows by proving the following lemma.
\begin{lemma}[Cancellation Lemma]\label{lem:cancel}
For all $k \ge 1$ and all $i \in [2k]$, we have that
\begin{align}
    x_i = x_{i+1} \implies f_{2k+1}(x_1, \hdots, x_{2k+1}) = f_{2k-1}(x_1, \hdots, x_{i-1}, x_{i+2}, \hdots, x_{2k+1}).\label{eq:cancel}
\end{align}
\end{lemma}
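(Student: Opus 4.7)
My plan is to prove Lemma~\ref{lem:cancel} by induction on $k$. For the base case $k=1$ the claim is exactly the Mal'tsev identity $\varphi(x,x,y)=\varphi(y,x,x)=y$. For the inductive step, assume the Catalan identities hold for all $f_{2m+1}$ with $m<k$, fix $i\in[2k]$ with $x_i=x_{i+1}$, and unfold
\[ f_{2k+1}(x_1,\ldots,x_{2k+1}) \;=\; f_{2k-1}(g_2^{2k+1},g_3^{2k+1},\ldots,g_{2k}^{2k+1}). \]
The overall strategy is to show that under the substitution $x_i=x_{i+1}$ two consecutive arguments of this outer $f_{2k-1}$ become equal, apply the inductive cancellation on $f_{2k-1}$ at that position, and then recognize the resulting $f_{2k-3}$-expression as the recursive unfolding of $f_{2k-1}(x_1,\ldots,x_{i-1},x_{i+2},\ldots,x_{2k+1})$.

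The first step is the \emph{collapsing claim}: under $x_i=x_{i+1}$, two consecutive $g_j^{2k+1}$'s coincide. For $i\in\{2,\ldots,2k-1\}$ the relevant pair is $(g_i^{2k+1},g_{i+1}^{2k+1})$, and a short parity case analysis works. If $i$ is odd, substituting $x_i=x_{i+1}$ into $g_i^{2k+1}$ lets the inductive cancellation shrink the right-hand $f$-term $f_{2k+2-i}(x_i,x_i,x_{i+2},\ldots)$ down to $f_{2k-i}(x_{i+2},\ldots)$, reproducing $g_{i+1}^{2k+1}$; if $i$ is even, the symmetric argument uses the inductive cancellation applied to the left-hand term $f_{i+1}(\ldots,x_i,x_i)$ of $g_{i+1}^{2k+1}$. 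For the boundary cases $i=1$ and $i=2k$, the pairs $(g_2^{2k+1},g_3^{2k+1})$ and $(g_{2k-1}^{2k+1},g_{2k}^{2k+1})$ both collapse to $f_{2k-1}(x_3,\ldots,x_{2k+1})$ and $f_{2k-1}(x_1,\ldots,x_{2k-1})$ respectively, by direct computation using $\varphi(x,x,y)=\varphi(y,x,x)=y$ and the cancellation law for $f_3$.

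Once two consecutive inputs of the outer $f_{2k-1}$ are equal, the inductive cancellation reduces $f_{2k+1}(x_1,\ldots,x_{2k+1})$ to $f_{2k-3}$ applied to the $2k-3$ surviving $g_j^{2k+1}$'s. The bulk of the work lies in the \emph{matching claim}: for every surviving $j$,
\[ g_j^{2k+1}\big|_{x_i=x_{i+1}} \;=\; g_{j'}^{2k-1}(x_1,\ldots,x_{i-1},x_{i+2},\ldots,x_{2k+1}), \]
where $j'=j$ if $j<i$ and $j'=j-2$ if $j>i+1$. For $j<i$ the pair $x_i,x_{i+1}$ appears only inside the right-hand $f$-term of $g_j^{2k+1}$; the inductive cancellation drops this pair and yields exactly the right-hand term of $g_j^{2k-1}$ evaluated at $y=(x_1,\ldots,x_{i-1},x_{i+2},\ldots,x_{2k+1})$, while the left-hand term and the middle argument are unaffected by the substitution. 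For $j>i+1$ the symmetric reduction applies to the left-hand $f$-term. Assembling the matching claim, the recursive definition of $f_{2k-1}$ applied to $y$ gives
\[ f_{2k-3}(g_2^{2k-1}(y),\ldots,g_{2k-2}^{2k-1}(y)) \;=\; f_{2k-1}(x_1,\ldots,x_{i-1},x_{i+2},\ldots,x_{2k+1}), \]
completing the induction.

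The main obstacle is the index bookkeeping in the matching claim: four parity subcases ($j$ even/odd, crossed with $j<i$ versus $j>i+1$) each require a parallel calculation, and the variable relabelling $y_\ell=x_\ell$ for $\ell<i$ and $y_\ell=x_{\ell+2}$ for $\ell\ge i$ has to line up the defining formulas for $g^{2k+1}$ and $g^{2k-1}$ on the nose. The boundary cases $i\in\{1,2k\}$ and the surviving arguments adjacent to the cancelled pair (where $j=i-1$ or $j=i+2$) need to be checked separately, since then one of the two $f$-terms inside $g_j^{2k+1}$ has trivial arity and the inductive hypothesis is invoked on $f_3$ or $f_1$ rather than on a generic $f_m$.
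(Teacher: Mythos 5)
Your proposal is correct and follows essentially the same approach as the paper's proof: strong induction on $k$, with the key steps being (a) under $x_i=x_{i+1}$, a consecutive pair of arguments to the outer $f_{2k-1}$ becomes equal and can be cancelled (your ``collapsing claim'', the paper's Case~3), and (b) each surviving $g^{2k+1}_j$ coincides with the appropriate $g^{2k-1}_{j'}$ evaluated on the shortened tuple (your ``matching claim'', the paper's Cases~1 and~2). The only real difference is presentational: for the boundary cases $i\in\{1,2k\}$ you directly identify the collapsing pair $(g_2^{2k+1},g_3^{2k+1})$ or $(g_{2k-1}^{2k+1},g_{2k}^{2k+1})$ and compute both entries to be $f_{2k-1}$ of the surviving variables, whereas the paper routes the same observation through the ``phantom'' arguments $g^{2k-1}_1$ and $g^{2k-1}_{2k-1}$ (both defined to equal $f_{2k-1}$) so that the boundary cases reuse the same Case~1/Case~2 machinery. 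Both treatments are correct; your minor worry that the cases $j=i-1$ and $j=i+2$ need separate attention is unfounded, since for $j<i$ only the right-hand inner $f$-term contains the cancelling pair (and symmetrically for $j>i+1$), regardless of how small the other inner term's arity is.
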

\begin{example}\label{ex:cat5}
To help get intuition for the proof, we include a couple of worked examples.

If $k = 2$ and $i = 1$, we have that
\begin{align*}
    f_5(x_1, x_1, x_3, x_4, x_5) &= \varphi(\varphi(x_1, x_1, \varphi(x_3, x_4, x_5)),\\
    &\qquad\varphi(\varphi(x_1,x_1,x_3), x_3, \varphi(x_3,x_4,x_5)),\\ &\qquad \varphi(\varphi(x_1, x_1, x_3), x_4, x_5)).\\
    &= \varphi(\varphi(x_3,x_4,x_5), \varphi(x_3, x_3, \varphi(x_3, x_4, x_5)), \varphi(x_3, x_4, x_5))\\
    &= \varphi(\varphi(x_3, x_4, x_5), \varphi(x_3, x_4, x_5), \varphi(x_3, x_4, x_5))\\
    &= \varphi(x_3, x_4, x_5) = f_3(x_3, x_4, x_5).
\end{align*}

If $k=2$ and $i=2$, we have that
\begin{align*}
    f_5(x_1, x_2, x_2, x_4, x_5) &= \varphi(\varphi(x_1, x_2, \varphi(x_2, x_4, x_5)),\\
    &\qquad\varphi(\varphi(x_1,x_2,x_2), x_2, \varphi(x_2,x_4,x_5)),\\ &\qquad \varphi(\varphi(x_1, x_2, x_2), x_4, x_5)).\\
    &= \varphi(\varphi(x_1, x_2, \varphi(x_2, x_4, x_5)), \varphi(x_1, x_2, \varphi(x_2, x_4, x_5)), \varphi(x_1, x_4, x_5))\\
    &= \varphi(x_1, x_4, x_5) = f_3(x_1, x_4, x_5),
\end{align*}
where the last line uses that that first two arguments of the outer $\varphi$ are equal.
\end{example}

\begin{proof}[Proof of Lemma~\ref{lem:cancel}]
We prove this by strong induction on $k$. The case $k=1$ asserts that $\varphi(x,x,y) = \varphi(y,x,x) = y$, which is precisely the definition of a Mal'tsev term. Now assume $k \ge 2$ and fix $i \in [2k]$ (i.e., the assumption $x_i = x_{i+1}$). To start, we seek to compute what happens for any $j \in \{2, \hdots, 2k\}$ how $g^{2k+1}_j$ simplifies when $x_i = x_{i+1}$. We break this into cases.

\paragraph{Case 1, $j \le i-1$.} In this case, we seek to show that
\begin{align}
x_i = x_{i+1} \implies g^{2k+1}_j(x_1, \hdots, x_{2k+1}) = g^{2k-1}_j(x_1, \hdots, x_{i-1},x_{i+2}, \hdots, x_{2k+1}).\label{eq:j-le-i-1}
\end{align}
We split into cases based on the parity of $j$. First, if $j$ is even, $x_i = x_{i+1}$ implies that
\begin{align*}
g^{2k+1}_j(x_1, \hdots, x_{2k+1}) &= \varphi(f_{j-1}(x_1, \hdots, x_{j-1}), x_{j}, f_{2k+1-j}(x_{j+1}, \hdots, x_{2k+1}))\\
&= \varphi(f_{j-1}(x_1, \hdots, x_{j-1}), x_{j}, f_{2k-1-j}(x_{j+1}, \hdots, x_{i-1}, x_{i+2}, \hdots, x_{2k+1}))\\
&= g^{2k-1}_j(x_1, \hdots, x_{i-1}, x_{i+2}, \hdots, x_{2k+1}),
\end{align*}
where the second equality uses the strong induction hypothesis on $f_{2k+1-j}$. Second, if $j$ is odd, then $x_i = x_{i+1}$ implies that
\begin{align*}
g^{2k+1}_j(x_1, \hdots, x_{2k+1}) &= \varphi(f_{j}(x_1, \hdots, x_{j}), x_{j}, f_{2k+2-j}(x_{j}, \hdots, x_{2k+1}))\\
&= \varphi(f_{j}(x_1, \hdots, x_{j}), x_{j}, f_{2k-j}(x_{j}, \hdots, x_{i-1}, x_{i+2}, \hdots, x_{2k+1}))\\
&= g^{2k-1}_j(x_1, \hdots, x_{i-1}, x_{i+2}, \hdots, x_{2k+1}),
\end{align*}
where the second equality uses the strong induction hypothesis on $f_{2k+2-j}$ (note that $2k+2-j \le 2k-1$ since $j \ge 2$ is odd). This proves (\ref{eq:j-le-i-1}).

\paragraph{Case 2, $j \ge i+2$.} In this case, we seek to show that
\begin{align}
x_i = x_{i+1} \implies g^{2k+1}_j(x_1, \hdots, x_{2k+1}) = g^{2k-1}_{j-2}(x_1, \hdots, x_{i-1},x_{i+2}, \hdots, x_{2k+1}).\label{eq:j-ge-i+2}
\end{align}
Like in Case 1, we do casework based on the parity of $j$. The relevant calculations for $j$ even are
\begin{align*}
g^{2k+1}_j(x_1, \hdots, x_{2k+1}) &= \varphi(f_{j-1}(x_1, \hdots, x_{j-1}), x_{j}, f_{2k+1-j}(x_{j+1}, \hdots, x_{2k+1}))\\
&= \varphi(f_{j-3}(x_1, \hdots, x_{i-1}, x_{i+2}, \hdots, x_{j-1}), x_{j}, f_{2k-1-(j-2)}(x_{j+1}, \hdots, x_{2k+1}))\\
&= g^{2k-1}_{j-2}(x_1, \hdots, x_{i-1}, x_{i+2}, \hdots, x_{2k+1}),
\end{align*}
and the calculations for $j$ odd are
\begin{align*}
g^{2k+1}_j(x_1, \hdots, x_{2k+1}) &= \varphi(f_{j}(x_1, \hdots, x_{j}), x_{j}, f_{2k+2-j}(x_{j}, \hdots, x_{2k+1}))\\
&= \varphi(f_{j-2}(x_1, \hdots, x_{i-1}, x_{i+2}, \hdots, x_{j}), x_{j}, f_{2k-(j-2)}(x_{j}, \hdots, x_{2k+1}))\\
&= g^{2k-1}_{j-2}(x_1, \hdots, x_{i-1}, x_{i+2}, \hdots, x_{2k+1}).
\end{align*}

\paragraph{Case 3, $j \in \{i, i+1\}$ and $i \not\in \{1,2k\}$.} In this case, instead of simplifying $g^{2k+1}_j$, we seek to show an equality which will help for proving (\ref{eq:cancel}):
\begin{align}
x_i = x_{i+1} \wedge i \not\in \{1,2k\} \implies g^{2k+1}_i(x_1, \hdots, x_{2k+1}) = g^{2k+1}_{i+1}(x_1, \hdots, x_{2k+1}).\label{eq:j-i-i+1}
\end{align}
If $i$ is even, then $x_i = x_{i+1}$ implies that
\begin{align*}
g^{2k+1}_i(x_1, \hdots, x_{2k+1}) &= \varphi(f_{i-1}(x_1, \hdots, x_{i-1}), x_i, f_{2k+1-i}(x_{i+1}, \hdots, x_{2k+1}))\\
&= \varphi(f_{i+1}(x_1, \hdots, x_{i-1}, x_i, x_{i+1}), x_{i+1}, f_{2k+1-i}(x_{i+1}, \hdots, x_{2k+1}))\\
&= g^{2k+1}_{i+1}(x_1, \hdots, x_{2k+1}),
\end{align*}
where we use $i \neq 2k$ so that $i+1 \le 2k-1$, allowing for us to invoke the induction hypothesis. %
If $i$ is odd, then $x_i = x_{i+1}$ implies that
\begin{align*}
g^{2k+1}_i(x_1, \hdots, x_{2k+1}) &= \varphi(f_{i}(x_1, \hdots, x_{i}), x_i, f_{2k+2-i}(x_{i}, x_{i+1}, x_{i+2}, \hdots, x_{2k+1}))\\
&= \varphi(f_{i}(x_1, \hdots, x_{i}), x_{i+1}, f_{2k-i}(x_{i+2}, \hdots, x_{2k+1}))\\
&= g^{2k+1}_{i+1}(x_1, \hdots, x_{2k+1}),
\end{align*}
where we use $i \neq 1$ so that $2k+2-i \le 2k-1$, allowing for us to invoke the induction hypothesis.

\paragraph{Finishing the proof.} We now prove (\ref{eq:cancel}). We first consider the boundary cases that $i \in \{1,2k\}$. If $i = 1$ (that is $x_1 = x_2$), note that
\begin{align}
    g^{2k+1}_2(x_1, \hdots, x_{2k+1}) = \varphi(x_1, x_2, f_{2k-1}(x_3, \hdots, x_{2k+1})) = f_{2k-1}(x_3, \hdots, x_{2k+1}).\label{eq:x1-x2}
\end{align}
By the definition of $f_{2k+1}$, we have that
\begin{align*}
x_1 = x_2 \implies&f_{2k+1}(x_1, \hdots, x_{2k+1})\\
&=  f_{2k-1}(g^{2k+1}_2(x_1, \hdots, x_{2k+1}), g^{2k+1}_3(x_1, \hdots, x_{2k+1}),\\&\quad\quad\quad g^{2k+1}_4(x_1, \hdots, x_{2k+1}),\hdots, g^{2k+1}_{2k}(x_1, \hdots, x_{2k+1}))\\
&= f_{2k-1}(f_{2k-1}(x_3, \hdots, x_{2k+1}), g^{2k-1}_{1}(x_3, \hdots, x_{2k+1}),\\
&\quad\quad\quad g^{2k-2}_{2}(x_3, \hdots, x_{2k+1}), \hdots, g^{2k-1}_{2k-2}(x_3, \hdots, x_{2k+1})) \text{ by (\ref{eq:x1-x2}) and (\ref{eq:j-ge-i+2})}\\
&= f_{2k-1}(f_{2k-1}(x_3, \hdots, x_{2k+1}), f_{2k-1}(x_3, \hdots, x_{2k+1}),\\
&\quad\quad\quad g^{2k-1}_2(x_3, \hdots, x_{2k+1}), \hdots, g^{2k-1}_{2k-2}(x_3, \hdots, x_{2k+1})) \text{ by (\ref{eq:g-2k+1-1})}\\
&= f_{2k-3}(g^{2k-1}_2(x_3, \hdots, x_{2k+1}), \hdots, g^{2k-1}_{2k-2}(x_3, \hdots, x_{2k+1})) \text{ by induction hypothesis}\\
&= f_{2k-1}(x_3, \hdots, x_{2k+1}),
\end{align*}
as desired. 
If $i = 2k$ (that is $x_{2k} = x_{2k+1}$), note that
\begin{align}
    g^{2k+1}_{2k}(x_1, \hdots, x_{2k+1}) = \varphi(f_{2k-1}(x_1, \hdots, x_{2k-1}), x_{2k}, x_{2k+1}) = f_{2k-1}(x_1, \hdots, x_{2k-1}).\label{eq:x2k-x2k+1}
\end{align}
Likewise, we have that
\begin{align*}
x_{2k} = x_{2k+1} \implies &f_{2k+1}(x_1, \hdots, x_{2k+1})\\
&=  f_{2k-1}(g^{2k+1}_2(x_1, \hdots, x_{2k+1}), \hdots, g^{2k+1}_{2k-2}(x_1, \hdots, x_{2k+1}),\\
&\quad\quad\quad g^{2k+1}_{2k-1}(x_1, \hdots, x_{2k+1}), g^{2k+1}_{2k}(x_1, \hdots, x_{2k+1}))\\
&= f_{2k-1}(g^{2k-1}_2(x_1, \hdots, x_{2k-1}), \hdots, g^{2k-1}_{2k-2}(x_1, \hdots, x_{2k-1}),\\
&\quad\quad\quad g^{2k-1}_{2k-1}(x_1, \hdots, x_{2k-1}), f_{2k-1}(x_1, \hdots, x_{2k-1})) \text{ by (\ref{eq:j-le-i-1}) and (\ref{eq:x2k-x2k+1})}\\
&= f_{2k-1}(g^{2k-1}_2(x_1, \hdots, x_{2k-1}), \hdots, g^{2k-1}_{2k-2}(x_1, \hdots, x_{2k-1}),\\ &\quad\quad\quad f_{2k-1}(x_1, \hdots, x_{2k-1}), f_{2k-1}(x_1, \hdots, x_{2k-1})) \text{ by (\ref{eq:g-2k+1-2k+1})}\\
&= f_{2k-3}(g^{2k-1}_2(x_1, \hdots, x_{2k-1}), \hdots, g^{2k-1}_{2k-2}(x_1, \hdots, x_{2k-1})) \text{ by induction hypothesis}\\
&= f_{2k-1}(x_1, \hdots, x_{2k-1}),
\end{align*}
as desired. Now assume that $i \in \{2, \hdots, 2k-1\}$. By (\ref{eq:j-le-i-1}), (\ref{eq:j-ge-i+2}), and (\ref{eq:j-i-i+1}), we have that
\begin{align*}
&x_{i} = x_{i+1} \implies f_{2k+1}(x_1, \hdots, x_{2k+1})\\
&=  f_{2k-1}(g^{2k+1}_2(x_1, \hdots, x_{2k+1}), \hdots, g^{2k+1}_{i-1}(x_1, \hdots, x_{2k+1}),\\
&\quad\quad\quad g^{2k+1}_i(x_1, \hdots, x_{2k+1}), g^{2k+1}_{i+1}(x_1, \hdots, x_{2k+1}),\\
&\quad\quad\quad g^{2k+1}_{i+2}(x_1, \hdots, x_{2k+1}), \hdots, g^{2k+1}_{2k}(x_1, \hdots, x_{2k+1}))\\
&= f_{2k-1}(g^{2k-1}_2(x_1, \hdots, x_{i-1}, x_{i+2}, \hdots x_{2k+1}), \hdots, g^{2k-1}_{i-1}(x_1, \hdots, x_{i-1}, x_{i+2}, \hdots, x_{2k+1}),\\
&\quad\quad\quad g^{2k+1}_{i}(x_1, \hdots, x_{2k+1}), g^{2k+1}_{i+1}(x_1, \hdots, x_{2k+1}),\\
&\quad\quad\quad g^{2k-1}_{i}(x_1, \hdots, x_{i-1}, x_{i+2}, \hdots x_{2k+1}), \hdots, g^{2k-1}_{2k-2}(x_1, \hdots, x_{i-1}, x_{i+2}, \hdots x_{2k+1})) \text{ by (\ref{eq:j-le-i-1}) and (\ref{eq:j-ge-i+2})}.\\
&= f_{2k-3}(g^{2k-1}_2(x_1, \hdots, x_{i-1}, x_{i+2}, \hdots x_{2k+1}), \hdots, g^{2k-1}_{i-1}(x_1, \hdots, x_{i-1}, x_{i+2}, \hdots, x_{2k+1}),\\
&\quad\quad\quad g^{2k-1}_i(x_1, \hdots, x_{i-1}, x_{i+2}, \hdots, x_{2k+1}), \hdots, g^{2k-1}_{2k-2}(x_1, \hdots, x_{i-1}, x_{i+2}, \hdots x_{2k+1})) \text{ by (\ref{eq:j-i-i+1}) and IH}\\
&= f_{2k-1}(x_1, \hdots, x_{i-1}, x_{i+2}, \hdots, x_{2k+1}),
\end{align*}
as desired. This completes the proof of (\ref{eq:cancel}) assuming the induction hypothesis.
\end{proof}

\subsection{Pattern Polymorphism Perspective: Boolean Mal'tsev Embeddings are Balanced}

Theorem~\ref{thm:catalan} has a number of implications on the structure of Mal'tsev embeddings. To develop this theory properly, we first consider the Catalan identities from the perspective of partial pattern polymorphisms. In particular, we shall view the Catalan partial polymorphisms as special cases of \emph{group pattern polymorphisms}. 

\begin{definition}[Group Pattern Polymorphisms]
Let $V$ be a (possibly infinite) set of variables, and let $G$ be a (possibly infinite) group with a corresponding map $\eta : V \to G$. For every odd $m \in \N$, we define the pattern polymorphism $\Grp_m^{V,G,\eta}$ to contain all patterns $((x_1, \hdots, x_m), y)$ such that
\[
    \eta(x_1) \cdot \eta(x_2)^{-1} \cdot \eta(x_3) \cdot \hdots \cdot \eta(x_{m-1})^{-1} \cdot \eta(x_m) = \eta(y).
\]
In other words $\Grp_m^{V,G,\eta}$ captures the arity-$m$ patterns which arise from a group embedding into $G$ via $\eta$. 
\end{definition}

Group pattern polymorphisms can capture many properties of predicates found in the literature. For example, \cite{chen2020BestCase} consider the notion of a \emph{balanced} Boolean predicate as a potential characterization of Boolean predicates with linear non-redundancy. %
We say that a Boolean predicate $P \subseteq \{0,1\}^r$ is \emph{balanced} if for any odd $m \in \N$ and tuples $t^1, \hdots, t^m \in P$, we have the following property: if for each $i \in [r]$ we have $t^1_i - t^2_i + t^3_i - \cdots + t^m_i \in \{0,1\}$, then $t^1 - t^2 + t^3 - \cdots + t^m \in P$. In other words, the operation defined by 
\[
f(x_1,x_2,\dots,x_m) = \begin{cases} x_1 - x_2 + \cdots +x_m & \text{if in $\{0,1\}$}\\
\bot &\text{otherwise}
\end{cases}
\] is a partial polymorphism of $P$. We now demonstrate that this property of being balanced can be captured by the following group pattern polymorphism 
\[ \Bal_m := \Grp_m^{\{x,y\}, \mathbb Z, \eta} \] where $\eta(x) = 0$ and $\eta(y) = 1$.

\begin{proposition}\label{prop:balance}
Any $P \subseteq \{0,1\}^r$ is balanced if and only if $I_{\{0,1\}}(\Bal_m) \subseteq \pPol(P)$ for all odd $m \in \N$.
\end{proposition}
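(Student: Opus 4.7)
The plan is to compute the set $I_{\{0,1\}}(\Bal_m)$ explicitly, observe that it consists of exactly the partial ``alternating-sum'' function appearing in the definition of balanced, and then conclude the equivalence of the two conditions termwise in $m$.

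First, I would unpack $\Bal_m$. It is defined over the variable set $\{x,y\}$ with $\eta(x)=0$ and $\eta(y)=1$, so a pattern $((z_1,\dots,z_m),z_{\text{out}}) \in \Bal_m$ is any pair with $z_i,z_{\text{out}} \in \{x,y\}$ satisfying $\sum_{i=1}^m (-1)^{i+1}\eta(z_i) = \eta(z_{\text{out}}) \in \{0,1\}$. For any $(z_1,\dots,z_m) \in \{x,y\}^m$ the left-hand side is an integer, and a valid pattern exists with this input precisely when the sum is $0$ or $1$, in which case $z_{\text{out}}$ is uniquely determined. Because $m$ is odd, when the input tuple is constant (all $x$ or all $y$), the output variable necessarily equals the unique input variable; in particular, $z_{\text{out}}$ always appears among $z_1,\dots,z_m$, so condition (2) in the definition of $I_D(\cdot)$ never applies and each $f \in I_{\{0,1\}}(\Bal_m)$ is forced only where condition (1) forces it.

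Next, I would determine $f(a_1,\dots,a_m)$ for each $(a_1,\dots,a_m) \in \{0,1\}^m$. A non-constant input admits exactly two bijective maps $g : \{x,y\} \to \{0,1\}$, namely $g_1(x)=0, g_1(y)=1$ and $g_2(x)=1, g_2(y)=0$. For $g_1$ the induced pattern has input $(z_1,\dots,z_m)$ with $\eta(z_i) = a_i$, so the alternating sum equals $S := a_1 - a_2 + \cdots + a_m$; the pattern exists iff $S \in \{0,1\}$, in which case the forced output is $g_1(z_{\text{out}}) = S$. For $g_2$ one has $\eta(z_i) = 1-a_i$ and the alternating sum is $1-S$ (using that $m$ is odd); this lies in $\{0,1\}$ iff $S \in \{0,1\}$, and the forced output is $g_2(z_{\text{out}}) = S$ as well. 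Thus the two bijective choices agree on the forced value, while the constant tuples $(0,\dots,0)$ and $(1,\dots,1)$ directly give outputs $0$ and $1$, matching $S$. It follows that $I_{\{0,1\}}(\Bal_m) = \{f_m\}$, where $f_m$ is the partial function defined by $f_m(a_1,\dots,a_m) = a_1 - a_2 + \cdots + a_m$ exactly when this integer lies in $\{0,1\}$.

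Finally, the partial function $f_m$ is precisely the partial polymorphism whose membership in $\pPol(P)$ expresses the balanced condition for that value of $m$: saying $f_m \in \pPol(P)$ is verbatim the statement that whenever $t^1,\dots,t^m \in P$ satisfy $t^1_i - t^2_i + \cdots + t^m_i \in \{0,1\}$ for every $i \in [r]$, the resulting tuple lies in $P$. Quantifying over all odd $m$ yields the stated biconditional. I do not expect a serious obstacle; the only delicate step is verifying that the two bijective choices of $g$ yield a consistent forced output, but this is immediate from the identity $\sum_{i=1}^m (-1)^{i+1}(1-b_i) = 1 - \sum_{i=1}^m (-1)^{i+1} b_i$ which holds precisely because $m$ is odd.
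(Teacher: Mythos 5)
Your proof is correct and takes essentially the same approach as the paper: compute $I_{\{0,1\}}(\Bal_m)$ to be the singleton containing the partial alternating-sum function, then observe that invariance under it is verbatim the balanced condition at arity $m$. The paper's proof asserts this computation without detail; you fill in exactly the missing verification—that the output variable always occurs among the inputs (so case (2) of the interpretation never triggers) and that the two bijective instantiations of a non-constant pattern force the same value via the parity identity $\sum_i(-1)^{i+1}(1-b_i)=1-\sum_i(-1)^{i+1}b_i$ for odd $m$.
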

\begin{proof}
The condition $I_{\{0,1\}}(\Bal_m) \subseteq \pPol(P)$ is equivalent to the following condition: for all $t^1, \hdots, t^m \in P$ satisfying
\[
I_{\{0,1\}}(\Bal_m)(t^1_i, t^2_i, \hdots, t^m_i) \in \{0,1\}
\]
for each $i \in [r]$,
we have $I_{\{0,1\}}(\Bal_m)(t^1, t^2 \hdots, t^m) \in P$. However, by definition of $\Bal_m$, we have that $I_{\{0,1\}}(\Bal_m)(t^1_i, t^2_i, \hdots, t^m_i) = t^1_i - t^2_i + t^3_i - \cdots + t^m_i$ whenever the latter is contained in $\{0,1\}$. Therefore, this condition is equivalent to $P$ being balanced.
\end{proof}

\subsubsection{Alternating Sums and Abelian Embeddings}

However, since $\Bal_m$ only considers patterns with two symbols, there are multiple natural generalizations into more general patterns. One such generalization, roughly equivalent to the approach taken by \cite{khanna2024Characterizations}, is to let $G$ be the free Abelian group generated by $V$, which we will denote as $\mathbb Z^V$ with coordinate-wise addition as the group operation. In other words, the elements of $G$ consist of all (finite) formal sums of the form $z_1 \cdot x_1 + \hdots z_\ell \cdot x_\ell$, where $x_1, \hdots, x_\ell \in V$ and $z_1, \hdots, z_\ell \in \mathbb Z$, with addition and subtraction defined in the obvious manner. We let $\eta$ be the ``identity'' map from $V$ to $\Z^V$ and call the resulting  group pattern polymorphisms \emph{alternating sum patterns}.\footnote{We chose this name to be analogous to the ``alternating threshold'' polymorphisms considered in the theory of promise CSPs~\cite{brakensiek2021Promise}.} That is, for all odd $m \in \N$, we define $\AS^V_m := \Grp_m^{V,\Z^V,\eta}$.
The most general such alternating sum pattern is when $V$ has countably infinite domain. We denote this family of pattern polymorphisms by $\{\AS_m\}$. We now show this family of pattern polymorphisms exactly captures Abelian embeddings.

\begin{proposition}\label{prop:abelian-AS}
For any $P \subseteq D^r$, the following are equivalent.
\begin{itemize}
\item $P$ has a finite Abelian embedding.
\item $P$ has an embedding into the free Abelian group $\Z^D$ via $\eta(d) := e_d$.
\item $I_{D}(\AS_m) \subseteq \pPol(P)$ for all odd $m \in \N$.
\end{itemize}
\end{proposition}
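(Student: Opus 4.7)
The plan is to establish the three equivalences via the cycle $(1) \Rightarrow (3) \Rightarrow (2) \Rightarrow (1)$. The unifying observation is that $I_D(\AS_m)$ consists of the single partial function $f(d_1,\ldots,d_m) = d$ defined precisely when $e_{d_1} - e_{d_2} + \cdots + e_{d_m} = e_d$ in $\Z^D$, together with the universal property of $\Z^D$ as the free Abelian group on $D$.

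For $(1) \Rightarrow (3)$, I would extend any finite Abelian embedding $\pi \colon D \to G$ to the unique homomorphism $\phi \colon \Z^D \to G$ with $\phi(e_d) = \pi(d)$. Given $t^1,\ldots,t^m \in P$ on which $I_D(\AS_m)$ is coordinatewise defined with output $d$, applying $\phi$ to the defining equations yields $\pi(t^1_i) - \pi(t^2_i) + \cdots + \pi(t^m_i) = \pi(d_i)$ in $G$ for each $i$. Because $m$ is odd, this alternating sum of elements of the coset $\pi(P)$ lies in the coset, so $\pi(d) \in \pi(P)$, and injectivity of $\pi$ gives $d \in P$.

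For $(3) \Rightarrow (2)$, let $C \subseteq (\Z^D)^r$ be the affine span of $\eta(P)$. One inclusion of $\eta(P) = C \cap \eta(D)^r$ is trivial; for the other, any element of $C$ can be written as $\sum_j a_j \eta(t^j)$ with $t^j \in P$, $a_j \in \Z$, and $\sum_j a_j = 1$. By placing positive coefficients at odd positions and negative coefficients at even positions (each repeated $|a_j|$ times), this integer combination can be rewritten as an odd-length alternating sum $\eta(s^1) - \eta(s^2) + \cdots + \eta(s^m)$, since $\sum_j a_j = 1$ forces the total length to be odd. If this expression equals $\eta(d)$ for some $d \in D^r$, then $I_D(\AS_m)$ is coordinatewise defined on $(s^1_i,\ldots,s^m_i)$ and outputs $d_i$, so (3) forces $d \in P$.

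For $(2) \Rightarrow (1)$, I would pass to a finite quotient. Fixing any $s_0 \in \eta(P)$, let $H = C - s_0$, so $\overline{V} := (\Z^D)^r / H$ is a finitely generated Abelian group. For each of the finitely many $d \in D^r \setminus P$, the image of $\eta(d) - s_0$ in $\overline{V}$ is nonzero by (2). A standard argument selects $N$ divisible by the torsion exponent of $\overline{V}$ and large enough that all such images survive in $\overline{V}/N\overline{V}$; then $\pi_N \colon d \mapsto e_d \bmod N$ embeds $D$ into the finite Abelian group $(\Z/N\Z)^D$, and the image of $C$ becomes a coset whose intersection with $\pi_N(D)^r$ equals $\pi_N(P)$. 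The main subtlety is this choice of $N$, which must simultaneously preserve injectivity, avoid collapsing torsion, and exclude all bad images; alternatively, one may invoke the equivalence of finite and infinite Abelian embeddings from Section~6.1 of~\cite{khanna2024Characterizations} to bypass this step entirely.
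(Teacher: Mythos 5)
Your proposal is correct and takes essentially the same approach as the paper: both arguments rest on the universal property of the free Abelian group $\Z^D$, the observation that an integer combination $\sum_j a_j \eta(t^j)$ with $\sum_j a_j = 1$ can be reorganized as an odd-length alternating sum (and conversely that any tuple in the $\Z^D$-coset spanned by $\eta(P)$ that lands in $\eta(D)^r$ has coefficient sum forced to $1$), and the passage from infinite to finite Abelian embeddings via Section 6.1 of \cite{khanna2024Characterizations}. You organize the argument as a three-step cycle $(1)\Rightarrow(3)\Rightarrow(2)\Rightarrow(1)$ and supply a sketch of the finite-quotient argument that the paper outsources to a citation, and you work with the affine coset spanned by $\eta(P)$ where the paper works with the subgroup generated by $\eta(P)$; these are interchangeable here because the coordinate-sum invariant forces any element of the subgroup that lies in $\eta(D)^r$ to have coefficient sum $1$, so both sets have the same intersection with $\eta(D)^r$. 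Neither variation changes the substance of the proof.
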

\begin{proof}
First, we show that $P$ has an Abelian embedding if and only if it has an embedding into $\Z^D$ in the canonical manner (i.e., each element of $D$ maps to its corresponding generator). Assume that $P$ has an embedding into some group $H$ via a map $\eta : D \to H$. By the fundamental property of the free group, there exists a homomorphism $\sigma : \Z^D \to H$ such that $\eta = \sigma \circ \iota$, where $\iota : D \to \Z^D$ is the canonical embedding. As such, $\sigma$ implies that $\eta$ being an embedding of $P$ into $H$ implies that $\iota$ is an embedding of $P$ into $\Z^D$. %
Note that $P$ must also embed into a finite Abelian group using the lattice-based techniques developed in Section~6.1 of \cite{khanna2024Characterizations}. We omit further details on how this finite group is constructed.

We now proceed to prove the equivalence. Let $Q \subseteq (\Z^D)^r$ be the subgroup of $(\Z^D)^r$ generated by $P$. We claim that $Q \cap D^r = P$ if and only if $I_{D}(\AS_m) \subseteq \pPol(P)$.

We first prove the ``if'' direction. Consider any $t \in Q \cap D^r$. Since $Q$ is generated by $P$, we have there exists nonzero $z_1, \hdots, z_\ell \in \Z$ and corresponding $t^1, \hdots, t^\ell \in P$ such that $t = z_1 \cdot t^1 + \cdots + z_\ell \cdot t^\ell.$ Note that the sum of the coordinates of each $t_i$ (and $t$) is equal to $1$. Thus, we must have that $z_1 + \cdots + z_\ell = 1$. Let $m = |z_1| + \cdots + |z_\ell|$. We can see that the positive $z_i$'s sum to precisely $(m+1)/2$ and negative $z_i$'s sum to precisely $-(m-1)/2$. Thus, there exists a tuple $s \in [\ell]^m$ such that each $j \in [\ell]$ appears in exactly $|z_j|$ positions of $s$. Furthermore, $j$ only appears in odd positions if $z_j$ is positive and only in even positions if $z_j$ is negative.

For each $i \in [r]$, we consider $s^i \in D^m$ where $s^i_j = t^{s_j}_i$. Then, since 
\[
t_i = z_1 \cdot t_i^1 + \cdots + z_\ell \cdot t_i^\ell = \sum_{j=1}^m (-1)^{j-1} t^{s_j}_i = \sum_{j=1}^m (-1)^{j-1} s^i_j.
\]
Therefore, $(s^i_j, t_i) \in \AS_m$ for all $i \in [r].$ In particular, this means that $I_D(\AS_m) \subseteq \pPol(P)$ implies $t \in P$. Therefore, $Q \cap D^r = P$.

For the ``only if'' direction, assume that $I_{D}(\AS_m) \not\subseteq \pPol(P)$. In particular, this means there exists $p \in I_D(\AS_m)$ and $t^1, \hdots, t^m \in P$ but $t \not\in P$ such that $p(t^1_i, t^2_i, \cdots, t^m_i) = t_i$ for all $i \in [r]$. Since $p \in I_D(\AS_m)$, we must have that $t^1_i - t^2_i + \cdots - t^m_i = t_i$ for all $i \in $, where the sum is taken in $\Z^D$. Therefore, $t \in Q \cap D^r$, so $Q \cap D^r \neq P$. By taking the contrapositive, we have that $Q \cap D^r = P$ implies $I_{D}(\AS_m) \subseteq \pPol(P).$
\end{proof}

\subsubsection{Catalan Patterns and Group Embedings}

However, we can also generalize the property of being balanced in a non-Abelian manner. In particular, given a set $V$ of variables, we let $\Cox(V)$ denote the group generated by the elements of $V$ with the condition that for each $v \in V$, we have that $v^2$ is the identity element. We note that $\Cox(V)$ is a special example of \emph{Coxeter} groups~\cite{Humphreys_1990}. %
In particular, it is isomorphic to the group generated by $|V|$ reflections in general position in $\R^2$. The group can also be thought of as the \emph{free product} of $V$ copies of $\Z/2\Z$.

The simplest way to think about $\Cox(V)$ is that it consists of \emph{words} that lack immediate repetition. For example, if $V = \{x,y,z\}$, a valid computation is
\[
    xyz \cdot zxy \cdot yxyxy = y.
\]
Furthermore, inverses just involve reversing the symbols in a word, so $zyx$ is the inverse of $xyz$.

If we let $\eta$ denote the canonical embedding from $V$ into $\Cox(V)$, then for all $m \in \N$, we can define 
$$\Cat^V_m := \Grp_m^{V, \Cox(V), \eta} \ . $$ If $V$ is countably infinite, we let $\Cat_m := \Cat^V_m.$ As the name suggests, these \emph{Catalan pattern polymorphisms} precisely capture the \emph{Catalan identities}.

\begin{proposition}\label{prop:catalan-pattern}
For any $P \subseteq D^r$, there exists $\psi_1, \psi_3, \psi_5, \hdots \in \pPol(P)$ satisfying the Catalan identities if and only if $I_D(\Cat_m) \subseteq \pPol(P)$ for all odd $m$. 
\end{proposition}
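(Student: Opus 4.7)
The plan is to first give a concrete description of $I_D(\Cat_m)$. A pattern $((x_1,\dots,x_m),y)$ lies in $\Cat_m$ precisely when $\eta(x_1)\eta(x_2)^{-1}\cdots\eta(x_m)=\eta(y)$ in $\Cox(V)$; since every generator of a Coxeter group has order two, inversion acts trivially and this reduces to the condition that the word $x_1 x_2 \cdots x_m$ collapses to the single generator $y$ under iterated cancellation of adjacent equal letters. Because such reduction cannot introduce new letters, $y$ must appear among $\{x_1,\dots,x_m\}$, and $I_D(\Cat_m)$ therefore contains a unique partial function $\bar\psi_m$, whose domain consists of all $(a_1,\dots,a_m)\in D^m$ whose word reduces in $\Cox(D)$ to a single element $a\in D$, and which outputs that element.

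For the $(\Leftarrow)$ direction, I would directly verify that $\bar\psi_1,\bar\psi_3,\dots$ satisfy the Catalan identities: $\bar\psi_1(x)=x$ is immediate, and if $a_i=a_{i+1}$ then a single Coxeter cancellation shows that $a_1\cdots a_m$ and $a_1\cdots a_{i-1}a_{i+2}\cdots a_m$ have the same reduced form, so either both reduce to a single letter $a$ (and both sides of the identity equal $a$) or neither does (and both sides are undefined).

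For the $(\Rightarrow)$ direction, I would show by induction on odd $m$ that $\bar\psi_m$ is a subfunction of $\psi_m$. The base case $m=1$ follows from $\psi_1(x)=x$. For the inductive step, fix $(a_1,\dots,a_m)\in\dom(\bar\psi_m)$ with $\bar\psi_m(a_1,\dots,a_m)=a$. Since $m\ge 3$ and the word reduces at all, there must be some index $i$ with $a_i=a_{i+1}$: otherwise no cancellation is available and the reduced form would have length $m\ne 1$. The Catalan identity at index $i$ gives $\psi_m(a_1,\dots,a_m)=\psi_{m-2}(a_1,\dots,a_{i-1},a_{i+2},\dots,a_m)$ as partial values, and confluence of Coxeter reduction places the shorter tuple in $\dom(\bar\psi_{m-2})$ with value $a$. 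The inductive hypothesis then yields $\psi_{m-2}(a_1,\dots,a_{i-1},a_{i+2},\dots,a_m)=a$, so $\psi_m(a_1,\dots,a_m)=a$. Since $\pPol(P)$ is closed under subfunctions by Lemma~\ref{lemma:is_a_minion}, it follows that $\bar\psi_m\in\pPol(P)$, which gives $I_D(\Cat_m)\subseteq\pPol(P)$.

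The main subtlety---and essentially the only step that requires real care---is recognising that the Catalan identities, although phrased only for tuples with an adjacent equality, pin down $\psi_m$ on the entire domain $\dom(\bar\psi_m)$. This works because any reducible word of length at least three must contain an adjacent equal pair to begin its reduction, and iterating Catalan identities along a reduction sequence reconstructs the full subfunction relationship between $\bar\psi_m$ and $\psi_m$.
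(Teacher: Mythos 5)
Your proof is correct and follows essentially the same approach as the paper: you explicitly describe the unique element $\bar\psi_m$ of $I_D(\Cat_m)$ as the Coxeter reduction of the input word, verify the Catalan identities for the $(\Leftarrow)$ direction by single adjacent cancellations, and in the $(\Rightarrow)$ direction induct on $m$ by locating an adjacent equal pair (forced by reducibility) and invoking the inductive hypothesis together with subfunction-closure of $\pPol(P)$. The only cosmetic differences are that you argue directly over domain tuples rather than over patterns and explicitly cite the subfunction-closure lemma, both of which the paper leaves implicit.
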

\begin{proof}
First assume $I_D(\Cat_m) \subseteq \pPol(P)$ for all odd $m \in \N$. Thus, for each odd $m \in \N$, there exists a polymorphism $\psi_m \in I_D(\Cat_m)$ such that $\psi_m(x_1, \hdots, x_m) = x_1 \cdots x_m$ if the product in the Coxeter group evaluates to a single generator of $\Cox(D)$. Otherwise, $\psi_m(x_1, \hdots, x_m) = \bot$. We claim that $\psi_1, \psi_3, \hdots$ satisfy the Catalan identities. First, clearly $\psi_1(x) = x$. Second, for any odd $m \in \N$ and $i \in [m-1]$, we have that $x_{i} = x_{i+1}$ implies that $x_{i}x_{i+1}$ in the Coxeter group, so $x_1 \cdots x_m = x_1 \cdots x_{i-1}x_{i+2} \cdots x_m$. In particular, $x_i = x_{i+1}$ implies that $\psi_m(x_1, \hdots, x_m) = \psi_{m-2}(x_1, \hdots, x_{i-2}, x_{i+1}, \hdots, x_m)$ as either both equal the same monomial or both equal $\bot$.

Conversely, assume that there exists $\psi_1, \psi_3, \psi_5, \hdots \in \pPol(P)$ satisfying the Catalan identities. For each odd $m \in \N$ and each identity $((x_1, \hdots, x_m),y) \in \Cat_m$ for some $x_1, \hdots, x_m, y \in V$, we seek to prove that every interpretation of $(t,x)$ over $D$ is satisfied by $\psi_m$. We prove this by induction on $m$. The case $m=1$ is trivial as $\psi_1(d) = d$ for all $d \in D$.

In order for $((x_1, \hdots, x_m),y) \in \Cat_m$, we must have that $x_1 \cdots x_m = y$ in $\Cox(V)$. For this to be the case, we must have that some cancellation occurs. That is, there exists $i \in [m-1]$ such that $x_{i-1} = x_i$. Furthermore, we must have that $((x_1, \hdots, x_{i-1},x_{i+2}, \hdots, x_m), y) \in \Cat_{m-2}$. Thus, for every interpretation $\iota : V \to D$, we have by the Catalan identities that
\[
    \psi_m(\iota(x_1), \hdots, \iota(x_m)) = \psi_{m-2}(\iota(x_1), \hdots, \iota(x_{i-1}), \iota(x_{i+2}), \hdots, \iota(x_m)) = y,
\]
where for the last equality we use the induction hypothesis. Thus, $I_D(\Cat_m) \subseteq \pPol(P)$ for all odd $m$, as desired.
\end{proof}

\subsubsection{Corollary: Infinite Mal'tsev to Infinite Group}

As an immediately corollary of Theorem~\ref{thm:catalan} and Proposition~\ref{prop:catalan-pattern}, we have that infinite Mal'tsev embeddings are equivalent to infinite group embeddings. This sharpens the classification by Lagerkvist and Wahlstr\"om~\cite{lagerkvist2020Sparsification} who relate the existence of a Mal'tsev embedding to a Mal'tsev embedding over a certain free algebra over a countable infinite domain whose term operations coincide with the so-called {\em universal partial Mal'tsev} operations. As these ``UPM'' operations are rather opaque, our characterization in terms of an embedding into $\Cox(D)$ and the resulting partial Catalan operations appears to be a much more useful description.

\begin{corollary}\label{cor:inf-malt-group}
A finite predicate $P \subseteq D^r$ has an infinite Mal'tsev embedding if and only if $P$ embeds into $\Cox(D)$.
\end{corollary}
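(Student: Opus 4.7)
The plan is to obtain the corollary as a direct chaining of Theorem~\ref{thm:catalan} and Proposition~\ref{prop:catalan-pattern} together with a Coxeter-group analogue of Proposition~\ref{prop:abelian-AS}.  The reverse direction is immediate: the Coxeter group $\Cox(D)$ is a (possibly infinite) group, so $\varphi(x,y,z) := xy^{-1}z$ is a Mal'tsev polymorphism of any subgroup $S \subseteq \Cox(D)^r$; consequently, any embedding $P \hookrightarrow \Cox(D)$ realised by taking $S$ as the ambient relation is, by definition, an infinite Mal'tsev embedding of $P$.  The substantive content is therefore the forward direction.

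For the forward direction, suppose $P$ admits an infinite Mal'tsev embedding into $Q \subseteq E^r$ with Mal'tsev polymorphism $\varphi \in \Pol(Q)$ and witnessing injection $\eta \colon D \to E$.  Theorem~\ref{thm:catalan} produces total terms $f_1, f_3, f_5, \ldots \in [\varphi]$ over $E$ satisfying the Catalan identities, and these are automatically polymorphisms of $Q$.  Define partial functions $\psi_m$ on $D$ by restricting $f_m$ to those inputs whose entries and output both lie in $\eta(D)$.  A short verification (using that under $x_i = x_{i+1}$ the terms $f_m$ and $f_{m-2}$ take equal values, so one lies in $\eta(D)$ iff the other does) shows that the $\psi_m$ satisfy the Catalan identities in the strong sense required.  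By Proposition~\ref{prop:catalan-pattern} this yields $I_D(\Cat_m) \subseteq \pPol(P)$ for every odd $m$.

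The crux is then to convert this pattern invariance into an actual embedding, following the blueprint of Proposition~\ref{prop:abelian-AS}.  Identify each $d \in D$ with the corresponding generator of $\Cox(D)$ and let $S \subseteq \Cox(D)^r$ be the subgroup generated by $P$.  The inclusion $P \subseteq S \cap D^r$ is immediate.  For the reverse inclusion, let $t \in S \cap D^r$.  Every element of $S$ can be written as a product of elements of $P$ and their inverses, and by inserting trivial factors $d \cdot d^{-1}$ we may assume the length is odd and of the alternating shape $t^{(1)}(t^{(2)})^{-1} t^{(3)} \cdots (t^{(m-1)})^{-1} t^{(m)}$.  Reading this coordinate by coordinate in $\Cox(D)$, the word $t^{(1)}_i \cdot (t^{(2)}_i)^{-1} \cdots t^{(m)}_i$ must reduce to the single generator $t_i$, which is exactly the defining relation of $\Cat_m$ under the interpretation sending each variable to the corresponding element of $D$.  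Hence $((t^{(1)}_i, \ldots, t^{(m)}_i), t_i) \in \Cat_m$ for every $i \in [r]$, and applying a witness from $I_D(\Cat_m) \subseteq \pPol(P)$ to the tuples $t^{(1)}, \ldots, t^{(m)} \in P$ delivers $t \in P$.  Thus $S \cap D^r = P$, i.e., $P$ embeds into $\Cox(D)$ via $S$.

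The main obstacle is the third paragraph: one has to be careful in passing between the multiplicative bookkeeping in the free-Coxeter subgroup $S$ and the purely syntactic/algebraic condition encoded by $\Cat_m$.  In particular, the reduction of a word in $\Cox(D)$ to a single generator is uniquely determined by iterated cancellation of adjacent equal letters, which is precisely the combinatorial content of the Catalan identities.  Once that correspondence is set up cleanly, both the ``only if'' and ``if'' directions become essentially bookkeeping on top of the two cited results.
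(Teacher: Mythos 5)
Your proof follows the same route as the paper's: invoke Theorem~\ref{thm:catalan} to produce Catalan terms in $[\varphi]$, restrict them to get partial polymorphisms of $P$, apply Proposition~\ref{prop:catalan-pattern} to conclude $I_D(\Cat_m) \subseteq \pPol(P)$, and then mimic the subgroup argument of Proposition~\ref{prop:abelian-AS} inside $\Cox(D)$. The extra care you take in paragraph two — checking that the restriction of the total terms $f_m$ to $\eta(D)$ actually satisfies the Catalan identities in the partial-function sense — is a detail the paper glosses over, and it is correct.

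One small slip: you claim that ``by inserting trivial factors $d\cdot d^{-1}$ we may assume the length is odd.'' That is not a valid manoeuvre — inserting $d\cdot d^{-1}$ changes the word length by $2$ and so preserves parity. Fortunately it is also unnecessary: since every generator of $\Cox(D)$ is an involution, every $t^{(j)}\in P$ equals its own inverse in $\Cox(D)^r$, so the alternating shape is automatic, and the sign homomorphism $\Cox(D)\to\Z/2\Z$ (each generator $\mapsto 1$) applied to any single coordinate $t_i \in D$ forces the number of factors $m$ to be odd. This is the ``simple parity argument'' the paper invokes. With that correction the proof is complete and matches the paper's.
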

\begin{proof}
Since a group embedding is a special case of Mal'tsev embedding, the ``if'' direction is trivial.

For the ``only if'' direction, by Theorem~\ref{thm:catalan}, $P$ has partial polymorphisms satisfying the Catalan identities. By Proposition~\ref{prop:catalan-pattern}, this implies that $I_D(\Grp_m^{V,\Cox(V),\eta}) = I_D(\Grp_m^{D,\Cox(D),\eta}) \subseteq \pPol(P)$ for all odd $m \in \N$. Let $Q$ be the subgroup of $\Cox(D)^r$ generated by $P$. Any $t \in Q \cap D^r$ must be expressed as a product $t^1 \cdots t^m$ of tuples in $P$. By a simple parity argument, $m$ must be odd. Thus, there exists $p \in I_D(\Grp_m^{D,\Cox(D),\eta})$ such that $p(t^1_i, \hdots, t^m_i) = t_i$ for all $i \in [r]$. Thus, $t \in P$, so $P$ has an embedding into $\Cox(D)$.
\end{proof}

\begin{remark}\label{rem:coset}
We can make \Cref{cor:inf-malt-group} slightly more precise. Let $\widetilde{\Cox}(D)$ be the index-$2$ coset of $\Cox(D)$ consisting of words of odd length. Since each $d \in D$ embeds into an element of $\Cox(D)$ of length $1$, we in fact constructed an embedding into $\widetilde{\Cox}(D)$. The upcoming Theorem~\ref{thm:Mal'tsev-balanced} is equivalent to the fact that $\widetilde{\Cox}(\{0,1\}) \cong (\Z, +)$.
\end{remark}

\subsubsection{Corollary: Boolean Mal'tsev Embeddings are Balanced}

As another corollary of our Catalan machinery, we show that for Boolean predicate $P$, having a Mal'tsev embedding (even over an infinite domain!) is sufficient to prove that $P$ is balanced.

\begin{theorem}\label{thm:Mal'tsev-balanced}
Let $P \subseteq \{0,1\}^k$ be a predicate which embeds into a predicate $A \subseteq D^k$ with a Mal'tsev term $\varphi \in \Pol(A)$. Then, $P$ is balanced.
\end{theorem}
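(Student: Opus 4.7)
The plan is to reduce the theorem to Corollary~\ref{cor:inf-malt-group} plus a concrete description of the Mal'tsev algebra structure of $\widetilde{\Cox}(\{0,1\})$. First, since $P$ embeds into $A$ and $A$ admits a Mal'tsev polymorphism, $P$ inherits an (infinite) Mal'tsev embedding, so by Corollary~\ref{cor:inf-malt-group} the canonical inclusion $\iota \colon \{0,1\} \hookrightarrow \Cox(\{0,1\})$ sending each generator to itself satisfies $P = Q \cap \{0,1\}^k$, where $Q$ is the subgroup of $\Cox(\{0,1\})^k$ generated coordinatewise by $\iota(P)$. By Remark~\ref{rem:coset}, this image lies in $\widetilde{\Cox}(\{0,1\})^k$.

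Next, I would exhibit an isomorphism of Mal'tsev algebras between $(\widetilde{\Cox}(\{0,1\}), \varphi)$, with $\varphi(x,y,z) = xy^{-1}z$, and $(\Z, \varphi_\Z)$, with $\varphi_\Z(a,b,c) = a-b+c$, sending $0 \mapsto 0$ and $1 \mapsto 1$. This rests on the fact that $\Cox(\{0,1\}) = \Z/2 \ast \Z/2$ is the infinite dihedral group, whose even-length subgroup $\langle 01 \rangle$ is infinite cyclic of index $2$; and on the general observation that for an Abelian subgroup $H \leq G$ and a coset $gH$, the map $gh \mapsto h$ is a Mal'tsev isomorphism onto $(H,+)$. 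Applying this with $g = 0$ (using $0^{-1} = 0$) sends $0 \mapsto e \leftrightarrow 0$ and $1 \mapsto 01 \leftrightarrow 1$, as required.

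Composing the two steps yields a map $\sigma \colon \{0,1\} \to \Z$ with $\sigma(b) = b$ such that $P = B \cap \{0,1\}^k$, where $B \subseteq \Z^k$ is the Mal'tsev subalgebra corresponding to $Q \cap \widetilde{\Cox}(\{0,1\})^k$ under the coordinatewise isomorphism. Since subalgebras of $(\Z^k, \varphi_\Z)$ are exactly cosets of subgroups of $\Z^k$, they are in particular closed under odd-length alternating sums $a^{(1)} - a^{(2)} + \cdots + a^{(m)}$, which are iterated applications of $\varphi_\Z$.

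Finally, to verify balancedness, I would take any odd $m$ and tuples $t^{(1)}, \ldots, t^{(m)} \in P$ with $s := t^{(1)} - t^{(2)} + \cdots + t^{(m)} \in \{0,1\}^k$; since $\sigma$ is the identity on $\{0,1\}$, each $\sigma(t^{(i)}) = t^{(i)} \in B$, so $s \in B$, and combined with $s \in \{0,1\}^k$ this gives $s \in B \cap \{0,1\}^k = P$. I expect the main obstacle to be Step~2, the identification of $(\widetilde{\Cox}(\{0,1\}), \varphi)$ with $(\Z, +)$ as Mal'tsev algebras, but this reduces to the classical structure of the infinite dihedral group together with the elementary fact that a coset of an Abelian subgroup is an affine Mal'tsev algebra isomorphic to the subgroup itself.
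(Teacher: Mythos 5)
Your proof is correct, and it takes the alternative, more group-theoretic route that the paper itself alludes to but does not spell out. The paper's chosen proof is combinatorial: it works entirely with partial pattern polymorphisms, establishing $I_{\{0,1\}}(\Bal_m) = I_{\{0,1\}}(\Cat_m)$ for all odd $m$ (Lemma~\ref{lem:bal-cat}) via the elementary Observation~\ref{obs:Boolean-cancel} that any length-$m$ Boolean string whose alternating sum lies in $\{0,1\}$ must contain an adjacent equal pair; it then concludes by chaining Propositions~\ref{prop:balance} and~\ref{prop:catalan-pattern} with Theorem~\ref{thm:catalan}. Your argument instead applies Corollary~\ref{cor:inf-malt-group} to embed $P$ into $\widetilde{\Cox}(\{0,1\})$, and then identifies $(\widetilde{\Cox}(\{0,1\}),\ xy^{-1}z)$ with $(\Z,\ a-b+c)$ as Mal'tsev algebras via the standard structure of the infinite dihedral group $\Z/2 \ast \Z/2$ and the observation that a coset $gH$ of an Abelian subgroup $H$ is affinely isomorphic to $H$ (the computation $(gh_1)(gh_2)^{-1}(gh_3) = g\,h_1 h_2^{-1} h_3$ checks out). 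The paper explicitly notes after the theorem statement that Corollary~\ref{cor:inf-malt-group} together with the group-embedding result of Chen, Jansen, and Pieterse immediately yields the theorem, and Remark~\ref{rem:coset} records precisely the fact you prove (``Theorem~\ref{thm:Mal'tsev-balanced} is equivalent to the fact that $\widetilde{\Cox}(\{0,1\}) \cong (\Z,+)$''); your proposal essentially supplies a self-contained proof of that remark rather than citing the prior work. Both routes ultimately hinge on the same underlying fact about reduced words in $D_\infty$ (equivalently, Observation~\ref{obs:Boolean-cancel}), but your packaging is cleaner as a one-off argument, whereas the paper's packaging via $\Bal_m$ vs.\ $\Cat_m$ exposes exactly where the Boolean case diverges from larger domains, setting up the non-Abelian counterexample $\PAULI$ in Theorem~\ref{thm:non-Abelian}.
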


This resolves an open question of \cite{chen2020BestCase} who only proved Theorem~\ref{thm:Mal'tsev-balanced} in the special case that $\varphi(x,y,z) = x \cdot y^{-1} \cdot z$ where $D$ embeds into a (possibly infinite) group. In fact, immediately applying Corollary~\ref{cor:inf-malt-group} to their group embedding result implies Theorem~\ref{thm:Mal'tsev-balanced}.

However, to make the argument a bit more transparent, we give a more structural proof of Theorem~\ref{thm:Mal'tsev-balanced}. In particular, we show the identities implying that a predicate is balanced are identical to the identities implying a Boolean predicate has partial polymorphisms satisfying the Catalan identities.
\begin{lemma}\label{lem:bal-cat}
For all odd $m \in \N$, we have that $I_{\{0,1\}}(\Bal_m) = I_{\{0,1\}}(\Cat_m)$.
\end{lemma}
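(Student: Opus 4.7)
The plan is to unpack both sides and compare the resulting partial functions directly. Observe first that both $\Bal_m$ and $\Cat_m$ are ``unit'' pattern systems when restricted to the Boolean domain: in each pattern $((x_1,\dots,x_m),y)$, the output $y$ is forced to lie in $\{x_1,\dots,x_m\}$, because the group element $\eta(y)$ must equal a product only involving the generators $\eta(x_i)$. By the remark after the definition of $\interpretation{D}{P}$, both $\interpretation{\{0,1\}}{\Bal_m}$ and $\interpretation{\{0,1\}}{\Cat_m}$ are therefore singletons, containing partial functions $b_m$ and $c_m$ respectively. The task is to show $b_m = c_m$ as partial functions on $\{0,1\}^m$.

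Next, I would write out $b_m$ and $c_m$ explicitly. By definition, $b_m(a_1,\dots,a_m) = a_1 - a_2 + a_3 - \cdots + a_m$ (in $\Z$) whenever this value lies in $\{0,1\}$, and is undefined otherwise. For $c_m$, since each generator of $\Cox(\{0,1\})$ is an involution, $a_i^{-1} = a_i$, so $a_1 \cdot a_2^{-1} \cdots a_m = a_1 a_2 \cdots a_m$ as a word in $\Cox(\{0,1\})$. Thus $c_m(a_1,\dots,a_m)$ is defined iff this word reduces to a single generator $y \in \{0,1\}$, in which case $c_m(a_1,\dots,a_m) = y$.

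The key calculation then compares these two descriptions via the ``alternating-sum'' map. Define $\psi(x_1 x_2 \cdots x_k) = \sum_{i=1}^{k} (-1)^{i-1} x_i \in \Z$ on any word over $\{0,1\}$. If $x_i = x_{i+1}$, the contribution of these two positions is $(-1)^{i-1}(x_i - x_{i+1}) = 0$, and removing them keeps the indices of the remaining positions in the correct parity. Hence $\psi$ is invariant under the defining cancellations of $\Cox(\{0,1\})$, so it descends to a well-defined function on the reduced word associated to $a_1 a_2 \cdots a_m$. Since $\Cox(\{0,1\})$ is the infinite dihedral group, its odd-length reduced words are precisely the alternating strings $0,\ 1,\ 010,\ 101,\ 01010,\ 10101,\dots$, and a direct enumeration gives $\psi$-values $0,1,-1,2,-2,3,-3,\dots$; in particular $\psi(w) \in \{0,1\}$ if and only if the reduced word $w$ has length~$1$, and in that case $w$ equals $\psi(w)$ viewed as a generator.

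Combining these steps, $c_m(a_1,\dots,a_m)$ is defined iff the reduced word of $a_1\cdots a_m$ has length~$1$, iff $a_1 - a_2 + \cdots + a_m \in \{0,1\}$, iff $b_m(a_1,\dots,a_m)$ is defined; and in that case both take the common value $a_1 - a_2 + \cdots + a_m$. This yields $b_m = c_m$ and hence the lemma. This is morally the bijection $\widetilde{\Cox}(\{0,1\}) \cong (\Z,+)$ alluded to in Remark~\ref{rem:coset}, pulled back to the subset $\{0,1\}\subset\Z$; the only real content is the invariance of $\psi$ under cancellation and the enumeration of the odd-length reduced words, neither of which should present a genuine obstacle.
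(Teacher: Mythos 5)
Your proof is correct and reaches the statement by a somewhat different route than the paper. The paper argues by induction on $m$: the inclusion $I_{\{0,1\}}(\Cat_m) \subseteq I_{\{0,1\}}(\Bal_m)$ is immediate, and for the converse it invokes Observation~\ref{obs:Boolean-cancel}---for odd $m \geq 3$, if the alternating Boolean sum lands in $\{0,1\}$ there must be a consecutive repeat---to cancel one adjacent pair and recurse to $m-2$. You instead note that the alternating-sum map $\psi$ is invariant under removal of adjacent equal letters, so it descends to reduced words in the infinite dihedral group $\Cox(\{0,1\})$; enumerating $\psi$ on the odd-length reduced words $0, 1, 010, 101, 01010, \dots$ gives $0, 1, -1, 2, -2, \dots$, so $\psi$-values in $\{0,1\}$ correspond exactly to length-one reduced words, which simultaneously matches the domains and the outputs of $b_m$ and $c_m$. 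The underlying computation is precisely what proves Observation~\ref{obs:Boolean-cancel}, but your packaging avoids the induction and makes the isomorphism $\widetilde{\Cox}(\{0,1\}) \cong (\Z,+)$ from Remark~\ref{rem:coset} the explicit mechanism, which is a nice clarification. One step that both your write-up and the paper handle a bit quickly: $\Cat_m$ is defined via $\Cox(V)$ over a countably infinite variable set $V$, so identifying its Boolean interpretation with reduction in $\Cox(\{0,1\})$ uses that an instantiation $g\colon\{x_1,\dots,x_m\}\to\{0,1\}$ induces a group homomorphism $\Cox(V)\to\Cox(\{0,1\})$ sending generators to generators and preserving the defining relation $x_1\cdots x_m = y$; this is routine, but deserves a sentence since it is what licenses replacing $\Cox(V)$ by $\Cox(\{0,1\})$ in your argument.
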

Theorem~\ref{thm:Mal'tsev-balanced} then immediately follows by combining Lemma~\ref{lem:bal-cat} with Proposition~\ref{prop:balance}, Proposition~\ref{prop:catalan-pattern}, and Theorem~\ref{thm:catalan}. In fact, the technical core of Lemma~\ref{lem:bal-cat} was already observed by \cite{chen2020BestCase}.
\begin{observation}[Implicit in Claim~7.9 of \cite{chen2020BestCase}.]\label{obs:Boolean-cancel}
Let $m \ge 3$ be odd and let $x_1, \hdots, x_m \in \{0,1\}$ be such that
\begin{align}
    x_1 - x_2 + x_3 - \cdots + x_m \in \{0,1\},\label{eq:alt-bool}
\end{align}
then there exists $i \in [m-1]$ such that $x_i = x_{i+1}$.
\end{observation}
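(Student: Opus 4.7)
The plan is to establish the contrapositive: assume that $x_i \neq x_{i+1}$ for every $i \in [m-1]$, and show that the alternating sum $S := x_1 - x_2 + x_3 - \cdots + x_m$ must fall outside $\{0,1\}$.

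Under the assumption that no two consecutive entries are equal, the sequence $(x_1, \ldots, x_m)$ over $\{0,1\}$ must strictly alternate. Since $m$ is odd, this means the sequence is either $(0,1,0,1,\ldots,0)$ or $(1,0,1,0,\ldots,1)$. Writing $m = 2k+1$ with $k \ge 1$ (since $m \ge 3$), the plan is to compute $S$ in each case: in the first case every positive position contributes $0$ and every negative position contributes $-1$, yielding $S = -k$; in the second case every positive position contributes $1$ and every negative position contributes $0$, yielding $S = k+1$. Since $k \ge 1$, we have $-k \le -1$ and $k+1 \ge 2$, so neither value lies in $\{0,1\}$, which contradicts the hypothesis on $S$.

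There is essentially no obstacle here: once one notes that avoiding all consecutive repetitions forces strict alternation on a binary sequence, the rest is a one-line arithmetic check in each of the two possible alternating patterns. The only subtlety worth flagging is the use of $m$ being odd, which ensures that the two alternating patterns are the only possibilities (even $m$ would also allow sequences like $(0,1,0,1,\ldots,1)$, which we do not need to handle).
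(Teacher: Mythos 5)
Your proof is correct and matches the paper's own argument: the paper likewise notes that the only Boolean strings with no adjacent repetition are the two alternating strings and that their alternating sums fail to lie in $\{0,1\}$ for odd $m \ge 3$. Your explicit computation of $S = -k$ and $S = k+1$ just makes the paper's ``straightforward to verify'' step precise.
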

The proof of Observation~\ref{obs:Boolean-cancel} is rather elementary, as the only Boolean strings which lack a neighboring pair which are equal are the alternating strings $101\cdots 101$ and $010\cdots 010$, but it is straightforward to verify these alternating strings can never satisfy (\ref{eq:alt-bool}) unless $m=1$.

\begin{proof}[Proof of Lemma~\ref{lem:bal-cat}]
We prove this statement by induction on $m$. The base case is trivial as both $\Bal_1$ and $\Cat_1$ only have the unary pattern $((x),x)$. Fix $m \in \N$ at least $3$ and consider an arbitrary pattern $(t,y) := ((x_1, \hdots, x_m), y)$ using only two symbols $0$ and $1$. We have that $(t,y) \in \Bal_m$ if and only if $x_1 - x_2 + \cdots + x_m = y$.  Likewise, $(t,y) \in \Cat_m$ if $x_1x_2 \cdots x_m = y$ in $\Cox(\{0,1\})$ (i.e., there is a series of cancellations resulting in $y$). It is clear the latter always implies the former, so $I_{\{0,1\}}(\Cat_m) \subseteq I_{\{0,1\}}(\Bal_m)$.

For the other direction, if $(t,y) \in \Bal_m$, by Observation~\ref{obs:Boolean-cancel}, we have that there exists $i \in [m-1]$ such that $x_i = x_{i+1}$. Thus, $x_1 - x_2 + \cdots \pm x_{i-1} \mp x_{i+2} \pm \hdots + x_m = y$, so $((x_1, \hdots, x_{i-1}, x_{i+2}, \hdots, x_m), y) \in \Bal_{m-2}$. By the induction hypothesis we have that $((x_1, \hdots, x_{i-1},$ $x_{i+2}, \hdots, x_m), y) \in \Cat_{m-2}$. Thus, since $x_i$ and $x_{i+1}$ cancel, we have that $((x_1, \hdots, x_m), y) \in \Cat_{m},$ as desired.
\end{proof}

\begin{remark}\label{rem:BCK}
As an attempt toward proving Theorem~\ref{thm:Mal'tsev-balanced}, \cite{chen2020BestCase} considered the following function similar to the arity-$5$ Catalan polymorphism: \[f(x_1, x_2, x_3, x_4, x_5) = \varphi(x_1, \varphi(x_2, x_3, \varphi(x_1, x_2, x_3)), \varphi(x_5, x_4, \varphi(x_3, x_2, x_1))).\]
One can verify that $f$ satisfies the Boolean interpretation of $\Cat_5$, but it fails to interpet $\Cat_5$ over larger domains. In particular assuming $x_3 = x_4$ does not simplify $f$. This relative lack of structure in $f$ perhaps made it difficult to find a more general polymorphisms.

Of note, this function $f$ was independently found (except for the final $x_1$ and $x_3$ being swapped) by \cite{bessiere2020Chain} to prove that the non-Boolean predicate $P := \{111,222,012,120,201\}$ lacks a Mal'tsev embedding. We explore such arguments in more detail in Section~\ref{subsec:excl}.
\end{remark}

\subsection{A Non-Abelian Group Embedding}

Given that Theorem~\ref{thm:Mal'tsev-balanced} (or more precisely Lemma~\ref{lem:bal-cat}) shows that the Catalan identities imply Abelian structure for Boolean predicates, one may naturally ask if such an argument extends to non-Boolean domains. If true, such a result could lead to an elegant boundary between predicates with linear non-redundancy and those with superlinear non-redundancy. Sadly, such a world does not exist.

\begin{theorem}\label{thm:non-Abelian}
There exists an explicit relation $R \subseteq D^6$ with $|D| = 3$ for which $R$ has a group embedding but lacks an Abelian embedding.
\end{theorem}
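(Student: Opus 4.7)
The plan is to construct an explicit six-ary relation $\PAULI$ and independently establish two statements: that it admits no Abelian embedding, and that it does embed into the single-qubit Pauli group.

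I first define $\PAULI \subseteq \{x,y,z\}^6$ as the five tuples $a_1,\ldots,a_5$ obtained from the six Catalan-type $5$-ary identities displayed in the introduction: the tuple $a_i$ is read off as the $i$-th argument across the six identities, so for example $a_1 = (x,x,z,x,z,x)$ and $a_2 = (x,x,x,y,x,y)$. The design point is that identities 1--5 each contain a pair of adjacent equal entries and therefore follow purely from the Catalan cancellations (equivalently, from any Mal'tsev embedding into a group), whereas the sixth identity $f(x,y,z,x,y) = z$ admits no such adjacent cancellation and genuinely requires commutativity.

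To rule out an Abelian embedding I apply \Cref{prop:abelian-AS}: the existence of such an embedding would force $I_D(\AS_5) \subseteq \pPol(\PAULI)$. A direct check shows that all six coordinate columns of $(a_1,\ldots,a_5)$ satisfy the $\AS_5$-pattern with output $z$, i.e., in the free Abelian group $\mathbb Z^D$ the alternating sum of each column equals the generator~$z$. Consequently, applying any $f \in I_D(\AS_5)$ coordinate-wise to $a_1,\ldots,a_5$ must be defined and must return $(z,z,z,z,z,z)$. This tuple would then have to lie in $\PAULI$, contradicting the construction.

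For the positive direction, let $P$ denote the $16$-element single-qubit Pauli group generated by $X,Y,Z$ and define $\sigma\colon D \to P$ by $\sigma(x)=X$, $\sigma(y)=Y$, $\sigma(z)=Z$. Let $A \subseteq P^6$ be the coset $\sigma(a_1)\cdot H$, where $H$ is the subgroup of $P^6$ generated by the five elements $\sigma(a_1)^{-1}\sigma(a_i)$; since cosets are preserved by the Mal'tsev term $\varphi(u,v,w) = uv^{-1}w$, the relation $A$ admits a Mal'tsev polymorphism. The embedding assertion reduces to verifying $A \cap \sigma(D)^6 = \sigma(\PAULI)$, and only the reverse inclusion needs work. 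My intended approach is to first pass to the quotient $\bar P = P/\langle iI\rangle \cong (\mathbb Z/2\mathbb Z)^2$ (which is Abelian) and explicitly compute the $\mathbb F_2$-affine image $\bar A$, enumerating its finitely many elements that land in $\{\bar X,\bar Y,\bar Z\}^6$; then lift each such candidate back to $P^6$ while tracking the $\{\pm 1,\pm i\}$-phases generated by the anticommutation relations $XY = iZ$, $YZ = iX$, $ZX = iY$.

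The main obstacle is precisely this lift-and-check verification. The pivotal calculation is that applying the Catalan polymorphism $\psi_5(u_1,\ldots,u_5) = u_1 u_2^{-1} u_3 u_4^{-1} u_5$ to $\sigma(a_1),\ldots,\sigma(a_5)$ yields a tuple whose first five coordinates are $Z$ and whose sixth coordinate evaluates to $XYZXY = -Z$; the phase $-1$ ejects this tuple from $\sigma(D)^6$ and, more importantly, exemplifies the systematic phase obstructions arising elsewhere in $A$. A short case analysis using the quotient enumeration should show that every element of $A$ projecting into $\{\bar X,\bar Y,\bar Z\}^6$ is either one of the five $\sigma(a_i)$ or picks up a nontrivial phase, completing the verification and hence the theorem.
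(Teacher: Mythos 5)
Your proposal is correct and follows essentially the same path as the paper: same construction of $\PAULI$ from the five left-out Catalan identities plus the Abelian-only sixth identity, same use of \Cref{prop:abelian-AS} to rule out an Abelian embedding via the alternating sum of each row giving $z$, and the same embedding via $x,y,z \mapsto X,Y,Z$ into the Pauli group with the coset generated by the images.

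One note on the verification you defer: the paper organizes the ``lift-and-check'' not as an enumeration of the quotient coset followed by phase-lifting, but via the normal form $c = q\,g_1^{a_1}\cdots g_5^{a_5}$ with $q$ in the sign subgroup $Q$ and $\sum a_i$ odd. The structural observation that makes this work --- and which your plan leaves implicit --- is that any two generators $g_i, g_j$ agree in an even number of the six coordinates (as words over $\{X,Y,Z\}$), so every commutator sign lies in the even-weight sign subgroup $Q = \{((-1)^{b_1},\ldots,(-1)^{b_6}) : \sum b_i \equiv 0\}$. With this, the casework becomes: sum $1$ gives the five $g_i$; sum $3$ forces some coordinate to $\pm XYZ = \pm iI \notin \{X,Y,Z\}$; sum $5$ gives your computed $(Z,Z,Z,Z,Z,-Z)$ which has odd sign-weight and cannot be corrected by any $q \in Q$. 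Your quotient-first plan would surface the same facts, but you should state the even-weight property of $Q$ explicitly, since it is the pivot that rules out the sum-$5$ case.
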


We motivate the construction from the perspective of pattern polymorphisms. If one were to generalize Lemma~\ref{lem:bal-cat} to domain size three, one would hope that for all odd $m \in \N$,
\[
    I_{\{0,1,2\}}(\AS_{m}) = I_{\{0,1,2\}}(\Cat_m). 
\]
It turns out this is not even true for $m = 5$. Up to isomorphism, the identities defining $\Cat_5$ are
\begin{align*}
    \Cat_5 = \{\quad&((x,x,y,y,z),z),\\
    &((x,x,z,y,y),z),\\
    &((z,x,x,y,y),z),\\
    &((x,y,y,x,z),z),\\
    &((z,x,y,y,x),z)\quad \}
\end{align*}
However, $\AS_5$ has these five identities along with $((x,y,z,x,y), z)$---notice the two $x$'s and the two $y$'s have opposite signs in an alternating sum. We can immediately lift this mismatch in the identities into an explicit predicate. Consider the domain $D = \{x,y,z\}$ and the relation $\PAULI \subset D^6$ defined by
\[
    \PAULI := \begin{pmatrix}
    x & x & y & y & z\\
    x & x & z & y & y\\
    z & x & x & y & y\\
    x & y & y & x & z\\
    z & x & y & y & x\\
    x & y & z & x & y
    \end{pmatrix},
\]
where the tuples in the relation are the columns of the matrix. Because of the aforementioned list of identities, one can verify that there exists $p \in I_D(\AS_5)$ such that applying $p$ to rows of above matrix gives $(z,z,z,z,z,z) \not\in \PAULI$.
More transparently, any Abelian group containing $R$ must also contain $(z,z,z,z,z,z)$. Thus, $R$ cannot be embedded into any Abelian group. However, the same argument fails with $I_D(\Cat_5)$ as $p(x,y,z,x,y)$ cannot be simplified using Catalan identities. In fact, as we shall see $\PAULI$ embeds into a non-Abelian group, specifically the Pauli group.

\paragraph{The Pauli Group.} 
Consider the matrices
\begin{align*}
I &= \begin{pmatrix} 1 & 0\\0 & 1\end{pmatrix}&
X &= \begin{pmatrix} 0 & 1\\1 & 0\end{pmatrix}&
Y &= \begin{pmatrix} 0 & -\sqrt{-1}\\\sqrt{-1} & 0\end{pmatrix}&
Z &= \begin{pmatrix} 1 & 0\\0 & -1\end{pmatrix}.
\end{align*}
The matrices $X,Y,Z$ generate a group known as the \emph{Pauli group} $P_1$ with $16$ elements. In particular, every element of $P_1$ is of the form $(-1)^a X^b Y^c Z^d$ for $a,b,c,d \in \{0,1\}$. We also have the following identities
\begin{align*}
X^2 = Y^2 = Z^2 &= I,&
XY &= -YX,&
XZ &= -ZX,&
YZ &= -ZY.
\end{align*}

Now, we claim that the map which send $x,y,z$ to $X,Y,Z$ is a group embedding of $\PAULI$. In particular define $g_1, \hdots, g_5 \in P_1^6$ as the embedding of the tuples defining $\PAULI$.
\begin{align*}
g_1 &= (X,X,Z,X,Z,X)\\
g_2 &= (X,X,X,Y,X,Y)\\
g_3 &= (Y,Z,X,Y,Y,Z)\\
g_4 &= (Y,Y,Y,X,Y,X)\\
g_5 &= (Z,Y,Y,Z,X,Y).
\end{align*}
Let $C$ be the minimal coset of $P_1^6$ generated by $g_1, \hdots, g_5$. That is,
\begin{align*}
    C &= \{g_{i_1} g_{i_2}^{-1} g_{i_3} \cdots g_{i_{m-1}}^{-1} g_{i_m} : i_1, \hdots, i_m \in [5]^m, m\text{ odd}\}\\
    &= \{g_{i_1} g_{i_2} g_{i_3} \cdots g_{i_{m-1}} g_{i_m} : i_1, \hdots, i_m \in [5]^m, m\text{ odd}\},
\end{align*}
where the latter equality uses the fact that $g_i^2 = (I,I,I,I,I,I)$ for all $I \in [5]$.

We next describe the elements of this coset. Let $Q \subseteq P_1^6$ be the subgroup of $P_1^6$ whose elements of the form
\[
    Q := \{((-1)^{a_1}I, \hdots, (-1)^{a_6}I) : a_1 + \cdots + a_6 \equiv 0 \mod 2\}.
\]
If we think of each of $g_1, \hdots, g_5$ as length 6 vectors over the alphabet $\{X,Y,Z\}$, then each pair of generators has even Hamming distance. Thus, by the (anti-)commutativity rules of $X,Y,Z$ we have that for all $i,j \in [5]$ there exists $q_{i,j} \in Q$ such that
\[
    g_i g_j = q_{i,j} g_j g_i.
\]
Thus, by a suitable induction, we can prove that every element of $c\in C$ has the form
\[
    c = q g_1^{a_1} \cdots g_5^{a_5},
\]
where $a_1, \hdots, a_5 \in \{0,1\}$ and $a_1 + \cdots +a_5 \equiv 1 \mod 2$. There are $16$ such products of the form $g_1^{a_1} \cdots g_5^{a_5}$ whose sum of exponents is odd. If $a_1+\cdots +a_5 = 1$, then we just get $g_1, \hdots, g_5$. If $a_1 + \cdots + a_5 = 3$, one can verify that there always exists $i \in [6]$ such that $(g^{a_1}g^{a_2}g^{a_3}g^{a_4}g^{a_5})_i = \pm XYZ$. Finally, we can compute that
\[
    g_1g_2g_3g_4g_5 = (Z,Z,Z,Z,Z,-Z).
\]
Thus, any $c \in C \cap \{X,Y,Z\}^6$ must be of the form $q g_1^{a_1} \cdots g_5^{a_5}$ for some $q \in Q$ and $a_1 + \cdots + a_5 \in \{1,5\}$. If $a_1 + \cdots + a_5 = 1$, then these products are just $c \in \{g_1, \hdots, g_5\}$. However, if $a_1 + \cdots + a_5 = 5$, observe that $g_1g_2g_3g_4g_5$ has an odd number of negative signs, but every $q \in Q$ has an even number of negative signs. Thus, $(Z,Z,Z,Z,Z,Z) \not\in C$. Therefore,
\[
C \cap \{X,Y,Z\}^6 = \{g_1, g_2, g_3, g_4, g_5\}.
\]
This shows that $\PAULI$ embeds into $P_1$, so we have proved Theorem~\ref{thm:non-Abelian}.
\begin{remark}
Analogous to Remark~\ref{rem:coset}, we are actually embedding into the index-$2$ coset of $P_1$ generated by $\{X, Y, Z\}$. This coset is isomorphic to the dihedral group $D_4$ (also called $D_8$) on $8$ elements.
\end{remark}

\begin{remark}
Recent breakthroughs by Lichter--Pago~\cite{lichter2024} and Zhuk~\cite{Zhuk25} show analogous results in the context of CSP satisfiability algorithms. In particular, they give examples of CSPs tractable due to non-Abelian group equations but failed to be solved by Affine Integer Programming (AIP)~\cite{brakensiek2021Promise,barto2021Algebraica} and even stronger variants (e.g., \cite{brakensiek2020Powera,ciardo2023CLAP}). Interestingly, $D_4$ is also the group which Zhuk bases his counterexample on, but otherwise we (the authors) are unaware of any direct technical connection between these algorithmic results and Theorem~\ref{thm:non-Abelian}.
\end{remark}

\begin{remark}
Another fgpp-equivalent representation of $\PAULI$ is
\begin{align*}
    R = \{(1,1,1,1,1,1), (2,2,2,2,2,2), (0,1,2,1,0,2), (2,0,1,2,1,0), (1,2,0,0,2,1)\}%
\end{align*}
with $(0,0,0,0,0,0)$ present in an Abelian embedding of $R$. Note that $R$ can be viewed as two copies of $\BCK := \{(1,1,1),(2,2,2),(0,1,2),(2,0,1),(1,2,0)\}$ (see \cite{bessiere2020Chain,brakensiek2024Redundancy}) stuck together except the permutations (such as $(0,1,2)$ and $(1,0,2)$) have opposite sign. If you project out any coordinate of $R$, the result fails to have a Mal'tsev embedding (by applying $\Cat_5$ to some permutation of the tuples).
\end{remark}

\subsection{Application: Excluding Mal'tsev Embeddings}\label{subsec:excl}

Based on the current understanding of which predicates have (near) linear non-redundancy, an important tool for identifying which predicates may have nontrivial non-redundancy lower bounds is proving there does not exist an embedding into an infinite Mal'tsev domain. There are a few such arguments in the literature~\cite{bessiere2020Chain,brakensiek2024Redundancy,chen2020BestCase,lagerkvist2020Sparsification} but they are rather ad-hoc based on the particular predicate involved. In this section, we give a few examples of how Catalan identities can greatly streamline such arguments.

For instance, the predicate $\BCK := \{111,222,012,120,201\}$ which first appeared in \cite{bessiere2020Chain} (see Section 7.1 in \cite{brakensiek2024Redundancy} for a more extensive discussion) was excluded from having a Mal'tsev embedding by using the operator $f$ from \Cref{rem:BCK}. However, using the Catalan pattern polymorphism $\Cat_5$, we can make the same deduction in a more principled manner. In particular,
\begin{align*}
        \Cat_5 \begin{pmatrix}0&2&1&1&2\\1&2&2&1&0\\2&2&0&1&1\end{pmatrix} = \begin{pmatrix}0\\0\\0\\\end{pmatrix}\not\in \BCK,
\end{align*}
where $\Cat_5$ is applied to the rows of the matrix. %
Therefore, $\BCK$ lacks any Mal'tsev embedding. We now present some more complex examples.

\subsubsection{Proof of Theorem~\ref{thm:CYC-Mal'tsev}}

With the Catalan machinery we have developed, we can now prove Theorem~\ref{thm:CYC-Mal'tsev} with a relatively simple combinatorial argument. First, recall for odd $m \ge 3$, we defined
\[
    \CYCs_m := \{(x,y,z) \in (\Z/m\Z)^3 : x+y+z = 0 \wedge y-x \in \{0,1\}\} \setminus \{(0,0,0)\}.
\]
We seek to prove that by applying $\Cat_{2m-1}$ to a suitable permutation of the $2m-1$ tuples of $\CYCs_m$ we can deduce the tuple $(0,0,0) \not\in \CYCs_m$, proving that $\CYCs_m$ lacks a Mal'tsev embedding by Theorem~\ref{thm:catalan}. In particular, we claim the following.

\setcounter{MaxMatrixCols}{19}
\[
    \Cat_{2m-1} \begin{pmatrix}
    0&1&m-1&2&m-2&3&m-3&\cdots&m-1&1\\
    1&1&0&2&m-1&3&m-2&\cdots&m-1&2\\
    m-1&m-2&1&m-4&3&m-6&5&\cdots&2&m-3
    \end{pmatrix} = \begin{pmatrix}
    0\\
    0\\
    0
    \end{pmatrix}.
\]
More precisely, we build a matrix $M \in (\Z/m\Z)^{3 \times (2m-1)}$ such that for each $i \in [2m-1]$, the $i$th column of $M$ is equal to (modulo $m$)
\begin{itemize}
\item $(-j,-j+1,2j-1)$ if $i = 2j+1$ for $j \in \Z$, or
\item $(j,j, -2j)$ if $i = 2j$ for $j \in \Z$.
\end{itemize}
As a concrete example, when $m=7$, we can verify by inspection that
\[
    \Cat_{13} \begin{pmatrix}
    0&1&6&2&5&3&4&4&3&5&2&6&1\\
    1&1&0&2&6&3&5&4&4&5&3&6&2\\
    6&5&1&3&3&1&5&6&0&4&2&2&4\\
    \end{pmatrix} = \begin{pmatrix}
    0\\
    0\\
    0
    \end{pmatrix}.
\]
\setcounter{MaxMatrixCols}{10}

To finish the proof, it suffices to show for each row of $M$ a series of cancellations which leave a single $0$ remaining.

\paragraph{First row.} Note that $M_{1,m} = M_{1,m+1} = \frac{m+1}{2}.$ Likewise, for all $i \in [m-1]$, we have that $M_{1,m+1-i} = M_{1,m+i}$.\footnote{This can be proved with a case split. If $m+1-i$ is odd, then $M_{1,m+1-i} \equiv -\frac{m-i}{2} \equiv \frac{m+i}{2} \equiv M_{1,m+i} \mod m$ because $m+i$ is even. Likewise, if $m+1-i$ is even, then $M_{1,m+1-i} \equiv \frac{m+1-i}{2} \equiv -\frac{m+i-1}{2} \equiv M_{1,m+i} \mod m$ because $m+i$ is odd. Analogous assertions in this proof are justified in a similar manner.} After applying these nested cancellations, we are just left with $M_{1,1} = 0$.

\paragraph{Second row.} Since $M_{2,1} = M_{2,2} = 1$, those entries cancel. For $i \in [m-2]$, we can verify that $M_{2,m+2-i} = M_{2,m+1+i}$. After applying these nested cancellations, we are left with $M_{2,3} = 0$.

\paragraph{Third row.} Note that $M_{3,m+2} = 0$, so we need to demonstrate two series of cancellations\footnote{Note that if $m$ is even, there is an odd number of terms both to the left and right of $0$, so the cancellation is not possible in this case. This is why we assume $m$ is odd.} to the left and right of this entry. In particular, for all $i \in [(m+1)/2]$, we have that $M_{3,i} = M_{3,m+2-i},$ so all entries to the left cancel out. Likewise, for all $i \in [(m-3)/2]$, we have that $M_{3,m+2+i} = M_{3,2m-i}$, so all the entries to the right cancel out.

This completes the proof of Theorem~\ref{thm:CYC-Mal'tsev}.

\section{The Algebraic Structure of Patterns and Pattern Powers} \label{sec:powers-and-fgpp}

In this section, we delve deeper into the algebraic framework
investigations initiated in Section~\ref{sec:theory} and the structural
restrictions associated with polymorphism patterns. As a first step,
we give the proof of Theorem~\ref{thm:fgppp-pattern}  and its multisorted generalization in Corollary~\ref{cor:pattern_galois} (announced already in Section~\ref{sec:theory}),
making the connection between fgppp-definitions and (multisorted or
regular) polymorphism patterns. However, we also push further, in two
directions. First, generalizing the constructions in Section~\ref{sec:frac}
(the \emph{all rational exponents} results), we study a notion of
\emph{$c$-tupled fgppp-definitions} ($c$-fgppp-definitions) where we allow arbitrary $c$-ary
maps from tuples of variables in one domain to variables in another domain.
We establish a Galois connection for this notion of expressive power
to a novel \emph{power} operation on a polymorphism pattern, and
prove that $c$-fgppp-definitions behave as expected with respect to conditional
non-redundancy; see Section~\ref{sec:fgpp} and~\ref{sec:gadget_reductions}.
We pay special attention to the powers of the \emph{$k$-cube} (or \emph{$k$-universal})
partial polymorphism patterns, which regulate precisely when a lower
bound of $\NRD(R, n) = \Omega(n^{p/q})$, $p/q>1$, can be shown by
$c$-fgppp-reduction from $\OR_k$; see Section~\ref{sec:cube-power}. The case when $c = k - 1$ is particularly interesting since they, in a certain precise sense, are the weakest patterns that exclude linear NRD. We relate them to Mal'tsev embeddings and give an explicit construction showing that any language with a Mal'tsev embedding is invariant under these cube powers.

Second, generalizing the graph girth connection from Section~\ref{sec:nrd-bin}, we study multisorted polymorphism patterns from a more
abstract perspective, and show that for any finite set of multisorted
polymorphism patterns $Q$, the maximum conditional non-redundancy of promise relations
invariant under $Q$ corresponds to a multipartite hypergraph Tur\'an problem;
see Section~\ref{subsec:turan}. This generalizes results of Carbonnel~\cite{carbonnel2022Redundancy}.
This class of problems involves some
famously difficult questions. For example, the maximum
value of $\NRD(R \mid S, n)$ for ternary promise relations $(R,S) \in \inv(Q)$
corresponds to the \emph{Erd\H{o}s box problem}; see Section~\ref{subsec:turan_examples}
for this and other examples. 

As an application, we show that promise relations preserved by the \emph{$k$-edge} pattern polymorphism have non-redundancy $O(n^{r-\lfloor r/k \rfloor /2}$, where $r$ is the arity of the relation. These are generalizations of Mal'tsev polymorphisms, with similar consequences; see Section~\ref{subsec:turan_examples}.

    \subsection{FGPP Definitions and Pattern Powers}
    \label{sec:fgpp}

We first consider the relational counterpart to patterns, which turns out to be a  common generalization of {\em cone-definitions}~\cite{chen2020BestCase} and {\em functionally guarded pp-definitions}~\cite{carbonnel2022Redundancy} (fgpp-definitions) in the promise setting,
and its generalization to $c$-tupled fgppp-definitions.
First, we generalize the notion of an atom $R(x_1, \ldots, x_r)$ in a qfpp-formula to allow terms of the form $f_i(x_{i_1}, \ldots, x_{i_m})$ (where $f_i$ is a function symbol of arity $i_m$) in place of a variable $x_i$, and thus allow expressions of the form $R(f_1(x_{1_1}, \ldots, x_{1_m}), \ldots, f_r(x_{1_r}, \ldots, x_{r_m})$. If each $f_i$ is a function  $f_i \colon D^{i_m}_1 \to D_2$ for two sets $D_1, D_2$, and $R_2 \subseteq D^r_2$,  then $R(f_1(x_{1_1}, \ldots, x_{1_m}), \ldots, f_r(x_{1_r}, \ldots, x_{r_m}))$ is interpreted in the obvious way, i.e., an assignment $h$ from the variables of the atom to $D_1$ satisfies the atom if \[(f_1(h(x_{1_1}), \ldots, h(x_{1_m})), \ldots, f_r(h(x_{1_r}), \ldots, h(x_{r_m}))) \in R.\] 

\begin{definition} \label{def:fgppp}
    Let $(S, T)$ be an $r$-ary promise relation over a domain $D_1$ and let  $(\cla, \clb)$ be a promise constraint language over $D_2$. We say that $(\cla, \clb)$ {\em $c$-tuple fgpp-defines} ($c$-fgpp-defines) $(S, T)$ if there are functions $g_1, \ldots, g_m \colon D^c_1 \to D_2$ where
    \[S(x_1, \ldots, x_r) \equiv \varphi^{\cla}(\ell_1, \ldots, \ell_k)\]
    and  
        \[T(x_1, \ldots, x_r) \equiv \varphi^{\clb}(\ell_1, \ldots, \ell_k)\] 
    where each $\ell_i$ is a term of the form $g_i(x_{i_1}, \ldots, x_{i_c})$ for $g_i \in \{g_1, \ldots, g_m\}$, and where $\varphi(\ell_1, \ldots, \ell_k)^{\cla}$ is a qfpp-formula over $\cla$, and  $\varphi(\ell_1, \ldots, \ell_k)^{\cla}$ the corresponding qfpp-formula over $\clb$.
\end{definition}

If $c = 1$ then we say fgpp-definition rather than $1$-fgpp-definition. We remark that if $D_1 = D_2$ then by letting $g_i = \pi^1_1$ be the identity function we may effectively also allow variables from $x_1, \ldots, x_r$ to directly appear in an fgpp-definition.
Note that nothing in Definition~\ref{def:fgppp} prevents multisorted relations: if $(P_1, Q_1)$ is a multisorted promise relation over $D_1 \times \ldots \times D_r$ and $(P_2, Q_2)$ a multisorted promise relation over $E_1 \times \ldots \times E_s$, then a $c$-tupled fgpp-definition of $(P_1, Q_1)$ over $(P_2, Q_2)$ are allowed to use functions from $D^c_i$ to $E_j$ for $i \in [r], j \in [s]$. However, in the multisorted setting we are primarily interested in the case when $c = 1$ due to the strong connection to hypergraph Tur{\'a}n bounds in Section~\ref{sec:hypergraph-turan}.

Similar to qfppp-definitions we introduce the stronger notion of an {\em $c$-fgppp-definition} as a sequence of strict relaxations and $c$-fgpp-definitions. Let us now continue by relating fgppp-definitions to polymorphism patterns.
To accomplish this we from $P$ define a new pattern $P^c$ which under certain assumptions is guaranteed to commute with $P$. Hence, let $P$ be a pattern and for $c \geq 1$ we define a new pattern $P^c$ via $c$-tuples of
patterns from $P$. More precisely, assume $P$ has arity $n$ and consists of $m$ patterns
$P_i=(X_i,y_i)$ where $P_i$ is over a set of variables $V_i$. 
Let $S \in \binom{[m]}{c}$ with $S=\{i_1, \ldots, i_c\}$ where $i_1 < \ldots < i_c$. 
For each such $S$, $P^c$ contains a pattern $P_S=(X_S,y_S)$ over
variables $V_S=V_{i_1} \times \ldots \times V_{i_c}$, where
for each $p \in [n]$ we have $X_S[p]=(X_{i_1}[p], \ldots, X_{i_c}[p])$
and $y_S=(y_{i_1}, \ldots, y_{i_c})$. Here, we treat each distinct
tuple of variables as a distinct variable for notational convenience
(e.g., instead of a tuple $(x_{a_1}, \ldots, x_{a_c})$ we could
use a single variable name $x_{a_1,\ldots,a_c}$ but that introduces
unreasonable overhead in the proofs to ``extract'' parts of the
variable index). We call the pattern $P^c$ the \emph{$c$th power} of $P$ and the basic question is now when a given promise language is invariant under $P^c$ for some $c \geq 1$. 

\begin{definition}
    Let $(S, T)$ be a promise language over $D$ and let $P$ be a polymorphism pattern. For $c \geq 1$ we say that $P$ \emph{$c$-preserves} $(S,T)$ if (1) $P^c$ is consistent, (2) $|\interpretation{D}{P^c}| = 1$, and (3) and $P^c \in \pattern(S,T)$.
\end{definition}

We remark that while $|\interpretation{D}{P}|  > 1$ can sometimes be useful for patterns (for the multisorted application in Section~\ref{subsec:turan}) it does not seem to be useful for powers of patterns since the power is then trivial. We illustrate these complications with an example.

\begin{example} \label{ex:pattern_power}
    A \emph{majority} polymorphism is a ternary polymorphism $m$
that satisfies $m(x,x,y)=m(x,y,x)=m(y,x,x)=x$ for all $x$.
As a total polymorphism, it guarantees that a language is
\emph{2-decomposable}, i.e., all relations in the language can be
written as intersections of binary constraints (and hence the
non-redundancy is at most quadratic). 
Let $P=\{((x,x,y),x), ((x,y,x),x), ((y,x,x),x)\}$
be the corresponding polymorphism pattern.
Since $\interpretation{D}{P}$ is partial (if $D$ is not Boolean) the presence of $P$ does not imply that a relation is 2-decomposable. However, 
the absence of $P$ guarantees that a relation is not 2-decomposable.
The square of $P$ is the pattern
\[
P^2=  \{(((x,x), (x,y), (y,x)), (x,x)), (((x,y), (x,x), (y,x)), (x,x)),
  (((x,y), (y,x), (x,x)), (x,x))\}\]
  and if we rename variables we obtain
  \[\{((x,y,z), x), ((x,y,z), y), ((x,y,z),z)
  \}.\]
 Hence, $P^2$ is
inconsistent. This agrees well with the fact that the relation 
$a+b+c+d=1$ is 2-fgpp-definable over the equality relation, e.g., using maps $g_1(a,b)=a+b$ and $g_2(c,d)=1-c-d$. Note also that $P^3$ (after renaming variables) yields only the identity $((x,y,z), v)$. Thus, $\interpretation{D}{P^3}$ contains every total, ternary function over $D$, and $P^3$ is thus trivial. 

On the other hand, let \[P=\{((x,x,x,x,y), x), \ldots,((y,x,x,x,x),x)\}\]
be the \emph{near-unanimity} pattern of arity 5 (5-NU).
Similarly as above, the absence of $P$ as a pattern polymorphism
guarantees that a relation is not 4-decomposable.
Then $P^2$ is the polymorphism pattern that defines the 5-ary partial
threshold operation $p(x_1,\ldots,x_5)$ whose output is the majority
value if at least three of its arguments take the same value,
otherwise $p$ is undefined. Thus by Lemma~\ref{lm:pattern-power-consequence},
any relation not preserved by $p$ can 2-fgpp-define a relation $R$
that is not 4-decomposable.

\end{example}

There is a strong similarity between pattern powers and the standard power operation in universal algebra even though the two operations are not equivalent.

\begin{remark} \label{example:3cube}
    Recall that the well-known algebraic $c$-th {\em power} operation of an $n$-ary function $f$ over a domain $D$ produces the function $\mathrm{pow}(f,c)$ on $D^c$ which for any sequence $t^{(1)}, \ldots, t^{(n)} \in D^c$ is defined as $f(t^{(1)}, \ldots, t^{(n)}) = (f(t^{(1)}_1, \ldots, t^{(n)}_1), \ldots, f(t^{(1)}_c, \ldots, t^{(n)}_c))$. For a partial function $p$ we can similarly define its power $\mathrm{pow}(p,c)$ to be defined if and only if each application of $p$ is defined. Then, if $p \in \interpretation{D}{P}$ for a pattern $P$, one can verify that for any $c \geq 1$ $\mathrm{pow}(p,c)$ is a {\em superfunction} of some function  in $\interpretation{D^c}{P}$ in the sense that it satisfies $P$ over $D^c$ but might be defined for some additional tuples. However, perhaps contrary to intuition, $\mathrm{pow}(p,c) \in \interpretation{D^c}{P^c}$ does in general {\em not} hold. To see this, consider e.g. the identities defining the ternary {\em cube} (also called {\em 3-universal} in Lagerkvist and Wahlstr\"om~\cite{lagerkvist2021Coarse})
    \[U_3 = 
\left\{
\begin{array}{c}
((x,x,x,y,y,y,y), x), \\
((x,y,y,x,x,y,y), x), \\
((y,x,y,x,y,x,y), x) \\
\end{array}
\right\}.
\]
It is easy to verify that $\interpretation{\{0,1\}}{U_3} = \{u_3\}$ for a uniquely defined partial operation $u_3$ and that $\mathrm{pow}(u_3, 2)$ satisfies $U_3$ over $\{0,1\}^2$. However, if we e.g.\ consider $U^2_3$ then we obtain a system of identities equivalent to 

\[
\left\{
\begin{array}{c}
(((x,x), (x,y), (x,y), (y,x), (y,x), (y,y), (y,y)), (x,x)), \\
(((x,y), (x,x), (x,y), (y,x), (y,y), (y,x), (y,y)), (x,x)), \\
(((x,y), (y,x), (y,y), (x,x), (x,y), (y,x), (y,y)), (x,x)) \\
\end{array}
\right\}
\] 
over variables $\{(x,x), (x,y), (y,x), (y,y)\}$, and we see that $\mathrm{pow}(u_3, 2)$ does not satisfy these identities over $\{0,1\}^2$ since it e.g. is undefined for the instantiation $\alpha((x,x)) = (0,1), \alpha((x,y)) = (0,0)$, $\alpha((y,x)) = (1,0)$, $\alpha((y,y)) = (1,1)$ since
\[
\begin{aligned}
(\alpha((x,x)), \alpha((x,y)), \alpha((x,y)), \alpha((y,x)), \alpha((y,x)), \alpha((y,y)), \alpha((y,y))) \\
= ((0,1), (0,0), (0,0), (1,0), (1,0), (1,1), (1,1))
\end{aligned}
\]
but $u^2_3((0,1), (0,0), (0,0), (1,0), (1,0) (1,1), (1,1))$ is undefined.
\end{remark}

We now prove that $P^c$ and $P$ commute in the following sense.

\begin{definition}
    Let $f_1$ and $f_2$ be $n$-ary partial functions over $D_1$ and $D_2$, respectively. We say that $f_1,f_2$ {\em commute} with a $c$-ary $g \colon D_1^c \to D_2$ if 
    \[f_2(g(x^1_1, \ldots, x^c_1), \ldots, g(x^1_n, \ldots, x^c_n)) = g(f_1(x^1_1, \ldots, x^1_n), \ldots, f_1(x^c_1, \ldots, x^c_n))\] and is defined for all $(x^1_1, \ldots, x^1_n), \ldots, (x^c_1, \ldots, x^c_n) \in \dom(f_1)$.
\end{definition}

\begin{proposition} \label{prop:commutes_c_ary}
    Let $P$ be an $n$-ary polymorphism pattern, let $c \geq 1$ such that $P^c$ is consistent, let $D_1$ and $D_2$ be finite sets. For every $f_1 \in \interpretation{D_1}{P}$ there exists $f_2 \in \interpretation{D_2}{P^c}$ such that $f_1, f_2$ commute with every $c$-ary $g \colon D^c_1 \to D_2$.
\end{proposition}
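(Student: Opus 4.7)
The plan is to generalize Proposition~\ref{prop:commutes} from the unary case ($c=1$) to arbitrary $c \geq 1$. Given $f_1 \in \interpretation{D_1}{P}$, I would construct the partial function $f_2$ on $D_2^n$ directly from the commutation requirement: for any tuple $(y_1,\ldots,y_n)$ of the form $y_p = g(x^1_p,\ldots,x^c_p)$ for some $g \colon D_1^c \to D_2$ and some $(x^1,\ldots,x^c) \in \dom(f_1)^c$, declare $f_2(y_1,\ldots,y_n) := g(f_1(x^1),\ldots,f_1(x^c))$, and on the remaining inputs required by $P^c$ set $f_2$ to any permissible value.

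To check that this $f_2$ lies in $\interpretation{D_2}{P^c}$, I would first align the commutation data with the structure of $P^c$. Writing $x^j = \beta_j(X_{i_j}[\cdot])$ for indices $i_j \in [m]$ and maps $\beta_j \colon V_{i_j} \to D_1$, and focusing on the case where $i_1 < \ldots < i_c$ are distinct, the assignment $\alpha(v_1,\ldots,v_c) := g(\beta_1(v_1),\ldots,\beta_c(v_c))$ on $V_S$ (with $S=\{i_1,\ldots,i_c\}$) rewrites the input to $f_2$ as $(\alpha(X_S[p]))_{p=1}^n$, precisely the form dictated by $P_S$. When $y_S \in X_S$, so that $y_{i_j} \in X_{i_j}$ for every $j$, the identities of $P$ force $f_1(x^j) = \beta_j(y_{i_j})$, from which $g(f_1(x^1),\ldots,f_1(x^c)) = \alpha(y_S)$ agrees with the value that $P_S$ mandates; when $y_S \notin X_S$ the pattern imposes no value at that input, so there is no conflict with the pattern side.

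The crux of the argument, and the step I expect to be the main obstacle, is well-definedness: two distinct pairs $(g,(x^j))$ and $(g',(\tilde x^j))$ can produce the same input $(y_1,\ldots,y_n) \in D_2^n$, and I would need to show that $g(f_1(x^1),\ldots,f_1(x^c)) = g'(f_1(\tilde x^1),\ldots,f_1(\tilde x^c))$. My plan is to construct a common refinement by enlarging the ambient index set to $\{i_1,\ldots,i_c\} \cup \{\tilde i_1,\ldots,\tilde i_c\}$ and defining a single higher-arity function that agrees with both $g$ and $g'$ on the relevant coordinates, then exploit the identities of $P$ together with the fact that $f_1$ is a genuine partial function to collapse the two computations into the same value. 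Degenerate cases where some $i_j$'s coincide or mix with the $\tilde i_j$'s would be handled separately, typically by shrinking $S$ and invoking the corresponding lower power $P^{|S|}$. Once well-definedness is in hand, membership $f_2 \in \interpretation{D_2}{P^c}$ follows from the first step together with arbitrary extension at the unconstrained inputs, and the commutation with every $g$ is built into the construction.
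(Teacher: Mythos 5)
Your approach reverses the paper's: they take an arbitrary $f_2 \in \interpretation{D_2}{P^c}$ (nonempty since $P^c$ is consistent) and verify commutation at each tuple $(x^1,\ldots,x^c)\in\dom(f_1)^c$ by locating a witnessing pattern $P_S\in P^c$ and reusing the unary argument of Proposition~\ref{prop:commutes}; you build $f_2$ directly from the commutation requirement and then check membership in $\interpretation{D_2}{P^c}$. The reversal matters. On the paper's route, at any input whose value $P^c$ already forces (where the output variable $y_S$ occurs among the positions $X_S[p]$), $f_2$ is pinned down by the pattern, and commutation is just a check that the two computations agree; there is no separate well-definedness question. On your route, showing that the commutation-defined function is single-valued is itself the whole proposition.

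That step, which you rightly flag as the crux, is where I think the plan is at risk. When a pattern $P_S=(X_S,y_S)$ of $P^c$ has a free output ($y_S$ absent from the $X_S[p]$), $P^c$ places no constraint at all on $f_2$ at the matching inputs, so the commutation formula alone must pin the value down. But two decompositions $(g,(x^j))$ and $(g',(\tilde x^j))$ of the same input constrain $g,g'$ only at the $c$-tuples $(x^1_p,\ldots,x^c_p)$ that actually appear among the coordinates; their values at the respective ``output'' tuples $(f_1(x^1),\ldots,f_1(x^c))$ and $(f_1(\tilde x^1),\ldots,f_1(\tilde x^c))$ are otherwise free, so the two prescribed values for $f_2$ can genuinely disagree. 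This is not an exotic corner case: already for the Mal'tsev pattern $P=\{((x,x,y),y),((y,x,x),y)\}$ with $c=2$, the unique pattern of $P^2$ has free output $y_S=(y,y)$. Your common-refinement sketch does not close this, and I would work that example through with $|D_2|\geq 3$ before relying on it. (The paper's own proof dispatches the free-output case with the remark ``we can choose $f_2$ so that the criterion is satisfied'', so the subtlety is present at the source as well.)
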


\begin{proof}
    Let $P_1, \ldots, P_m$ be an enumeration of the elements in $P$ and let $f_1 \in \interpretation{D_1}{P}$.
Assume that $(x^1_1, \ldots, x^1_n), \ldots, (x^c_1, \ldots, x^c_n) \in \dom(f_1)$, and let $P_{i_1}, \ldots, P_{i_c}$ be witnessing patterns (where there may be multiple matching patterns for each $P_{i_j}$, but the value of $f_1(x^j_1, \ldots, x^j_n)$ is invariant under the choice of pattern, since otherwise $P$ would be inconsistent and we could not choose $f_1 \in \interpretation{D_1}{P}$). 
This corresponds to the pattern $P'=P_{\{i_1, \ldots, i_c\}} \in P^c$; note that for every function $g \colon D_1^c \to D_2$ the tuple $(g(x_1^1, \ldots, x_1^c), \ldots, g(x_n^1, \ldots, x_n^c)) \in D_2^n$ matches the pattern $P'$. Now, either $P'=(X,y)$ is a pattern where $y$ does not occur in $X$, in which case we can choose $f_2 \in \interpretation{D_2}{P^c}$ so that the criterion is satisfied,
or otherwise $f_2(g(x^1_1, \ldots, x^1_c), \ldots, g(x^n_1, \ldots, x^n_c))$ is an instantiation of $P_{\{i_1, \ldots, i_c\}}$ and \[g(f_1(x^1_1, \ldots, x^1_n), \ldots, f_1(x^n_1, \ldots, x^n_c)) = f_2(g(x^1_1, \ldots, x^1_c), \ldots, g(x^n_1, \ldots, x^n_c)\]
by a similar reasoning to Proposition~\ref{prop:commutes}.
\end{proof}

\begin{proposition} \label{prop:if_fgpp_then_pattern}
    Let $P$ be an $n$-ary polymorphism pattern and let $c \geq 1$. Let $(S_1, T_1)$ and $(S_2, T_2)$ be promise languages over $D_1$ and $D_2$.
    If $(S_1, T_1)$ is $c$-fgppp-definable over $(S_2, T_2)$, and $(S_2, T_2)$ is $c$-preserved by $P$, then $(S_1, T_1)$ is preserved by $P$. 
\end{proposition}

\begin{proof}
    We only consider the case when $(S_1, T_1)$ is $c$-fgpp-definable since relaxations follow via arguments completely analogous to those in Lemma~\ref{lemma:is_qfpp_closed}.
    Thus, let 
    \[S_1(x_1, \ldots, x_r) \equiv \varphi^{S_2}(g_1(x_{1_1}, \ldots, x_{1_c}), \ldots, g_k(x_{k_1}, \ldots, x_{k_c}))\] 
    and 
    \[T_1(x_1, \ldots, x_r) \equiv \varphi^{T_2}(g_1(x_{1_1}, \ldots, x_{1_c}), \ldots, g_k(x_{k_1}, \ldots, x_{k_c}))\] 
    for guarding function maps $g_i \colon D^c_1 \to D_2$. Let 
    \[S'_1(y_1, \ldots, y_k) \equiv \varphi^{S_2}(y_1, \ldots, y_k)\] 
    and 
    \[T'_1(y_1, \ldots, y_k) \equiv \varphi^{T_2}(y_1, \ldots, y_k).\] 
    Then $(S'_1, T'_1)$ is qfpp-definable over $(S_2, T_2)$ and is by Theorem~\ref{thm:galois} thus invariant under $\interpretation{D_2}{P}$.

    Assume that there exists $P_1^{D_1} \in \interpretation{D_1}{P}$ and $t^{(1)}, \ldots, t^{(n)} \in S_1$ such that $P^{D_1}(t^{(1)}, \ldots, t^{(n)}) = t \notin T_1$. 
    We observe that for each $t^{(i)}$, $1 \leq i \leq n$, we must have 
    \[(g_1(t^{(i)}_{1_1}, \ldots, t^{(i)}_{1_c}), \ldots, g_k(t^{(i)}_{k_1}, \ldots, t^{(i)}_{k_c})) \in S'_1.\] 
    Proposition~\ref{prop:commutes_c_ary} then implies that there exists $P^{D_2} \in \interpretation{D_2}{P}$ such that $P^{D_1}$ and $P^{D_2}$ commute with all $c$-ary guarding functions from $D_1$ to $D_2$. This, in turn, implies that 
     \begin{equation}
    \begin{aligned}
    & \big( P^{D_2}\big(g_1(t^{(1)}_{1_1}, \ldots, t^{(1)}_{1_c}), \ldots, g_1(t^{(n)}_{1_1}, \ldots, t^{(n)}_{1_c})\big), \ldots,  
    P^{D_2}\big(g_k(t^{(1)}_{k_1}, \ldots, t^{(1)}_{k_c}), \ldots, g_k(t^{(n)}_{k_1}, \ldots, t^{(n)}_{k_c})\big) \big) \\
    &= \big(g_1(P^{D_1}(t^{(1)}_{1_1}, \ldots, t^{(n)}_{1_1}), \ldots, P^{D_1}(t^{(1)}_{1_c}, \ldots, t^{(n)}_{1_c})), \ldots, g_k(P^{D_1}(t^{(1)}_{k_1}, \ldots, t^{(n)}_{k_1}), \ldots, P^{D_1}(t^{(1)}_{k_c}, \ldots, t^{(n)}_{k_c}))\big) \\
    &= \big(g_1(t_{1_1}, \ldots, t_{1_c}), \ldots, g_k(t_{k_1}, \ldots, t_{k_c})\big) \notin T'_1.
    \end{aligned}
    \end{equation}
    This contradicts the assumption that $P^{D_2}$ preserves $(S'_1, T'_1)$. 
\end{proof}

Next, we prove a converse result which shows that we can $c$-fgppp-define $(S_1, T_1)$ if it is invariant under $P$ whenever $(S_2, T_2)$ is invariant under $P^c$. For simplicity we present it for individual relations but the generalization to promise constraint languages is effortless.

\begin{lemma} \label{lemma:can_fgppp_define}
Let $(S_1, T_1)$, $(S_2, T_2)$ be promise relations over $D_1$, $D_2$. If $P \in \pattern{(S_1, T_1)}$  whenever $P$ $c$-preserves $(S_2, T_2)$ then $(S_2, T_2)$ $c$-fgppp-defines $(S_1, T_1)$.
\end{lemma}

\begin{proof}
  Assume that $P \in \pattern(S_1, T_1)$ whenever $P^c \in \pattern(S_2, T_2)$, for a consistent pattern $P^c$.
  We begin by introducing some notation.

    \begin{enumerate}
        \item 
        Let $|S_1| = m$ and let $t_1, \ldots, t_m \in S_1$ be an enumeration of the tuples in $S_1$.
        \item 
        Let $r_1 \geq 1$ be the arity of $S_1$ and $r_2 \geq 1$ the arity of $S_2$. For each $1 \leq i \leq r_1$ we let $c^i = (t^{(1)}_{i}, \ldots, t^{(m)}_{i})$ be the $i$th column of $S_1$ under the enumeration $t_1, \ldots, t_m$. We assume without loss of generality that $|\{c^1, \ldots, c^{r_1}\}| = r_1$ (all columns are distinct) since we can otherwise qfppp-define relations of lower arity by identifying arguments. These relations can, in turn, qfppp-define the original relation by equality constraints.
    \end{enumerate}

Consider the $r_1$-ary relations $S'_1$ and $T'_1$ defined via $c$-fgpp-formulas $\varphi^{S_2}(x_1, \ldots, x_{r_1})$ and $\varphi^{T_2}(x_1, \ldots, x_{r_1})$ over variables $x_1, \ldots, x_{r_1}$ where we for every sequence $s^{(1)}, \ldots, s^{(m)} \in S_2$, with columns \[d^1 = (s^{(1)}_{1}, \ldots, s^{(m)}_{1}), \ldots, d^{r_2} = (s^{(1)}_{r_2}, \ldots, s^{(m)}_{r_2})\]
add

\begin{enumerate}
    \item 
    $S_2(g_1(x_{h(1, 1)}, \ldots, x_{h(1,c)}), \ldots, g_{r_2}(x_{h(r_2, 1)}, \ldots, x_{h(r_2, c)}))$ to $\varphi^{S_2}$, and 
    \item 
    $T_2(g_1(x_{h(1, 1)}, \ldots, x_{h(1,c)}), \ldots, g_{r_2}(x_{h(r_2, 1)}, \ldots, x_{h(r_2, c)}))$ to $\varphi^{T_2}$, 
\end{enumerate}
    if there exist $g_1, \ldots, g_{r_2} \colon D^c_1 \to D_2$ and $h \colon [r_2] \times [c] \to [r_1]$ such that $g_1(c^{h(1,1)}, \ldots, c^{h(1,c)}) = d^1, \ldots, g_{r_2}(c^{h(r_2, 1)}, \ldots, c^{h(r_2, c)}) = d^{r_2}$. 

Similarly, we add
    the equality constraint \[g_1(x_{h(1, 1)}, \ldots, x_{h(1,c)}) = g_{2}(x_{h(2, 1)}, \ldots, x_{h(2, c)})\] to $\varphi^{S_2}$ and $\varphi^{T_2}$ if there are  $g_1, g_2 \colon D^c_1 \to D_2$ and $h \colon [2] \times [c] \to [r_1]$ such that $g_1(c^{h(1,1)}, \ldots, c^{h(1,c)}) = g_{2}(c^{h(2, 1)}, \ldots, c^{h(2, c)})$. The equality constraints are necessary to make sure that the resulting pattern can be defined in a consistent way. 
    
    We then view $c^1, \ldots, c^{r_1}$ as an incomplete pattern $P'$ by viewing $D_1$ as a set of variables $V$ and each $c^i$ as a tuple of variables. Say that a pattern $P$ over $V$ is an {\em instantiation}  of $P'$ if $((x_1, \ldots, x_{m}), y) \in P$ for some $y \in V$ if and only if $(x_1, \ldots, x_m) \in P'$. Then the following statements are true.
    \begin{enumerate}
        \item 
        If $t \in S'_1$ then there is an instantiation $P$ of $P'$ such that  $P^c$ is consistent and $P^c \in \pattern(S_2, S_2)$ and such that $t$ (viewed as an $m$-ary function) is a subfunction of a partial function in $\interpretation{D_1}{P}$. To see this, we construct the pattern $P$ from $P'$ by letting it contain $(c^i, t[i])$ for each $1 \leq i \leq r_1$. We first claim that $P$ does not contain any pattern $(c^i, c)$, for $c \notin \{c^i_1, \ldots, c^i_{m}\}$. To see this, consider maps $h_c,h_c'$ defined such that $h_c(c^i) = h'_c(c^i) = c^i$ but such that $h_c(c) \neq h'_c(c)$. Then the above construction will contain $h_c(x_i) = h'_c(x_i)$ (or, to be precise, $c$-ary maps defined such that $h_c(x_i, \ldots, x_i) = h'_c(x_i, \ldots, x_i)$) and will not be satisfied by $t$. Similarly, assume that $P$ is inconsistent, i.e., that there are $(c^i, a), (c^j, b) \in P$ and maps $h, h' \colon D_1 \to D_1$ such that $h(c^i) = h'(c^j)$ but such that $h(a) \neq h(b)$. It again follows that the above construction contains an equality constraint that forbids this possibility. Entirely analogous arguments can be used for $P^c$ to show that it must be consistent.
        \item 
        If $t \in T'_1$ then there is an instantiation $P$ of $P'$ where $P^c$ is consistent and $P^c \in \pattern(S_2, T_2)$ such that $t$ (viewed as an $m$-ary function) is a subfunction of a partial function in $\interpretation{D_1}{P}$. The arguments are entirely analogous to the case above.
      \end{enumerate}
    Hence, if there exists $f \in T'_1$ where $f \notin T_1$ then there exists $m$-ary $P^c \in \pattern(S_2, T_2)$ such that $f$ is a subfunction of a partial function in $\interpretation{D_1}{P}$ but where $f \notin \pPol(S_1, T_1)$.

    We conclude that $T'_1 = T_1, S_1 \subseteq S'_1$, and that $(S_1, T_1)$ is fgppp-definable via a strict relaxation of $(S'_1, T'_1)$.
\end{proof}

For $c = 1$ we obtain the following precise characterization which  generalizes Carbonnel~\cite{carbonnel2022Redundancy} in the promise setting. 

\begin{theorem}
\label{thm:fgppp-pattern}
Let $(S_1, T_1)$ be a promise relation over $D_1$ and $(S_2, T_2)$ a promise relation over $D_2$. Then $\pattern(S_2, T_2) \subseteq \pattern(S_1, T_1)$ if and only if $(S_2, T_2)$ fgppp-defines $(S_1, T_1)$.
\end{theorem}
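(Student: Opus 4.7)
The plan is to derive Theorem~\ref{thm:fgppp-pattern} as an immediate specialization of the general machinery developed in Proposition~\ref{prop:if_fgpp_then_pattern} and Lemma~\ref{lemma:can_fgppp_define} to the case $c=1$. The key preliminary observation I would verify first is that $P^1$ essentially coincides with $P$: unpacking the power construction, for $c=1$ the index sets $S \in \binom{[m]}{1}$ are singletons $\{i\}$, the associated variable product $V_{\{i\}}$ is simply $V_i$, and the identity $P_{\{i\}}$ is $(X_i, y_i) = P_i$. Hence $P^1$ is (up to renaming) the same pattern as $P$, and in particular $P^1$ is consistent iff $P$ is, and $P^1 \in \pattern(S, T)$ iff $P \in \pattern(S, T)$ for any promise relation $(S,T)$.

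For the direction $(\Leftarrow)$, suppose $(S_2, T_2)$ fgppp-defines $(S_1, T_1)$. This is exactly $c$-fgppp-definability with $c = 1$. Given any $P \in \pattern(S_2, T_2)$, we have $P^1 \in \pattern(S_2, T_2)$ by the observation above, so Proposition~\ref{prop:if_fgpp_then_pattern} (taking $c = 1$) yields $P \in \pattern(S_1, T_1)$. Thus $\pattern(S_2, T_2) \subseteq \pattern(S_1, T_1)$.

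For the direction $(\Rightarrow)$, assume $\pattern(S_2, T_2) \subseteq \pattern(S_1, T_1)$ and apply Lemma~\ref{lemma:can_fgppp_define} with $c = 1$. Its hypothesis requires that $P \in \pattern(S_1, T_1)$ whenever $P^1$ is consistent and $P^1 \in \pattern(S_2, T_2)$; under our identification of $P^1$ with $P$ this amounts to the assumed inclusion restricted to consistent patterns, which is immediate. The lemma then produces a $1$-fgppp-definition, i.e., an fgppp-definition, of $(S_1, T_1)$ from $(S_2, T_2)$.

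I do not anticipate a genuine obstacle here, since all heavy lifting has been done by Proposition~\ref{prop:if_fgpp_then_pattern} and Lemma~\ref{lemma:can_fgppp_define}. The only thing to be careful about is the bookkeeping identifying $P$ with $P^1$: I would make this explicit (either as a small lemma or inline) so that the two cited results really do apply verbatim in both directions, and so that the consistency condition in Lemma~\ref{lemma:can_fgppp_define} is handled correctly (inconsistent patterns vacuously lie in every $\pattern(\cdot,\cdot)$ and impose no constraint on the fgppp-definition).
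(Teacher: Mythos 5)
Your proposal is correct and takes essentially the same approach the paper intends: Theorem~\ref{thm:fgppp-pattern} is stated in the paper without an explicit proof precisely because it is the $c=1$ specialization of Proposition~\ref{prop:if_fgpp_then_pattern} and Lemma~\ref{lemma:can_fgppp_define}, and your bookkeeping identifying $P^1$ with $P$ (and noting that inconsistent patterns lie vacuously in every $\pattern(\cdot,\cdot)$) is exactly what is needed to make that specialization rigorous.
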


For $c \geq 1$ we for a promise relation $(S,T)$ let \[\pattern^c(S,T) = \{P \mid P \text{ is a pattern},  P \, c\text{-preserves } (S,T)\}.\] 
Hence, $\pattern^{c}(S,T)$ is the set of patterns $P$ where $P^c$ is consistent and preserves $(S,T)$. We obtain the following.

\begin{theorem}
\label{thm:fgppp-pattern_c}
Let $(S_1, T_1)$ be a promise relation over $D_1$ and $(S_2, T_2)$ a promise relation over $D_2$, and let $c \geq 1$. Then $\pattern^{c}(S_2, T_2) \subseteq \pattern(S_1, T_1)$ if and only if $(S_2, T_2)$ $c$-fgppp-defines $(S_1, T_1)$.
\end{theorem}

For multisorted patterns we concentrate on $c = 1$ and obtain the following characterization. 

\begin{corollary}
        Let $Q$ be a set of multisorted polymorphism patterns. If $(S_1, T_1)$ over disjoint $D_1, \ldots, D_r$ is fgppp-definable over $(S_2, T_2)$ over disjoint $E_1, \ldots, E_s$ which is preserved by $Q$, then $(S_1, T_1)$ is preserved by $Q$ ($(S_1, T_1) \in \inv(Q)$).
\end{corollary}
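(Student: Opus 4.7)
The plan is to execute a multisorted reprise of Proposition~\ref{prop:if_fgpp_then_pattern} with $c = 1$. Since fgppp-definability decomposes into strict relaxations followed by fgpp-definitions, and the relaxation step is handled by the symmetric argument used in the relaxation clause of Lemma~\ref{lemma:is_qfpp_closed} (shrinking $S$-parts and enlarging $T$-parts cannot create new non-preservation witnesses), I would reduce to treating a single fgpp-definition
\begin{align*}
    S_1(x_1,\ldots,x_{r_1}) &\equiv \varphi^{S_2}\bigl(g_1(x_{i_1}),\ldots,g_k(x_{i_k})\bigr), \\
    T_1(x_1,\ldots,x_{r_1}) &\equiv \varphi^{T_2}\bigl(g_1(x_{i_1}),\ldots,g_k(x_{i_k})\bigr),
\end{align*}
where each guarding map $g_\ell\colon D_{a_\ell}\to E_{b_\ell}$ is unary and joins a sort of $(S_1,T_1)$ to a sort of $(S_2,T_2)$.

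Fix a multisorted pattern $P \in Q$ whose components on the relevant sorts I denote $P_{D_i}$ and $P_{E_j}$. Mirroring the auxiliary pair $(S'_1,T'_1)$ introduced inside Proposition~\ref{prop:if_fgpp_then_pattern}, I would form a multisorted intermediate pair over $E_1,\ldots,E_s$ obtained directly from $\varphi$ by replacing every guarded term $g_\ell(x_{i_\ell})$ with a fresh variable of sort $E_{b_\ell}$. This intermediate pair is qfpp-definable from $(S_2,T_2)$, so the multisorted version of Lemma~\ref{lemma:is_qfpp_closed} shows it is preserved by the $E$-restriction $(P_{E_1},\ldots,P_{E_s})$ of $P$.

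The core step then transports preservation back through the guards. Given $t^{(1)},\ldots,t^{(n)} \in S_1$ and any choice of $p_{D_i} \in \interpretation{D_i}{P_{D_i}}$, the multisorted analogue of Proposition~\ref{prop:commutes}, applied once per guarding map $g_\ell$ with source pattern $P_{D_{a_\ell}}$, yields a realization $p_{E_{b_\ell}} \in \interpretation{E_{b_\ell}}{P_{E_{b_\ell}}}$ commuting with $g_\ell$ against $p_{D_{a_\ell}}$. Applying the $p_{D_i}$'s coordinate-wise to the $t^{(j)}$'s produces a tuple $t$, and the commutation identity converts each atom of $\varphi^{T_2}$ evaluated at the guarded values of $t$ into the image under the $p_{E_{b_\ell}}$'s of the corresponding guarded values of the $t^{(j)}$'s. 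By the preservation of the intermediate pair established in the previous paragraph, this image lies in the $T$-part of that pair, so $t$ satisfies $\varphi^{T_2}$ and hence $t \in T_1$.

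The main obstacle I anticipate is the coherent assembly of the per-guard realizations $p_{E_{b_\ell}}$ into a single global choice of $(P_{E_1},\ldots,P_{E_s})$: distinct guards sharing a target sort $E_j$ could a priori impose incompatible specifications of $p_{E_j}$. I expect this to be resolved exactly as in the proof of Proposition~\ref{prop:commutes}: the only non-determinism inside $\interpretation{E_j}{P_{E_j}}$ resides in pattern identities of the form $((x_1,\ldots,x_n),y)$ with $y$ a fresh right-hand variable, where the output value may be chosen freely, so the separate commutation requirements can be reconciled into a single valid realization without conflict, completing the proof.
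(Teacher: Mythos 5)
Your plan matches the route the paper leaves implicit — the corollary is meant to be the multisorted, $c = 1$ specialization of Proposition~\ref{prop:if_fgpp_then_pattern}: reduce to a single fgpp step plus strict relaxations, form the un-guarded auxiliary pair over $E_1,\ldots,E_s$ (which is qfpp-definable from $(S_2,T_2)$ and hence preserved by the $E$-side of $P$), and pull preservation back through the guards via commutation. You correctly identify the nontrivial step, namely the assembly of the per-guard commutations into one global tuple $(p_{E_1},\ldots,p_{E_s})$, but the reconciliation you propose is an assertion rather than an argument, and the difficulty is real.

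Consider two guards $g_\ell,g_{\ell'}$ with the same target sort $E_j$ that feed $p_{E_j}$ the \emph{same} input tuple, i.e.\ $(g_\ell(t^{(1)}_{i_\ell}),\ldots,g_\ell(t^{(n)}_{i_\ell})) = (g_{\ell'}(t^{(1)}_{i_{\ell'}}),\ldots,g_{\ell'}(t^{(n)}_{i_{\ell'}}))$. A single function $p_{E_j}$ returns one value on that tuple, while the two commutation equations demand it equal both $g_\ell\bigl(p_{D_{a_\ell}}(t^{(1)}_{i_\ell},\ldots,t^{(n)}_{i_\ell})\bigr)$ and $g_{\ell'}\bigl(p_{D_{a_{\ell'}}}(t^{(1)}_{i_{\ell'}},\ldots,t^{(n)}_{i_{\ell'}})\bigr)$. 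Once a fresh-output identity is present, those source-side outputs are free constants that $g_\ell$ and $g_{\ell'}$ can map to distinct elements of $E_j$ even while agreeing on every $t^{(i)}$-coordinate, so the two demands can genuinely conflict. Saying ``the output value may be chosen freely'' only records that $\interpretation{E_j}{P_{E_j}}$ is multivalued; it does not show two incompatible requirements on the same input are jointly satisfiable. A second, related point: the ``multisorted analogue of Proposition~\ref{prop:commutes}'' you invoke commutes interpretations of the \emph{same} pattern over two domains, so the step only applies verbatim when $P_{D_{a_\ell}} = P_{E_{b_\ell}}$; the definition of multisorted fgpp-definitions does not impose this compatibility, and you do not verify it. To make the proof go through, you should either exhibit the violating application of the pattern on $(S'_1,T'_1)$ guard by guard, without first committing to one global family of realizations, or argue that the fgppp-definitions arising in the corollary never create the collision above (e.g.\ because each target sort is hit by exactly one guard built from a distinct fresh variable, as in the unit-pattern setting of Section~\ref{sec:hypergraph-turan}).
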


Note that this implies that $\inv(Q)$, when $Q$ is a set of multisorted patterns, is closed under fgppp-definitions.

\begin{corollary} \label{cor:pattern_galois}
Let $(S_1, T_1)$ be a multisorted promise relation over disjoint $D_1, \ldots, D_r$ and $(S_2, T_2)$ a multisorted promise relation over disjoint $E_1, \ldots, E_s$. Then $\mpattern(S_2, T_2) \subseteq \mpattern(S_1, T_1)$ if and  only if $(S_2, T_2)$ fgppp-defines $(S_1, T_1)$.
\end{corollary}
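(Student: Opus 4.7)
The plan is to adapt Proposition~\ref{prop:if_fgpp_then_pattern} and Lemma~\ref{lemma:can_fgppp_define}, specialized to $c=1$, to the multisorted setting, where each function symbol in an fgpp-atom must respect sorts. The $c=1$ case avoids the machinery of powers $P^c$ entirely, and the single-sorted statement (Theorem~\ref{thm:fgppp-pattern}) has already been established; what remains is essentially a careful stratification by sort.

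The $(\Leftarrow)$ direction is immediate from the corollary stated just before the claim: taking $Q = \mpattern(S_2, T_2)$, we have $(S_2, T_2) \in \inv(Q)$ by definition, and since $\inv(Q)$ is closed under fgppp-definitions we conclude $(S_1, T_1) \in \inv(Q)$, i.e.\ every $P \in \mpattern(S_2, T_2)$ preserves $(S_1, T_1)$. For the $(\Rightarrow)$ direction, let $\sigma_1 \colon [r_1] \to [r]$ and $\sigma_2 \colon [r_2] \to [s]$ be the signatures of $(S_1, T_1)$ and $(S_2, T_2)$, enumerate $S_1 = \{t^{(1)}, \ldots, t^{(m)}\}$, and let $c^i = (t^{(1)}_i, \ldots, t^{(m)}_i)$ denote its $i$-th column (a tuple with entries of sort $D_{\sigma_1(i)}$). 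I would build fgpp-formulas $\varphi^{S_2}(x_1, \ldots, x_{r_1})$ and $\varphi^{T_2}(x_1, \ldots, x_{r_1})$, where $x_i$ is a variable of sort $D_{\sigma_1(i)}$, as follows: for every sequence $s^{(1)}, \ldots, s^{(m)} \in S_2$ with column vectors $d^j$ of sort $E_{\sigma_2(j)}$, every index map $h \colon [r_2] \to [r_1]$, and every sort-respecting tuple of unary maps $g_j \colon D_{\sigma_1(h(j))} \to E_{\sigma_2(j)}$ with $g_j(c^{h(j)}) = d^j$ componentwise, include the atom $S_2(g_1(x_{h(1)}), \ldots, g_{r_2}(x_{h(r_2)}))$ in $\varphi^{S_2}$ and its analogue in $\varphi^{T_2}$. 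Let $S'_1, T'_1$ be the resulting relations; evidently $S_1 \subseteq S'_1$ and $T_1 \subseteq T'_1$.

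It suffices to prove $T'_1 \subseteq T_1$, so that $(S_1, T_1)$ is recovered from $(S'_1, T'_1)$ by a strict relaxation. Given any $t \in T'_1$ with witness data as above, the columns of $t^{(1)}, \ldots, t^{(m)}$ grouped sort by sort define an incomplete multisorted pattern whose $i$-th coordinate consists of the columns lying in sort $D_i$; membership of $t$ then produces, exactly as in Lemma~\ref{lemma:can_fgppp_define}, a bona fide multisorted pattern $P = (P_1, \ldots, P_r) \in \mpattern(S_2, T_2)$ together with subfunctions in each $\interpretation{D_i}{P_i}$ whose combined application to $t^{(1)}, \ldots, t^{(m)}$ yields $t$. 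By hypothesis $P \in \mpattern(S_1, T_1)$, which forces $t \in T_1$, as desired.

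The main obstacle is notational sort-bookkeeping: one must check at every step that the per-sort components $P_i$ of a multisorted pattern operate only on sort $D_i$, that the functions $g_j$ are sort-respecting maps $D_{\sigma_1(h(j))} \to E_{\sigma_2(j)}$, and that no atoms are introduced that would conflate disjoint sorts. Given the disjointness of $D_1, \ldots, D_r$ this is conceptually routine, but it requires that the single-sorted bookkeeping in Proposition~\ref{prop:if_fgpp_then_pattern} and Lemma~\ref{lemma:can_fgppp_define} be lifted coordinate by coordinate.
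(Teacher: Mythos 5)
Your proposal is correct and follows the same approach the paper intends: the $(\Leftarrow)$ direction is an immediate application of the preceding corollary on closure of $\inv(Q)$ under fgppp-definitions, and the $(\Rightarrow)$ direction is the $c=1$ case of Lemma~\ref{lemma:can_fgppp_define} with the column-enumeration construction lifted sort-by-sort. The paper leaves this as an unproven corollary precisely because the argument is the routine multisorted stratification you describe, so no substantive gap remains.
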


Finally, the following statement is useful for constructing a canonical relation not invariant under a given pattern (power) (cf. Lemma~\ref{lemma:pippenger1}). We state it in the non-multisorted setting but it naturally extends to this setting, too.

\begin{lemma} \label{lm:pattern-power-consequence}
  Let $P$ be a polymorphism pattern and $c \in \mathbb{N}$ such that
  $P^c$ is consistent and $|\interpretation{D}{P^c}|=1$ for every domain $D$. 
  Let $(S_1,T_1)$ be a promise relation such that $P \notin \pattern^c(S_1,T_1)$. 
  Then $(S_1,T_1)$ $c$-fgpp-defines a promise relation $(S_2,T_2)$ such that $P \notin \pattern(S_2,T_2)$.
\end{lemma}
\begin{proof}
  Let $P$ be an $n$-ary pattern and $c \in \mathbb{N}$ such that $P^c$ is consistent and $P \notin \pattern^c(S_1,T_1)$
  for a promise relation $(S_1,T_1)$ of arity $r$ over some domain $D$. That is, $P^c \notin \pattern(S_1,T_1)$.
  Let $p \in \interpretation{D}{P^c}$ and let $t^1, \ldots t^n \in S_1$ be such that
  $p(t^1,\ldots,t^n)=u$ is defined and $u \notin T_1$.
  For every $i \in [r]$, let $(X^i,y^i) \in P^c$ be a pair that matches $((t_i^1, \ldots, t_i^n), u_i)$.
  Let $y^i=X_{n_i}^i$; this is defined since $|\interpretation{D}{P^c}|=1$.
  Furthermore, let $(X^i,y^i)$ be the combination of patterns
  $(X^{i,1},y^{i,1}), \ldots, (X^{i,c},y^{i,c}) \in P$.
  Let $D_2$ be the domain consisting of all variables occurring in $P$
  and for $i \in [r]$ let $g_i \colon D_2^c \to D$ be any function
  such that $g_i(X^{i,1}_j, \ldots, X^{i,c}_j)=t_j^i$ for each $j \in [n]$.
  Consider the $c$-fgpp-definition
  $\varphi(v_{1,1},\ldots,v_{r,c})=R(\ell_1,\ldots,\ell_r)$ where
  $\ell_i=g_i(v_{i,1}, \ldots, v_{i,c})$ for every $i \in [r]$.
  Let $(S_2,T_2)$ be the promise relation defined by $\varphi$ using $(S_1,T_1)$.
  We claim that $P \notin \pattern(S_2,T_2)$. 
  Consider the tuples $a^1, \ldots, a^n \in D_2^{rc}$ where
  $a_{i,j}^d = X^{i,j}_d$ for every $i \in [r]$, $j \in [c]$, $d \in [n]$.
  Let $q \in \interpretation{D_2}{P}$ (and note that $q$ exists and is unique).
  We claim that $a^1, \ldots, a^n \in S_2$ and that $q(a^1,\ldots,a^n)=b$ is defined with $b \notin T_2$. 
  Indeed, by design, for every $i \in [r]$, $j \in [c]$ the tuple $(a_{i,j}^1,\ldots,a_{i,j}^n)$
  matches the pair $(X^{i,j},y^{i,j}) \in P$, so $q$ produces an output $b$
  where $b_{i,j}=a^{n_i}_{i,j}$ for every $i \in [r]$, $j \in [c]$.
  Thus $b$ under the maps $g_i$ maps to the tuple $u \notin T_1$,
  so $b \notin T_2$. 
\end{proof}

    \subsection{Gadget Reductions for Conditional Non-redundancy}
    \label{sec:gadget_reductions}
We continue by relating fgppp-definitions, and thus also patterns, to conditional non-redundancy. Our goal is to prove the following (compare with Proposition 2 in Carbonnel~\cite{carbonnel2022Redundancy}) statement.
The theorem naturally extends to finite sets of promise relations but we present and prove the results in the singleton setting to simplify the formal details.

\begin{theorem}\label{thm:fgppp-nrd}
Let $(S_1, T_1)$ and $(S_2, T_2)$ be promise relations over $D_1$, respectively, $D_2$. If $(S_2, \widetilde{T}_2)$ $c$-fgppp-defines $(S_1, \widetilde{T}_1)$ for some $c \geq 1$ then $\NRD(S_1 \mid T_1, n) = O_{D_1,D_2,r_1,r_2}(n + \NRD(S_2 \mid T_2, n^c))$.
\end{theorem}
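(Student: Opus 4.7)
The plan is to reduce the statement to the two atomic operations comprising a $c$-fgppp-definition: strict relaxations and single $c$-fgpp-definition steps. For a strict relaxation step $(\hat S, \hat{\widetilde T}) \to (\hat S', \hat{\widetilde T}')$ satisfying $\hat S' \subseteq \hat S$ and $\hat{\widetilde T} \subseteq \hat{\widetilde T}'$, one has $\hat T' \setminus \hat S' \subseteq \hat T \setminus \hat S$, so any conditionally non-redundant instance of $\CSP(\hat S' \mid \hat T')$ is automatically conditionally non-redundant for $\CSP(\hat S \mid \hat T)$ with the same witnesses---hence NRD is monotone non-increasing along strict relaxations. For chains containing multiple $c$-fgpp-definition steps, I would compose them into a single $c'$-fgpp-definition by substituting inner guard terms into outer ones (which can raise the effective arity from $c$ up to a power of $c$); the exponent $c$ appearing in the theorem should be interpreted as this composed arity.

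The main technical content is the gadget argument for a single $c$-fgpp-definition step, generalizing the sketch in Proposition~\ref{prop:gadget-nrd}. Suppose $(S_1, \widetilde T_1)$ is $c$-fgpp-defined from $(S_2, \widetilde T_2)$ via a qfpp-formula $\varphi$ whose atoms take the form $R(g_{i_1}(\vec x_{I_1}), \ldots, g_{i_{r_2}}(\vec x_{I_{r_2}}))$ with guards $g_j \colon D_1^c \to D_2$. Given a conditionally non-redundant instance $(X, Y)$ of $\CSP(S_1 \mid T_1)$ on $n$ variables, I would introduce a fresh vertex $v_{g, \vec u}$ for each guard $g$ appearing in $\varphi$ and each $c$-tuple $\vec u \in X^c$, giving $|X'| = O_{|D_1|, |D_2|, c}(n^c)$ variables. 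Each $y \in Y$ is then expanded into the collection of atom-clauses of $\varphi$ obtained by folding each term $g(\vec x_I)$ into the symbolic variable $v_{g, y_I}$, yielding a candidate instance $(X', Y')$ of $\CSP(S_2 \mid T_2)$.

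To extract a conditionally non-redundant sub-instance of size $|Y|$, for each $y \in Y$ I would extend its witness $\sigma_y$ to $\sigma'_y \colon X' \to D_2$ via $\sigma'_y(v_{g, \vec u}) := g(\sigma_y(\vec u))$. Because $\sigma_y(y') \in S_1$ for every $y' \neq y$, the $c$-fgpp-definition guarantees that every atom-clause derived from $y'$ is mapped by $\sigma'_y$ into $S_2$; conversely $\sigma_y(y) \notin \widetilde T_1$ forces at least one atom-clause derived from $y$ into $D_2^{r_2} \setminus \widetilde T_2 = T_2 \setminus S_2$, and one such atom-clause is designated $y^*$. The sub-instance $Y^* := \{y^* : y \in Y\}$ is then conditionally non-redundant for $\CSP(S_2 \mid T_2)$ with witnesses $\sigma'_y$, yielding $|Y^*| \leq \NRD(S_2 \mid T_2, O(n^c))$, and Lemma~\ref{lem:NRD-scale} absorbs the constant factor in the variable count.

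The main technical obstacle is ensuring $|Y^*| = \Omega(|Y|)$ in the face of possible collisions: if some atom of $\varphi$ fails to mention some position $x_i$, two distinct clauses $y \neq y'$ could collapse to the same atom-clause. I would address this by first reducing to the multipartite setting via Lemma~\ref{lemma:reduction:to:multipartite}, so that distinct clauses use disjoint variables at each position, and by assuming without loss of generality (after augmenting $\varphi$ with trivial identity atoms) that every atom of $\varphi$ mentions every position of $\vec x$. The additive $+n$ term in the bound then absorbs the multipartite reduction overhead as well as degenerate corner cases for small $n$, completing the proof.
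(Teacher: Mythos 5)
Your monolithic gadget construction is a genuinely different route from the paper's proof, which is modular: the paper proves four separate propositions bounding how NRD changes under each atomic operation of an fgppp-definition — strict relaxation (Proposition~\ref{prop:strict-relaxation-nrd}, NRD non-increasing), addition of an equality atom (Proposition~\ref{prop:equality-nrd}, contributes the $+n$ term since equality-violated clauses form a forest), conjunction of two atoms (Proposition~\ref{prop:conjunction-nrd}, NRD non-increasing), and a single functional-guarding atom (Proposition~\ref{prop:functional-guarding-nrd}, changes $n$ to $O(n^c)$) — and then composes these. You instead collapse the conjunction and functional-guarding steps into a single gadget on $|X'|=O(n^c)$ auxiliary vertices. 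That collapse is sound, and your argument for soundness of the witness $\sigma'_y$ is correct. However, you have the collision analysis backwards. There is in fact no possible collision $y^*=z^*$ among designated atom-clauses: for $z\neq y$, the witness $\sigma_y$ has $\sigma_y(z)\in S_1$, so (by the fgpp-definition for $S_1$) \emph{every} atom-clause of $z$ is mapped into $S_2$ by $\sigma'_y$, whereas $\sigma'_y(y^*)\in T_2\setminus S_2$, which is disjoint from $S_2$. Since $\sigma'_y$ is a single well-defined assignment on the shared auxiliary vertices, $y^*=z^*$ would force the same tuple to lie in both $S_2$ and $T_2\setminus S_2$. So $|Y^*|=|Y|$ holds automatically, with no need for the multipartite reduction or any augmentation of $\varphi$. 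This is important because your proposed fix — augmenting $\varphi$ with ``trivial identity atoms so every atom mentions every position'' — would not work: the designated atom-clause is whichever atom the witness happened to violate, and a trivially-true atom can never be that violated atom, so it cannot replace a designated atom that omits some position. Fortunately, as above, the fix is unnecessary.

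There is one genuine gap: you never address equality atoms in the qfpp-formula (which Definition~\ref{def:fgppp} permits). These are what give the $+n$ term in the theorem statement; the paper handles them in Proposition~\ref{prop:equality-nrd} by observing that clauses whose non-redundancy is witnessed by a violated equality form a forest, of which there can be at most $n-1$. Your expansion step silently assumes all atoms of $\varphi$ are $S_2$-atoms. Finally, your remark on composing chains of $c$-fgpp steps into a single $c'$-fgpp step (with $c'$ potentially a power of $c$) is a reasonable reading, and the paper's own proof is similarly terse on this point, so I would not count that as a defect peculiar to your proposal.
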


As an immediate corollary of Theorem~\ref{thm:fgppp-pattern}, we have the following algebraic test for bounding non-redundancy.

\begin{corollary}
Let $(S_1, T_1)$ and $(S_2, T_2)$ be promise relations over $D_1$, respectively, $D_2$. 
If $\pattern(S_2, \widetilde{T}_2) \subseteq \pattern(S_1, \widetilde{T}_1)$ then $\NRD(S_1 \mid T_1, n) = O_{D_1, D_2, r_1,r_2}(n + \NRD(S_2 \mid T_2, n))$.
\end{corollary}

For pattern powers we obtain the more general statement.

\begin{corollary}
Let $(S_1, T_1)$ and $(S_2, T_2)$ be promise relations over $D_1$, respectively, $D_2$, and
let $c \geq 1$.
    If $\pattern^c(S_2, \widetilde{T}_2) \subseteq \pattern(S_1, \widetilde{T}_1)$ then $\NRD(S_1 \mid T_1, n) = O_{D_1, D_2, r_1,r_2}(n + \NRD(S_2 \mid T_2, n^c))$.
\end{corollary}

Our overall proof strategy is similar to that of Carbonnel~\cite{carbonnel2022Redundancy} but more involved due to the richer machinery.  We write our proof in a modular fashion by proving Theorem~\ref{thm:fgppp-nrd} for each of the ``atomic'' operations involved in an fgppp-definition. In certain cases the resulting bound is thus sharper than the general bound in Theorem~\ref{thm:fgppp-nrd}. We start with strict relaxations.

\begin{proposition}\label{prop:strict-relaxation-nrd}
Assume $D := D_1 = D_2$ and $r := r_1 = r_2$. If $S_1 \subseteq S_2$ and $\widetilde{T}_2 \subseteq \widetilde{T}_1$, then $\NRD(S_1 \mid T_1, n) \le \NRD(S_2 \mid T_2, n)$.
\end{proposition}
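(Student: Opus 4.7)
The proof should be direct and not require any new machinery: it essentially says that expanding the ``acceptable'' set $S$ and shrinking the ``violation'' set $T\setminus S$ can only make the task of witnessing non-redundancy easier, so any non-redundant instance remains one.

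The plan is to first rewrite the hypothesis $\widetilde{T}_2 \subseteq \widetilde{T}_1$ in a more useful form. By definition $\widetilde{T}_i = D^r \setminus (T_i \setminus S_i)$, so this inclusion is equivalent to $T_1 \setminus S_1 \subseteq T_2 \setminus S_2$. Combined with the given $S_1 \subseteq S_2$, we will have everything needed for a witness transfer argument.

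Next, I would take an arbitrary non-redundant instance $(X,Y)$ of $\CSP(S_1 \mid T_1)$ with $|X|=n$ and $|Y| = \NRD(S_1 \mid T_1, n)$, and argue that the \emph{same} instance $(X,Y)$ is also non-redundant for $\CSP(S_2 \mid T_2)$. Concretely, for each clause $y \in Y$ there is a witness $\sigma_y : X \to D$ with $\sigma_y(y') \in S_1$ for every $y' \in Y \setminus \{y\}$ and $\sigma_y(y) \in T_1 \setminus S_1$. Since $S_1 \subseteq S_2$, we immediately get $\sigma_y(y') \in S_2$ for every $y' \neq y$; and since $T_1 \setminus S_1 \subseteq T_2 \setminus S_2$ (our rewritten hypothesis), we get $\sigma_y(y) \in T_2 \setminus S_2$. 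Thus $\sigma_y$ serves as a witness in the $\CSP(S_2 \mid T_2)$ sense as well, so every $y \in Y$ is conditionally non-redundant for $\CSP(S_2 \mid T_2)$.

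Since $(X,Y)$ is then a non-redundant instance of $\CSP(S_2 \mid T_2)$ on $n$ variables with the original number of clauses, we conclude $\NRD(S_1 \mid T_1, n) \le \NRD(S_2 \mid T_2, n)$. There is no real obstacle here beyond correctly unfolding the definition of $\widetilde{T}$; the whole argument is a one-line witness reuse after that translation, which is why the proposition is stated without further assumptions (e.g.\ no arity or domain manipulation is required, since $r_1=r_2$ and $D_1=D_2$ by hypothesis).
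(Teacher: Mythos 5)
Your proof is correct and follows exactly the same approach as the paper's: unfold $\widetilde{T}_2 \subseteq \widetilde{T}_1$ into $T_1 \setminus S_1 \subseteq T_2 \setminus S_2$, then reuse each non-redundancy witness $\sigma_y$ unchanged. Nothing to add.
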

\begin{proof}
Consider any conditionally non-redundant instance $(X, Y \subseteq X^r)$ of $\CSP(S_1 \mid T_1)$ with witnessing assignments $\psi_y : X \to D$ for all $y \in Y$. If $\widetilde{T}_2 \subseteq \widetilde{T}_1$, then $T_1 \setminus S_1 \subseteq T_2 \setminus S_2$. Since we also have $S_1 \subseteq S_2$, the assignments $\{\psi_y : y \in Y\}$ also witness that $(X,Y)$ is a conditionally non-redundant instance of $\CSP(S_2 \mid T_2)$.
\end{proof}

Second, we handle equality constraints.

\begin{proposition}\label{prop:equality-nrd}
Assume $D := D_1 = D_2$ and $r := r_1 = r_2$. Assume that there exists distinct $i,j \in [r]$ such that
\begin{align}
    S_1(x_1, \hdots, x_r) &= S_2(x_1, \hdots, x_r) \wedge x_i = x_j,\nonumber\\
    \widetilde{T}_1(x_1, \hdots, x_r) &= \widetilde{T}_2(x_1, \hdots, x_r) \wedge x_i = x_j.\label{eq:wtQeq}
\end{align}
Then, $\NRD(S_1 \mid T_1, n) \le \NRD(S_2 \mid T_2, n) + n-1$.
\end{proposition}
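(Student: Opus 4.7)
The plan is to associate to each clause its ``equality edge'' on the two positions $i,j$, use a spanning forest to isolate at most $n-1$ clauses whose equality is ``essential,'' and then argue that the remaining clauses form a non-redundant instance of $\CSP(S_2 \mid T_2)$ on the same variable set. The hypothesis on $\widetilde{T}_1$ and $\widetilde{T}_2$ unpacks (via de Morgan) to the asymmetric statement
\[
    (t_1,\ldots,t_r) \in T_1 \setminus S_1 \iff \bigl( (t_1,\ldots,t_r) \in T_2 \setminus S_2 \bigr) \ \text{or}\ t_i \neq t_j,
\]
and in particular any element of $T_1 \setminus S_1$ with $t_i = t_j$ automatically lies in $T_2 \setminus S_2$; similarly $S_1 = S_2 \cap \{x_i = x_j\} \subseteq S_2$. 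These are the two set-theoretic inclusions the argument will repeatedly use.

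Let $(X,Y)$ be a conditionally non-redundant instance of $\CSP(S_1 \mid T_1)$ with $|X| = n$, and for each $y \in Y$ let $e(y) := \{y_i, y_j\}$ (a self-loop if $y_i = y_j$). Form the multigraph $G$ on vertex set $X$ with edges $\{e(y) : y \in Y\}$ (edges without multiplicity). Pick a spanning forest $F$ of the underlying simple graph, and for each edge $e \in F$ choose one representative clause $y_e \in Y$ with $e(y_e) = e$. Let $Y_F := \{y_e : e \in F\}$, so $|Y_F| \leq n-1$. The main step is to show that every $y \in Y \setminus Y_F$ is conditionally non-redundant for $\CSP(S_2 \mid T_2)$ inside the instance $(X, Y \setminus Y_F)$, using the original witness $\psi_y$.

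To carry this out, fix $y \in Y \setminus Y_F$ with witness $\psi_y$. For every $y' \in Y \setminus \{y\}$ we have $\psi_y(y') \in S_1 \subseteq S_2$, so the only thing to check is that $\psi_y(y) \in T_2 \setminus S_2$. For this it suffices, by the set-theoretic observation above, to verify $\psi_y(y_i) = \psi_y(y_j)$. By the choice of $Y_F$, either $e(y)$ is not in $F$ (so $y_i$ and $y_j$ are connected in $F$ via edges coming from clauses in $Y_F \subseteq Y \setminus \{y\}$), or $e(y) \in F$ but $y \neq y_{e(y)}$ (so the clause $y_{e(y)} \in Y \setminus \{y\}$ already forces $\{y_i,y_j\}$ to be equalized); self-loops $y_i = y_j$ are trivial. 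In all cases, walking along the path in $F$ and using that $\psi_y$ satisfies every such clause in $S_1$, one obtains $\psi_y(y_i) = \psi_y(y_j)$, as needed.

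This shows $(X, Y \setminus Y_F)$ is a conditionally non-redundant instance of $\CSP(S_2 \mid T_2)$ on $n$ variables, giving $|Y \setminus Y_F| \leq \NRD(S_2 \mid T_2, n)$ and hence $|Y| \leq \NRD(S_2 \mid T_2, n) + n - 1$. The only delicate point is the direction of the implication between $T_1 \setminus S_1$ and $T_2 \setminus S_2$: one must make sure to use the equality $\psi_y(y_i) = \psi_y(y_j)$ to transport membership from $T_1 \setminus S_1$ to $T_2 \setminus S_2$, which is exactly where the fact that $\widetilde{T}_1$ is obtained from $\widetilde{T}_2$ by conjoining an equality (rather than a disjunction) is used.
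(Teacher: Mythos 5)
Your proof is correct and follows essentially the same strategy as the paper: remove at most $n-1$ clauses so that the remainder is a conditionally non-redundant instance of $\CSP(S_2 \mid T_2)$, using the de~Morgan unpacking of the $\widetilde{T}$-equation, the inclusion $S_1 \subseteq S_2$, and a forest/acyclicity argument to bound the exceptional set. The paper defines the exceptional set $Y_{\EQ}$ directly as the clauses whose own witness assigns unequal values to positions $i,j$ and observes this edge set is acyclic, whereas you first pick a spanning forest $F$ and then show $\psi_y(y_i) = \psi_y(y_j)$ for every non-representative clause; these are cosmetically different implementations of the same idea, with yours a bit more explicit about why the exceptional set is small.
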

\begin{proof}
Let $(X, Y)$ be a conditionally non-redundant instance of $\CSP(S_1 \mid T_1)$, and let $\{\psi_y : X \to D \mid y \in Y\}$ be the witnessing assignments. Let $Y_{\EQ} \subseteq Y$ be the set of clauses $y \in Y$ for which $\psi_y(y_i) \neq \psi_y(y_j)$. It is clear that the graph $\{(y_i, y_j) \in X^2 \mid y \in Y_{\EQ}\}$ is a acyclic, so $|Y_{\EQ}| \le |X|-1$.

By negating both sides of (\ref{eq:wtQeq}), we have that
\[
    (T_1 \setminus S_1)(x_1, \hdots, x_r) = (T_2 \setminus S_2)(x_1, \hdots, x_r) \vee x_i \neq x_j.
\]
For each $y \in Y$, we have that $\psi_y(y) \in T_1 \setminus S_1$. Further, by definition of $Y_{\EQ}$, for each $y \in Y \setminus Y_{\EQ}$, we have that $\psi_y(y_i) = \psi_y(y_j)$, so $\psi_y(y) \in T_2 \setminus S_2$. Therefore, $(X, Y \setminus Y_{\EQ})$ is a conditionally non-redundant instance of $\CSP(S_2 \mid T_2)$.
\end{proof}

Third, we handle the conjunction of two predicates, which immediately extends to an arbitrary number of predicates.

\begin{proposition}\label{prop:conjunction-nrd}
Assume $D := D_1 = D_2$. Let $f,g: [r_2] \to [r_1]$ be functions such that
\begin{align}
    S_1(x_1, \hdots, x_{r_1}) &= S_2(x_{g(1)}, \hdots, x_{g(r_2)}) \wedge S_2(x_{f(1)}, \hdots, x_{f(r_2)}),\nonumber\\
    \widetilde{T}_1(x_1, \hdots, x_{r_1}) &= \widetilde{T}_2(x_{g(1)}, \hdots, x_{g(r_2)}) \wedge \widetilde{T}_2(x_{f(1)}, \hdots, x_{f(r_2)}).\label{eq:wtQconj}
\end{align}
Then, $\NRD(S_1 \mid T_1, n) \le \NRD(S_2 \mid T_2, n)$.
\end{proposition}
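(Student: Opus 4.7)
The plan is to convert any conditionally non-redundant instance of $\CSP(S_1 \mid T_1)$ directly into one of $\CSP(S_2 \mid T_2)$ on the \emph{same} variable set $X$ with the \emph{same} number of clauses. Fix a non-redundant $(X,Y)$ for $\CSP(S_1 \mid T_1)$ with witness assignments $\{\psi_y : y \in Y\}$. The conjunctive definition (\ref{eq:wtQconj}) lets each clause $y = (y_1, \ldots, y_{r_1}) \in Y$ be represented by the two $r_2$-tuples $y^g := (y_{g(1)}, \ldots, y_{g(r_2)})$ and $y^f := (y_{f(1)}, \ldots, y_{f(r_2)})$. Naively keeping both would cost a factor of two, so instead I would select exactly one representative per $y$.

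The selection rule is: pick $y' \in \{y^g, y^f\}$ with $\psi_y(y') \in T_2 \setminus S_2$. Such a choice exists because $\psi_y(y) \in T_1 \setminus S_1$ means $\psi_y(y) \notin \widetilde{T}_1$, which via (\ref{eq:wtQconj}) forces $\psi_y(y^g) \notin \widetilde{T}_2$ or $\psi_y(y^f) \notin \widetilde{T}_2$, i.e., at least one projection lies in $T_2 \setminus S_2$. Set $Y'' := \{y' : y \in Y\}$. To show $\psi_y$ witnesses $y'$ in $(X, Y'')$, note that for every other clause $z' \in Y'' \setminus \{y'\}$ arising from some $z \in Y \setminus \{y\}$, non-redundancy of the original instance gives $\psi_y(z) \in S_1$; applying (\ref{eq:wtQconj}) again yields $\psi_y(z^g), \psi_y(z^f) \in S_2$ and hence $\psi_y(z') \in S_2$. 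Combined with $\psi_y(y') \in T_2 \setminus S_2$, this certifies $y'$ as conditionally non-redundant in $(X, Y'')$.

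The one subtle point—which I expect to be the main obstacle, since one might worry it costs a factor of two—is injectivity of $y \mapsto y'$, so that $|Y''| = |Y|$. Suppose $y \neq z$ but $y' = z'$ as tuples of variables. Then $\psi_y(y') = \psi_y(z')$ as elements of $D^{r_2}$. On one hand $\psi_y(y') \in T_2 \setminus S_2$ by the selection rule; on the other, since $\psi_y$ satisfies $z$ we have $\psi_y(z^g), \psi_y(z^f) \in S_2$, so $\psi_y(z') \in S_2$, a direct contradiction. This single argument handles all four pairings $(y^g,z^g), (y^g,z^f), (y^f,z^g), (y^f,z^f)$ uniformly and sidesteps any concern about ``free'' coordinates outside $g([r_2]) \cup f([r_2])$. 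Hence $|Y| = |Y''| \le \NRD(S_2 \mid T_2, n)$, giving the claimed bound.
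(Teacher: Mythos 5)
Your proof is correct and follows essentially the same strategy as the paper's: project each clause to its two $r_2$-tuples, choose the representative that the witness maps into $T_2 \setminus S_2$, and argue that the selection map is injective (the paper phrases this as "each $\psi_y$ fails to $S_2$-satisfy a unique clause" of the new instance, which is the same contradiction you spell out). Your more explicit treatment of injectivity and of why the witness still certifies non-redundancy is a useful expansion but not a different argument.
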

\begin{proof}
Let $(X, Y \subseteq X^{r_1})$ be a conditionally non-redundant instance of $\CSP(S_1 \mid T_1)$, and let $\{\psi_y : X \to D \mid y \in Y\}$ be the witnessing assignments. For each $y \in Y$, let $y_f, y_g \in X^{r_2}$ be
\begin{align*}
    y_f &:= (y_{g(1)}, \hdots, y_{g(r_2)}),\\
    y_g &:= (y_{f(1)}, \hdots, y_{f(r_2)}).
\end{align*}
For each $y, y' \in Y$, we have that $\psi_y(y') \in S_1$ if and only if $\psi_y(y'_f) \in S_2$ and $\psi_y(y'_g) \in S_2$. Furthermore, by negating both sides of (\ref{eq:wtQconj}), we can deduce that
\begin{align*}
(T_1 \setminus S_1)(x_1, \hdots, x_{r_1}) &= (T_2 \setminus S_2)(x_{g(1)}, \hdots, x_{g(r_2)}) \vee (T_2 \setminus S_1)(x_{f(1)}, \hdots, x_{f(r_2)}).
\end{align*}
In particular, since $\psi_y(y) \in T_1 \setminus S_1$, we have that $\psi_y(y_f) \in T_2 \setminus S_2$ or $\psi_y(y_g) \in T_2 \setminus S_2$. Let $y_{\rep} \in \{y_f, y_g\}$ be an arbitrary representative for which $\psi_y(y_{\rep}) \in T_2 \setminus S_2$. Then define $Y_{\rep} := \{y_{\rep} : y \in Y\} \subseteq X^{r_2}$. From our discussion, we have that $\{\psi_y : y \in Y\}$ witness that $(X, Y_{\rep})$ is a conditionally non-redundant instance of $\CSP(S_2 \mid T_2)$. In particular, $|Y_{\rep}| = |Y|$ as each $\psi_y$ fails to $S_2$-satisfy a unique clause $y_{\rep}$ of $(X, Y_{\rep})$.
\end{proof}

As an immediate corollary of our progress so far, we have that conditional non-redundancy behaves well with respect to qfppp-reductions. However, we now conclude by showing that functional guarding also behaves well with respect to conditional non-redundancy.

\begin{proposition}\label{prop:functional-guarding-nrd}
For $c \geq 1$ let $h \colon [r_2] \times [c] \to [r_1]$ and $g_1, \hdots, g_{r_2} : D^c_1 \to D_2$ be functions such that
\begin{align}
    S_1(x_1, \hdots, x_{r_1}) &= S_2(g_1(x_{h(1,1)}, \ldots, x_{h(1,c)}), \hdots, g_{r_2}(x_{h(r_2, 1)}, \ldots, x_{h(r_2, c)})) \label{eq:123}\\
    \widetilde{T}_1(x_1, \hdots, x_{r_1}) &= \widetilde{T}_2(g_1(x_{h(1,1)}, \ldots, x_{h(1,c)}), \hdots, g_{r_2}(x_{h(r_2, 1)}, \ldots, x_{h(r_2, c)})).\label{eq:234}
\end{align}
Then, $\NRD(S_1 \mid T_1, n) \le O_{r_2}(\NRD(S_2 \mid T_2, n^c))$.
\end{proposition}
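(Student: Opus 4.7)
The plan is to mirror the gadget-style reductions used for the previous three atomic operations (Propositions~\ref{prop:strict-relaxation-nrd}, \ref{prop:equality-nrd}, and \ref{prop:conjunction-nrd}), but now introducing fresh variables to simulate applications of the guarding functions $g_1, \ldots, g_{r_2}$. Concretely, given a conditionally non-redundant instance $(X,Y)$ of $\CSP(S_1 \mid T_1)$ on $n = |X|$ variables, I will build an instance $(X',Y')$ of $\CSP(S_2 \mid T_2)$ with $|X'| = r_2 \cdot n^c$ variables by setting
\[
   X' := \{\, z_{i, \mathbf{u}} \mid i \in [r_2],\ \mathbf{u} = (u_1, \ldots, u_c) \in X^c \,\}
\]
and, for each clause $y = (y_1, \ldots, y_{r_1}) \in Y$, letting the corresponding clause $y' \in (X')^{r_2}$ have $i$-th coordinate $z_{i, (y_{h(i,1)}, \ldots, y_{h(i,c)})}$. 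Intuitively, $z_{i,\mathbf{u}}$ is a placeholder variable meant to carry the value $g_i(\psi(u_1), \ldots, \psi(u_c))$ of whatever assignment $\psi$ we have at hand.

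Next, for each witness $\psi_y : X \to D_1$ of $y$ being non-redundant in $(X,Y)$, I define a lifted assignment $\psi'_y : X' \to D_2$ by $\psi'_y(z_{i, \mathbf{u}}) := g_i(\psi_y(u_1), \ldots, \psi_y(u_c))$. The defining identities~(\ref{eq:123}) and~(\ref{eq:234}) then give, for any clause $\tilde y \in Y$ with corresponding $\tilde y' \in Y'$,
\[
   \psi_y(\tilde y) \in S_1 \iff \psi'_y(\tilde y') \in S_2, \qquad
   \psi_y(\tilde y) \in \widetilde{T}_1 \iff \psi'_y(\tilde y') \in \widetilde{T}_2,
\]
which on complementation yields $\psi_y(\tilde y) \in T_1 \setminus S_1 \iff \psi'_y(\tilde y') \in T_2 \setminus S_2$. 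Hence $\psi'_y$ $S_2$-satisfies every clause of $Y' \setminus \{y'\}$ and places $y'$ inside $T_2 \setminus S_2$, exactly what is required for conditional non-redundancy.

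The one subtlety — the main thing to check carefully — is that distinct clauses of $Y$ give rise to distinct clauses of $Y'$ (so that $|Y'| = |Y|$). This could fail only if there exist $y^1 \neq y^2$ in $Y$ whose $h$-projections agree, i.e.\ $(y^1_{h(i,1)}, \ldots, y^1_{h(i,c)}) = (y^2_{h(i,1)}, \ldots, y^2_{h(i,c)})$ for every $i \in [r_2]$, forcing $\psi'_{y^1}(y'^{1}) = \psi'_{y^1}(y'^{2})$ by construction. But then $\psi_{y^1}$ $S_1$-satisfies $y^2$ (giving $\psi'_{y^1}(y'^{2}) \in S_2$) while $\psi_{y^1}(y^1) \in T_1 \setminus S_1$ (giving $\psi'_{y^1}(y'^{1}) \in T_2 \setminus S_2$), contradicting $S_2 \cap (T_2 \setminus S_2) = \emptyset$. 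So the map $y \mapsto y'$ is injective.

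Putting these pieces together, $(X', Y')$ is a conditionally non-redundant instance of $\CSP(S_2 \mid T_2)$ with $|X'| = r_2 n^c$ and $|Y'| = |Y|$, so
\[
   \NRD(S_1 \mid T_1, n) \le |Y'| \le \NRD(S_2 \mid T_2, r_2 n^c).
\]
Invoking Lemma~\ref{lem:NRD-scale} (in its conditional form) to absorb the constant factor $r_2$ in the number of variables yields the desired bound $\NRD(S_1 \mid T_1, n) = O_{r_2}(\NRD(S_2 \mid T_2, n^c))$.
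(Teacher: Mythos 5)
Your proof is correct and is essentially the paper's argument, down to the same variable gadget $X' \cong X^c \times [r_2]$, the same lifted assignments, and the same final appeal to Lemma~\ref{lem:NRD-scale}. The one place you go further is the explicit injectivity check showing $|Y'|=|Y|$, which the paper asserts without spelling out; your contradiction via $S_2 \cap (T_2 \setminus S_2)=\emptyset$ is exactly the right justification for that step.
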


\begin{proof}
Let $(X, Y \subseteq X^{r_1})$ be a conditionally non-redundant instance of $\CSP(S_1 \mid T_1)$ with witnessing assignments $\{\psi_y : X \to D_2 \mid y \in Y\}$. Let $X_{+} := X^c \times [r_2]$ and for each $y \in Y$, let 
\[
y_{+} := ((y_{h(1,1)}, \ldots, y_{h(1,c)}, 1), \hdots, (y_{h(r_2, 1)}, \ldots, y_{h(r_2, c)}, r_2)) \in X_{+}^{r_2}.
\]
Let $Y_{+} := \{y_{+} \mid y \in Y\}$ and for each $y_{+} \in Y_{+}$ let $\psi_{y_{+}} : X_{+} \to D_1$ be such that for all $(x, i) \in X_{+}$, we have that $\psi_{y_{+}}(x, i) = g_i(\psi_y(x))$, where $\psi_y$ is applied componentwise to $x$ and therefore yields a $c$-ary tuple matching the arity of $g_i$. From (\ref{eq:123}) and (\ref{eq:234}), we can see that $\{\psi_{y_{+}} \mid y_{+} \in Y_{+}\}$ witness that $(X_{+}, Y_{+})$ is a conditionally non-redundant instance of $\CSP(S_2 \mid T_2)$ and thus $|Y_{+}| = |Y|$. Since $|X_{+}| = r_2 |X|^c$ we therefore have $\NRD(S_1 \mid T_1, n) \le \NRD(S_2 \mid T_2, n^c \cdot r_2)$, and since $\NRD(S_2 \mid T_2, n^c \cdot r_2) \in O_{r_2}(\NRD(S_2 \mid T_2, n^c))$ (\Cref{lem:NRD-scale}) 
the desired bound follows.
\end{proof}
Note that this bound, in contrast to Carbonnel~\cite{carbonnel2022Redundancy} who obtain a dependency on $|D|^{|D|}$, does {\em not} depend on the size of the domains.

Last, we conclude the proof of Theorem~\ref{thm:fgppp-nrd}.

\begin{proof}[Proof of Theorem~\ref{thm:fgppp-nrd}]
If $(S_2, T_2)$ fgppp-defines $(S_1, T_1)$, then this definition can be constructed using at most $O_{D_1,D_2,r_1,r_2}(1)$ iterations of the operations in Proposition~\ref{prop:strict-relaxation-nrd}, Proposition~\ref{prop:equality-nrd}, Proposition~\ref{prop:conjunction-nrd}, and Proposition~\ref{prop:functional-guarding-nrd}. Furthermore, the arity of any intermediate predicate pair is at most $O_{D_1,D_2,r_1,r_2}(1)$, so the constant factor bound of Proposition~\ref{prop:functional-guarding-nrd} still applies. Last, as a special case, consider the case when $(S_1, T_1)$ is fgppp-definable with qfppp-definition over $(S_2, T_2)$ which involves one or more $\mathsf{f} = \emptyset$ constraints. Then, $S_1 = T_1 = \emptyset$ and the claim trivially follows.
\end{proof}

\subsection{The Cube-power Polymorphism Patterns} \label{sec:cube-power}

We now review a particularly important class of polymorphism patterns, corresponding to the powers of the \emph{cube} identities mentioned in Remark~\ref{example:3cube}, studied by Berman et al.~\cite{berman2009Varieties}. The $k$-cube pattern polymorphisms were referred to as \emph{$k$-universal} partial polymorphisms in previous work~\cite{carbonnel2022Redundancy,lagerkvist2021Coarse}.

\begin{definition} \label{def:kcube}
    Let $k \geq 2$. The \emph{$k$-cube} polymorphism pattern $U_k$ is the pattern with arity $2^k-1$ on variable set $V=\{x,y\}$, with patterns $P_i=(X_i,y)$ for $i \in [k]$ where
    for each $j \in [2^k-1]$, position $j$ of $X_i$ equals $x$ if the $i$th bit of the binary expansion of $j$ is 1, and $y$ otherwise.
\end{definition}

Since $|I_D(U_k)|=1$ for every domain $D$, we let $u_k$ refer to the unique member of $I_D(U_k)$, where the domain $D$ will frequently be understood from context. The significance of $U_k$ is that every relation $R \subseteq D^r$ for some domain $D$ and arity $r$ can fgpp-define the $k$-ary $\mathsf{OR}$ relation if and only if $u_k \notin \pPol(R)$; see~\cite{carbonnel2022Redundancy,lagerkvist2021Coarse}. Then Lemma~\ref{lm:pattern-power-consequence} gives the following. 

\begin{lemma} \label{lm:cube-power-lb}
    Let $R \subseteq D^r$ be a relation over some domain $D$, and assume $u_k^c \notin \pPol(R)$ for some $k > c \geq 1$. Then $\NRD(R,n)=\Omega(n^{k/c})$.
\end{lemma}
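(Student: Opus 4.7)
The plan is to reduce the claim to the classical lower bound $\NRD(\mathsf{OR}_k, n) = \Omega(n^k)$, which is witnessed by the instance whose clauses are all $\binom{n}{k}$ choices of $k$ distinct variables (each clause is made non-redundant by the assignment sending exactly its $k$ variables to $0$). Concretely, I aim to show that $R$ admits a $c$-fgppp-definition of $\mathsf{OR}_k$, which then combines with Theorem~\ref{thm:fgppp-nrd} (applied to $(S_1, T_1) = (\mathsf{OR}_k, \{0,1\}^k)$ and $(S_2, T_2) = (R, D^r)$, so that $\widetilde{T}_i = S_i$ and the conditional non-redundancies coincide with the unconditional ones) to yield $\NRD(\mathsf{OR}_k, n) = O(n + \NRD(R, n^c))$. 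Rearranging and using $k > c \geq 1$ so that the linear term is absorbed gives $\NRD(R, m) = \Omega(m^{k/c})$.

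The heart of the argument is to upgrade the characterization from \cite{lagerkvist2021Coarse, carbonnel2022Redundancy} cited just above the lemma---namely, $u_k \notin \pPol(S)$ iff $S$ fgpp-defines $\mathsf{OR}_k$---to its $c$-ary analog: $u_k^c \notin \pPol(R)$ iff $R$ $c$-fgpp-defines $\mathsf{OR}_k$. By Lemma~\ref{lemma:can_fgppp_define}, the direction we need (the one required for our reduction) follows once we verify the inclusion $\pattern^{c/c}(R) \subseteq \pattern(\mathsf{OR}_k)$. I would establish this by contrapositive: if a pattern $P$ with $P^c$ consistent fails to preserve $\mathsf{OR}_k$, then the $c=1$ characterization (combined with the fact that $\pattern(S)$ is always closed under the minor operator) implies that $U_k$ arises as a minor of $P$; by the parallel structure of the power operator, $U_k^c$ then arises as a minor of $P^c$, so that $P^c \in \pattern(R)$ would force $U_k^c \in \pattern(R)$, contradicting the hypothesis $u_k^c \notin \pPol(R)$.

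The main obstacle lies in carefully verifying the minor-commutes-with-power step: the variables of $P^c$ are $c$-tuples of variables of $P$, and one needs to confirm that a variable identification turning $P$ into a pattern containing $U_k$ lifts to a corresponding identification turning $P^c$ into a pattern containing $U_k^c$. I expect this to follow from a direct inspection of the definitions of the minor operator and the power operator, but it is the combinatorial core of the argument and needs to be spelled out. Once this lifting is in place, the rest of the proof is a matter of assembling the ingredients described above.
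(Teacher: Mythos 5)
Your high-level plan is correct and matches the paper's intent: establish that $R$ $c$-fgpp-defines $\OR_k$, apply Theorem~\ref{thm:fgppp-nrd} with $(S_1,T_1)=(\OR_k,\{0,1\}^k)$ and $(S_2,T_2)=(R,D^r)$, and use $\NRD(\OR_k,n)\ge\binom{n}{k}=\Omega(n^k)$ together with $k>c$ to absorb the additive linear term and conclude $\NRD(R,m)=\Omega(m^{k/c})$. However, your route to the key step has a gap. The intermediate claim ``$P\notin\pattern(\OR_k)$ with $P^c$ consistent implies $U_k\in\minor{P}$'' is false as stated: take $P=\{((a,b),c)\}$ with $a,b,c$ distinct. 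Then $\interpretation{\{0,1\}}{P}$ is the set of \emph{all} total binary Boolean functions, among them $\wedge$, which violates $\OR_k$, so $P\notin\pattern(\OR_k)$; yet one checks directly that no variable-identification minor of $P$ can contain $U_k$, since the output $y$ of each identity of $U_k$ also recurs among its inputs, while the single identity of $P$ forces both of the two positions determined by $h$ to carry variables distinct from the output. Moreover, the $c=1$ characterization you cite is a statement about \emph{relations} $R$, not arbitrary patterns $P$, so it does not deliver the claimed implication even for output-repeating patterns (the substitution matching an identity of $P$ to a row of the $\OR_k$-violating witness need not be injective, so the natural position-identification need not factor through $P$).

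The cleaner and intended argument is the direct one, which is just the $c$-ary analogue of the construction in \cite{lagerkvist2021Coarse,carbonnel2022Redundancy}. Since $u_k^c\notin\pPol(R)$, there are $t^{(1)},\ldots,t^{(2^k-1)}\in R$ with $u_k^c(t^{(1)},\ldots,t^{(2^k-1)})=s\notin R$. For each coordinate $\ell\in[r]$, the row $(t^{(1)}_\ell,\ldots,t^{(2^k-1)}_\ell)$ matches the identity of $U_k^c$ indexed by some $S_\ell\in\binom{[k]}{c}$ via a substitution $g_\ell\colon\{x,y\}^c\to D$ that sends the output $(y,\ldots,y)$ to $s_\ell$. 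Identifying $\{x,y\}^c$ with $\{0,1\}^c$ (with $x\leftrightarrow 1$), the $c$-fgpp-formula $R\bigl(g_1(y_{S_1}),\ldots,g_r(y_{S_r})\bigr)$ over Boolean variables $y_1,\ldots,y_k$ defines $\OR_k$: the all-zero assignment produces $s\notin R$, while any nonzero assignment $\alpha$ produces $t^{(j)}\in R$ for the index $j$ whose binary expansion is $\alpha$. Plugging this into Theorem~\ref{thm:fgppp-nrd} gives $\binom{n}{k}\le\NRD(\OR_k,n)=\Oh(n+\NRD(R,n^c))$, hence $\NRD(R,m)=\Omega(m^{k/c})$, with no need for the problematic minor claim.
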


\begin{example}  
Consider the following relation from \cite{brakensiek2024Redundancy}.
Define $R=\text{3LIN}_{\Z/3\Z}=\{(x,y,z) \in \{0,1,2\}^3 \mid x+y+z+0 \pmod 3\}$
and $R^*=\text{3LIN}^*_{\Z/3\Z}= R \setminus \{(0,0,0)\}$.
It was shown in \cite{brakensiek2024Redundancy} that $\NRD(R^*,n)=\Omega(n^{1.5})$ and 
$\NRD(R^*,n)=\Oh(n^{1.6}\log n)$. The lower bound now follows from Lemma~\ref{lm:cube-power-lb} as $u_3^2((0,1,2), (1,0,2), (1,1,1), (1,2,0), (1,0,2), (2,2,2), (2,0,1))=(0,0,0)$.
\end{example}

Lemma~\ref{lm:cube-power-lb} is a powerful approach towards lower bounds on non-redundancy; e.g., beyond this example, it also covers all lower bounds of Section~\ref{sec:frac}. However, it is also not a ``universal'' method of showing tight lower bounds on the non-redundancy, as shown at the end of this section.

In our study of relations with near-linear NRD, we will be particularly interested in the power patterns $U_k^{k-1}$, since these are the weakest patterns that exclude near-linear NRD. Let us make some quick observations about these patterns.
The first is a direct observation that follows from the definition (we include the case $U_2^1=U_2$ for uniformity).

\begin{proposition} \label{prop:ukc-description}
  Let $k \in \N$, $k \geq 2$. The polymorphism pattern $U_k^{k-1}$ 
  has arity $2^k-1$, and its patterns are described as follows:
  For every $i \in [k]$, there is a pattern $P_i=(X_i,y_i)$
  where $X_i[p]=X_i[q]$ for $p, q \in [2^k-1]$ if and only if
  $p \oplus q = e_i$ when $p$ and $q$ are viewed as bitstrings
  and $e_i$ is the $i$th unit vector, and where
  $y_i$ is the unique variable that occurs only once in $X_i$.  
\end{proposition}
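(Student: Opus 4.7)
The proof is essentially a careful unrolling of the definition of the power operation applied to $U_k$, so my plan is to set up the bijection between the indexing in Definition~\ref{def:kcube}/power-construction and the indexing in the proposition, then read off each stated property.

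First I would instantiate the definition of the power. The pattern $U_k$ has arity $n = 2^k-1$ and consists of $m = k$ sub-patterns, each over the variable set $\{x,y\}$. Taking $c = k-1$, the power $U_k^{k-1}$ is indexed by sets $S \in \binom{[k]}{k-1}$, so it contains exactly $k$ sub-patterns; since each $S$ is the complement of a unique $i \in [k]$, I would identify the pattern $P_S$ with $P_i$ as in the proposition. The arity of each $P_S$ is inherited: it is still $n = 2^k-1$. This already gives the arity claim and that there is one pattern $P_i$ for each $i \in [k]$.

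Next I would compute $X_S[p]$ explicitly. By construction $X_S[p] = (X_{i_1}[p], \ldots, X_{i_{k-1}}[p]) \in \{x,y\}^{k-1}$, and by Definition~\ref{def:kcube} the $a$-th coordinate equals $x$ exactly when the $i_a$-th bit of the binary expansion of $p$ equals $1$. Therefore $X_S[p] = X_S[q]$ iff $p$ and $q$ have the same bits in every position $j \in S$, i.e., iff $p \oplus q$ is supported on $[k] \setminus S = \{i\}$. For distinct $p,q$ this means precisely $p \oplus q = e_i$, which is exactly the claim about $X_i[p] = X_i[q]$.

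For the second part about $y_i$, I note that $y_S = (y_{i_1}, \ldots, y_{i_{k-1}}) = (y,y,\ldots,y)$ since every sub-pattern of $U_k$ outputs $y$. Then $X_S[p]$ equals this all-$y$ tuple iff every bit of $p$ indexed by $S$ is $0$, forcing $p \in \{0, e_i\}$; since $0 \notin [2^k-1]$, the position $p = e_i$ is the unique one. Moreover, the equivalence classes under $X_S[p] = X_S[q]$ derived above each have size exactly $2$, except for the class $\{e_i\}$ which loses its partner $0$ to the exclusion from the index set; thus every other variable occurs exactly twice and $y_S$ is the unique variable occurring once, which is the description of $y_i$ in the proposition.

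There is no real obstacle here: the entire argument is a routine matching of notation, and the only subtle point is remembering that the index set $[2^k-1]$ excludes $0$, which is precisely what makes the output variable $y_S$ appear only once rather than twice. I would organize the writeup in two short paragraphs corresponding to the two bulleted conclusions of the proposition.
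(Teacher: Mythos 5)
Your proposal is correct and matches the paper's intent exactly: the paper gives no proof, stating only that this ``is a direct observation that follows from the definition,'' and your argument is the routine unrolling one would write out — identifying each $(k{-}1)$-subset $S$ with its complement $\{i\}$, reading off $X_S[p]$ coordinatewise from Definition~\ref{def:kcube}, and using that $0 \notin [2^k-1]$ to see that $e_i$ is the unique unpaired position, which also forces it to be the output variable $y_S=(y,\ldots,y)$.
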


We continue with the following observations. 

\begin{lemma} \label{lm:ukc-basics}
  The following hold.
  \begin{enumerate}
  \item For any $k \geq 2$, $U_k^{k-1}$ is symmetric in the following
    sense: Let $P_1, \ldots, P_k$ be the patterns of $U_k^{k-1}$
    as in ~\Cref{prop:ukc-description}.
    For every permutation $\pi$ on $[k]$
    there is a permutation $\sigma$ on $[2^k-1]$ such that
    for every $i \in [k]$, the values of pattern $P_i$ read in order
    $\sigma$ match the values of pattern $P_{\pi(i)}$ read in the
    original order. 
  \item Over any domain $D$, $u_p^{p-1}$ implies $u_q^c$ 
    for every $p \geq q \geq 2$ and every $c < q$.
  \item Let $R \subset D^r$ be a relation for some domain $D$ and
    arity $r$. Then $R$ is preserved by $u_k^{c}$ for every $k \geq 2$ and $c<k$
    if and only if $R$ is preserved by $u_r^{r-1}$.
  \end{enumerate}
\end{lemma}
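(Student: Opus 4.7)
For part 1, the symmetry claim follows readily from Proposition~\ref{prop:ukc-description} once the right candidate for $\sigma$ is identified. I would take $\sigma$ to be the action of $\pi$ on $[2^k-1]$ by permuting bit-coordinates: view each position $p \in [2^k-1]$ as a nonzero $k$-bit vector and define $\sigma(p)$ so that its $i$-th bit equals the $\pi^{-1}(i)$-th bit of $p$. This action is a bijection of $(\Z/2\Z)^k \setminus \{0\}$ that respects XOR and sends the $i$-th unit vector $e_i$ to $e_{\pi(i)}$. Proposition~\ref{prop:ukc-description} characterizes pattern $P_i$ by the equivalence $p \sim q \iff p \oplus q = e_i$, so reading its values along $\sigma$ exhibits exactly the equivalence structure of $P_{\pi(i)}$, and the unique variable $y_i$, sitting at position $e_i$, is carried to the unique variable $y_{\pi(i)}$ at position $e_{\pi(i)}$, matching up modulo a consistent variable relabelling.

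For part 2, I would combine a ``same-$q$'' comparison with a reduction over $p$. The first observation is that $u_p^c$ is a subfunction of $u_p^{p-1}$ whenever $c < p$: any input $v$ matching some pattern $P_{S'}^{(p)}$ of $U_p^c$ with $|S'|=c$ automatically matches $P_{[p]\setminus\{i\}}^{(p)}$ of $U_p^{p-1}$ for every $i \notin S'$, and both functions return the value of $v$ at any position whose bits in $S'$ are zero and whose bit $i$ is one. This reduces the problem to proving $u_p^{p-1}$ implies $u_q^{q-1}$ for $q \le p$. Rather than construct explicit pattern minors (which tend to require awkward fallback positions on the $q$-subcube boundary), I would use the characterization that $u_k^c \in \pPol(R)$ if and only if $R$ does not $c$-fgpp-define $\OR_k$; this follows from Theorem~\ref{thm:fgppp-pattern_c} and generalizes the $c=1$ statement recalled from~\cite{carbonnel2022Redundancy,lagerkvist2021Coarse}. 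The contrapositive asks: if $R$ can $c$-fgpp-define $\OR_q$, can $R$ also $(p-1)$-fgpp-define $\OR_p$? Since $c \le q-1 \le p-1$ the first definition lifts trivially to a $(p-1)$-fgpp-definition of $\OR_q$, and because $p \le q(p-1)$ for all $p \ge q \ge 2$ one can partition $[p]$ into $q$ blocks of size at most $p-1$ and write $\OR_p(x_1,\ldots,x_p)$ as $\OR_q$ applied to the $q$ block-wise ORs, each being a legitimate $(p-1)$-ary functional guard.

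For part 3, the forward direction is immediate by setting $k=r$ and $c=r-1$. For the reverse, the case $k \le r$ follows from part 2 applied with $p=r$ and $q=k$. The remaining case $k > r$ I would handle by induction on $k$: first prove preservation by $u_k^{k-1}$, then deduce $u_k^c$ for $c < k-1$ from the subfunction relation of part 2. The inductive step rests on a pigeonhole argument: given an application of $u_k^{k-1}$ to $2^k-1$ tuples in $R$ whose $r$ columns match patterns $P_{i_1}^{(k)},\ldots,P_{i_r}^{(k)}$, since $k > r$ there is some $i^* \in [k] \setminus \{i_1,\ldots,i_r\}$. Restricting to the $2^{k-1}-1$ positions with bit $i^* = 0$ and reindexing $[k]\setminus\{i^*\}$ as $[k-1]$ yields a valid input to $u_{k-1}^{k-2}$, since each column $j$ still fails to depend on bit $i_j$ on the sub-cube. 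By induction (base case $k=r+1$ covered by the hypothesis), $u_{k-1}^{k-2}$ preserves $R$, so the restricted output lies in $R$; a quick bit-check shows this output agrees coordinate-wise with the $u_k^{k-1}$ output, since each coordinate is the value at $e_{i_j}^{(k)}$, whose bit $i^*$ is zero and therefore still lives in the sub-cube. The main obstacle I foresee is cleanly formalising the $c$-ary $\OR$-characterization that powers part 2; once that is in place, the rest of the proof is assembly of routine ingredients.
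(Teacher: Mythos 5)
Parts 1 and 3 of your proposal are essentially sound and close to the paper's argument (modulo a harmless $\pi$ versus $\pi^{-1}$ direction swap in part 1, and in part 3 you induct one bit at a time whereas the paper drops all of the unused bits in one step using the symmetry from part 1; both work). Your first observation in part 2, that $u_p^c$ is a subfunction of $u_p^{p-1}$ for $c<p$, is correct and matches the paper's second sub-argument for this item; it correctly reduces the task to showing $u_p^{p-1}$ implies $u_q^{q-1}$ for $q \le p$.

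The gap is in how you then handle $u_p^{p-1} \Rightarrow u_q^{q-1}$. Granting the $\OR$-characterization ``$u_k^c \notin \pPol(R) \iff R$ $c$-fgpp-defines $\OR_k$'' (which itself needs a proof and is not literally a statement of Theorem~\ref{thm:fgppp-pattern_c}), your composition argument does not close. If $R$ $(q-1)$-fgpp-defines $\OR_q$ via guards of arity at most $q-1 \le p-1$, and you write $\OR_p(x_1,\ldots,x_p)$ as $\OR_q$ of $q$ block-wise ORs on blocks $B_1,\ldots,B_q$ each of size at most $p-1$, then substituting $y_j = \OR(x_{B_j})$ into a guard $g(Y)$ with $|Y|\le p-1$ yields a composed term whose scope is $\bigcup_{j: y_j \in Y} B_j$, of size up to $(p-1)^2$ (or $c(p-1)$ if you start from arity $c$), not $p-1$. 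So the composition does not produce a $(p-1)$-fgpp-definition of $\OR_p$ for any $p>2$. This is not just a formalisation issue — the arithmetic of arities genuinely fails. Note that this gap propagates: your part 3 uses part 2 both in the case $k \le r$ and to deduce $u_k^c$ for $c < k-1$ from $u_k^{k-1}$.

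By contrast, the paper's argument for this step is shorter and avoids the $\OR$ detour entirely: if $u_q^{q-1}(t_1,\ldots,t_m)=t$ is defined with $m = 2^q-1$, pick any $t' \in R$ and observe $u_p^{p-1}(t_1,\ldots,t_m,t',\ldots,t')=t$ is also defined. One checks, using Proposition~\ref{prop:ukc-description}, that no position in $\{1,\ldots,2^q-1\}$ is identified with a position in $\{2^q,\ldots,2^p-1\}$ by any pattern $P_{i}$ with $i \le q$ (their XOR has a high bit set, so it cannot equal $e_i$), so the first block still behaves as a $u_q^{q-1}$ input, while the padding positions are all equal to $t'$ and hence unconstrained. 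The concern about ``awkward fallback positions on the $q$-subcube boundary'' that led you to prefer the $\OR$ route does not arise; the padding argument is clean and direct.
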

\begin{proof}
  For the first, if we view the arguments of $U_k^{k-1}$ as
  corresponding to non-zero bitstrings $b \in \{0,1\}^k$, 
  then applying $\pi$ directly to the bitstring yields the desired
  permutation $\sigma$.
    For the second, first let $R \subseteq D^r$ be an $r$-ary relation
  preserved by $u_p^{p-1}$ for some $p \geq 2$, 
  and let $t_1, \ldots, t_m \in R$ be such that
  $u_q^{q-1}(t_1,\ldots,t_m)=t$ is defined, $q < p$.
  Select $t' \in R$ arbitrarily. Then
  $u_p^{p-1}(t_1,\ldots,t_m,t',\ldots,t')=t$ is defined,
  by the description in ~\Cref{prop:ukc-description}.
  Thus $u_p^{p-1}$ implies $u_q^{q-1}$.
  To complete the property, we note that for any $k$,
  and any $a \leq b<k$, $u_k^b$ implies $u_k^a$:
  in line with ~\Cref{prop:ukc-description},
  the patterns of $u_k^c$ for any $c$ correspond to sets $S \subset [k]$
  containing $c$ indexes from $[k]$, and such a pattern has the same
  variable in two positions $i$, $j$ if and only if 
  $i$ and $j$ match on all bits present in $S$.
  Thus, for every pattern $P_S$ of $u_k^a$,
  indexing $a$ bits $S \subset [k]$, let $S' \subset [k]$ be any
  subset of $S$ with $b$ bits. Then the pattern
  $P_{S'}$ is strictly less restrictive than $P_S$,
  so any tuple that matches a pattern of $u_k^a$
  also matches a pattern of $u_k^b$. Similarly, the output
  of pattern $P_{S'}$ matches an argument that also
  matches the output of $P_S$. 

  For the last point, we note by the second property that 
  $R$ is preserved by all $u_k^c$, $k \geq 2$, $c<k$
  if and only if $R$ is preserved by $u_k^{k-1}$ for all $k \geq 2$.
  Thus, assume that $R$ is violated by $u_k^{k-1}$
  for some $k \geq 2$. If $k<r$, then $R$ is violated by $u_r^{r-1}$
  by the second property. Otherwise, assume $k>r$,
  and let $t_1, \ldots, t_m \in R$ be such that
  $u_k^{k-1}(t_1,\ldots,t_m)=t$ is defined where $t \notin R$.
  By the first property, we may assume that this application
  of $u_k^{k-1}$ only uses patterns $P_i$ corresponding to the $r$
  least significant bits according to ~\Cref{prop:ukc-description}.
  But then, all ``repeats'' in the pattern have distance at most $2^{r-1}$
  and $u_r^{r-1}(t_1,\ldots,t_{2^r-1})=t$ is defined, showing
  $u_r^{r-1} \notin \pPol(R)$.
\end{proof}

The last property allows us to verify with a finite computation that a relation cannot be shown to have superlinear NRD using $c$-fgppp-definitions of $\OR$ (i.e., via Lemma~\ref{lm:cube-power-lb}).
This may be interesting to compare with Malt'sev embeddings and the Catalan terms from Section~\ref{sec:linear-nrd}: to verify that a language admits a Malt'sev embedding it needs to be invariant under all (infinitely many) Catalan terms.

Unfortunately, the second property of Lemma~\ref{lm:ukc-basics} does
not generalize: in general, if $p/q \leq k/c$, it does not
necessarily follow that $u_p^q$ implies $u_k^c$. 
Let $P_1=(X_1,y_1), \ldots, P_6=(X_6,y_6)$ enumerate the patterns of $U_4^2$,
and let $V$ be the set of variables of $U_4^2$.
Define a relation $R \subseteq V^6$ by $R=\{t_1,\ldots,t_{15}\}$
where $t_i=(X_1[i], \ldots, X_6[i])$. 
Then by design $R$ is not preserved by $u_4^2$, but
one can verify that $R$ is preserved by $u_2^1=u_2$.

Finally, we note that the final property of Lemma~\ref{lm:ukc-basics}
represents a weakness in the method, in that the cube-power patterns $U_p^q$
cannot be used to prove a lower bound of $\Omega(n^{1+\varepsilon})$
for the non-redundancy of an $r$-ary relation $R$ for any $0 < \varepsilon < 1/(r-1)$,
despite the results of Section~\ref{sec:nrd-bin} showing that infinitely many such bounds exist, even for $r \leq 3$ (at least conditionally, in Section~\ref{subsec:binary-class}, and conjecturally, in Section~\ref{subsec:approach-linear}). %
We show a complementary upper bound, showing that $U_p^q$ cannot show a lower bound $\Omega(n^{r-\varepsilon})$ for the non-redundancy of $R$ for any $0 < \varepsilon < 1$ unless $\NRD(R,n)=\Theta(n^r)$.

\begin{lemma} \label{lm:upq-near-r}
  Let $R \subseteq D^r$ for $r \geq 2$ be a relation such that $U_r \in \pattern(R)$.
  Then $U_p^q \in \pattern(R)$ for every $p/q > r-1$. 
\end{lemma}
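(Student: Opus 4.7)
The plan is to show directly that $u_p^q \in \pPol(R)$, i.e., that whenever $u_p^q$ is defined on tuples $t^{(1)}, \ldots, t^{(2^p-1)} \in R$, the output lies in $R$. By Proposition~\ref{prop:ukc-description} (generalized to $U_p^q$), for each coordinate $i \in [r]$ the $i$th column of the input must match some pattern $P_{S_i}$ with $S_i \in \binom{[p]}{q}$ via a witnessing assignment $v_i \colon \{0,1\}^{S_i} \to D$, giving $t^{(j)}_i = v_i(j|_{S_i})$ where $j|_{S_i}$ denotes the projection of the bitstring $j$ onto $S_i$. The target output then has $t_i = v_i(0^q)$.

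The proof splits according to whether $\bigcup_{i=1}^r S_i$ is all of $[p]$. If $\bigcup_i S_i \subsetneq [p]$, pick any $j^* \in [2^p - 1]$ whose nonzero bits lie outside $\bigcup_i S_i$; then $j^*|_{S_i} = 0^q$ for every $i$, so $t^{(j^*)} = t$ is already in $R$. The substantive case is $\bigcup_i S_i = [p]$, and the combinatorial heart is the claim that each $T_i := S_i \setminus \bigcup_{i' \neq i} S_{i'}$ is non-empty. This follows straight from the hypothesis $p/q > r - 1$: if $T_{i_0} = \emptyset$ then $S_{i_0} \subseteq \bigcup_{i' \neq i_0} S_{i'}$, so $p = |\bigcup_i S_i| = |\bigcup_{i \neq i_0} S_i| \le (r-1)q$, contradicting $p > (r-1)q$.

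Given non-empty $T_i$, I would choose any non-zero $b^{(i)} \in \{0,1\}^{S_i}$ supported on $T_i$, and, for every $k \in [2^r - 1]$, define $j_k \in [2^p - 1]$ by the rule $j_k|_{S_i} = b^{(i)}$ if $i$ is among the set of bits of $k$, and $j_k|_{S_i} = 0^q$ otherwise. This is consistent on overlaps $S_i \cap S_{i'}$ because $b^{(i)}$ and $b^{(i')}$ both vanish there by construction, and $j_k \neq 0$ because some $i$ in the support of $k$ contributes the non-zero string $b^{(i)}$. The $2^r - 1$ tuples $t^{(j_k)}$ then form a legitimate input to $u_r$: column $i$ takes value $v_i(b^{(i)})$ exactly at those $k$ whose $i$th bit is $1$ and $v_i(0^q)$ otherwise, which is precisely the shape of the cube pattern $P_i$ of $U_r$. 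Since $U_r \in \pattern(R)$, the resulting tuple $(v_1(0^q), \ldots, v_r(0^q)) = t$ lies in $R$, as desired.

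The main technical obstacle I expect is bookkeeping rather than conceptual: making sure the bit-indexing conventions of Definition~\ref{def:kcube} are respected throughout (so that column $i$ really matches $P_i$ of $U_r$ with the intended orientation), and checking that non-injectivity of the $v_i$ causes no problem — if $v_i(b^{(i)}) = v_i(0^q)$ then the pattern match at column $i$ is degenerate, but it is still a valid match and still yields output $v_i(0^q)$, so the argument goes through unchanged.
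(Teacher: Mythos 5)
Your proposal is correct and follows the same combinatorial argument as the paper, but argues in the forward direction (from applicability of $u_p^q$ to membership in $R$) rather than the paper's contrapositive (from a violation of $u_p^q$ to a $q$-fgpp-definition of $\OR_p$ and then to a violation of $u_r$). Concretely, your sets $S_i$ are the paper's scopes $X_i$, your private sets $T_i$ correspond to the paper's variables $x_{i_j}$ that occur only in $X_j$, and the arithmetic $p \le (r-1)q$ contradicting $p/q > r-1$ is identical. Where the two approaches differ is in the handling of boundary situations: in the paper's contrapositive framing, the case $\bigcup_i S_i \subsetneq [p]$ cannot arise (a $q$-fgpp-definition of $\OR_p$ must use all $p$ free variables since $\OR_p$ has full support), and the paper does not explicitly mention it; your direct framing does hit this case, and you dispatch it cleanly by exhibiting an input tuple $t^{(j^*)}$ equal to the output. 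Your version is also slightly more self-contained, as the paper's opening step invokes the equivalence between failing $u_p^q$ and $q$-fgpp-defining $\OR_p$ (and assumes without comment that the resulting definition consists of a single atom), whereas you work directly with the pattern matches. Your remark about degenerate matches (where $v_i(b^{(i)}) = v_i(0^q)$) is correct and not a problem: a constant column trivially matches the cube pattern $P_i$ with $x$ and $y$ interpreted identically, and the output is still $v_i(0^q)$.
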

\begin{proof}
  Assume $u_p^q \notin \pPol(R)$ for some $p/q > r-1$. Then $R$
  $q$-fgpp-defines $\OR_p$, and this definition can be assumed to
  consist of a single constraint, i.e.,
  \[
    \OR_p(x_1,\ldots,x_p) \equiv R(g_1(X_1), \ldots, g_r(X_r))
  \]
  for some maps $g_i \colon \{0,1\}^q \to D$ and some scopes
  $X_i \subseteq X$ of $q$ variables each, where $X=\{x_1,\ldots,x_p\}$. 
  First assume that there is a scope $X_i$, say $X_r$, such that every
  variable in $X_r$ also occurs in $X_j$ for some $j<r$. Then
  $|\bigcup_{i=1}^{r-1} X_i|=|X|=p \leq (r-1)q$.
  Otherwise, let $x_{i_j}$, $j \in [r]$ be variables such that
  $x_{i_j}$ occurs only in $X_j$ for $j \in [r]$. Consider the set of
  assignments to $X$ where $x_t=0$ for every $t \in [r] \setminus \{i_1,\ldots,i_r\}$.
  Each map $g_i$, $i \in [r]$, takes two distinct values under these assignments,
  and this defines a subset $F=\prod_{i=1}^r \{a_i,b_i\} \subseteq D^r$
  such that $|F \setminus R|=1$. Thus we have a witness that
  $R$ violates $u_r$. 
\end{proof}

Last, it is interesting to relate cube powers to Mal'tsev embeddings. First, Lemma~\ref{lm:cube-power-lb} implies that   if $\pPol(P)$ lacks $u_k^{k-1}$, then $\NRD(P, n) = \Omega(n^{k/(k-1)})$, and $P$ lacks a finite Mal'tsev embedding. However, we can give a more direct argument as follows.

\begin{lemma}
  Let $R$ be a finite relation with a Mal'tsev embedding (possibly into an infinite domain). Then $R$ is preserved by $u_k^{k-1}$ for every $k \geq 2$.
\end{lemma}
\begin{proof}
  Let $R \subseteq D^r$ for some domain $D$ and arity $r$. 
  Fix $k \geq 2$ and let $u_k^{k-1}(t_1,\ldots,t_m)=t$ be defined,
  where $t_1, \ldots, t_m \in R$ and $m=2^k-1$.
  Assume that the arguments of $u_k^{k-1}$
  are ordered as in Prop.~\ref{prop:ukc-description}.
  Let $D'$ be the domain of the Mal'tsev embedding of $R$, let
  $\varphi \colon (D')^3 \to D'$ be a Mal'tsev term,
  and let $f_1, f_3, \ldots, f_{2^k-1} \in [\varphi]$ be Catalan terms,
  which exist by Theorem~\ref{thm:catalan}.
  Let $\sigma$ be an ordering on $\{0,\ldots,2^k-1\}$ that
  corresponds to the reflected binary Gray code, where 
  $\sigma(i)$ and $\sigma(i+1)$ differ in precisely one digit.
  The Gray code has a recursively defined structure as follows:
  $\sigma(0)=0$, $\sigma(1)=1$, and the $(n+1)$-bit Gray code
  is defined by letting $\sigma(0)$ through $\sigma(2^n-1)$
  be the $n$-bit Gray code, then letting the second half be the
  first half reflected with the top bit set. 
  We note that due to this symmetry, the terms in any tuple in the
  domain of $u_k^{k-1}$ cancel exhaustively if read in the order of $\sigma$.
  
  \begin{claim}
    Let $u_k^{k-1}(x_1,\ldots,x_{2^k-1})=y$ be defined for some
    tuple of values $x_1, \ldots, x_{2^k-1}, y \in D$. Then
    $f_{2^k-1}(x_{\sigma(1)}, \ldots, x_{\sigma(2^k-1)})=y$.
  \end{claim}
  \begin{proof}
    Let $(x_1,\ldots,x_{2^k-1})$ match the pattern $P_i=(X_i,y_i)$
    for $i \in [k]$ in $U_k^{k-1}$, i.e., $X_i[p]$ corresponds to the
    binary expansion of $p$ with bit $i$ removed,
    and consider the sequence $S=(y_i, x_{\sigma(1)}, \ldots, x_{\sigma(2^k-1)})$,
    where $y_i$ in a sense corresponds to $\sigma(0)=0$.     
    Let $p$ be the first position such that bit $i$ of $\sigma(p)$
    is set; then $p=2^q$ for some $q \in \{0,\ldots,k-1\}$.
    Now, for every $d \in \N$, the subsequence
    $S[d \cdot 2p:(d+1) \cdot 2p]$ of $S$ (exclusive of the final point)
    is symmetric around the pair $(p-1,p)$ and will vanish
    if identical adjacent terms are cancelled exhaustively. 
    Applying this to all pairs except the unique pair involving $y_i$
    shows that cancellations inside the $X_i$-part of $S$
    will leave a copy of $y_i$ as the unique uncancelled term.
    Since every tuple in the domain of $u_k^{k-1}$ comes from
    some such pattern $P_i$, the claim follows. 
  \end{proof}
  
  Thus, for every $i \in [r]$ we have that $f_{2^k-1}(t_1[i],\ldots,t_{2^k-1}[i])=t[i]$, 
  so $t \in R$ and $u_k^{k-1} \in \pPol(R)$.
\end{proof}

\subsection{Connections to Hypergraph Tur{\'a}n} \label{sec:hypergraph-turan}

Given a set of patterns $Q$ we are now interested in studying the conditional non-redundancy of families of promise relations invariant under $Q$. 
We study this systematically in this section and exemplify our main result in Section~\ref{subsec:turan_examples}
\label{subsec:turan}

   Before we can state the main result we need to extend the algebraic vocabulary so that it is applicable to CSP instances and not only to relations. First, given $R \subseteq S \subseteq D^r$ and an instance $(X,Y)$ of $\CSP(R \mid S)$ recall that we (by Lemma~\ref{lemma:reduction:to:multipartite}) without loss of generality may assume that it is $r$-partite and that each $i \in [r]$ is associated with exactly one sort $X_i$.
   Given a multisorted pattern $P = (P_1, \ldots, P_r)$ of arity $n$ each $P_i$ then induces a set of partial functions $\interpretation{X_i}{P_i}$, and given $p_i \in \interpretation{X_i}{P_i}$ for every $1 \leq i \leq r$ we can then apply $p_1, \ldots, p_r$ to $y^1, \ldots, y^n$ as $(p_1, \ldots, p_r)(y^1, \ldots, y^n)$ which for each $1 \leq i \leq r$ yields $p_i(y^1_i, \ldots, y^n_i)$.

 \begin{lemma} \label{lem:cpol-redundancy}
  Let $R \subseteq S \subseteq D_1 \times \ldots \times D_r$ be multisorted relations with $r$ sorts $D_1 \cup \ldots \cup D_r = D$ and let $P=(P_1,\ldots, P_r)$ be 
  an $n$-ary multisorted pattern such that $(R,\widetilde{S}) \in \inv(P)$. 
  Let $(X,Y)$ be an $r$-partite instance of $\CSP(R \mid S)$ and let 
  $y^{(1)},\ldots, y^{(n)}, y \in Y$ such that $(p_1, \ldots, p_r)(y^{(1)},\ldots,y^{(n)})=y$ is defined
  and $y \notin \{y^{(1)},\ldots,y^{(n)}\}$ for some $p_1 \in \interpretation{X_1}{P_1}, \ldots, p_r \in \interpretation{X_r}{P_r}$. Then $y$ is conditionally redundant in $(X,Y)$. 
\end{lemma}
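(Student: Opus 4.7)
The plan is a direct proof by contradiction, leveraging the commutation property of pattern polymorphisms (Proposition~\ref{prop:commutes}) to transport the pattern application from the clauses side to the domain side via the witness assignment. Suppose for contradiction that $y$ is \emph{not} conditionally redundant. Then there exists a witness assignment $\sigma \colon X \to D$ (respecting sorts, so $\sigma|_{X_i} \colon X_i \to D_i$) such that $\sigma(y') \in R$ for every $y' \in Y \setminus \{y\}$ while $\sigma(y) \in S \setminus R$. In particular, since $y \notin \{y^{(1)}, \ldots, y^{(n)}\}$, each $\sigma(y^{(j)}) \in R$ for $j \in [n]$.

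Next, I would invoke Proposition~\ref{prop:commutes} separately for each sort: for each $i \in [r]$, from $p_i \in \interpretation{X_i}{P_i}$ and the sort-respecting map $\sigma|_{X_i} \colon X_i \to D_i$, obtain $q_i \in \interpretation{D_i}{P_i}$ such that $p_i, q_i$ commute with $\sigma|_{X_i}$. Using the hypothesis that $(p_1, \ldots, p_r)(y^{(1)}, \ldots, y^{(n)}) = y$ is defined, the commutation property yields, coordinate-wise,
\[
q_i\bigl(\sigma(y^{(1)}_i), \ldots, \sigma(y^{(n)}_i)\bigr) \;=\; \sigma\bigl(p_i(y^{(1)}_i, \ldots, y^{(n)}_i)\bigr) \;=\; \sigma(y_i)
\]
for every $i \in [r]$. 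Hence $(q_1, \ldots, q_r)(\sigma(y^{(1)}), \ldots, \sigma(y^{(n)})) = \sigma(y)$, and this is defined on the appropriate coordinates.

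Finally, since $(R, \widetilde{S}) \in \inv(P)$ and each $\sigma(y^{(j)}) \in R$, the invariance condition for multisorted patterns forces $\sigma(y) \in \widetilde{S} = D^r \setminus (S \setminus R)$. But this contradicts $\sigma(y) \in S \setminus R$. Therefore, no such witness $\sigma$ can exist, and $y$ is conditionally redundant.

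The main obstacle is a conceptual one rather than a technical one: one must carefully distinguish between partial functions in $\interpretation{X_i}{P_i}$ (acting on variables of the instance) and those in $\interpretation{D_i}{P_i}$ (acting on the domain where the relation lives), and ensure the sort-wise application of commutation is done correctly. Once this bookkeeping is in place, the argument is a straightforward application of Proposition~\ref{prop:commutes} and the definition of invariance. No case analysis or combinatorial argument is required.
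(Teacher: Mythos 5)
Your proof is correct and follows the same strategy as the paper's: assume a witness $\sigma$ for conditional non-redundancy exists, transport the pattern application from the clause side to the domain side via commutation to obtain $q_i \in \interpretation{D_i}{P_i}$ with $(q_1,\ldots,q_r)(\sigma(y^{(1)}),\ldots,\sigma(y^{(n)}))=\sigma(y)$, and then invoke invariance of $(R,\widetilde{S})$ under $P$ to reach a contradiction. The only cosmetic difference is that you cite Proposition~\ref{prop:commutes} to obtain the $q_i$'s, whereas the paper re-derives the commutation inline with a short case split on whether the output variable of each $P_i$ occurs among its inputs.
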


\begin{proof}
  Assume that there is an assignment $\sigma \colon X \to D$ such that $\sigma(y) \in S \setminus R$
  and $\sigma(y') \in R$ for every $y' \in Y - \{y\}$. 
  It follows that there exists $q_1 \in \interpretation{D_1}{P_1}, \ldots, q_r \in \interpretation{D_r}{P_r}$
    such that
  $\sigma(y) = (q_1, \ldots, q_r)(\sigma(y^{(1)}), \ldots, \sigma(y^{(n)}))$:
  for every position $i \in [r]$, we have $y_i = p_i(y^{(1)}_i,\ldots,y^{(n)}_i)$,
  hence there exists a pair $((x_1,\ldots,x_n),x_{j_i}) \in P_i$, so that
  $y_i=y^{{(j_i)}}_i$ and:
  \begin{enumerate}
      \item 
      if $x_{j_i} \in \{x_1, \ldots, x_n\}$ then $q_i(\sigma(y^{(1)})_i,\ldots,\sigma(y^{(n)})_i)=\sigma(y^{(j_i)})_i$ for every possible $q_i \in \interpretation{D_i}{P_i}$, and 
      \item 
      if $x_{j_i} \notin \{x_1, \ldots, x_n\}$ then $\interpretation{D_i}{P_i}$ for every $d \in |D_i|$ contains a partial function $q^d$ such that $q^d(\sigma(x_1), \ldots, \sigma(x_n)) = d$, and we choose $d = \sigma(y_i)$.
  \end{enumerate}
  But now $\sigma(y^{(1)}), \ldots, \sigma(y^{(n)}) \in R$ and $\sigma(y) \in S$,
  which implies that $\sigma(y) \in R$, contrary to our assumption. 
\end{proof}   

Lemma~\ref{lem:cpol-redundancy} requires a fair amount of notation to state but as the following example shows is actually a reasonable and straightforward property on conditionally non-redundant constraints.

\begin{example}
We recall the relations $R=\text{3LIN}_{\Z/3\Z}=\{(x,y,z) \in \{0,1,2\}^3 \mid x+y+z \equiv 0 \pmod 3\}$
and $R^*=\text{3LIN}^*_{\Z/3\Z}= R \setminus \{(0,0,0)\}$ from Brakensiek \& Guruswami~\cite{brakensiek2024Redundancy}, introduced in Section~\ref{sec:cube-power}. 
Consider the multisorted polymorphism pattern $P = (P_1, P_2, P_3)$ where (1) $P_1 = P_2 = \{((x), x)\}$ is the pattern satisfied only by the unary projection and (2) $P_3 = \{((x), y)\}$ is an unrestricted pattern, i.e., one satisfied by any unary function. 
We claim that $(R^*, \widetilde{R}) \in \inv(P)$. 
To see this, note first that $\interpretation{\{0,1,2\}}{P_1} = \interpretation{\{0,1,2\}}{P_2} = \{\pi^1_1\}$, while $\interpretation{\{0,1,2\}}{P_3} = \{0,1,2\}^{\{0,1,2\}}$. 
Thus, the preservation condition boils to the following: for any tuple $t \in R^*$ we must have that $(t_1, t_2, d) \in \widetilde{R}$ for any $d \in \{0,1,2\}$, i.e., that $(t_1, t_2, d)$ is not included in $R \setminus R^* = \{(0,0,0)\}$. 
This is clearly true since if at least only entry in $t$ is non-zero and the sum is 0, $t_1$ and $t_2$ cannot both be 0, and the claim follows.

Now, consider an instance $(\{x,y,z,v\}, \{(x,y,z), (x,y,v)\})$ of $\CSP(R^* \mid R)$. We introduce disjoint copies $X_1, X_2, X_3$ of $X$ and then use functions
\begin{enumerate}
\item
$\pi^1_1 \in \interpretation{X_1}{P_1}$,
\item
$\pi^1_1 \in \interpretation{X_2}{P_2}$,
\item
$f \in \interpretation{X_3}{P_3}$ where $f(d) = v$ for every $d \in X_3$.
\end{enumerate}
It then follows that $(p_1, p_2, p_3)((x,y,z)) = (x,y,v) \in Y$ and that $(x,y,v)$ is conditionally redundant in $\CSP(R^* \mid R)$.
\end{example}

To connect to the hypergraph structure, we need the following assumptions.
Let $P=(P_1,\ldots,P_r)$ be an $n$-ary multisorted polymorphism pattern.
Say that $P$ is \emph{unit} if $|P_i|=1$ for every $i \in [r]$, i.e., each $P_i$ consists of a single pattern $((x_{i,1},\ldots,x_{i,n}),y_i)$. For a set of patterns $Q$ with $r$ sorts we let $\inv^{r}(Q) \subseteq \inv(Q)$ be the set of promise relations of arity $r$ (with exactly $r$ sorts).

\begin{proposition} \label{prop:unit_pattern_suffices}
  Let $P = (P_1, \ldots, P_r)$ be an $n$-ary multisorted polymorphism pattern. Then there exists a set of unit patterns $U \subseteq \minor{P}$ such that $\inv^r(U) = \inv^r(P)$.
\end{proposition}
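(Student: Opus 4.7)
The plan is to make $U$ explicit by picking one identity from each sort of $P$. More precisely, if we enumerate $P_i = \{P_{i,1}, \ldots, P_{i,m_i}\}$ for each $i \in [r]$, set
\[
U := \bigl\{ \bigl(\{P_{1,k_1}\}, \ldots, \{P_{r,k_r}\}\bigr) \,\bigm|\, (k_1,\ldots,k_r) \in [m_1]\times\cdots\times[m_r]\bigr\}.
\]
Every element of $U$ is obtained from $P$ by applying the identity variable-identification and then taking a subset at each sort, so $U \subseteq \minor{P}$, which in turn gives the inclusion $\inv^r(P) = \inv^r(\minor{P}) \supseteq \inv^r(U)$ from \Cref{prop:minors_preserve}. (The degenerate case where some $P_i$ is empty may be handled separately, since then $\interpretation{D_i}{P_i}$ consists of the everywhere-undefined function and the preservation condition for $P$ is vacuous.)

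The real content is the reverse inclusion $\inv^r(U) \subseteq \inv^r(P)$. Fix $(S,T) \in \inv^r(U)$ and suppose, for a contradiction, that $(S,T)$ is not preserved by $P$; so there exist $t^{(1)}, \ldots, t^{(n)} \in S$ and partial functions $p_i \in \interpretation{D_i}{P_i}$ for $i \in [r]$ such that the application $(p_1, \ldots, p_r)(t^{(1)}, \ldots, t^{(n)}) = u$ is defined but $u \notin T$. Because each $p_i$ lies in $\interpretation{D_i}{P_i}$, the fact that $(t^{(1)}_i, \ldots, t^{(n)}_i) \in \dom(p_i)$ means this tuple matches at least one identity $P_{i,k_i}$ of $P_i$ via some witnessing assignment $g_i$ of its variables into $D_i$. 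Choose one such $k_i$ for every sort.

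It now suffices to exhibit a function $p'_i \in \interpretation{D_i}{\{P_{i,k_i}\}}$ with $p'_i(t^{(1)}_i, \ldots, t^{(n)}_i) = p_i(t^{(1)}_i, \ldots, t^{(n)}_i) = u_i$, for then the unit pattern $(\{P_{1,k_1}\}, \ldots, \{P_{r,k_r}\}) \in U$ is itself violated on $(S,T)$, contradicting $(S,T) \in \inv^r(U)$. Such a $p'_i$ is constructed by restricting $p_i$ to the domain required by $P_{i,k_i}$: by definition this domain consists of all tuples of the form $(g(x_1), \ldots, g(x_n))$ where $((x_1,\ldots,x_n), y)$ is the identity $P_{i,k_i}$ and $g$ ranges over maps into $D_i$. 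On every such tuple $p_i$ is defined (since $P_{i,k_i} \in P_i$) and its value coincides with the value forced by $P_{i,k_i}$ (this is automatic when $y \in \{x_1, \ldots, x_n\}$, and is consistent with the free choice granted by $P_{i,k_i}$ when $y$ is fresh). Thus $p'_i$ is a bona fide element of $\interpretation{D_i}{\{P_{i,k_i}\}}$, which finishes the argument.

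The only place requiring care is the handling of the "undefined otherwise" clause in the definition of $\interpretation{D}{\cdot}$: one must verify that truncating $p_i$ to the smaller domain prescribed by a single identity yields a function that is exactly in $\interpretation{D_i}{\{P_{i,k_i}\}}$, rather than something strictly larger or smaller. This is the main bookkeeping step but presents no genuine obstacle, since the domain of $\interpretation{D_i}{\{P_{i,k_i}\}}$ is a subset of the domain of $\interpretation{D_i}{P_i}$ and $p_i$ restricted to the former still satisfies the single surviving identity.
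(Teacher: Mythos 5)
Your proof is correct and follows essentially the same route as the paper: define $U$ to be all multisorted unit minors, use Proposition~\ref{prop:minors_preserve} for the easy containment, and localize a hypothetical violation of $P$ to a unit pattern by restricting each $p_i$ to the tuples matching the particular identity witnessing definedness. One symbol is flipped in your easy step: from $U \subseteq \minor{P}$ and the antitonicity of $\inv$ you obtain $\inv^r(P) = \inv^r(\minor{P}) \subseteq \inv^r(U)$, not $\supseteq$ as written; your own next sentence, which labels $\inv^r(U) \subseteq \inv^r(P)$ as the \emph{reverse} inclusion, confirms this is a typo rather than a conceptual error, since the two inclusions you list are otherwise literally identical. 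With $\subseteq$ in place the two parts combine to give equality, and your more careful discussion of why the truncation of $p_i$ lands exactly in $\interpretation{D_i}{\{P_{i,k_i}\}}$ (domain matches exactly, forced values coincide, free outputs remain free) spells out a detail the paper passes over quickly.
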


\begin{proof}
  First, observe that $U = \{(U_1, \ldots, U_r) \mid 1 \leq i \leq r, U_i \subseteq P_i, |U_i| = 1\} \subseteq \minor{P}$ since we allow subsets in the minor definition. Due to Propositition~\ref{prop:minors_preserve} this implies that $\inv(P) \subseteq \inv(U)$ and hence also that $\inv^r(P) \subseteq \inv^r(U)$. For the other direction, assume there exists $(S,T) \in \inv^r(U)$ where $(S,T) \notin \inv^r_D(P)$ for any finite set $D = D_1 \cup \ldots \cup D_r$, and let $p_1 \in \interpretation{D_1}{P_1}, \ldots, p_r \in \interpretation{D_r}{P_r}$ witness this, i.e., $(p_1, \ldots p_r)(s^{(1)}, \ldots, s^{(n)}) = t \notin T$ for $s^{(1)}, \ldots, s^{(n)} \in S$. Each such application of a $p_i$ is witnessed by a unit pattern, and it follows that there exist subfunctions $u_1 \in \interpretation{D_1}{U_1}, \ldots, u_r \in \interpretation{D_r}{U_r}$ of $p_1, \ldots, p_r$ such that $(u_1, \ldots u_r)(s^{(1)}, \ldots, s^{(n)}) = t \notin T$, contradicting the original assumption.
\end{proof}

Let $P=(P_1,\ldots,P_r)$ be a unit multisorted pattern
with patterns $P_i=\{((x_{i,1},\ldots,x_{i,n}),y_i)\}$ (where the 
$x_{i,j}$ are not necessarily distinct). 
Say that $P$ is a \emph{partial projection} if there is $j \in [n]$ such that for every
$i \in [r]$, $y_i=x_{i,j}$, and \emph{non-trivial} otherwise. 

We define a hypergraph $\cH(P)$ from $P$ as follows.
For $i \in [r]$, let $X_i$ be the variable set that $P_i$ is defined
over, and assume (syntactically) that $X_i \cap X_j = \emptyset$
for $i \neq j$. Then $\cH(P)$ is the hypergraph on 
vertex set $X=X_1 \cup \ldots \cup X_r$ and with $n+1$
hyperedges given by $Y=\{(x_{1,j},\ldots,x_{r,j}) \mid j \in [n]\} \cup \{(y_1,\ldots,y_r)\}$.
Refer to $(y_1,\ldots,y_r)$ as the \emph{output edge} of $\cH(P)$.

We can now observe the following. 

\begin{lemma} \label{lem:unit-pattern-hypergraph}
  Let $R \subseteq S \subseteq D^r$ be a pair of multisorted relations and let
  $P$ be a unit multisorted pattern that is not a partial projection, such that $P \in \mpattern(R, \widetilde{S})$. 
  Let $(X,Y)$ be a conditionally non-redundant instance of $\CSP(R \mid S)$.
  Then $(X,Y)$ is $\cH(P)$-free (when interpreted as a hypergraph).
\end{lemma}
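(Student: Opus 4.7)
I plan to prove this by contradiction, directly invoking Lemma~\ref{lem:cpol-redundancy}. Suppose $(X, Y)$, viewed as an $r$-partite $r$-uniform hypergraph with partition $X = X_1 \cup \ldots \cup X_r$, contains $\cH(P)$ as a subhypergraph. Unpacking the definition of $\cH(P)$, this gives a sort-preserving injection $\phi$ from the vertex set of $\cH(P)$ into $X$ such that each input edge maps to a hyperedge $y^{(j)} := (\phi(x_{1,j}), \ldots, \phi(x_{r,j})) \in Y$ for $j \in [n]$, and the output edge maps to $y := (\phi(y_1), \ldots, \phi(y_r)) \in Y$. To reach a contradiction, I want to exhibit partial functions $p_i \in \interpretation{X_i}{P_i}$ such that $(p_1, \ldots, p_r)(y^{(1)}, \ldots, y^{(n)}) = y$ with $y \notin \{y^{(1)}, \ldots, y^{(n)}\}$, since then Lemma~\ref{lem:cpol-redundancy} (applicable because $P \in \mpattern(R, \widetilde{S})$) certifies that $y$ is conditionally redundant in $(X, Y)$, contradicting non-redundancy.

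For each $i \in [r]$, I construct $p_i \in \interpretation{X_i}{P_i}$ with $p_i(\phi(x_{i,1}), \ldots, \phi(x_{i,n})) = \phi(y_i)$ by case analysis on the unit pattern $P_i = \{((x_{i,1}, \ldots, x_{i,n}), y_i)\}$. If $y_i = x_{i,j}$ for some $j$, then by the first clause of the definition of $\interpretation{X_i}{P_i}$, every function in the interpretation sends $(\phi(x_{i,1}), \ldots, \phi(x_{i,n}))$ to $\phi(x_{i,j}) = \phi(y_i)$ (using the assignment $g = \phi$ restricted to the pattern's variables). Otherwise $y_i$ is a variable not appearing among $x_{i,1}, \ldots, x_{i,n}$; the second clause of the definition then forces $(\phi(x_{i,1}), \ldots, \phi(x_{i,n}))$ to lie in the domain of any $p_i \in \interpretation{X_i}{P_i}$, with its output value free to be chosen as any element of $X_i$, so I select the $p_i$ taking value $\phi(y_i)$ there. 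Combining coordinates, $(p_1, \ldots, p_r)(y^{(1)}, \ldots, y^{(n)}) = y$ as required.

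It remains to verify that $y \notin \{y^{(1)}, \ldots, y^{(n)}\}$. This is precisely where the hypothesis that $P$ is not a partial projection enters: for every $j \in [n]$ there exists some $i \in [r]$ with $y_i \neq x_{i,j}$ as distinct vertices of $\cH(P)$ in sort $V_i$. Since $\phi$ is injective on each sort, $\phi(y_i) \neq \phi(x_{i,j})$, and thus $y \neq y^{(j)}$. The main conceptual point, and the only nontrivial step, is the construction of the $p_i$'s, which is made easy by the $r$-partite structure (each coordinate lives in its own sort, so the patterns $P_i$ can be satisfied independently). Everything else is bookkeeping between the syntactic definition of $\cH(P)$ and the identities prescribed by the $P_i$, together with an application of the previous lemma.
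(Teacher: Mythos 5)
Your proof is correct and follows the same route as the paper's: exhibit a subhypergraph occurrence, build the $p_i$'s from it, and invoke Lemma~\ref{lem:cpol-redundancy} to derive conditional redundancy. You fill in explicitly the details the paper dispatches as ``by construction'' and ``it is clear that,'' namely the case analysis on whether the output variable $y_i$ of $P_i$ appears among $x_{i,1},\ldots,x_{i,n}$, and the use of the non-partial-projection hypothesis together with injectivity of the embedding to ensure $y \neq y^{(j)}$ for every $j$.
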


\begin{proof}
    Let $\{y^{(1)},\ldots,y^{(n)},y\} \subseteq Y$ form a hypergraph isomorphic to $\cH(P)$
    such that each $y^{(i)}$ maps to the $i$th edge of $\cH(P)$ in the order of construction 
    and where $y$ maps to the output edge. Then, by construction, $(p_1, \ldots, p_r)(y^{(1)},\ldots, y^{(n)})$ is defined for all $p_1 \in \interpretation{X_1}{P_1}, \ldots, p_r \in \interpretation{X_r}{P_r}$, and it is clear that there exists such functions which yield $y$. Since $P$ is not a partial projection it follows that $y \notin \{y^{(1)},\ldots,y^{(n)}\}$ and that 
    $(X,Y)$ is conditionally redundant by Lemma~\ref{lem:cpol-redundancy}.    
\end{proof}

Recall by Proposition~\ref{prop:unit_pattern_suffices} that for $r$-ary relations a general multisorted pattern
$P=(P_1,\ldots,P_r)$ is equivalent to the collection of unit patterns
produced by selecting one pair from $P_i$ for every $i \in [r]$. Hence, working with only unit patterns is not restrictive, and
Lemma~\ref{lem:unit-pattern-hypergraph} then implies that every multisorted
polymorphism pattern implies a finite collection of excluded multipartite hypergraphs.

To show the result in the other direction, consider the following. 
Let an instance $(X,Y)$ of $\CSP(R \mid S)$ be \emph{minimal redundant}
if $(X,Y)$ is conditionally redundant but for every $y \in Y$,
$(X,Y-\{y\})$ is a conditionally non-redundant instance of $\CSP(R \mid S)$. 
Note that this is a well-behaved notion since conditionally
non-redundant instances $(X,Y)$ are closed under taking subsets of $Y$. We now show the following.

\begin{lemma}\label{lem:H-to-P}
  Let $(X,Y)$ be a minimal redundant instance of $\CSP(R \mid S)$
  for some $R \subseteq S \subseteq D^r$.
  Let $y \in Y$ be such that there is no assignment $\psi$
  that satisfies every $y' \in Y-\{y\}$ (i.e., $\psi(y') \in R$) but has $\psi(y) \in S \setminus R$. 
  Then there is a unit multisorted polymorphism pattern $P$ and a
  permutation $y_1, \ldots, y_m$ of $Y-\{y\}$ such that $P \in \mpattern(R,\widetilde{S})$ and $p_1 \in \interpretation{X_1}{P_1}, \ldots, p_r \in \interpretation{X_r}{P_r}$ such that
  $(p_1, \ldots, p_r)(y_1,\ldots,y_m)=y$ and $\cH(P)=(X,Y)$.
  \end{lemma}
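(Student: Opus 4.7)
My plan is to construct $P$ directly from the instance in the obvious way. Fix any permutation $y_1, \ldots, y_m$ of $Y \setminus \{y\}$, view each partite class $X_i$ of $X$ as the variable set over which $P_i$ is defined, and set
\[
  P_i = \{((y_1[i], y_2[i], \ldots, y_m[i]),\, y[i])\}
\]
for each $i \in [r]$. By the definition of $\cH(P)$, this gives $\cH(P) = (X,Y)$ on the nose: the $j$-th ``column'' edge $(y_1[j], \ldots, y_r[j])$ is exactly $y_j$, and the output edge $(y[1], \ldots, y[r])$ is exactly $y$. For the application identity, I would pick, for each $i$, some $p_i \in \interpretation{X_i}{P_i}$ mapping $(y_1[i], \ldots, y_m[i])$ to $y[i]$: if $y[i]$ already occurs among the $y_j[i]$ this is forced by case~1 of the definition of $\interpretation{}{}$, and otherwise the pattern is of the free-output type so at least one $p_i \in \interpretation{X_i}{P_i}$ can be chosen to produce $y[i]$ on this input.

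The core of the proof is showing $P \in \mpattern(R, \widetilde{S})$. I would argue by contrapositive. Suppose there exist $q_1 \in \interpretation{D_1}{P_1}, \ldots, q_r \in \interpretation{D_r}{P_r}$ and tuples $t^{(1)}, \ldots, t^{(m)} \in R$ such that every application $q_i(t^{(1)}[i], \ldots, t^{(m)}[i])$ is defined and $(q_1, \ldots, q_r)(t^{(1)}, \ldots, t^{(m)}) = t \in S \setminus R$. Define $\psi : X \to D$ by $\psi(y_j[i]) := t^{(j)}[i]$ for every $i \in [r]$ and $j \in [m]$; this is well-defined because whenever $y_j[i] = y_{j'}[i]$ the definedness of $q_i$ on a common input forces $t^{(j)}[i] = t^{(j')}[i]$. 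For each position $i$ where $y[i]$ does not already appear among $y_1[i], \ldots, y_m[i]$, extend by $\psi(y[i]) := t[i]$, and set any remaining (isolated) variables arbitrarily. Then $\psi(y_j) = t^{(j)} \in R$ for every $j \in [m]$, so $\psi$ satisfies every clause of $Y \setminus \{y\}$. Unpacking the two cases in the definition of $P_i$ shows that $\psi(y[i]) = t[i]$ also in the case $y[i] = y_k[i]$ (since $q_i$ is then forced to return $t^{(k)}[i] = \psi(y_k[i]) = \psi(y[i])$), so $\psi(y) = t \in S \setminus R$. This contradicts the standing hypothesis that no such $\psi$ exists for $y$.

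The only step that requires genuine care is the consistency check that the extended $\psi$ agrees with the values prescribed by the $q_i$ on every position; this is exactly why $P_i$ was designed with the column entries $y_1[i], \ldots, y_m[i]$ as inputs and $y[i]$ as the output, so that the two cases of $\interpretation{D_i}{P_i}$ align perfectly with whether $y[i]$ is or is not already present in the column. I do not expect a real obstacle here: once the construction is in place, the preservation argument is essentially bookkeeping, and the minimal-redundancy assumption on $(X,Y)$ is used only implicitly via the choice of the distinguished clause $y$ and its failing-assignment hypothesis.
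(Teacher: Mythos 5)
Your proof is essentially the same as the paper's: you construct the unit pattern $P_i$ by reading off the $i$-th coordinate of the column clauses $y_1,\ldots,y_m$ as input variables and the $i$-th coordinate of $y$ as output, observe $\cH(P)=(X,Y)$ by construction, and then argue by contradiction that if $P\notin\mpattern(R,\widetilde S)$ then the violating tuples $t^{(1)},\ldots,t^{(m)},t$ assemble into an assignment $\psi$ witnessing that $y$ is conditionally non-redundant, contradicting the hypothesis. The extra care you take with well-definedness of $\psi$ (forced by the domain condition on $q_i$ whenever $y_j[i]=y_{j'}[i]$) and with choosing the $p_i$ achieving $(p_1,\ldots,p_r)(y_1,\ldots,y_m)=y$ merely spells out what the paper compresses into ``consistently defined by construction'' and ``the remaining statements hold by construction.''
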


\begin{proof}
  Let $Y=\{y_1,\ldots,y_m,y\}$ and let $y$ be such that there is no
  assignment $\psi$ such that $\psi(y') \in R$ for every $y' \in Y-\{y\}$
  and $\psi(y) \in S \setminus R$. This exists, as otherwise $(X,Y)$
  is conditionally non-redundant. Define a multisorted pattern
  $P=(P_1,\ldots,P_r)$ on variable set $X$ such that for $i \in [r]$
  we have $P_i=\{((y_1(i), \ldots, y_m(i)),y(i)\}$. Observe that $P$ is a unit and non-trivial.  Next, we show that $P \in \mpattern(R,\widetilde{S})$.

  Assume for the sake of contradiction that $P \not\in \mpattern(R,\widetilde{S})$. Then, there exist tuples $t_1, \ldots, t_m \in R$ and $p_1 \in \interpretation{X_1}{P_1}, \ldots, p_r \in \interpretation{X_r}{P_r}$
  such that $t = (p_1, \ldots, p_r)(t_1,\ldots,t_m)$ is defined and $t \in S \setminus R$.
  Define $\psi \colon X \to D$ by letting $\psi(y_i(j))=t_i(j)$ for
  every $i \in [m]$, $j \in [r]$, and likewise $\psi(y(j))=t(j)$.
  Note that this is consistently defined by construction of $P$.
  Then $\psi(y_i) \in R$ for every $i \in [m]$, and $\psi(y) \in S \setminus R$, 
  which is contrary to our assumptions. Thus $P \in \mpattern(R,\widetilde{S})$.
  The remaining statements hold by construction. 
\end{proof}
To finish our characterization we first define the following tensor product  which allows us to combine two promise relations $(S_1, T_1)$  and $(S_2, T_2)$ into a promise relation of the same arity but over a larger domain. First, given $R$ and $S$ we let
\[R \otimes S := (\{((a_1,b_1),\hdots, (a_r,b_r)) : a \in R, b \in S\})\]
and then
\[(S_1, T_1) \otimes (S_2, T_2) := (S_1 \otimes S_2, T_1 \otimes T_2).\]

\begin{proposition}
    Let $(S_1, T_1)$ and $(S_2, T_2)$ be two $r$-ary multisorted relations with $r$ sorts. Then $(S_1, T_1) \otimes (S_2, T_2)$ is fgppp-equivalent to $\{(S_1, T_1), (S_2, T_2)\}$.
\end{proposition}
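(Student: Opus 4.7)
The plan is to exhibit explicit fgpp-definitions in both directions, each using a single guarded atom whose shape is chosen so that the same formula simultaneously defines the $S$-part and the $T$-part.

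\textbf{From $\{(S_1,T_1),(S_2,T_2)\}$ to $(S_1,T_1)\otimes(S_2,T_2)$.} For each sort $i\in[r]$ I would use the two projections $\pi^{(i)}_1\colon D_i\times E_i\to D_i$ and $\pi^{(i)}_2\colon D_i\times E_i\to E_i$ as guarding maps of arity $c=1$, and write
\[
(S_1\otimes S_2)(x_1,\ldots,x_r) \equiv S_1(\pi^{(1)}_1(x_1),\ldots,\pi^{(r)}_1(x_r)) \wedge S_2(\pi^{(1)}_2(x_1),\ldots,\pi^{(r)}_2(x_r)),
\]
and similarly for $T_1\otimes T_2$ with the $T_i$ relations in place of $S_i$. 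The equivalence is immediate from the definition of the tensor product: a tuple $(u_1,\ldots,u_r)\in(D_i\times E_i)_{i\in[r]}$ lies in $R\otimes R'$ precisely when the two componentwise projections lie in $R$ and $R'$ respectively. Since the \emph{same} formula defines both the $S$-part and the $T$-part, this is a legitimate fgpp-definition in the sense of Definition~\ref{def:fgppp}.

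\textbf{From $(S_1,T_1)\otimes(S_2,T_2)$ to each of $(S_1,T_1)$ and $(S_2,T_2)$.} Assuming $S_2\neq\emptyset$ (the degenerate case where $S_1=\emptyset$ or $S_2=\emptyset$ I would handle separately via a strict relaxation together with the empty-relation atom $\mathsf{f}$), I would fix an arbitrary witness tuple $b^\star=(b^\star_1,\ldots,b^\star_r)\in S_2$ and, for each sort $i$, define the guarding map $g_i\colon D_i\to D_i\times E_i$ by $g_i(a)=(a,b^\star_i)$. Then I would use the single-atom fgpp-definition
\[
S_1(x_1,\ldots,x_r) \equiv (S_1\otimes S_2)(g_1(x_1),\ldots,g_r(x_r)),
\]
and simultaneously
\[
T_1(x_1,\ldots,x_r) \equiv (T_1\otimes T_2)(g_1(x_1),\ldots,g_r(x_r)).
\]
To verify correctness I would unpack both sides: the right-hand side asserts that $((x_1,b^\star_1),\ldots,(x_r,b^\star_r))$ lies in $S_1\otimes S_2$ (resp.\ $T_1\otimes T_2$), which by definition of $\otimes$ is equivalent to $(x_1,\ldots,x_r)\in S_1$ (resp.\ $T_1$) \emph{and} $(b^\star_1,\ldots,b^\star_r)\in S_2$ (resp.\ $T_2$). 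The second conjunct holds by the choice $b^\star\in S_2\subseteq T_2$, so it drops out and the equivalence is established. The symmetric argument with roles of $1$ and $2$ swapped (using a fixed $a^\star\in S_1$) gives the fgpp-definition of $(S_2,T_2)$.

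\textbf{Main subtlety.} The only delicate point is the requirement that a single formula must serve both for the promise side $T$ and the constraint side $S$; this is exactly what forces us to pick $b^\star$ from $S_2$ rather than merely from $T_2$, so that the ``dropped'' conjunct is true in both the $S$-copy and the $T$-copy simultaneously. Once this observation is made, neither direction requires any further machinery beyond Definition~\ref{def:fgppp}, and the proof is immediate.
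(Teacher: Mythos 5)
Your proof is correct and takes essentially the same approach as the paper: the forward direction uses componentwise projections as unary guards, and the reverse direction fixes a witness tuple in $S_2$ and pads with it. You additionally flag and handle the degenerate case $S_2 = \emptyset$ (which the paper's proof silently assumes away when it picks $s \in S_2$), and your remark that the padding tuple must come from $S_2$ rather than merely $T_2$ is a helpful clarification of why the same formula works on both the $S$- and $T$-sides.
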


\begin{proof}
  Assume that $S_1 \subseteq T_1 \subseteq D_1 \times \ldots \times D_r$ and that $S_2 \subseteq T_2 \subseteq E_1 \times \ldots \times E_r$. For $1 \leq i \leq r$ let $g^i_1 \colon D_i \times E_i \to D_i$ be the first projection and $g^i_2 \colon D_i \times E_i \to E_i$ the second projection. Then, 
    \[
    (S_1 \otimes S_2)(x_1, \hdots, x_r) \equiv S_1(g^1_1(x_1), \hdots, g^r_1(x_r)) \wedge S_2(g^1_2(x_1), \hdots, g^r_2(x_r)),
  \]
  and
      \[
    (T_1 \otimes T_2)(x_1, \hdots, x_r) \equiv T_1(g^1_1(x_1), \hdots, g^r_1(x_r)) \wedge T_2(g^1_2(x_1), \hdots, g^r_2(x_r)).
  \]
    Thus, $\{(S_1, T_1), (S_2, T_2)\}$ fgppp-defines $(S_1, T_1) \otimes (S_2, T_2)$. For the other direction, fix $s \in S_2 \subseteq T_2$, and for $i \in [r]$, let $h_i : D_i \to D_i \times E_i$ be the map $h_i(x) = (x, s_i)$. Then,
    \[
    S_1(x_1, \hdots, x_r) \equiv (S_1 \otimes S_2)(h_1(x_1), \hdots, h_r(x_r)),
  \]
  and
      \[
    T_1(x_1, \hdots, x_r) \equiv (T_1 \otimes T_2)(h_1(x_1), \hdots, h_r(x_r)).
  \]
    By an analogous formula, $(S_1, T_1) \otimes (S_2, T_2)$ fgppp-defines $(S_2, T_2)$ as well.
\end{proof}

Via Corollary~\ref{cor:pattern_galois} we obtain the following general description.

\begin{corollary} \label{cor:tensor-pol} 
Let $(S_1, T_1)$ and $(S_2, T_2)$ be two multisorted promise relations with $r$ sorts. Then 
\begin{enumerate}
 \item $(S_1, T_1) \otimes (S_2, T_2)$ is fgppp-equivalent to $\{(S_1, T_1), (S_2, T_2)\}$, and
 \item $\mpattern(\{(S_1, T_1), (S_2, T_2)\}) = \mpattern((S_1, T_1) \otimes (S_2, T_2))$.
 \end{enumerate}
\end{corollary}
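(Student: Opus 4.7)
The first bullet is immediate: it simply restates the unnamed proposition that was established just before the corollary, which exhibits explicit fgpp-definitions in both directions using the coordinate projections $g^i_1,g^i_2$ and the constant-padding maps $h_i$. My plan is therefore to quote that proposition verbatim for part (1) and to devote the proof to deriving part (2) from it.

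For part (2), the key observation is that $\mpattern$ extended to languages is just the intersection: by unfolding the definition, $\mpattern(\{(S_1,T_1),(S_2,T_2)\}) = \mpattern(S_1,T_1) \cap \mpattern(S_2,T_2)$, since a multisorted pattern preserves a language iff it preserves each relation pair in it. I would state this as a short lemma (or simply as an unpacking of the definition) at the start of the argument.

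With that in hand, I would apply Corollary~\ref{cor:pattern_galois} twice, in each direction of the fgppp-equivalence from part (1). Since $\{(S_1,T_1),(S_2,T_2)\}$ fgppp-defines $(S_1,T_1)\otimes(S_2,T_2)$, any pattern preserving both pairs must also preserve the tensor product, giving
\[
\mpattern(S_1,T_1)\cap\mpattern(S_2,T_2) \;\subseteq\; \mpattern((S_1,T_1)\otimes(S_2,T_2)).
\]
Conversely, because $(S_1,T_1)\otimes(S_2,T_2)$ fgppp-defines each of $(S_1,T_1)$ and $(S_2,T_2)$ (via the maps $g^i_1,g^i_2$), the Galois correspondence yields $\mpattern((S_1,T_1)\otimes(S_2,T_2)) \subseteq \mpattern(S_j,T_j)$ for $j\in\{1,2\}$, hence also into their intersection. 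Combining both inclusions with the first display gives the claimed equality.

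The only mildly delicate point is that Corollary~\ref{cor:pattern_galois} was stated for a single relation pair on each side, so I need to be sure that the fgppp-definition of $(S_1,T_1)\otimes(S_2,T_2)$ from $\{(S_1,T_1),(S_2,T_2)\}$ genuinely fits the single-relation-pair framework — but it does, since an fgppp-definition is allowed to conjoin arbitrarily many atoms drawn from the defining language, and the explicit formula in the preceding proposition uses exactly one atom from each of the two pairs. Apart from that sanity check, the whole corollary is a formal consequence of the Galois machinery already developed, so I do not anticipate any real obstacle.
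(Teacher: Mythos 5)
Your proposal takes the same route the paper intends: part (1) is the preceding (unnamed) proposition, and part (2) follows by applying the pattern-level Galois correspondence in both directions of the fgppp-equivalence, together with the observation that $\mpattern$ of a finite language is the intersection of the individual $\mpattern$ sets. That is the correct argument.

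Two small repairs, though. First, you attribute the maps $g^i_1,g^i_2$ to the converse direction (defining each $(S_j,T_j)$ from the tensor product); in fact the $g^i_j$ coordinate projections are used in the \emph{forward} direction, i.e., in defining $(S_1,T_1)\otimes(S_2,T_2)$ from $\{(S_1,T_1),(S_2,T_2)\}$, while the converse uses the constant-padding maps $h_i(x)=(x,s_i)$ (and symmetrically $h'_i(x)=(s'_i,x)$ for recovering $(S_2,T_2)$). Second, your resolution of the ``delicate point'' is aimed at the wrong target: the concern is not whether the formula ``fits the single-relation-pair framework'' but whether the Galois result is available when the \emph{defining} side is a two-element language rather than a single pair. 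The clean way to resolve this is to observe that Proposition~\ref{prop:if_fgpp_then_pattern} is already stated for promise \emph{languages}, so the closure of $\inv(Q)$ under fgppp-definitions applies directly when the defining language is $\{(S_1,T_1),(S_2,T_2)\}$; alternatively, one can note that $\mpattern(\{(S_1,T_1),(S_2,T_2)\})=\mpattern(S_1,T_1)\cap\mpattern(S_2,T_2)$ and that the forward fgpp-formula uses one atom from each pair, so preservation by $P$ of both atoms implies preservation of the conjunction. With these two adjustments the proof is complete and matches the paper's intent.
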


Given a set of multisorted patterns $Q$ (all with exactly $r$ sorts), we define $\inv^{= r}(Q)$ to be the set of multisorted pairs $(R,\widetilde{S})$ of arity $r$ over any domain with exactly $r$ sorts, one for each argument, such that $(R,\widetilde{S}) \in \inv(Q)$. We then define \[\NRD(\inv^{(= r)}(Q), n) := \max_{(R,S) \in \inv^{= r}(Q)}\NRD(R \mid S, n).\] We now prove  that $\NRD(\inv^{= r}(Q, n)$ is precisely captured by a hypergraph Tur{\'a}n problem, generalizing observations of \cite{bessiere2020Chain,carbonnel2022Redundancy}.

\begin{theorem}\label{lem:inv-turan}
  Let $Q$ be a finite set of multisorted polymorphism patterns with exactly $r$ sorts. Then there is a finite set $\cH(Q)$ of $r$-partite hypergraphs with the following property: %
  for all $n \in N$,
  \begin{align}
    \NRD(\inv^{= r}(Q), n) = \ex_r(n, \cH(Q)),\label{eq:nrd-turan}
  \end{align}
  where $\ex_r(n, \cH)$ is the size of the largest $r$-partite $r$-uniform hypergraph on $n$ vertices excluding $\cH$.
\end{theorem}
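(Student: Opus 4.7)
The plan is to define $\cH(Q)$ using the unit-pattern reduction from Proposition~\ref{prop:unit_pattern_suffices}, then prove the two inequalities separately, with the upper bound being nearly immediate from Lemma~\ref{lem:unit-pattern-hypergraph} and the lower bound requiring a careful ``universal'' construction of a relation pair from the extremal hypergraph.

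I would first define $\cH(Q)$ as follows. By Proposition~\ref{prop:unit_pattern_suffices}, each $P \in Q$ is equivalent, for the purposes of $\inv^{=r}$, to the finite set $U(P)$ of unit multisorted patterns obtained by picking a single identity from each sort-component of $P$. Let $\cH(Q) := \{\cH(P') : P' \in \bigcup_{P \in Q} U(P),\ P' \text{ is not a partial projection}\}$. Since $Q$ is finite and each $U(P)$ is finite, $\cH(Q)$ is a finite set of $r$-partite hypergraphs.

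For the upper bound $\NRD(\inv^{=r}(Q), n) \leq \ex_r(n, \cH(Q))$, I would fix any $(R, S) \in \inv^{=r}(Q)$ and a conditionally non-redundant $r$-partite instance $(X, Y)$ of $\CSP(R \mid S)$ on $n$ vertices (using the multipartite reduction, which carries over to the conditional setting). By Proposition~\ref{prop:unit_pattern_suffices} and Proposition~\ref{prop:minors_preserve}, every non-trivial unit pattern $P'$ used to build $\cH(Q)$ lies in $\mpattern(R, \widetilde{S})$. Then Lemma~\ref{lem:unit-pattern-hypergraph} yields that $(X, Y)$ is $\cH(P')$-free for every such $P'$, hence $\cH(Q)$-free, so $|Y| \leq \ex_r(n, \cH(Q))$.

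For the lower bound, I would take an extremal $\cH(Q)$-free $r$-partite hypergraph $(X, Y)$ with $|Y| = \ex_r(n, \cH(Q))$ and construct, from it, a pair $(R, S) \in \inv^{=r}(Q)$ for which $(X, Y)$ is conditionally non-redundant. The construction is of ``free'' type: pick disjoint domains $D_1, \ldots, D_r$ rich enough to host witness assignments $\sigma_y : X \to D$ for each $y \in Y$ (one natural choice is $D_i = Y \cup \{\star\}$ with $\sigma_y$ indicating incidence and orientation relative to $y$). Let $R_0 := \{\sigma_y(y') : y, y' \in Y,\ y' \neq y\}$, let $R$ be the pattern-closure of $R_0$ under $Q$, and set $S := R \cup \{\sigma_y(y) : y \in Y\}$. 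Invariance $(R, \widetilde{S}) \in \inv(Q)$ will hold by construction once one checks that $\sigma_y(y) \notin R$ for every $y \in Y$, which is the crux: if some $\sigma_{y^*}(y^*)$ were produced by applying a $P \in Q$ to tuples in $R$, one unwinds the derivation back through $R_0$ using Lemma~\ref{lem:H-to-P} (applied to a minimal redundancy witness) and Corollary~\ref{cor:tensor-pol} to recover a unit sub-pattern $P' \in [Q]_{\min}$ and an injective homomorphism $\cH(P') \hookrightarrow (X, Y)$, contradicting $\cH(Q)$-freeness.

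The main obstacle will be making the free construction and the unwinding in the lower bound precise: concretely, choosing $D$ and $\sigma_y$ so that every tuple in the closure of $R_0$ under $Q$ can be traced back to a single coherent witness assignment in the hypergraph, so that an accidental production of $\sigma_{y^*}(y^*)$ really does embed a hypergraph from $\cH(Q)$. I expect this to require introducing enough ``labels'' in $D$ (e.g.\ one coordinate per $y \in Y$) so that distinct witness assignments do not interfere, reducing the verification to the one-witness case handled by Lemma~\ref{lem:H-to-P}.
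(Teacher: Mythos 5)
Your $\cH(Q)$ definition and upper bound track the paper closely: the paper closes $Q$ under surjective minors before taking unit subpatterns, but for a unit pattern a surjective minor has the same associated hypergraph (the map $\cH(\cdot)$ produces a \emph{set} of columns, so deduplicating repeated columns changes nothing), so the two definitions yield the same family of forbidden hypergraphs; you also explicitly exclude partial projections, which the paper's stated definition omits but which its proof implicitly requires via Lemma~\ref{lem:unit-pattern-hypergraph}. The upper bound then follows exactly as you say from Proposition~\ref{prop:unit_pattern_suffices} and Lemma~\ref{lem:unit-pattern-hypergraph}.

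The gap is in the lower bound. You propose a single ``free'' construction: fix the extremal $\cH(Q)$-free $(X,Y)$, pick witnesses $\sigma_y$, set $R_0 := \{\sigma_y(y') : y,y' \in Y,\ y' \neq y\}$, let $R$ be the $Q$-closure of $R_0$, and $S := R \cup \{\sigma_y(y) : y \in Y\}$. The burden is to show $\sigma_{y^*}(y^*) \notin R$, and the verification you sketch does not work. The closure is iterated: already after one pattern application $R$ contains tuples mixing columns from distinct witnesses, and at depth at least two the inputs to a pattern may be ``phantom'' tuples not equal to any $\sigma_y(y')$, so one must control an arbitrarily deep derivation tree, not a single application. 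Your appeal to Lemma~\ref{lem:H-to-P} misuses it: that lemma takes a minimal \emph{redundant hypergraph} for a fixed $(R,S)$ and returns a unit pattern in $\mpattern(R,\widetilde{S})$, which is not a tool for turning a derivation in a pattern closure into a subhypergraph of $(X,Y)$; and Corollary~\ref{cor:tensor-pol} concerns fgppp-equivalence of tensor products, not extraction of subpatterns from derivations. The paper's proof avoids computing a closure at all: for each $Y' \subseteq Y$ and $y' \in Y'$ it forms the candidate pattern $P(Y',y')$, uses $\cH(Q)$-freeness (this is precisely where the surjective-minor phrasing pays off) to conclude $P(Y',y') \notin \minor{Q}$, invokes the Galois machinery to produce a multisorted pair $(R(Y',y'),S(Y',y')) \in \inv^{=r}(Q)$ that is \emph{not} preserved by $P(Y',y')$, and then tensors all these pairs together via Corollary~\ref{cor:tensor-pol}. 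Non-redundancy of $(X,Y)$ for the tensored $(R,S)$ falls out of Lemma~\ref{lem:H-to-P} by contradiction, with no interference between witnesses because each pair $(Y',y')$ owns its own tensor coordinate. That is the ``labelling'' you were looking for, but implemented abstractly per $(Y',y')$ rather than concretely per $y$ in a single free structure.
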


\begin{proof}
Let $[Q]_{\surj}$ be the closure of $Q$ with respect to \emph{surjective} minors: in other words, if $P \in Q$ of arity $m$, then $P'$ of arity $n$ is in $[Q]_{\surj}$ if there exists a surjective function $h \colon [m] \to [n]$ (so $n \le m$) such that for $i \in [r]$ and every $((x_1,\hdots, x_n), x) \in P'_i$ we have that $((x_{h(1)}, \hdots, x_{h(m)}), x) \in P_i$.

Recall that if $P$ is a unit pattern polymorphism, then there is a corresponding hypergraph $\cH(P)$. Slightly overloading the notation we then define
\[
    \cH(Q) := \{\cH(P) : P \in [Q]_{\surj}, P\text{ unit}\}.
\]

We prove (\ref{eq:nrd-turan}) as a pair of inequalities. Since $\inv^{= r}(Q) = \inv^{= r}([Q]_{\surj})$ (this is a special case of Proposition~\ref{prop:minors_preserve}), we have that the $\le$ direction of (\ref{eq:nrd-turan}) is an immediate consequence of Lemma~\ref{lem:unit-pattern-hypergraph}. 

Let $(X,Y)$ be the largest $r$-partite hypergraph on $n$ vertices which is $\cH(Q)$-free. For every $Y' \subseteq Y$ and $y' \in Y'$, let $P(Y',y')$ be the $r$-sorted unit pattern polymorphism of arity $\ar(P(Y',Y')= |Y'|-1$ which maps $Y' \setminus \{y'\}$ to $y'$.
The key observation is that $P(Y', y') \not\in [Q]_{\min}$. %
If not, there exists $P' \in Q$ and $P'' \subseteq P'$ unit such that $P(Y', y') = (P'')_{/h}$ for some \emph{non}-surjective $h \colon [\ar(P'')] \to [\ar(P(Y',y'))]$. Note then there must exist $Y'' \subseteq Y'$ and a surjective map $h' \colon [\ar(P'')] \to [\ar(P(Y'',y')]$ such that $h$ and $h'$ are equivalent in the sense that $P(Y'', y') = (P'')_{/h'}$
In that case, $P(Y'', y') \in [Q]_{\min}$, which contradicts the fact that $(X,Y)$ and thus $(X,Y'')$ is $\cH(Q)$-free.

Thus, $P(Y', y') \not\in [Q]_{\min}$, so $\inv^{= r}(Q) \supsetneq \inv^{= r}(Q \cup \{P(Y',y')\})$. Thus, it is possible to find $(R(Y',y'), S(Y',y')) \in \inv^{= r}(Q)$ such that $P(Y',y') \not\in \mpattern(R(Y',y'), S(Y',y'))$. Now define
\[
    (R, S) := \bigotimes_{Y' \subseteq Y, y' \in Y'} (R(Y',y'), S(Y',y')).
\]
By Corollary~\ref{cor:tensor-pol}, we know that $Q \subseteq \mpattern(R, S)$, but $P(Y',y') \not\in \mpattern(R, S)$ for all $Y'$ and $y'$. To finish, it suffices to prove that $(X,Y)$ is a conditionally non-redundant instance of $\CSP(R \mid S)$.
If not, there exists a minimally redundant subinstance $(X,Y')$. By Lemma~\ref{lem:H-to-P}, we then have that $P(Y', y') \in \mpattern(R,\widetilde{S}),$ a contradiction. Thus, $(X,Y)$ is non-redundant.
\end{proof}

\subsection{Hypergraph Turán Examples and Applications}
\label{subsec:turan_examples}

Let us now consider some prominent examples of this connection.

\begin{example}
  One of the most basic polymorphism patterns is the $k$-NU 
  pattern \[P=\{((y,x,\ldots,x),x), ((x,y,\ldots,x),x), \ldots, ((x,x,\ldots,y),x)\}\] (introduced in Example~\ref{ex:pattern_power}).
  If we turn this into a $k$-partite, $k$-uniform hypergraph by placing
  each pattern in a different vertex class, then we get the hypergraph
  $\cH$ with edges \[\{(x, x, \ldots, x), (x, x, \ldots, y), \ldots,
  (y, x, \ldots, x)\}.\] It is easy to see that a $k$-partite, $k$-uniform
  hypergraph $(X,Y)$ is $\cH$-free if and only if the following holds:
  for each edge $E \in Y$ there is a strict subedge $E' \subset E$
  such that $E'$ only occurs in $E$. Therefore, $|Y| = \Oh(|X|^{k-1})$
  and $\ex_r(n, \cH)=\Theta(n^{k-1})$. In this case, the bound is tight,
  in that $\OR_{k-1}$ (e.g., with a fictitious added argument) is
  preserved by the $k$-NU pattern. 
\end{example}

\begin{example}
  Another class of tight examples comes from Section~\ref{sec:nrd-bin}.
  First consider the promise relation $(C_{2k}^*, C_{2k})$ for $k \geq 2$.
  In this case, the conditional polymorphisms are the patterns
  with $P_1=\{((0,1,1,2,2,\ldots, t-1, t-1),0)\}$
  and $P_2=\{((1,0,2,1,\ldots,t-2,t-1), 0)\}$ for $t<k$, 
  and the corresponding forbidden hypergraphs (which are graphs, since $r=2$)
  are the even cycles of length less than $2k$.
  By the results in Section~\ref{subsec:binary-class}, we have
  $\NRD(C_{2k}^* \mid C_{2k}, n) = \Theta(\ex_2(n, \{C_2,C_4,\ldots,C_{2k-2}\}))$.  
  Similarly, for the relation $R_{2k}$ in Section~\ref{subsec:inf-ternary},
  we get a multisorted pattern with $P_1=\{((x,\ldots,x),x)\}$
  and where $P_2$ and $P_3$ enumerate a cycle $C_{2t}$, $t<k$ as above.
  Let $\cH(Q)$ be the corresponding set of hypergraphs. Then a
  3-uniform, 3-partite hypergraph $(X,Y)$ is $\cH(Q)$-free if and only if
  the neighbourhood of every vertex in part 1 forms a graph with girth
  at least $2k$, and again, the results of Section~\ref{subsec:inf-ternary}
  imply that $\NRD(R_{2k}, N)= \Theta(\ex_3(n, \cH(Q)))$. 
\end{example}

\begin{example}
  For a complementary example where a similar tight result does not hold (for finite $\cH(Q)$),
  consider the case of binary relations preserved by the partial Mal'tsev operation $u_2$. 
  That is, let $P=\{((x,x,y),y), ((x,y,y),x)\}$ and let  $R \subseteq D^2$ with $P \in \pattern(R)$. 
  The only non-trivial multisorted unit pattern is $P'=(P_1,P_2)$ where 
  $P_1=\{((x,x,y),y)\}$ and $P_2=\{((x,y,y),x)\}$, and 
  the corresponding forbidden graph is $\cH(P')=C_4$.
  Then $\ex_2(n, \cH(P))=\Theta(n^{1.5})$. 
  However, if $R$ is preserved by $u_2$ as a (non-conditional) partial polymorphism,
  then $\NRD(R, n)=\Oh(n)$, as shown by previous work~\cite{butti2020,bessiere2020Chain}.
  This can be viewed as follows: For any relation $R \subseteq D^2$, view $R$ as a bipartite graph $G$.
  Applying $u_2$ exhaustively to $R$ (i.e., adding $u_2(t_1,t_2,t_3)$ to $R$ 
  whenever this is defined and not already present in $R$)
  yields a rectangular relation, where every connected component of $G$ is 
  completed into a biclique. The same argument applies to any instance
  $(X,Y)$ of $\CSP(R)$. Thus a non-redundant instance $(X,Y)$ of $\CSP(R)$ 
  must be acyclic. This ``closure'' aspect of repeated application of $u_2$
  is missed by the Turán connection.  One could include further patterns that capture 
  the consequences of such propagation, which would then correspond to adding the graphs
  $C_6$, $C_8$, \ldots to the collection of forbidden subgraphs, but for any finite
  collection of such subgraphs we get a Turán exponent of $\ex_2(n,\cH(Q))=\Theta(n^{1+\varepsilon})$
  for some $\varepsilon>0$ as above. Thus no finite collection of forbidden hypergraphs
  captures the consequences of the partial polymorphism $u_2$. 
\end{example}

\begin{example}\label{ex:r-LIN}
  For a much more drastic example, let $r \geq 3$ and let $R$ be an
  $r$-ary linear equation, e.g., $R \subset \{0,1\}^r$ defined as
  $R(x_1,\ldots,x_r) \equiv (x_1 \oplus \ldots \oplus x_r=0)$. 
  Then a hypergraph $\cH$ is a forbidden configuration for a
  non-redundant instance $(X,Y)$ of $\CSP(R)$ if and only if 
  the edges of $\cH$ are linearly dependent (as a set of vectors over GF$(2)$).
  Unfortunately, large systems of $r$-ary equations are possible where no constant-sized 
  subsystem is linearly dependent. For example, let $\varepsilon>0$
  and consider a random instance $(X,Y)$ of $\CSP(R)$ with
  $|X|=n$ and edges chosen uniformly at random so that $E[|Y|]=n^{r/2-\varepsilon}$.
  Then it can be shown that for any $t \in \N$, with high probability
  (asymptotically in $n$) every set $F \subseteq Y$ with $|F| \leq t$
  has $|\bigcup F| > r|F|/2$, and consequently has a variable that occurs only once in $F$.
  Since every linearly dependent set contains a minimal linearly dependent set,
  and no such set can contain a variable that occurs only once,
  it follows that every set of $\leq t$ edges from $Y$ is independent.
  Thus, any finite collection $\cH(Q)$ of forbidden hypergraphs for $\CSP(R)$
  has $\ex_r(n,\cH(Q)) = \Omega(n^{r/2-o(1)})$ (at least), despite $\NRD(R,n)=\Oh(n)$. Note that $n^{r/2}$ is also the threshold at which constant degrees of the Sum-of-Squares hierarchy fails to refute $r$-ary linear equations (e.g., \cite{kothari2017Sum}). Similar insights can also be applied to the \emph{range avoidance} problem in computational complexity (e.g., \cite{kuntewar2025Range}).
\end{example}

Given these examples, one of the more interesting questions
corresponds to the $k$-cube patterns $U_k$. For these patterns,
the corresponding hypergraph $\cH(U_k)$ is the complete $k$-partite $k$-uniform
hypergraph $K_{2,2,\ldots,2}$ with two vertices per part, and the value
of $\ex(n, \cH(U_k))$ is a notorious open question in extremal combinatorics,
even for $k=3$, known as the \emph{Erd\H{o}s box problem}~\cite{erdos1964extremal,katz2002Remarks,furedi2013history,conlon2021Random,gordeev2024Combinatorial}.
This connection was first noticed by Carbonnel~\cite{carbonnel2022Redundancy}, who used it
to show that every $r$-ary relation $R$ has either $\NRD(R,n)=\Theta(n^r)$ or $\NRD(R,n)=\Oh(n^{r-\varepsilon_r})$, where $\varepsilon_r \geq 2^{1-r}$ comes from $\ex(\cH(U_r),n)=\ex(K_{2,\ldots,2},n)=\Oh(n^{r-\varepsilon_r})$~\cite{erdos1964extremal}, and is based on the assumption that $R$ is preserved by the cube pattern $U_r$.
Focusing on $r=3$, our results in Section~\ref{subsec:turan} imply that there is an infinite family of ternary promise relations whose conditional non-redundancy matches $\ex(K_{2,2,2},n)$,
but what about a single such promise relation? At the moment, our largest known bound for $r=3$ below  $\NRD(\OR_3,n)=\Theta(n^3)$ is $\NRD(R,n)=\Theta(n^{2.5})$ from Section~\ref{subsec:inf-ternary}. 

Based on these connections it is natural to investigate if stronger $\Oh(n^{r - \eps})$ bounds can be obtained for stronger patterns than $U_r$.
Promising candidates are \emph{$k$-edge} operations, that are known to generalize $k$-NU
and Mal'tsev operations~\cite{berman2009Varieties}. It is a $(k+1)$-ary operation that satisfies
the following identities ($k \geq 2$) for all $x, y$. 
\begin{align*}
  p(x,x,y,y,y,\ldots,y) &= y \\
  p(x,y,x,y,y,\ldots,y) &= y \\
  p(y,y,y,x,y,\ldots,y) &= y \\
  p(y,y,y,y,x,\ldots,y) &= y \\
  \cdots \\
  p(y,y,y,y,y,\ldots,x) &= y
\end{align*}
In particular, a 2-edge operation is a Mal'tsev operation. 
For example, the forbidden hypergraph for the 3-edge polymorphism pattern has edge set $\{(x,x,x), (,
x,x,y), (x,y,x), (y,x,x), (y,y,x)\}$, i.e., the same as 3-NU plus the edge $(y,y,x)$.
A language has a (total) $k$-edge polymorphism if and only if it has
\emph{few subpowers}, i.e., only $2^{n^{O(1)}}$ distinct $n$-ary relations
can be pp-defined in the language; see Berman et al.~\cite{berman2009Varieties}
or the survey by Barto, Krokhin and Willard~\cite{barto2017Polymorphisms}.
Lagerkvist and Wahlström~\cite{lagerkvist2020Sparsification} showed that
a language with a \emph{$k$-edge embedding}, generalizing Mal'tsev embeddings,
has non-redundancy $O(n^{k-1})$.
Here, we show that a partial $k$-edge polymorphism (i.e., the partial pattern polymorphisms
defined from the above identities) also implies a non-trivial non-redundancy result.

\begin{lemma}
  Let $R \subseteq S \subseteq D^r$ be a pair of relations such that
  $(R, \widetilde{S})$ is preserved by the $k$-edge polymorphism pattern,
  where $r \geq k$. Then $\NRD(R \mid S,n)=O(n^{r-\lfloor r/k \rfloor/2})$. 
\end{lemma}

\begin{proof}
  We first show the result for the case $r=k$.
  Let $I=(X,Y)$ be a $k$-partite instance of $\CSP(R \mid S)$ and consider a constraint $(v_1,\ldots,v_k) \in Y$. Via Lemma~\ref{lem:unit-pattern-hypergraph}, we find the following. 
  Select $k-2$ positions, w.l.o.g.~$1, \ldots, k-2$, and consider the
  bipartite graph $G$ where $uw \in E(G)$ if and only if there is a constraint o
  $(v_1,\ldots,v_{k-2},u,w)$ in $I$.
  Then the hypergraph representation of the $k$-edge pattern
  for $k$-ary relations allows us to discard $R$ if two things
  apply: (1) $v_{k-1}v_k$ participates in a $C_4$ in $G$ and (2)
  for every $i \in [k-2]$ there is a constraint $(v_1,\ldots,v_i',\ldots,v_k)$
  in $I$ where $v_i' \neq v_i$. To argue the density we proceed as follows.
  Given $I=(X,Y)$, we first mark one constraint from $Y$ for
  every $(k-1)$-ary projection; this makes for $O(n^{k-1})$ constraints.
  Put these constraints aside, then form the bipartite graph $G$ as
  above using only the remaining constraints $Y'$. Let $C=(v_1,\ldots,v_k)$
  be any constraint in $Y'$. Then the second condition is automatically
  satisfied: since $C \in Y'$ there is for every $i \in [k]$ 
  a constraint that we already set aside that only differs from $C$ in
  coordinate $i$. Therefore it suffices to find $k-2$ variables in $C$,
  say $v_1, \ldots, v_{k-2}$ (again w.l.o.g.\ in terms of the identity of the parts $1$, \ldots, $k-2$)
  such that their neighborhood induces a graph $G$ with a $C_4$. 
  If no such selection exists, then for every selection $v_1, \ldots, v_{k-2}$
  the graph $G$ has $O(n^{1.5})$ edges, and $|Y|=O(|X|^{k-1/2})$.

  For the case $r>k$, we first partition the arguments of the constraint
  into $k$ approximately equal-sized groups, with $\lfloor r/k \rfloor$ or $\lceil r/k \rceil$
  arguments per group, then we reinterpret $(R,S)$ as a $k$-ary
  promise constraint $(R',S')$ over the product domain $D'=D^{\lceil r/k \rceil}$. 
  It is easy to check that this operation preserves polymorphism patterns
  (cf.~\cite{lagerkvist2021Coarse} and Example 11 of~\cite{carbonnel2022Redundancy} for the non-promise case).
  Furthermore, performing the corresponding product operation on $I=(X,Y)$,
  we get a $k$-partite instance with $O(n^{\lceil r/k \rceil})$ variables
  and $|Y|$ constraints, and this operation clearly does not decrease non-redundancy.
  Following the analysis above, if $X=X_1 \cup \ldots \cup X_k$, the non-redundancy is upper-bounded (up to ordering of arguments) by $O(\prod_{i=1}^{k-2} |X_i| \cdot (|X_{k-1}| + |X_k|)^{1.5})$, which is less than $|X|^k$ by a factor of $\min_i |X_i|^{0.5}$. 
  \end{proof}

\section{Conclusion}\label{sec:concl}

In this paper, we systematically studied the (conditional) non-redundancy of arbitrary finite-domain constraints. Earlier techniques for bounding non-redundancy have typically %
yielded bounds of the form $\Theta(n^k)$ for integer exponents $k$, but as we have proven in this paper such bounds are the exception rather than the norm, and the full non-redundancy landscape is far richer and allows all possible rational exponents. To systematically study these questions we developed a novel algebraic approach for conditional non-redundancy and proved strong connections to hypergraph Tur{\'a}n problems. We took a special interest in near linear non-redundancy and solved an open question in the literature~\cite{chen2020BestCase}, namely, whether the set of Boolean predicates with Mal'tsev embeddings coincides with the set of Boolean predicates which are balanced. To prove this, we introduced the Catalan terms which provide a novel and useful reinterpretation of a Mal'tsev embedding that more generally also allowed us to prove that a Mal'tsev embedding always implies a group embedding. For non-Boolean domains we showed that Abelian and Mal'tsev embeddings differ and gave the first example of a non-Abelian embedding via the Pauli group. Currently, we do not know whether Mal'tsev embeddings capture all predicates with linear non-redundancy, but we now know that if this is the case then it is barely true. In particular, we constructed an explicit family of ternary predicates provably lacking Mal'tsev embeddings whose non-redundancy approaches linear. %
We now conclude with a few exciting directions for future investigation.

\paragraph{Mal'tsev embeddings and linear redundancy.}

The central open question in the field of non-redundancy currently is a classification of which predicates have (near) linear non-redundancy. One potential next step toward this question is to unify concepts of infinite and finite Mal'tsev embeddings.

\begin{conjecture}\label{conj:inf-to-fin}
Any predicate over a finite domain with an infinite Mal'tsev embedding also has a finite Mal'tsev embedding. In particular, any predicate with an infinite Mal'tsev embedding has linear non-redundancy.
\end{conjecture}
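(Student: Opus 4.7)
The plan is to combine the group-theoretic reformulation of infinite Mal'tsev embeddings from Corollary~\ref{cor:inf-malt-group} with residual finiteness of Coxeter groups. If $P \subseteq D^r$ admits an infinite Mal'tsev embedding then by Corollary~\ref{cor:inf-malt-group} it embeds into $\Cox(D)$ via the canonical map $\eta$, and $\eta(P) = Q \cap \eta(D)^r$, where $Q$ is the odd-length coset of $\Cox(D)^r$ generated by $\eta(P)$ (cf. Remark~\ref{rem:coset}). The goal is to produce a finite quotient $\pi \colon \Cox(D) \to G$ such that $\pi \circ \eta$ remains an embedding into the coset $\pi^{\otimes r}(Q) \subseteq G^r$; since cosets of finite groups are preserved by a Mal'tsev polymorphism, Theorem~\ref{thm:Malt-embedding} would then deliver both the finite Mal'tsev embedding and the linear non-redundancy bound $\NRD(P,n) = O(n)$.

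First I would isolate the precise separation property demanded of $\pi$: injectivity on $\eta(D)$, plus the ``no-ghost'' condition $\pi^{\otimes r}(Q) \cap \pi(\eta(D))^r = \pi^{\otimes r}(\eta(P))$. Equivalently, for every $q \in Q \setminus \eta(P)$ at least one coordinate must be mapped out of $\pi(\eta(D))$. The group $\Cox(D)$, being a free product of finite groups, is residually finite by Gruenberg's theorem, which immediately supplies injectivity on $\eta(D)$ and separates any single fixed witness from $\eta(D)$. The obstacle is that $Q \setminus \eta(P)$ is typically infinite, so naive residual finiteness will not directly close the gap.

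To surmount this I would lift the question to the profinite completion $\widehat{\Cox}(D)$ and attempt to show that the closure $\overline{Q}$ of $Q$ satisfies $\overline{Q} \cap \eta(D)^r = \eta(P)$. Since $\eta(D)^r$ is finite and hence clopen in the profinite topology, a standard inverse-limit/compactness argument converts this closedness statement into the existence of a single finite quotient witnessing the no-ghost condition. I would try to prove the closedness by exploiting the Catalan description of $Q$ (Theorem~\ref{thm:catalan} together with Proposition~\ref{prop:catalan-pattern}): every element of $Q$ is a value of some $\Cat_m$ applied to a sequence in $\eta(P)$, and whenever the output lies in $\eta(D)^r$ the Catalan cancellations reduce the witnessing word to a single generator per coordinate. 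This suggests that membership in $Q \cap \eta(D)^r$ admits a bounded-length certificate that depends only on $|D|$ and $r$, at which point residual finiteness on the bounded certificate set finishes the argument.

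The main obstacle, and the most plausible source of a potential counterexample, will be showing that this bounded-certificate property survives the profinite limit. A profinite-only element of $\overline{Q} \cap \eta(D)^r$ that resists all finite quotients would refute the conjecture and produce an infinite Mal'tsev embedding with no finite counterpart; conversely, a combinatorial analysis of reduced words in $\Cox(D)$, together with the subgroup/coset structure of $Q$ inherited from the Catalan identities, may allow one to rule out such pathological limits. I expect that confirming or refuting the conjecture reduces in the end to this profinite-closedness question about finitely generated cosets of Coxeter groups — a question that appears to be tractable by techniques of combinatorial group theory even if it falls outside of the universal-algebraic toolkit used in the rest of the paper.
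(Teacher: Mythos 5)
This statement is labeled \textbf{Conjecture}~\ref{conj:inf-to-fin} in the paper and is explicitly left open; the paper supplies no proof, so there is nothing for your argument to be compared against. Your writeup is a plan of attack, not a proof, and you say as much yourself when you describe the profinite-closedness step as the ``main obstacle'' and even as the place a counterexample might live. That honesty is well-placed.

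Let me sharpen where the gap actually is. Your reduction via Corollary~\ref{cor:inf-malt-group} and Remark~\ref{rem:coset} is sound: it suffices to find a finite quotient $\pi \colon \Cox(D) \to G$ injective on $\eta(D)$ with $\pi^{\otimes r}(Q) \cap \pi(\eta(D))^r = \pi^{\otimes r}(\eta(P))$, where $Q$ is the coset of the subgroup of $\Cox(D)^r$ generated by $\eta(P)$. The trouble is that this is a separability statement about a finitely generated subgroup of a \emph{direct power} $\Cox(D)^r$, not of $\Cox(D)$ itself. While $\Cox(D)$ is a free product of copies of $\Z/2\Z$ and is therefore subgroup separable (LERF) by the Burns--Romanovskii theorem, direct products of such groups are emphatically not: as soon as $|D| \geq 3$, $\Cox(D)$ contains a nonabelian free group, so $\Cox(D)^r$ for $r \geq 2$ contains $F_2 \times F_2$, and Mihailova's construction gives finitely generated subgroups of $F_2 \times F_2$ that are not closed in the profinite topology (indeed with undecidable membership). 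So the inclusion $\eta(P) \subseteq \overline{Q} \cap \eta(D)^r$ could a priori be strict, and your appeal to residual finiteness cannot be applied off the shelf.

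The other leap is the ``bounded-length certificate'' claim. The Catalan cancellations operate coordinate-wise and independently: a product $g_1 \cdots g_m$ with $g_j \in \eta(P)$ landing in $\eta(D)^r$ means each of the $r$ coordinate words reduces, but the cancellation patterns in different coordinates can interleave in complicated ways, and nothing forces the existence of a bounded-length subword that still evaluates to the same tuple. You would need a genuine combinatorial lemma here (some form of bounded cancellation or a Stallings-style fold argument uniform in all $r$ coordinates simultaneously), and this is exactly the kind of statement that the Mihailova phenomenon warns may fail. So the proposal identifies the right reduction and the right obstacle, but it does not close the gap — which is consistent with the paper leaving this as a conjecture.
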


In particular, we hope that the techniques involving Catalan terms in Section~\ref{sec:linear-nrd} could lead to solid progress toward this conjecture. Conversely, one can hope that (in)finite Mal'tsev embeddings capture all predicates with linear non-redundancy. However, to establish such a fact, one needs to first be able to analyze predicates such as $\BCK := \{111,222,012,120,201\}$ from \cite{bessiere2020Chain} and independently discovered in a ``Booleanized'' form by many others \cite{lagerkvist2020Sparsification,chen2020BestCase,khanna2024Characterizations}, and the family of $\CYCs_m$ predicates constructed in Section~\ref{subsec:approach-linear}. See Section 7.1 of \cite{brakensiek2024Redundancy} for a more thorough discussion of such predicates. Note that $u^{k-1}_k \in \pPol(\BCK)$ for all $k\ge 2$, so any non-redundancy lower bound for $\BCK$ will necessarily require techniques beyond those studied in this paper.

We also leave as an open question whether there is a polynomial-time algorithm (where the input is the truth table of the predicate) for determining whether a predicate satisfies the Catalan identities. We note that tools such as SMT solvers like Z3~\cite{demoura2008Z3} can assist in identifying partial polymorphisms, but the worst-case behavior of such techniques is likely exponential time.

\paragraph{Non-redundancy of Boolean languages.}
Is there a Boolean relation with fractional NRD exponent?

As we have seen, for relations over arbitrary finite domains
the NRD exponent can be any fractional number $p/q \geq 1$,
with an infinite number of exponents occurring even for arity 3.
However, no such relation is known in the Boolean domain,
where up to arity 3 all relations have integral NRD exponents~\cite{chen2020BestCase,khanna2024Characterizations}.
Boolean relations are also more well-behaved than general languages
in other ways, e.g.,~\cite{chen2020BestCase,khanna2024Characterizations} showed that for every $R \subseteq \{0,1\}^r$,
either $R$ fgpp-defines $\OR_r$ and $\NRD(R,n)=\Theta(n^r)$,
or $\NRD(R,n)=\Oh(n^{r-1})$ (in fact, $R$ is captured as the roots of a polynomial of degree $r-1$). 
Again, this is false for general languages
even with $r=3$ and domain $|D|=3$, cf.~$R_6$ of Section~\ref{subsec:inf-ternary}.
There is also the aforementioned fact that all Boolean relations with
Mal'tsev embeddings have affine embeddings, which is again false over general domains.
Does the structural simplicity of Boolean languages stretch so far
that all Boolean languages have integral NRD exponents?
It seems unlikely, but we cannot exclude it at the moment.
A similar consequence was shown to hold for Boolean MaxCSP~\cite{JansenW24}.

One particularly interesting case is the Booleanization
BCK$_{\mathbb{B}} \subset \{0,1\}^9$ of the BCK predicate
(discussed in \cite{chen2020BestCase,lagerkvist2020Sparsification,khanna2024Characterizations,brakensiek2024Redundancy}). Using a SAT solver, an SMT solver~\cite{demoura2008Z3}, and Lemma~\ref{lm:ukc-basics}
we have verified that BCK$_{\mathbb{B}}$ is preserved by $u_k^c$
for every $k/c > 1$, hence there are no simple lower bounds
on its non-redundancy. However, since it fgpp-defines BCK
it also does not satisfy the Catalan conditions
and does not have a Mal'tsev embedding.

\paragraph{Algorithmic aspects of non-redundancy.}

An important open question raised in the work of Carbonnel~\cite{carbonnel2022Redundancy} is whether for every predicate $P$, every instance of $\CSP(P)$ on $n$ variables can be efficiently kernelized to $\Oh(\NRD(P, n))$ clauses. As observed by Brakensiek and Guruswami~\cite{brakensiek2024Redundancy}, answering such a question is a necessary step toward efficiently constructing optimal-sized sparsifiers for all CSPs. Note that Appendix~\ref{app:ker} resolves Carbonnel's conjecture in a special case.

An important observation to keep in mind is that \emph{local consistency} methods such as arc consistency~\cite{chen2006rendezvous,barto2017Polymorphisms}, the Sherali-Adams hierarchy, and sum of squares (i.e., the Lasserre hierarchy) are \emph{not} sufficient to resolve Carbonnel's open question. In particular, one cannot use a constant number of rounds of sum of squares to kernelize \emph{random} instances of many CSPs due to strong gaps %
\cite{grigoriev2001Linear,schoenebeck2008Linear,tulsiani2009CSP,kothari2017Sum} (see also Example~\ref{ex:r-LIN}). As such, any ``universal'' kernelization algorithm must combine techniques from both Mal'tsev and local consistency algorithms. In the study of the satisfiability of (promise) CSPs, understanding such a combination is a very active topic of research \cite{bulatov2017Dichotomy,zhuk2020Proof,brakensiek2020Powera,ciardo2023CLAP,zhuk2024simplifieda,lichter2024,Zhuk25}. %

\paragraph{Algebra and non-redundancy.} Our algebraic approach is based on a generalized form of conjunctive formulas that allows  functional atoms of fixed arity $c$. We obtained a Galois correspondence to sets of polymorphism patterns which, curiously, turned out to be interesting algebraic objects in their own right since they can also be seen as a type of partial minion. Can strong partial minions $\pPol(S,T)$ and $\pattern(S,T)$ both be defined as a form of abstract partial minion, or are partially defined objects inherently too brittle to admit such a unifying treatment? 

It is worth remarking that $c$-fggpp-definitions are {\em orthogonal} to algebraic reductions from the CSP literature in the sense that they (1) allow for arbitrary guarding functions regardless of whether these are definable from the language itself but (2) are weaker in the sense that they do not allow arbitrary existential quantification. Nevertheless, this framework is algebraically well-behaved and fundamentally builds on the principle that patterns always produce functions which commute with each other. Thus, algebraically, is commutativity all you need, or are there algebraic constructions compatible with (conditional) non-redundancy which break commutativity? 

\paragraph{Ternary predicates and Erdős box problem.}

As mentioned, previous works \cite{bessiere2020Chain,butti2020} gave a dichotomy theorem for the non-redundancy of binary predicates. In this paper, we extended their classification to all binary conditional non-redundancy problems. A natural next step is to classify the non-redundancy of all ternary predicates. As we demonstrate in this paper, even for non-conditional non-redundancy, infinitely many exponents are possible. As such, a complete classification seems unlikely in the near future, but we highlight a few interesting directions in the study of ternary predicates.

\begin{itemize}
\item Is there any subset of $[1,3]$ for which the (conditional) non-redundancy exponents of ternary predicates are dense?

\item What is the non-redundancy of the $\CYCs_m$ predicates studied in this paper?

\item What is the non-redundancy of the $\tLIN^*_{G}$ predicates studied in \cite{brakensiek2024Redundancy}? Could the non-redundancy exponents even be irrational? Intriguingly, we currently know that $\NRD(\tLIN^*_{\F_3}, n) \in [\Omega(n^{1.5}), \widetilde{\Oh}(n^{1.6})],$ which is consistent with $\NRD(\tLIN^*_{\F_3}, n) = \Theta(n^{\log_2 3})$.

\item A result of Carbonnel~\cite{carbonnel2022Redundancy} states that if $P$ is a ternary predicate and $\NRD(P, n) = o(n^3)$, then $\NRD(P, n) = \Oh(\ex(n, K_{2,2,2}))$ which is notoriously known as the Erdős box problem~\cite{erdos1964extremal,katz2002Remarks,gordeev2024Combinatorial,furedi2013history,conlon2021Random}. We complement Carbonnel's result in Section~\ref{sec:hypergraph-turan} by showing that an infinite-domain ternary predicate has conditional non-redundancy asymptotic to $\ex(n, K_{2,2,2})$. Does an analogous finite-domain predicate also have this property? Even more ambitiously, can we leverage the algebraic tools developed in this paper to improve the longstanding lower and upper bounds of $\Omega(n^{8/3})$ and $\Oh(n^{11/4})$, respectively for $\ex(n, K_{2,2,2})$?
\end{itemize}

With all these different directions of exploration, we believe the study of non-redundancy and its related questions will be the topic of active research for years to come.

\bibliographystyle{abbrv} %
\bibliography{ref}

\begin{thebibliography}{10}

\bibitem{austrin2017sat}
P.~Austrin, V.~Guruswami, and J.~H{\aa}stad.
\newblock $(2+\varepsilon)$-{{Sat Is NP-hard}}.
\newblock {\em SIAM Journal on Computing}, 46(5):1554--1573, Jan. 2017.

\bibitem{barto2021Algebraica}
L.~Barto, J.~Bul{\'i}n, A.~Krokhin, and J.~Opr{\v s}al.
\newblock Algebraic {{Approach}} to {{Promise Constraint Satisfaction}}.
\newblock {\em J. ACM}, 68(4):28:1--28:66, July 2021.

\bibitem{barto2017Polymorphisms}
L.~Barto, A.~Krokhin, and R.~Willard.
\newblock {Polymorphisms, and How to Use Them}.
\newblock In A.~Krokhin and S.~Zivny, editors, {\em The Constraint Satisfaction
  Problem: Complexity and Approximability}, volume~7 of {\em Dagstuhl
  Follow-Ups}, pages 1--44. Schloss Dagstuhl -- Leibniz-Zentrum f{\"u}r
  Informatik, Dagstuhl, Germany, 2017.

\bibitem{DBLP:conf/stoc/BenczurK96}
A.~A. Bencz{\'u}r and D.~R. Karger.
\newblock Approximating {\emph{s-t}} minimum cuts in
  {{{\emph{{\~O}}}}}({\emph{n}}{$^2$}) time.
\newblock In G.~L. Miller, editor, {\em Proceedings of the Twenty-Eighth Annual
  {{ACM}} Symposium on the Theory of Computing, Philadelphia, Pennsylvania,
  {{USA}}, May 22-24, 1996}, pages 47--55. ACM, 1996.

\bibitem{berman2009Varieties}
J.~Berman, P.~Idziak, P.~Markovi{\'c}, R.~McKenzie, M.~Valeriote, and
  R.~Willard.
\newblock Varieties with few subalgebras of powers.
\newblock {\em Transactions of the American Mathematical Society},
  362(3):1445--1473, Oct. 2009.

\bibitem{bessiere2020Chain}
C.~Bessiere, C.~Carbonnel, and G.~Katsirelos.
\newblock Chain {{Length}} and {{CSPs Learnable}} with {{Few Queries}}.
\newblock {\em Proceedings of the AAAI Conference on Artificial Intelligence},
  34(02):1420--1427, Apr. 2020.

\bibitem{brakensiek2021Promise}
J.~Brakensiek and V.~Guruswami.
\newblock Promise {{Constraint Satisfaction}}: {{Algebraic Structure}} and a
  {{Symmetric Boolean Dichotomy}}.
\newblock {\em SIAM Journal on Computing}, 50(6):1663--1700, Jan. 2021.

\bibitem{brakensiek2024Redundancy}
J.~Brakensiek and V.~Guruswami.
\newblock Redundancy {{Is All You Need}}.
\newblock Nov. 2024.
\newblock To appear in STOC 2025.

\bibitem{brakensiek2020Powera}
J.~Brakensiek, V.~Guruswami, M.~Wrochna, and S.~{\v Z}ivn{\'y}.
\newblock The {{Power}} of the {{Combined Basic Linear Programming}} and
  {{Affine Relaxation}} for {{Promise Constraint Satisfaction Problems}}.
\newblock {\em SIAM Journal on Computing}, 49(6):1232--1248, Jan. 2020.

\bibitem{bulatov2006Simple}
A.~Bulatov and V.~Dalmau.
\newblock A {{Simple Algorithm}} for {{Mal}}'tsev {{Constraints}}.
\newblock {\em SIAM Journal on Computing}, 36(1):16--27, Jan. 2006.

\bibitem{bulatov2017Dichotomy}
A.~A. Bulatov.
\newblock A {{Dichotomy Theorem}} for {{Nonuniform CSPs}}.
\newblock In {\em 2017 {{IEEE}} 58th {{Annual Symposium}} on {{Foundations}} of
  {{Computer Science}} ({{FOCS}})}, pages 319--330, Oct. 2017.

\bibitem{bulatov2003}
A.~A. Bulatov and P.~Jeavons.
\newblock An algebraic approach to multi-sorted constraints.
\newblock In {\em Proceedings of the 9th International Conference on Principles
  and Practice of Constraint Programming ({CP}-2003)}, pages 183--198, Berlin,
  Heidelberg, 2003. Springer Berlin Heidelberg.

\bibitem{algebrabook}
S.~Burris and H.~P. Sankappanavar.
\newblock {\em A course in universal algebra}, volume~78 of {\em Graduate texts
  in mathematics}.
\newblock Springer, 1981.

\bibitem{butti2020}
S.~Butti and S.~{\v Z}ivn{\'y}.
\newblock Sparsification of {{Binary CSPs}}.
\newblock {\em SIAM Journal on Discrete Mathematics}, 34(1):825--842, Jan.
  2020.

\bibitem{carbonnel2022Redundancy}
C.~Carbonnel.
\newblock On {{Redundancy}} in {{Constraint Satisfaction Problems}}.
\newblock In {\em 28th {{International Conference}} on {{Principles}} and
  {{Practice}} of {{Constraint Programming}} ({{CP}} 2022)}. Schloss Dagstuhl
  -- Leibniz-Zentrum f{\"u}r Informatik, 2022.

\bibitem{chen2006rendezvous}
H.~Chen.
\newblock A rendezvous of logic, complexity, and algebra.
\newblock {\em ACM SIGACT News}, 37(4):85--114, 2006.

\bibitem{chen2020BestCase}
H.~Chen, B.~M.~P. Jansen, and A.~Pieterse.
\newblock Best-{{Case}} and {{Worst-Case Sparsifiability}} of {{Boolean CSPs}}.
\newblock {\em Algorithmica}, 82(8):2200--2242, Aug. 2020.

\bibitem{chen2020}
Y.~Chen, S.~Khanna, and A.~Nagda.
\newblock Near-linear {{Size Hypergraph Cut Sparsifiers}}.
\newblock In {\em 2020 {{IEEE}} 61st {{Annual Symposium}} on {{Foundations}} of
  {{Computer Science}} ({{FOCS}})}, pages 61--72, Nov. 2020.

\bibitem{chung1986intersection}
F.~R.~K. Chung, R.~L. Graham, P.~Frankl, and J.~B. Shearer.
\newblock Some intersection theorems for ordered sets and graphs.
\newblock {\em Journal of Combinatorial Theory, Series A}, 43(1):23--37, Sept.
  1986.

\bibitem{ciardo2023CLAP}
L.~Ciardo and S.~{\v Z}ivn{\'y}.
\newblock {{CLAP}}: {{A New Algorithm}} for {{Promise CSPs}}.
\newblock {\em SIAM Journal on Computing}, 52(1):1--37, Feb. 2023.

\bibitem{conlon2021Random}
D.~Conlon, C.~Pohoata, and D.~Zakharov.
\newblock Random multilinear maps and the {{Erd{\H o}s}} box problem.
\newblock {\em Discrete Analysis}, Sept. 2021.

\bibitem{demoura2008Z3}
L.~de~Moura and N.~Bj{\o}rner.
\newblock Z3: {{An Efficient SMT Solver}}.
\newblock In C.~R. Ramakrishnan and J.~Rehof, editors, {\em Tools and
  {{Algorithms}} for the {{Construction}} and {{Analysis}} of {{Systems}}},
  pages 337--340, Berlin, Heidelberg, 2008. Springer.

\bibitem{DellM12}
H.~Dell and D.~Marx.
\newblock Kernelization of packing problems.
\newblock In Y.~Rabani, editor, {\em Proceedings of the Twenty-Third Annual
  {ACM-SIAM} Symposium on Discrete Algorithms, {SODA} 2012, Kyoto, Japan,
  January 17-19, 2012}, pages 68--81. {SIAM}, 2012.

\bibitem{dell2014Satisfiability}
H.~Dell and D.~van Melkebeek.
\newblock Satisfiability {{Allows No Nontrivial Sparsification}} unless the
  {{Polynomial-Time Hierarchy Collapses}}.
\newblock {\em J. ACM}, 61(4):23:1--23:27, July 2014.

\bibitem{djukic2006imo}
D.~Djuki{\'c}, V.~Jankovi{\'c}, I.~Mati{\'c}, and N.~Petrovi{\'c}.
\newblock {\em The IMO compendium: a collection of problems suggested for the
  International Mathematical Olympiads: 1959-2004}, volume 119.
\newblock Springer, 2006.

\bibitem{erdos1964extremal}
P.~Erd{\"o}s.
\newblock On extremal problems of graphs and generalized graphs.
\newblock {\em Israel Journal of Mathematics}, 2(3):183--190, Sept. 1964.

\bibitem{filtser2017Sparsification}
A.~Filtser and R.~Krauthgamer.
\newblock Sparsification of {{Two-Variable Valued Constraint Satisfaction
  Problems}}.
\newblock {\em SIAM Journal on Discrete Mathematics}, 31(2):1263--1276, Jan.
  2017.

\bibitem{FominLSZ19}
F.~Fomin, D.~Lokshtanov, S.~Saurabh, and M.~Zehavi.
\newblock {\em Kernelization: theory of parameterized preprocessing}.
\newblock Cambridge University Press, 2019.

\bibitem{frankl1986All}
P.~Frankl.
\newblock All rationals occur as exponents.
\newblock {\em Journal of Combinatorial Theory, Series A}, 42(2):200--206, July
  1986.

\bibitem{furedi2013history}
Z.~F{\"u}redi and M.~Simonovits.
\newblock {\em The History of Degenerate (Bipartite) Extremal Graph Problems},
  pages 169--264.
\newblock Springer Berlin Heidelberg, Berlin, Heidelberg, 2013.

\bibitem{gordeev2024Combinatorial}
A.~Gordeev.
\newblock Combinatorial {{Nullstellensatz}} and {{Tur{\'a}n}} numbers of
  complete r-partite r-uniform hypergraphs.
\newblock {\em Discrete Mathematics}, 347(7):114037, July 2024.

\bibitem{grigoriev2001Linear}
D.~Grigoriev.
\newblock Linear lower bound on degrees of {{Positivstellensatz}} calculus
  proofs for the parity.
\newblock {\em Theoretical Computer Science}, 259(1):613--622, May 2001.

\bibitem{gupta1996Using}
A.~Gupta and S.~Mahajan.
\newblock Using amplification to compute majority with small majority gates.
\newblock {\em computational complexity}, 6(1):46--63, Mar. 1996.

\bibitem{Humphreys_1990}
J.~E. Humphreys.
\newblock {\em Reflection Groups and Coxeter Groups}.
\newblock Cambridge Studies in Advanced Mathematics. Cambridge University
  Press, 1990.

\bibitem{jansen2019Optimal}
B.~M.~P. Jansen and A.~Pieterse.
\newblock Optimal {{Sparsification}} for {{Some Binary CSPs Using Low-Degree
  Polynomials}}.
\newblock {\em ACM Trans. Comput. Theory}, 11(4):28:1--28:26, Aug. 2019.

\bibitem{JansenW24}
B.~M.~P. Jansen and M.~Wlodarczyk.
\newblock Optimal polynomial-time compression for boolean max {CSP}.
\newblock {\em {ACM} Trans. Comput. Theory}, 16(1):4:1--4:20, 2024.

\bibitem{KKTY2021}
M.~Kapralov, R.~Krauthgamer, J.~Tardos, and Y.~Yoshida.
\newblock Towards tight bounds for spectral sparsification of hypergraphs.
\newblock In S.~Khuller and V.~V. Williams, editors, {\em {STOC} '21: 53rd
  Annual {ACM} {SIGACT} Symposium on Theory of Computing, Virtual Event, Italy,
  June 21-25, 2021}, pages 598--611. {ACM}, 2021.

\bibitem{katona}
G.~Katona.
\newblock A theorem of finite sets.
\newblock In {\em Theory of Graphs}, pages 187--207. Academic Press, New York,
  1968.

\bibitem{katz2002Remarks}
N.~H. Katz, E.~Krop, and M.~Maggioni.
\newblock Remarks on the box problem.
\newblock {\em Mathematical Research Letters}, 9(4):515--519, 2002.

\bibitem{khanna2024Code}
S.~Khanna, A.~L. Putterman, and M.~Sudan.
\newblock Code {{Sparsification}} and its {{Applications}}.
\newblock In {\em Proceedings of the 2024 {{Annual ACM-SIAM Symposium}} on
  {{Discrete Algorithms}} ({{SODA}})}, Proceedings, pages 5145--5168. {Society
  for Industrial and Applied Mathematics}, Jan. 2024.

\bibitem{khanna2024Characterizations}
S.~Khanna, A.~L. Putterman, and M.~Sudan.
\newblock Efficient {{Algorithms and New Characterizations for CSP
  Sparsification}}.
\newblock Apr. 2024.
\newblock To appear in STOC 2025.

\bibitem{khanna2024optimal}
S.~Khanna, A.~L. Putterman, and M.~Sudan.
\newblock Near-optimal {{Size Linear Sketches}} for {{Hypergraph Cut
  Sparsifiers}}.
\newblock In {\em 2024 {{IEEE}} 65th {{Annual Symposium}} on {{Foundations}} of
  {{Computer Science}} ({{FOCS}})}, Oct. 2024.

\bibitem{kogan2015}
D.~Kogan and R.~Krauthgamer.
\newblock Sketching {{Cuts}} in {{Graphs}} and {{Hypergraphs}}.
\newblock In {\em Proceedings of the 2015 {{Conference}} on {{Innovations}} in
  {{Theoretical Computer Science}}}, {{ITCS}} '15, pages 367--376, New York,
  NY, USA, Jan. 2015. Association for Computing Machinery.

\bibitem{kothari2017Sum}
P.~K. Kothari, R.~Mori, R.~O'Donnell, and D.~Witmer.
\newblock Sum of squares lower bounds for refuting any {{CSP}}.
\newblock In {\em Proceedings of the 49th {{Annual ACM SIGACT Symposium}} on
  {{Theory}} of {{Computing}}}, {{STOC}} 2017, pages 132--145, New York, NY,
  USA, June 2017. Association for Computing Machinery.

\bibitem{kruskal}
J.~B. Kruskal.
\newblock The number of simplices in a complex.
\newblock In {\em Mathematical Optimization Techniques}, pages 251--278.
  University of California Press, Berkeley, CA, 1963.

\bibitem{kuntewar2025Range}
N.~Kuntewar and J.~Sarma.
\newblock Range {{Avoidance}} in {{Boolean Circuits}} via {{Turan-type
  Bounds}}.
\newblock Mar. 2025.

\bibitem{lagerkvist2020Sparsification}
V.~Lagerkvist and M.~Wahlstr{\"o}m.
\newblock Sparsification of {{SAT}} and {{CSP Problems}} via {{Tractable
  Extensions}}.
\newblock {\em ACM Transactions on Computation Theory}, 12(2):1--29, June 2020.

\bibitem{lagerkvist2021Coarse}
V.~Lagerkvist and M.~Wahlstr{\"o}m.
\newblock The ({{Coarse}}) {{Fine-Grained Structure}} of {{NP-Hard SAT}} and
  {{CSP Problems}}.
\newblock {\em ACM Trans. Comput. Theory}, 14(1):2:1--2:54, Dec. 2021.

\bibitem{lichter2024}
M.~Lichter and B.~Pago.
\newblock Limitations of {{Affine Integer Relaxations}} for {{Solving
  Constraint Satisfaction Problems}}, July 2024.

\bibitem{maltsev1961CONSTRUCTIVE}
A.~I. Mal'tsev.
\newblock {{Constructive Algebras I}}.
\newblock {\em Russian Mathematical Surveys}, 16(3):77, June 1961.

\bibitem{pippenger2002}
N.~Pippenger.
\newblock Galois theory for minors of finite functions.
\newblock {\em Discret. Math.}, 254(1-3):405--419, 2002.

\bibitem{schoenebeck2008Linear}
G.~Schoenebeck.
\newblock Linear {{Level Lasserre Lower Bounds}} for {{Certain}} k-{{CSPs}}.
\newblock In {\em 2008 49th {{Annual IEEE Symposium}} on {{Foundations}} of
  {{Computer Science}}}, pages 593--602, Oct. 2008.

\bibitem{tulsiani2009CSP}
M.~Tulsiani.
\newblock {{CSP}} gaps and reductions in the {L}asserre hierarchy.
\newblock In {\em Proceedings of the Forty-First Annual {{ACM}} Symposium on
  {{Theory}} of Computing}, {{STOC}} '09, pages 303--312, New York, NY, USA,
  May 2009. Association for Computing Machinery.

\bibitem{van2001course}
J.~H. {Van Lint} and R.~M. Wilson.
\newblock {\em A course in combinatorics}.
\newblock Cambridge university press, 2001.

\bibitem{zhuk2020Proof}
D.~Zhuk.
\newblock A {{Proof}} of the {{CSP Dichotomy Conjecture}}.
\newblock {\em J. ACM}, 67(5):30:1--30:78, Aug. 2020.

\bibitem{zhuk2024simplifieda}
D.~Zhuk.
\newblock A simplified proof of the {{CSP Dichotomy Conjecture}} and
  {{XY-symmetric}} operations.
\newblock Apr. 2024.

\bibitem{Zhuk25}
D.~Zhuk.
\newblock Singleton algorithms for the constraint satisfaction problem.
\newblock {\em arXiv preprint arXiv:2509.18434}, 2025.

\end{thebibliography}

\clearpage

\appendix

\section{Omitted Details from Section~\ref{sec:frac}}\label{app:frac}

\subsection{Connections to Kernelization}\label{app:ker}

Recall that that for $p \ge q$, we define $\SATDP_{p,q}$ to be $\{\ORDP_{p,q}, \CUT\}$ where $\ORDP_{p,q}$ is defined in \Cref{def:Rpq} and $\CUT = \{(0,1), (1,0)\}$. We now work toward showing that the kernelization exponent of $\CSP(\SATDP_{p,q})$ is $p/q$. 

\begin{lemma} \label{lem:lowerbound:transform}
For each fixed~$p,q \in \mathbb{N}$ with~$p \geq q$, there is a polynomial-time reduction that transforms any $n$-variable instance~$\phi$ of CNF-$p$-SAT into an equivalent instance~$(V,C)$ of $\csp(\SATDP_{p,q})$ with~$|V| = 2n + (2n)^q$ variables.
\end{lemma}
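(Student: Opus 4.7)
The plan is to build the CSP instance by introducing two groups of variables: a set of~$2n$ \emph{literal variables} (for each propositional variable~$x_i$ of~$\phi$, introduce variables~$x_i$ and~$\bar{x}_i$ intended to take Boolean values that are negations of each other), together with~$(2n)^q$ \emph{padding variables}, one dedicated to each~$q$-tuple of literal variables. These padding variables play the role of the~$\tilde{x}_i$ variables in \Cref{def:Rpq}: their intended value is the bitstring obtained by concatenating the values of the $q$ literal variables indexing them. This matches the required count~$2n + (2n)^q$.

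Next I would construct the constraints. First, for each~$i \in [n]$ add a~$\CUT(x_i, \bar{x}_i)$ constraint, which forces~$\bar{x}_i$ to take the opposite Boolean value from~$x_i$. Second, for each clause~$C = \ell_1 \vee \cdots \vee \ell_p$ of~$\phi$ (where each~$\ell_j$ is either some~$x_{i_j}$ or~$\neg x_{i_j}$), add a single~$\ORDP_{p,q}$ constraint whose first~$p$ positions are the literal variables encoding~$\ell_1, \ldots, \ell_p$ (using~$x_{i_j}$ if~$\ell_j = x_{i_j}$ and~$\bar{x}_{i_j}$ if~$\ell_j = \neg x_{i_j}$), and whose last~$p^q$ positions are the unique padding variables indexed by the corresponding~$q$-tuples of the literal variables, laid out according to the bijection~$\idx_{p,q}$. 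The whole construction is clearly polynomial time.

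The core verification is that satisfying assignments correspond. Given a satisfying assignment~$\alpha$ of~$\phi$, define a CSP assignment~$\sigma$ by setting~$\sigma(x_i) = \alpha(x_i)$, $\sigma(\bar{x}_i) = 1 - \alpha(x_i)$, and assigning each padding variable the bitstring given by the values of the literal variables indexing it. The~$\CUT$ constraints hold by construction; Property~\ref{constraint:dp} of \Cref{def:Rpq} holds because the padding values were defined to be exactly the required concatenations; and Property~\ref{constraint:or} holds for each clause because~$\alpha$ satisfies~$C$, making at least one of the literal values on the first~$p$ positions equal to~$1$. Conversely, given a satisfying CSP assignment, the~$\CUT$ constraints force the literal variables to be consistent negations, Property~\ref{constraint:dp} forces each padding variable to equal the concatenation of the indexing literal values (so no inconsistency among padding variables can arise), and Property~\ref{constraint:or} on each~$\ORDP_{p,q}$ constraint forces the corresponding clause of~$\phi$ to be satisfied. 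Reading off the literal variables then yields a satisfying assignment of~$\phi$.

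There is no serious obstacle here; the only point that needs to be stated carefully is that all~$\ORDP_{p,q}$ constraints in the instance can consistently share the global pool of~$(2n)^q$ padding variables, precisely because each~$q$-tuple of literal variables has exactly one associated padding variable and its forced value is a function only of those literal values, independently of which clause references it.
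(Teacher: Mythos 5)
Your construction and verification match the paper's proof almost line-for-line: $2n$ literal variables paired by $\CUT$ constraints, $(2n)^q$ padding variables indexed by $q$-tuples of literals, one $\ORDP_{p,q}$ constraint per clause wiring literal variables into the first $p$ slots and the dedicated padding variables into the remaining $p^q$ slots via $\idx_{p,q}$, with the equivalence argument running through Properties~\ref{constraint:dp} and~\ref{constraint:or} of \Cref{def:Rpq} in both directions exactly as the paper does. This is the same approach and it is correct.
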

\begin{proof}
Consider an instance~$\phi$ of~CNF-$p$-SAT. It consists of a set of clauses~$A_1, \ldots, A_m$ over Boolean variables~$x_1, \ldots, x_n$, with each clause~$A_i$ being a disjunction of literals. Each literal is either a variable~$x_i$ or its negation~$\neg x_i$. 

We construct an equivalent instance of~$\csp(\SATDP_{p,q})$ in the following way. Let~$X := \{x_i, \overline{x}_i \mid i \in [n]\}$ and~$\widetilde{X} := \{\tilde{x}_\mathbf{x} \mid \mathbf{x} \in X^q\}$. So~$X$ contains one element for each possible literal of~$\phi$, while~$\widetilde{X}$ contains one element for each $q$-tuple~$\mathbf{x}$ of literals. We use~$V = X \cup \widetilde{X}$ as the set of variables for the CSP under construction. We will refer to the variables in~$X$ as the \emph{Boolean} variables and to those in~$\widetilde{X}$ as the \emph{padding} variables. Observe that~$|V| = 2n + (2n)^q$, as claimed. In the remainder, it will be convenient to let~$\mathsf{var}(\ell) \in X$ denote the variable in~$X$ corresponding to a literal~$\ell$. So if~$\ell = x_i$ is a positive literal, we have~$\mathsf{var}(\ell) = x_i$, while for negative literals~$\neg x_i$ we have~$\mathsf{var}(\ell) = \overline{x}_i$.

To ensure that the variables~$x_i$ and~$\overline{x}_i$ act as each others negations, we add the constraint~$\CUT(x_i, \overline{x}_i)$ to~$C$ for each~$i \in [n]$.

We transform clauses of~$\phi$ into constraints using the relation~$\ORDP_{p,q}$, as follows. For each clause~$A_j = \ell_1 \vee \ldots \vee \ell_p$, where each~$\ell_k$ is a literal of the form~$x_i$ or~$\neg x_i$, we add a corresponding constraint~$\ORDP_{p,q}(\mathbf{v}_j)$ to~$C$, where~$\mathbf{v}_j$ is a tuple of~$p + p^q$ elements of~$V$. The $i$th value of~$\mathbf{v}_j$, denoted~$\mathbf{v}_j[i]$, is defined as follows:
\begin{itemize}
    \item if~$i \in [p]$, then~$\mathbf{v}_j[i] = \mathsf{var}(\ell_i)$.
    \item if~$i > p$, then~$\mathbf{v}_j[i] = \tilde{x}_{\mathbf{x}}$, where~$\mathbf{x} \in X^q$ is the $q$-tuple of Boolean variables defined as follows. Let~$I := (i_1, \ldots, i_q) = \idx_{p,q}(i-p)$ be the tuple of~$q$ indices in the range~$[p]$ that is labeled~$i-p$ by the chosen bijection~$\idx_{p,q}$. Then we define:
    \begin{equation}
        \mathbf{x} = (\mathsf{var}(\ell_{i_1}), \ldots, \mathsf{var}(\ell_{i_q})).
    \end{equation}
\end{itemize}
So the tuple of variables to which the constraint associated to clause~$A_j$ is applied is computed as follows: the first~$p$ variables are Boolean variables corresponding to the literals occurring in~$A_j$; the remaining variables are padding variables. The padding variable~$\tilde{x}_{\mathbf{x}}$ used at position~$p + i$ is the one that was created for the $q$-tuple of literals that appear in clause~$A_j$ at the positions encoded by~$\idx_{p,q}(i)$.

This concludes the description of the~$\csp(\SATDP_{p,q})$-instance~$(V,C)$, which can easily be done in polynomial time for fixed~$p$ and~$q$. It remains to argue that the Boolean formula~$\phi$ is satisfiable if and only if~$(C,V)$ is satisfiable. We first argue the converse. If~$\sigma \colon V \to D_{p,q}$ is an assignment that satisfies all constraints of~$C$, then its restriction to the original variables~$\{x_1, \ldots, x_n\}$ satisfies all clauses of~$\phi$. To see this, consider an arbitrary clause~$A_j$ of~$\phi$. Since~$\sigma$ satisfies all constraints of~$C$, in particular it satisfies the constraint~$\ORDP_{p,q}(\mathbf{v}_j)$ that was inserted into~$C$ on account of clause~$A_j$. By construction, the first~$p$ variables of~$\mathbf{v}_j$ correspond to the literals in~$A_j$. By definition of~$A_j$, the constraint~$\ORDP_{p,q}(\mathbf{v}_j)$ can only be satisfied if the values of the first~$p$ variables of~$\mathbf{v}_j$ are Boolean, and at least one of them is~$1$. Hence there is a literal~$\ell_i$ in clause~$A_j$ such that the corresponding variable (either~$x_i$ or~$\overline{x}_i$) in~$X$ has value~$1$ under~$\sigma$. If~$\ell_i$ is positive, this directly shows that the restriction of~$\sigma$ to~$\{x_1, \ldots, x_n\}$ satisfies clause~$A_j$. If~$\ell_i$ is a negative literal of the form~$\neg x_k$, then we know~$\sigma(\overline{x}_k) = 1$. Since~$\sigma$ also satisfies the constraint~$(\CUT, (x_k, \overline{x}_k))$ that was inserted into~$C$, we find~$\sigma(x_k) = 0$. This again guarantees that the restriction of~$\sigma$ to~$\{x_1, \ldots, x_n\}$ satisfies~$A_j$. As this argument applies for an arbitrary clause~$C_j$, the entire formula~$\phi$ is indeed satisfied by the stated assignment.

To conclude the proof, we establish correctness in the converse direction: if~$\phi$ has a satisfying assignment~$f \colon \{x_1, \ldots, x_n\} \to \{0,1\}$, then~$(C,V)$ has a satisfying assignment~$\sigma \colon V \to D_{i,j}$. The assignment~$\sigma$ is derived from~$f$ as follows.
\begin{itemize}
    \item For each positive Boolean variable~$x_i \in V \cap X$ we set~$\sigma(x_i) = f(x_i)$.
    \item For each negative Boolean variable~$\overline{x}_i \in V \cap X$ we set~$\sigma(\overline{x}_i) = 1 - f(x_i)$.
    \item For each padding variable~$\tilde{x}_\mathbf{x}$ where~$\mathbf{x} = (x_{i_1}, \ldots, x_{i_q}) \in X^q$, the value~$\sigma(\tilde{x}_\mathbf{x})$ is the bitstring~$(f(x_{i_1}) \ldots f(x_{i_q}))$.
\end{itemize}
Hence the value of a padding variable~$\tilde{x}_{\mathbf{x}}$ that was created on account of a tuple of~$q$ Boolean variables is simply the bitstring obtained by concatenating the $f$-values of these Boolean values. From these definitions, it is straightforward to verify that~$\sigma$ satisfies~$(V,C)$. Since each clause of~$\phi$ is satisfied by~$f$, in each constraint of the form~$\ORDP_{p,q}(\mathbf{v}_j)$ created for a clause~$A_j$, one of the first~$p$ Boolean variables will be set to~$1$ since~$f$ satisfies a literal of~$A_j$. Since the values of the padding variables appearing in the rest of the constraint are defined precisely as the bitstrings of the values whose positions they control, the remaining conditions of \Cref{def:Rpq} are also satisfied. This concludes the proof of \Cref{lem:lowerbound:transform}.
\end{proof}

By combining the reduction from in the previous lemma with known kernelization lower bounds for CNF-$p$-SAT due to Dell and van Melkebeek~\cite{dell2014Satisfiability}, we obtain kernelization lower bounds for~$\CSP(\SATDP_{p,q})$. We introduce some terminology that allows us to draw this connection. In the following, a \emph{parameterized problem}~$Q$ is a subset of $\Sigma^* \times \mathbb{N}_{+}$, where $\Sigma$ is a finite alphabet. When applying tools of parameterized complexity to our study of CSPs, we use the number of variables~$n$ in the CSP as the parameter~$k$ of the parameterized problem.

\begin{definition} \label{def:gen:kernel}
Let $Q, Q' \subseteq \Sigma^*\times\mathbb{N}_{+}$ be parameterized problems and let $h \colon \mathbb{N}_{+}\rightarrow\mathbb{N}_{+}$  be a computable function. A \emph{generalized kernelization for $Q$ into $Q'$ of size $h(k)$} is an algorithm that, on input $(x,k) \in \Sigma^*\times\mathbb{N}_{+}$, takes time polynomial in $|x|+k$ and outputs an instance $(x',k')$ such that:
\begin{enumerate}
\item $|x'|$ and $k'$ are bounded by $h(k)$, and
\item $(x',k')\in Q'$ if and only if $(x,k) \in Q$.
\end{enumerate}
The algorithm is a \emph{polynomial generalized kernelization} if $h(k)$ is a polynomial.
\end{definition}

The adjective \emph{generalized} corresponds to the fact that the algorithm is allowed to map instances from the source problem~$Q$ into equivalent instances of a different (but fixed) decision problem~$Q'$. Using this terminology, the lower bound by Dell and van Melkebeek can be stated as follows.

\begin{theorem}[\cite{dell2014Satisfiability}] \label{cnf:dell:lowerbound}
    For each integer~$p \geq 3$ and~$\varepsilon > 0$, the CNF-$p$-SAT problem parameterized by the number of variables~$n$ does not admit a generalized kernelization of size~$\Oh(n^{p-\varepsilon})$ unless \containment.
\end{theorem}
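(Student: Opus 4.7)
The plan is to adapt the classical Dell--van Melkebeek framework of weak cross-composition combined with an oracle communication protocol lower bound. The argument proceeds in three stages.

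First, assume toward a contradiction that CNF-$p$-SAT admits a generalized kernelization of size $\Oh(n^{p-\varepsilon})$. By a standard reduction (Dell--van Melkebeek), such a kernel yields a co-nondeterministic oracle communication protocol for CNF-$p$-SAT of complexity $\Oh(n^{p-\varepsilon}\cdot\polylog(n))$: a polynomially bounded Alice computes and transmits the kernel, and an unbounded Bob returns the answer. The contrapositive formulation of Drucker's AND/OR-distillation theorem then says that if we can compress the \textsc{OR} of $t$ independent CNF-$p$-SAT instances into a single message of total length $o(t)$, then \containment.

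Second, construct a weak \textsc{OR}-composition that maps $t$ instances $\phi_1,\ldots,\phi_t$ of CNF-$p$-SAT on $n$ variables each into one CNF-$p$-SAT instance $\Phi$ on $N = \Oh(n + t^{1/p})$ variables whose satisfiability is equivalent to $\bigvee_i \mathrm{SAT}(\phi_i)$. The construction uses an injective $p$-uniform labeling: choose an auxiliary universe $U$ of size $\Theta(t^{1/p})$, assign to each instance $\phi_i$ a distinct $p$-subset $S_i\subseteq U$ as a label, and for each original clause $C$ of $\phi_i$ produce a $p$-clause of $\Phi$ that is the disjunction of the literals of $C$ together with literals that are satisfied exactly when the ``active label'' does not match $S_i$. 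A small selector gadget of $\Oh(|U|)$ additional clauses forces exactly one $S_i$ to be active, so satisfying assignments of $\Phi$ correspond to choosing an index $i$ together with a satisfying assignment to $\phi_i$.

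Third, balance the parameters. Apply the hypothetical kernel to $\Phi$: its output has size $\Oh(N^{p-\varepsilon}) = \Oh((n+t^{1/p})^{p-\varepsilon})$. Setting $t := n^{p+\delta}$ for a sufficiently small $\delta = \delta(\varepsilon) > 0$ makes $N = \Oh(t^{1/p})$ and the kernel size $\Oh(t^{(p-\varepsilon)/p}\cdot\polylog(t)) = o(t)$. This is an $o(t)$-bit compression of the \textsc{OR} of $t$ SAT instances, contradicting Drucker's theorem (and hence \ncontainment).

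The hardest step is the composition: one must simultaneously keep clauses of width exactly $p$, introduce only $\Oh(t^{1/p})$ auxiliary variables, and guarantee that no satisfying assignment can ``mix'' clauses from different instances. The former two requirements make the $p$-subset labeling scheme essentially tight (any scheme using $o(t^{1/p})$ label variables cannot accommodate $t$ distinct labels of size $p$), and the latter requires careful design of the selector so that exactly one label $S_i$ is activated. Once this combinatorial construction is in place, the remaining steps are standard applications of the Dell--van Melkebeek and Drucker frameworks.
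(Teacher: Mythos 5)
The paper does not prove this theorem; it is cited directly from Dell and van Melkebeek~[dell2014Satisfiability], so there is no in-paper proof to compare your attempt against.

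As a reconstruction of the external argument, your high-level framework (kernel $\to$ oracle communication protocol $\to$ OR-compression $\to$ polynomial hierarchy collapse) is essentially right, and the parameter balancing in your third step is sound given the claimed composition. The gap is in the packing construction of your second step. You propose to give each instance $\phi_i$ a $p$-subset label $S_i \subseteq U$ with $|U| = \Theta(t^{1/p})$ and, for each clause $C$ of $\phi_i$, ``produce a $p$-clause of $\Phi$ that is the disjunction of the literals of $C$ together with literals that are satisfied exactly when the active label does not match $S_i$.'' But a clause $C$ of $\phi_i$ already has up to $p$ literals, so appending any selector literal produces a clause of width $>p$, and the composed formula is no longer an instance of CNF-$p$-SAT. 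The same issue afflicts your proposed ``selector gadget of $\Oh(|U|)$ additional clauses'' for forcing exactly one label to be active: expressing ``exactly one of $t$ $p$-subsets is active'' with only $\Oh(t^{1/p})$ clauses of width at most $p$ is far from obvious and is not addressed. The actual Dell--van Melkebeek argument sidesteps both problems by first composing $d$-uniform hypergraph vertex cover (whose packing has a clean combinatorial form) and then transferring the lower bound to $d$-CNF-SAT by a parameter-preserving reduction; one cannot simply graft selector literals onto width-$p$ clauses. A minor further point: you invoke Drucker's AND/OR-distillation theorem, which postdates Dell--van Melkebeek; their original proof uses their complementary witness lemma (building on Fortnow--Santhanam). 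Using Drucker's stronger bound is not wrong, merely anachronistic, and would still require a correct composition step to be useful.
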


In this theorem, and in \Cref{def:gen:kernel}, the size of the instances is measured in terms of the number of symbols over alphabet~$\Sigma$, i.e., in terms of the number of bits needed to encode the instance. Our definition of the size of a kernelization for $\CSP(\Gamma)$ in the introduction was measuring the number of \emph{constraints} in the reduced instance. Notice that in our setting of CSPs over finite constraint languages consisting of finite relations, the difference between these two types of measurements is small: an instance that is encoded in~$f(n)$ bits has at most~$f(n)$ constraints, while an instance of~$f(n)$ constraints can be encoded in~$\Oh_\Gamma (f(n) \log n)$ bits by encoding, for each constraint, which relation~$R \in \Gamma$ it uses ($\Oh_\Gamma(1)$ bits) and to which variables the constraint is applied ($\Oh_R(1)$ variables, each encoded in~$\Oh(\log n)$ bits). Since our notion of kernelization exponent is oblivious to~$\log n$ factors, when considering the kernelization exponent of nonuniform CSPs it does not matter whether we measure kernelization size in bits or constraints. With this knowledge, we present the lower bound in the following lemma.

\begin{lemma} \label{lem:kernel:lowerbound}
Assuming \ncontainment, for each fixed~$p,q \in \mathbb{N}$ with~$p \geq q$ and~$p \geq 3$, the kernelization exponent of~$\csp(\SATDP_{p,q})$ is at least~$\frac{p}{q}$.
\end{lemma}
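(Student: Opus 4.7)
The plan is to derive the lower bound by composing the reduction of \Cref{lem:lowerbound:transform} with the kernelization lower bound for CNF-$p$-SAT given by \Cref{cnf:dell:lowerbound}. Suppose for contradiction that, assuming \ncontainment, some $\CSP(\SATDP_{p,q})$ admits a kernelization of size $\Oh(N^{\alpha})$ where $N$ is the number of variables and $\alpha < \frac{p}{q}$. I will use this putative kernelization to build a generalized kernelization for CNF-$p$-SAT of size $\Oh(n^{p-\varepsilon})$ for some $\varepsilon > 0$, contradicting \Cref{cnf:dell:lowerbound}.

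The composition proceeds as follows. Given an instance $\phi$ of CNF-$p$-SAT on $n$ variables, apply \Cref{lem:lowerbound:transform} to obtain in polynomial time an equivalent instance $(V,C)$ of $\CSP(\SATDP_{p,q})$ with $|V| = 2n + (2n)^q = \Theta(n^q)$ variables. Then apply the hypothesized kernelization for $\CSP(\SATDP_{p,q})$ to obtain an equivalent instance of size $\Oh(|V|^{\alpha}) = \Oh(n^{q\alpha})$, measured in bits (since, as noted in the discussion surrounding \Cref{cnf:dell:lowerbound}, the number of constraints and the bit-length differ only by $\Oh(\log n)$ factors that are absorbed in the chosen exponent). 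Since $\alpha < \frac{p}{q}$, we have $q\alpha < p$, so we can write $q\alpha = p - \varepsilon$ for some $\varepsilon > 0$. The composed algorithm thus produces in polynomial time an equisatisfiable instance of (an instance of) $\CSP(\SATDP_{p,q})$ of size $\Oh(n^{p-\varepsilon})$, which constitutes a generalized kernelization for CNF-$p$-SAT of the forbidden size, contradicting \ncontainment.

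I expect the proof itself to be short since the hard work has already been done. The two routine items to verify carefully are: (i) that \Cref{lem:lowerbound:transform} indeed yields an \emph{equivalent} (equisatisfiable) instance, which is already established in its proof; and (ii) that the notion of size used in the parameterized-complexity framework (bits) matches the notion used in our kernelization-exponent definition (number of constraints), up to polylogarithmic factors that are negligible given that $\alpha$ is strictly below $p/q$. The main obstacle, if any, is purely notational: ensuring that the kernelization promised for $\CSP(\SATDP_{p,q})$ (which is stated in terms of the number of variables $N$ of the CSP, with an exponent $\alpha$) translates cleanly through the substitution $N = \Theta(n^q)$ into an exponent in $n$ on the CNF-$p$-SAT side. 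The crucial quantitative point is that the reduction blows up the number of variables from $n$ to $\Theta(n^q)$, and so an exponent of $p/q$ on the CSP side corresponds precisely to an exponent of $p$ on the SAT side, which is exactly the threshold at which \Cref{cnf:dell:lowerbound} becomes tight.
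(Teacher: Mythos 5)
Your proposal is correct and follows essentially the same route as the paper: compose the reduction of \Cref{lem:lowerbound:transform} (blowing up $n$ variables to $\Theta(n^q)$) with the CNF-$p$-SAT lower bound of \Cref{cnf:dell:lowerbound}, so that an exponent below $p/q$ on the CSP side yields an exponent below $p$ on the SAT side. The paper is slightly more explicit about the constraints-to-bits conversion (writing the $\Oh(\log n)$ factor out and absorbing it via $\log n \in \Oh(n^\beta)$), but your observation that these logarithmic factors are swallowed by the strict gap $\alpha < p/q$ handles the same point.
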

\begin{proof}
Assume for a contradiction that the kernelization exponent of~$\csp(\SATDP_{p,q})$ is less than~$\frac{p}{q}$. It follows that there exists a positive real~$\eps$ such that~$\csp(\SATDP_{p,q})$ has a kernelization that reduces to equisatisfiable instances with~$\Oh(n^{\frac{p}{q}-\eps})$ constraints.

By \Cref{cnf:dell:lowerbound}, to contradict the assumption \ncontainment it suffices to leverage \Cref{lem:lowerbound:transform} together with the assumed kernelization for $\csp(\SATDP_{p,q})$ into a generalized kernelization for CNF-$p$-SAT of bitsize~$\Oh(n^{p-\delta})$ for some~$\delta > 0$. This can be done as follows.

Given an $n$-variable instance~$\phi$ of CNF-$p$-SAT, the generalized kernelization transforms it into an equivalent instance~$(V,C)$ of~$\csp(\SATDP_{p,q})$ on~$|V| = 2n+(2n)^q$ variables by invoking \Cref{lem:lowerbound:transform}. It then applies the assumed kernelization to~$(V,C)$, resulting in an equivalent instance~$(V',C')$ on at most~$\Oh(|V|^{\frac{p}{q} - \eps}) \leq \Oh(n^q)^{\frac{p}{q} - \eps} \leq \Oh(n^{p - q \eps})$ constraints. Since variables not occurring in any constraint can be removed without changing the answer to the instance, while the arity of the constraints is~$\Oh(1)$ for fixed~$p$ and~$q$, we may assume without loss of generality that~$|V'| \leq \Oh(|C'|) \leq \Oh(n^{p - q \eps})$. Hence the identity of a single variable can be encoded in~$\log_2 \Oh(n^{p - q \eps}) \leq \Oh(\log_2 n)$ bits; here we use the fact that~$\log_2(n^{p - q\eps}) = (p - q\eps) \cdot \log_2 (n)$ while~$p \in \Oh(1)$. As the arity of each constraint is constant, each constraint of the instance~$(V',C')$ can be encoded in~$\Oh(\log n)$ bits by listing the binary representations of the indices of all the variables occurring in the constraint. Hence instance~$(V',C')$ can be encoded in~$\Oh(n^{p - q\eps} \log n)$ bits and is equivalent to the CNF-$p$-SAT instance~$\phi$. Since both transformation steps can be carried out in polynomial time, this results in a generalized kernelization for CNF-$p$-SAT of bitsize~$\Oh(n^{p - q\eps} \log n)$. Since~$\log n \in \Oh(n^\beta)$ for each~$\beta > 0$, this shows that CNF-$p$-SAT has a generalized kernelization of bitsize~$\Oh(n^{p - q\eps + \frac{q \eps}{2}})$, so of size~$\Oh(n^{p - \delta})$ for~$\delta = \frac{q \eps}{2} > 0$. But this implies \containment via the results of Dell and van Melkebeek~\cite{dell2014Satisfiability}.
\end{proof}

Note that while the statement of \Cref{lem:kernel:lowerbound} requires~$p \geq 3$, for any rational number~$\frac{p}{q} \geq 1$ we can write it as a ratio~$\frac{3p}{3q}$ of numbers that are at least three. We now prove the corresponding upper bounds. The manner in which we use the Kruskal-Katona theorem to bound the size of the instance after preprocessing is very similar to the proof of \Cref{lem:or-dp:upper}, but additional effort is needed to make the proof algorithmic and avoid the assumption of having multipartite instances.

\begin{lemma} \label{lem:kernel:upperbound}
For each fixed~$p,q \in \mathbb{N}$ with~$p \geq q$, there is a polynomial-time kernelization algorithm that transforms any $n$-variable instance~$(V,C)$ of $\csp(\SATDP_{p,q})$ into an equivalent instance~$(V',C')$ with~$|C'| \leq \Oh_{p,q}(n^{\frac{p}{q}})$. Hence the kernelization exponent of~$\CSP(\SATDP_{p,q})$ is at most~$\frac{p}{q}$.
\end{lemma}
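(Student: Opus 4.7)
The plan is to turn the conditional non-redundancy analysis in the proof of Lemma~\ref{lem:or-dp:upper} into an explicit polynomial-time algorithm, while simultaneously handling the additional $\CUT$ constraints present in $\SATDP_{p,q}$. The algorithm proceeds in three phases.

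In the first phase we process the $\CUT$ constraints. A union-find data structure augmented with a parity bit lets us compute, in polynomial time, the equivalence classes of variables under the derived ``forced equal'' and ``forced complementary'' relations that the $\CUT$ constraints generate. If some variable is forced to equal its own negation, we output a trivial unsatisfiable $\Oh(1)$-size kernel. Otherwise, every variable is replaced by a canonical signed representative, and the $\CUT$ constraints of the kernel are chosen to form a spanning forest of the negation-edge graph, giving at most $n-1$ surviving $\CUT$ constraints.

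In the second phase we mimic the ``non-reduced instance'' step of Lemma~\ref{lem:or-dp:upper}, using that $\ORDP_{p,q} \subseteq \DP_{p,q}$ so that every satisfying assignment automatically respects $\DP_{p,q}$. Apply the following rule exhaustively: whenever two $\ORDP_{p,q}$-constraints $y,y' \in C$ share a variable at some padding position $p+i$, then by Property~\ref{constraint:dp} of Definition~\ref{def:Rpq} the values at the Boolean positions $\idx_{p,q}(i) = (t_1,\ldots,t_q)$ of $y$ and $y'$ must coincide in every satisfying assignment; so merge the signed variables $y_{t_j}$ and $y'_{t_j}$ for each $j \in [q]$ (recording unsatisfiability if a sign conflict arises). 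Dually, whenever two constraints agree on their first $p$ Boolean-position variables, their padding variables at each padding position are forced equal and are also merged, after which duplicate constraints are removed. Each merge strictly reduces the variable count, so the process terminates in polynomial time.

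In the third phase we count. After Phase~2 the instance is ``reduced'' in the sense of Lemma~\ref{lem:or-dp:upper}: distinct padding variables are in bijection with the distinct $q$-tuples of Boolean variables at the positions they control, and distinct $\ORDP_{p,q}$-constraints carry distinct $p$-tuples on their first $p$ positions. Let $\mathcal{A}$ be this family of $p$-tuples and $\mathcal{B}$ its $q$-shadow; then $\mathcal{B}$ injects into the set $X'$ of remaining padding variables, so $|\mathcal{B}| \le |X'| \le n$, and Kruskal-Katona (Theorem~\ref{kruskal-katona}) yields $|\mathcal{A}| = \Oh_{p,q}(n^{p/q})$. Together with the $\Oh(n)$ $\CUT$ constraints from Phase~1, the kernel has $\Oh_{p,q}(n^{p/q})$ constraints, establishing the claimed upper bound on the kernelization exponent.

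The main obstacle is the soundness of Phase~2: identifications must preserve equisatisfiability rather than the weaker conditional non-redundancy of Lemma~\ref{lem:or-dp:upper}. The key observation, as noted above, is the inclusion $\ORDP_{p,q} \subseteq \DP_{p,q}$, which ensures that every satisfying assignment already respects the $\DP_{p,q}$-implied equalities used to justify each merge; the signed union-find from Phase~1 handles the interaction with $\CUT$ cleanly. A secondary technicality is that the input need not be multipartite, which we absorb by treating $\mathcal{A}$ as a multiset family -- constraints repeating a Boolean variable among their first $p$ positions contribute only a constant overhead to the Kruskal-Katona bound and do not affect the asymptotic count.
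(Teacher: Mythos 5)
Your proof follows essentially the same three-phase approach as the paper: spanning forest for the $\CUT$ constraints, exhaustive variable-identification rules for the $\ORDP_{p,q}$ constraints (justified by $\ORDP_{p,q}\subseteq\DP_{p,q}$ so that equisatisfiability is preserved), and a Kruskal-Katona count. One imprecision worth flagging: the map from the $q$-shadow $\mathcal{B}$ to the padding variables $X'$ is not injective in general (two distinct $q$-sets of $\mathcal{A}$ coming from different padding indices $i\neq i'$ can control the same padding variable), so the codomain needs to be $X'\times[p^q]$, as the paper does; likewise, rather than handwaving $\mathcal{A}$ as a multiset family one should work over the tagged universe $V'\times[p]$ so that $A_y$ is a genuine $p$-set even when Boolean variables repeat. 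Both points cost only fixed $p,q$-dependent factors and do not threaten the bound $\Oh_{p,q}(n^{p/q})$.
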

\begin{proof}
Given an instance~$(V,C)$ of~$\csp(\SATDP_{p,q})$ on~$n := |V|$ variables, the kernelization consists of two phases aimed at reducing the number of constraints over~$\CUT$ and over~$\ORDP_{p,q}$, respectively. For the first phase, it is easy to reduce the number of constraints over~$\CUT$ to~$\Oh(n)$. Consider the graph~$G$ on vertex set~$V$ which has an edge for each constraint over~$\CUT$. If this graph is not bipartite, then since each edge corresponds to a~$\CUT$ constraint that requires the variables represented by its endpoints to have distinct Boolean values, the instance is unsatisfiable. It then suffices to output a constant-size unsatisfiable instance as the result of the kernelization. If~$G$ is bipartite, then we compute a spanning forest~$F$ of~$G$, which contains a spanning tree of each connected component. Remove all~$\CUT$ constraints that do not correspond to an edge in~$F$. We argue that this does not change the satisfiability of the instance, as follows. For each removed constraint~$\CUT(x, x')$, there is a path~$P$ between~$x$ and~$x'$ in the spanning forest~$F$. The path~$P$ has an odd number of edges, as otherwise it would make an odd cycle together with the edge~$\{x,x'\}$ which would violate the assumption that~$G$ is bipartite. As each edge on~$P$ corresponds to ~$\CUT$ constraint that we preserve in the instance, any satisfying assignment to the remaining constraints must alternate the assigned Boolean true/false value for each variable on path~$P$. As the number of edges on~$P$ is odd, to satisfy all cut constraints from~$P$ the value assigned to~$x$ must differ to that of~$x'$. Hence any omitted constraint~$\CUT(x,x')$ must be satisfied by any assignment that satisfies the remaining clauses. Since a spanning forest for a graph on~$|V|$ vertices has less than~$|V|$ edges and can be computed efficiently, the first phase of the kernelization can be executed in polynomial time and reduces the number of~$\CUT$ constraints to~$\Oh(n)$.

For the second phase of the algorithm, which is the most difficult, the goal is to reduce the number of~$\ORDP_{p,q}$ constraints. As an opening step for phase two, we partition the variables of the instance into types: \emph{Boolean} variables (that occur in a constraint over~$\CUT$, or that occur in one of the first~$p$ positions of a constraint over~$\ORDP_{p,q}$) and \emph{padding} variables (that occur in a constraint over~$\ORDP_{p,q}$ at a position after~$p$). If these two sets are not disjoint, then the choice of~$\ORDP_{p,q}$ ensures that even the subinstance consisting of two clauses that have this conflicting use of the variable, is unsatisfiable. Hence we may output such a subinstance as the result of the kernelization and terminate. In the remainder, we may assume the variables are partitioned into Boolean and padding variables.

To reduce the number of constraints of type~$\ORDP_{p,q}$ we apply two simple reduction rules based on the semantics of the~$\ORDP_{p,q}$ predicate, exploiting the fact that the values of padding variables in such constraints uniquely determine the values of the Boolean variables on the positions they govern. We exhaustively apply the following two reduction rules.
\begin{enumerate}
    \item Suppose there are indices~$i, i' \in [p^q]$ for which the instance contains two (not necessarily distinct) constraints
    \begin{align*}
    y =& \ORDP_{p,q}(x_1, \ldots, x_p, \tilde{x}_1, \ldots, \tilde{x}_{p^q}),\\
    y' =& \ORDP_{p,q}(x'_1, \ldots, x'_p, \tilde{x}'_1, \ldots, \tilde{x}'_{p^q}),
    \end{align*}
    such that~$\tilde{x}_i = \tilde{x}'_{i'}$ but for the associated tuples~$(t_1, \ldots, t_q) = \idx_{p,q}(i)$ and~$(t'_1, \ldots, t'_q) = \idx_{p,q}(i')$ we have~$(x_{t_1}, \ldots, x_{t_j}) \neq (x'_{t'_1}, \ldots, x'_{t'_j})$. In other words, there are two constraints that use the same padding variable~$\tilde{x}_i = \tilde{x}'_{i'}$, but the tuples of Boolean variables they control differ. Let~$j \in [q]$ such that~$x_{t_j} \neq x'_{t'_j}$. We now simplify the instance as follows: replace all occurrences of variable~$x'_{t'_j}$ in the CSP by~$x_{t_j}$ and remove variable~$x'_{t'_j}$. It is clear that any satisfying assignment of the reduced instance yields a satisfying assignment to the original, by simply assigning the eliminated variable~$x'_{t'_j}$ the same value as~$x_{t_j}$. Due to the semantics of~$\ORDP_{p,q}$ (Property~\ref{constraint:dp} in \Cref{def:Rpq}) the other direction holds as well: any satisfying assignment to the original instance must assign~$x_{t_j}$ and~$x'_{t'_j}$ the same value, since the value of both variables must match the $j$-th bit in the bitstring assigned as the value for~$\tilde{x}_i = \tilde{x}'_{i'}$. Hence this replacement preserves the satisfiability of the instance while strictly reducing the number of variables.
    Intuitively, this reduction rule ensures that each $q$-tuple of Boolean variables that occurs in some constraint has a \emph{unique} padding variable associated to it, which we will exploit in the analysis below.
    \item Suppose there is an index~$i \in [p^q]$ for which the instance contains two constraints 
    \begin{align*}
    y =& \ORDP_{p,q}(x_1, \ldots, x_p, \tilde{x}_1, \ldots, \tilde{x}_{p^q}),\\
    y' =& \ORDP_{p,q}(x'_1, \ldots, x'_p, \tilde{x}'_1, \ldots, \tilde{x}'_{p^q}),
    \end{align*}
    in which the padding variables~$\tilde{x}_i, \tilde{x}'_i$ used on the $i$-th padding position are distinct, but in which the tuples of Boolean variables appearing at positions~$(t_1, \ldots, t_q) := \idx_{p,q}(i)$ are identical, so that~$\tilde{x}_i \neq \tilde{x}'_i$ while~$(x_{t_1}, \ldots, x_{t_q}) = (x'_{t_1}, \ldots, x'_{t_q})$. Then we do the following: replace all occurrences of padding variable~$\tilde{x}'_i$ by padding variable~$\tilde{x}_i$ and then remove~$\tilde{x}'$ from the instance. Due to the semantics of~$\ORDP_{p,q}$ (Property~\ref{constraint:dp} in \Cref{def:Rpq}), any satisfying assignment to the original instance must give~$\tilde{x}_i$ and~$\tilde{x}'_i$ the same value since the value of both variables is the bitstring of values assigned to the~$q$ Boolean variables~$(x_{t_1}, \ldots, x_{t_q}) = (x'_{t_1}, \ldots, x'_{t_q})$. Hence this replacement preserves the satisfiability of the instance. 
    
    Exhaustive application of this reduction rule ensures that for each $p$-tuple of Boolean variables, there is at most one constraint over~$\ORDP_{p,q}$ in which that exact $p$-tuple appears on the first~$p$ positions.
\end{enumerate}
The instance~$(V',C')$ that is obtained from exhaustively applying these reduction rules is returned as the output of the kernelization algorithm. Since all preprocessing steps preserve the satisfiability of the instance, it is equisatisfiable to~$(V,C)$. As the reduction rules can easily be implemented to run in polynomial time, it remains to argue that the number of constraints~$|C'|$ in the reduced instance is bounded by~$\Oh_{p,q}(n^{\frac{p}{q}})$. Since the number of constraints over~$\CUT$ has been reduced to~$\Oh(n)$ by the first phase, it suffices to bound the constraints~$C^* \subseteq C'$ over~$\ORDP_{p,q}$. Note that~$|V'| \leq |V| = n$ since we never introduce new variables. Now we set up an application of the Kruskal-Katona theorem to bound the number of constraints~$|C^*|$ in terms of the number of variables~$|V'|$. 

The set family~$\mathcal{A}$ to which we apply the Kruskal-Katona theorem is defined over the universe~$V' \times [p]$. For each constraint~$y = \ORDP_{p,q}(x_1, \ldots, x_p, \tilde{x}_1, \ldots, \tilde{x}_{p^q})$ in~$C^*$, we add a set~$A_y = \{ (x_1, 1), (x_2, 2), \ldots, (x_p, p)\}$ to~$\mathcal{A}$. Since all pairs in the set have different second coordinates, the resulting set family~$\mathcal{A}$ is $p$-uniform. Let~$\mathcal{B}$ be the $q$-shadow of~$\mathcal{A}$, i.e., it contains each size-$q$ subset~$B \subseteq V' \times [p]$ for which~$B$ is contained in some set~$A_y$ of~$\mathcal{A}$. Next, we relate~$|\mathcal{A}|$ to~$|C^*|$ and~$|\mathcal{B}|$ to~$|V'|$.

Observe first that~$|C^*| = |\mathcal{A}|$. This follows from the fact that each constraint~$y \in C^*$ contributes a set~$A_y$ to~$\mathcal{A}$, while no two distinct constraints can contribute the same set: that would mean there are two distinct constraints over~$\ORDP_{p,q}$ in which the \emph{same} $p$-tuple of variables appears on the first~$p$ positions, which triggers the second reduction rule.

Next, we will argue that~$|\mathcal{B}| \leq |V'| \cdot p^q$ by relating sets in the $q$-shadow to padding variables. To establish this we provide an injection~$f \colon \mathcal{B} \to V' \times [p^q]$, where the Cartesian product with~$[p^q]$ plays the role of the reduction to a multipartite instance in \Cref{lem:or-dp:upper}. The injection~$f$ is defined as follows. Consider a $q$-set~$B = \{(x_{t_1}, t_1), \ldots, (x_{t_q}, t_q)\} \in \mathcal{B}$ and label its elements such that~$t_1 < \ldots < t_q$. Let~$S_y \in \mathcal{A}$ be a set that contains~$B$ and consider the constraint~$y = \ORDP_{p,q}(x'_1, \ldots, x'_p, \tilde{x}'_1, \ldots, \tilde{x}'_{p^q}) \in C^*$ for which~$S_y$ was added to~$\mathcal{A}$. Since~$B$ is a subset of~$S_y$, while the pairs over~$V' \times [p]$ encode the positions at which Boolean variables appear in constraints of~$C^*$, we have~$x_{t_j} = x'_{t_j}$ for each~$j \in [q]$: for each pair~$(x,j) \in B$ consisting of a Boolean variable~$x$ and a position~$j \in [p]$, the $j$-th Boolean variable in constraint~$y$ is~$x$. Let~$i := \idx_{p,q}^{-1}(t_1, \ldots, t_q) \in [p^q]$ be the index of the $q$-tuple of positions represented by~$B$. Then we define~$f(B) := (\tilde{x}_{i}, i)$, that is, we map set~$B$ to the pair consisting of~$i$ and the padding variable that occurs in constraint~$y$ at the position~$i$ that controls the $q$-tuple of variables in~$B$.

Next, we argue that~$f$ is indeed an injection. Suppose for a contradiction that two distinct sets~$B,B' \in \mathcal{B}$ map to the same pair~$(\tilde{x}, i)$; let these be $B = \{(x_{t_1}, t_1), \ldots, (x_{t_q}, t_q)\}$ and~$B' = \{(x'_{t'_1}, t'_1), \ldots, (x'_{t'_q}, t'_q)\}$ labeled with~$t_1 < \ldots < t_q$ and~$t'_1 < \ldots < t'_q$. Since~$f(B) = f(B') = (\tilde{x}, i)$, they agree on their second coordinate which implies that~$i = \idx_{p,q}(t_1, \ldots, t_q) = \idx_{p,q}(t'_1, \ldots, t'_q)$. Hence the $q$-sets~$B, B'$ are obtained from $p$-sets~$A_y, A_{y'}$ by selecting the Boolean variables from the \emph{same} positions~$(t_1, \ldots, t_q) = (t'_1, \ldots, t'_q)$ of the respective constraints~$y, y'$. Since the sets~$B, B'$ are distinct while their projection onto the second element of the pairs yields the same set~$\{t_1, \ldots, t_q\} = \{t'_1, \ldots, t'_q\}$ of indices, there is an index~$j \in [q]$ for which~$(x_{t_j}, t_j) \neq (x'_{t'_j}, t'_j)$ with~$t_j = t'_j$. Consequently,~$x_{t_j} \neq x'_{t'_j}$ are \emph{distinct} Boolean variables in clauses~$y, y'$, that appear on a position~$t_j$ controlled by a padding variable with index~$i$, and the two constraints~$y,y'$ use the \emph{same} padding variable~$\tilde{x}$ at position~$i$. But this contradicts the assumption that the first reduction rule is exhaustively applied. We conclude that~$f$ is indeed an injection.

We now use these bounds to finish the proof. As in the proof of \Cref{lem:or-dp:upper}, let~$c_{p,q} \geq 1$ be any constant such that~$\binom{m}{p} \leq c_{p,q} \cdot \binom{m}{q}^{\frac{p}{q}}$. Since~$f$ is an injection from~$\mathcal{B}$ to~$V' \times [p^q]$ we infer~$|\mathcal{B}| \leq |V'| \cdot p^q$. Let~$m \geq p$ be the largest integer such that~$|\mathcal{A}| \geq \binom{m}{p}$, so that~$|\mathcal{A}| < \binom{m+1}{p} \leq 2p \binom{m}{p}$. By the Kruskal-Katona theorem (\Cref{kruskal-katona}) we have~$|V'| \cdot p^q \geq |\mathcal{B}| \geq \binom{m}{q}$. Hence we can now derive:
\begin{align*}
    |C^*| = |\mathcal{A}| \leq 2p \cdot \binom{m}{p} \leq 2p \cdot c_{p,q} \cdot \binom{m}{q}^{\frac{p}{q}} \leq 2p \cdot c_{p,q} \cdot (|V'| \cdot p^q)^{\frac{p}{q}} \leq \Oh_{p,q}(n^{\frac{p}{q}}).
\end{align*}
As this directly bounds the kernelization exponent of~$\CSP(\SATDP_{p,q})$, it concludes the proof of \cref{lem:kernel:upperbound}.
\end{proof}

By combining the upper- and lower bounds, we now derive the main result of this section.

\begin{theorem}
    Let~$p \geq q \in \N$ such that~$p \geq 3$. Assuming \ncontainment, the kernelization exponent of~$\CSP(\SATDP_{p,q})$ is~$\frac{p}{q}$. Hence every rational number of value at least one is the kernelization exponent of some CSP.
\end{theorem}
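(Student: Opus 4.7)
The plan is to combine the two preceding lemmas with a simple representability observation about rationals. Specifically, Lemma~\ref{lem:kernel:upperbound} gives (unconditionally) that the kernelization exponent of $\CSP(\SATDP_{p,q})$ is at most $p/q$, and Lemma~\ref{lem:kernel:lowerbound} gives (under \ncontainment) that it is at least $p/q$, whenever $p \geq q \geq 1$ and $p \geq 3$. Matching these pins the kernelization exponent to exactly $p/q$, which is the first assertion.

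For the second assertion I need to argue that every rational $r \geq 1$ can be represented as $p/q$ with integers $p \geq q \geq 1$ satisfying $p \geq 3$. Write $r = a/b$ in lowest terms with $a \geq b \geq 1$. If $a \geq 3$, take $(p,q) := (a,b)$. Otherwise $a \in \{1,2\}$: then the constraint $r \geq 1$ leaves only $r = 1$ (take $(p,q) := (3,3)$) or $r = 2$ (take $(p,q) := (6,3)$). In every case $p \geq q \geq 1$, $p \geq 3$, and $p/q = r$, so the first assertion applied to $\CSP(\SATDP_{p,q})$ exhibits a CSP of kernelization exponent exactly $r$.

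There is essentially no obstacle here, since all of the real work has been done in Lemmas~\ref{lem:kernel:upperbound} and~\ref{lem:kernel:lowerbound}; the only mild care needed is the scaling step to absorb the $p \geq 3$ hypothesis (needed because Lemma~\ref{lem:kernel:lowerbound} inherits it from the Dell--van Melkebeek lower bound for CNF-$p$-SAT, which is vacuous for $p \leq 2$). One should also note that when $(p,q)$ is scaled up by a common factor the constant hidden in $O_{p,q}(n^{p/q})$ depends on the chosen $p$ and $q$, but the \emph{exponent} $p/q$ is invariant under such scaling, so no inconsistency arises.
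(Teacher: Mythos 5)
Your proof is correct and follows essentially the same structure as the paper's: combine Lemma~\ref{lem:kernel:lowerbound} and Lemma~\ref{lem:kernel:upperbound} to pin the exponent to $p/q$, then observe that every rational $r \geq 1$ admits a representation with $p \geq 3$. The only cosmetic difference is in the representability step: you do a small case analysis on $r = a/b$ in lowest terms, whereas the paper simply rewrites any $p'/q' \geq 1$ as $3p'/3q'$, which unconditionally satisfies $p = 3p' \geq 3$ without casework. Both are valid; the paper's scaling trick is slightly cleaner and also makes your closing remark about the $O_{p,q}$ constant (correct but unneeded) unnecessary.
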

\begin{proof}
    For~$p \geq q \in \N$ with~$p \geq 3$, \Cref{lem:kernel:lowerbound} shows that under the assumption~\ncontainment the kernelization exponent of~$\CSP(\SATDP_{p,q})$ must be at least~$\frac{p}{q}$, while~\Cref{lem:kernel:upperbound} gives a matching upper bound. Hence the kernelization exponent is~$\frac{p}{q}$. As every fraction~$\frac{p'}{q'} \geq 1$ of integers can be written as~$\frac{3p'}{3q'}$ satisfying this premise, every rational at least one is the kernelization exponent of some CSP.
\end{proof}

\subsection{A Construction of Lower Arity}\label{app:frac-alt}

In this portion of the appendix, we prove Theorem~\ref{thm:frac-nrd-alt}. To help with understanding the main proof technique, we first prove a special case.

\subsubsection{Warm-up Predicate}\label{app:warmup}

As a warmup, we consider $p=3$, $q=2$ and $\cF = \{\{1,2\},\{2,3\},\{3,1\}\}$. It is clear that $\cF$ is $2/3$-regular. In that case, we have that the domain $D := E_{3,2}$ is $\{00,01,10,11\}$ and the relation $\DP_{\cF}$ being
\begin{align*}
    \DP_{\cF} := \{&(00,00,00),(00,01,10),(01,10,00),(01,11,10),\\
    &(10,00,01),(10,01,11),(11,10,01),(11,11,11)\}.
\end{align*}
With $\ORDP_{\cF}$ being the same relation except we omit $(00,00,00)$.

\begin{remark}
An fgpp-equivalent formulation of $\ORDP_{\cF}$ is $D := \{0,1,2,3\}$ and \[R := \{012, 120, 201, 321, 213, 132, 333\}.\]
\end{remark}

Using an argument similar to that of \Cref{lem:nrd:or-dp-star:linear}, we can prove that $\NRD(\DP_{\cF}, n) = \Oh(n)$. As such, it suffices understand the asymptotics of $\NRD(\ORDP_{\cF} \mid \DP_{\cF}, n)$. We begin with a lower bound.

\begin{lemma}\label{lem:lb}
$\NRD(\ORDP_{\cF} \mid \DP_{\cF}, n) \ge \Omega(n^{1.5})$.
\end{lemma}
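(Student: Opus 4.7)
The construction mirrors the $\OR_3$-based lower bound of \Cref{lem:lowerbound-nrd}, but we avoid introducing the Boolean OR-variables explicitly and only instantiate the padding variables. Set $m := \lfloor \sqrt{n} \rfloor$ and introduce a padding variable $p_{ij}$ for every pair $\{i,j\} \subseteq [m]$; this gives a variable set $X$ of size $\binom{m}{2} \leq n$. For every triple $i < j < k$ in $[m]$, add the clause
\[
   y_{ijk} \;:=\; \ORDP_{\cF}\bigl(p_{ij},\, p_{jk},\, p_{ik}\bigr),
\]
where the three positions correspond, respectively, to the sets $\{1,2\},\{2,3\},\{3,1\}\in\cF$. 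The resulting instance has $\binom{m}{3}=\Theta(n^{3/2})$ clauses, so it suffices to exhibit, for every triple $i<j<k$, an assignment that $\DP_{\cF}$-satisfies all clauses and $\ORDP_{\cF}$-satisfies all of them \emph{except} $y_{ijk}$.

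The witnessing assignments arise from ``OR-assignments'' on the implicit Boolean variables. For any $b:[m]\to\{0,1\}$, define $\sigma_b: X \to \{0,1\}^2$ by $\sigma_b(p_{ab}) = (b_a, b_b)$ for $a < b$. A routine check shows that under this encoding
\[
   \sigma_b(y_{ijk}) \;=\; \bigl((b_i,b_j),(b_j,b_k),(b_i,b_k)\bigr) \;=\; \proj_{\cF}(b_i,b_j,b_k),
\]
which lies in $\ORDP_{\cF}$ exactly when $(b_i,b_j,b_k)\in \OR_3$, and in $\DP_{\cF}\setminus \ORDP_{\cF}$ exactly when $b_i=b_j=b_k=0$. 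Hence choosing $b_i=b_j=b_k=0$ and $b_\ell=1$ for all $\ell\notin\{i,j,k\}$ yields a witness for the non-redundancy of $y_{ijk}$: the clause $y_{ijk}$ is mapped to $((0,0),(0,0),(0,0))$, while every other clause $y_{i'j'k'}$ involves some index $\ell\in\{i',j',k'\}\setminus\{i,j,k\}$ with $b_\ell=1$, so its image is in $\ORDP_{\cF}$.

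The only point requiring care is a consistent convention for the projections $x|_S$, since the set $\{3,1\}$ is unordered; I will fix the convention that $x|_S$ lists the coordinates of $S$ in increasing order, and define the padding variables $p_{ij}$ for $i<j$ accordingly. With this convention the three projections appearing in $\proj_{\cF}$ on input $(b_i,b_j,b_k)$ with $i<j<k$ match, position-by-position, the values $\sigma_b$ assigns to $p_{ij},p_{jk},p_{ik}$, so no further computation is needed. The main (mild) obstacle is therefore just bookkeeping the coordinate order of $\cF$; the counting and non-redundancy verification are immediate once this is pinned down.
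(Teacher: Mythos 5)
Your proof is correct. It is the same underlying idea as the paper's — encode a maximally non-redundant implicit $\OR_3$ instance and lift it via the padding structure — but with a slightly different combinatorial realization. The paper builds a \emph{tripartite} instance on $X = [3]\times[t]^2$ and uses the complete tripartite hypergraph $\{((1,a,b),(2,b,c),(3,c,a)) : a,b,c\in[t]\}$, whereas you build a non-multipartite instance on $X = \binom{[m]}{2}$ using the complete $3$-uniform hypergraph $\binom{[m]}{3}$; both yield $\Theta(n^{3/2})$ clauses on $O(n)$ variables and both use the ``make exactly one $\OR$-triple all-zero'' witness. Your choice to reorder $\{3,1\}$ increasingly changes $\ORDP_\cF$ only by a bijection on the third coordinate of the domain, which has no effect on $\NRD$, so that step is fine; it is also fine to produce a non-multipartite instance here since \Cref{lemma:reduction:to:multipartite} is only needed for upper bounds. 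One small advantage of your version is that you never have to handle degenerate tuples like $(a,a,a)$; one small advantage of the paper's is that it directly produces a tripartite instance, matching the form used in the companion upper bound \Cref{lem:ub}.
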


\begin{proof}
Assume that $n = 3t^2$. Let $X := [3] \times [t]^2$. Define $Y \subseteq X^3$ as follows
\[
    Y := \{e_{a,b,c} := ((1,a,b),(2,b,c),(3,c,a)) \in X^3 : a,b,c \in [t]\}.
\]
Obviously, $|Y| = t^3$, so it suffices to prove that $(X,Y)$ is conditionally non-redundant. Fix $(a,b,c) \in [t]^3$ and define $\sigma_{a,b,c} : X \to D$ as follows
\begin{align*}
    \sigma_{a,b,c}(1,a',b') := \begin{cases}
    00 & a=a' \wedge b=b'\\
    01 & a=a' \wedge b\neq b'\\
    10 & a\neq a' \wedge b=b'\\
    11 & a\neq a' \wedge b\neq b'.
    \end{cases}\\
    \sigma_{a,b,c}(2,b',c') := \begin{cases}
    00 & b=b' \wedge c=c'\\
    01 & b=b' \wedge c\neq c'\\
    10 & b\neq b' \wedge c=c'\\
    11 & b\neq b' \wedge c\neq c'.
    \end{cases}\\
    \sigma_{a,b,c}(3,c',a') := \begin{cases}
    00 & c=c' \wedge a=a'\\
    01 & c=c' \wedge a\neq a'\\
    10 & c\neq c' \wedge a=a'\\
    11 & c\neq c' \wedge a\neq a'.
    \end{cases}
\end{align*}
One can check that for every $(a',b',c') \in [t]^3$ we have that $\sigma_{a,b,c}(e_{a',b',c'}) \in \DP_{\cF}$ and $\sigma_{a,b,c}(e_{a',b',c'}) = (00,00,00)$ if and only if $(a,b,c) = (a',b',c')$.
\end{proof}

Now we prove an essentially matching upper bound.

\begin{lemma}\label{lem:ub}
$\NRD(\ORDP_{\cF} \mid \DP_{\cF}, n) = \Oh(n^{1.5})$.
\end{lemma}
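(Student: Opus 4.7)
The plan is to reduce to the multipartite setting and then bound $|Y|$ by Loomis--Whitney (a.k.a.\ Shearer's inequality applied to the three pairs in $[3]$) after encoding each constraint as a triangle in a tripartite graph whose edges come from the $X_i$'s. By the conditional analog of \Cref{lemma:reduction:to:multipartite} together with \Cref{lem:NRD-scale}, it suffices to prove the bound on tripartite conditionally non-redundant instances $(X_1 \sqcup X_2 \sqcup X_3, Y \subseteq X_1 \times X_2 \times X_3)$, which is the setting I work with below.

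The main construction exposes the bit-decomposition of each $X_i$-value using phantom Boolean variables. Let $\tilde W = \{(x,j) : i \in [3],\ x \in X_i,\ j \in \cF_i\}$, which has size at most $2n$, and quotient by the equivalence relation $\sim$ generated by the within-constraint identifications $(y_{i_1}, j) \sim (y_{i_2}, j)$, one for every $y \in Y$ and every $j \in [3]$, where $i_1 \neq i_2$ are the two positions with $j \in \cF_{i_1} \cap \cF_{i_2}$. Write $W_j$ for the classes at index $j$. The key property is that every $\DP_{\cF}$-consistent assignment $\sigma : X \to E_{p,q} = \{0,1\}^2$ will induce a well-defined Boolean assignment $\tilde\sigma : W_1 \cup W_2 \cup W_3 \to \{0,1\}$ with $\sigma(x) = (\tilde\sigma([(x,j_1)]), \tilde\sigma([(x,j_2)]))$ for $x \in X_i$ and $\cF_i = \{j_1, j_2\}$: each generating identification is exactly the bit-consistency demanded by $\DP_{\cF}$ on one constraint, and this propagates across the transitive closure step by step. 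Under this encoding, each $x \in X_i$ becomes an edge $e(x) \in W_{j_1} \times W_{j_2}$, and each constraint $y \in Y$ becomes a triangle $T(y) = (a_y, b_y, c_y) \in W_1 \times W_2 \times W_3$ with $a_y = [(y_1,1)] = [(y_3,1)]$, $b_y = [(y_1,2)] = [(y_2,2)]$, $c_y = [(y_2,3)] = [(y_3,3)]$.

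The crux of the proof, and the step I expect to be the main obstacle, is showing that $T : Y \to W_1 \times W_2 \times W_3$ is injective. The idea: if $T(y) = T(y')$ for distinct $y, y'$, then for every $\DP_{\cF}$-consistent $\sigma$ both coordinates of $\sigma(y_i)$ and $\sigma(y'_i)$ are read from the same phantom classes at each position, so $\sigma(y_i) = \sigma(y'_i)$ and hence $\sigma(y) = \sigma(y')$. Applied to the witness $\sigma_y$ of conditional non-redundancy for $y$ (which has $\sigma_y(y) = (00,00,00)$, the unique element of $\DP_{\cF} \setminus \ORDP_{\cF}$), this forces $\sigma_y(y') = (00,00,00) \notin \ORDP_{\cF}$, contradicting $\sigma_y \in \sat(\ORDP_{\cF}, Y \setminus \{y\})$.

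Given injectivity, Loomis--Whitney applied to $T(Y) \subseteq W_1 \times W_2 \times W_3$ gives
\begin{equation*}
|Y|^2 \;=\; |T(Y)|^2 \;\le\; |\pi_{12}(T(Y))| \cdot |\pi_{23}(T(Y))| \cdot |\pi_{13}(T(Y))|.
\end{equation*}
Each pair-projection is contained in the image of the edge map on the appropriate $X_i$ (for example $\pi_{12}(T(Y)) \subseteq e(X_1)$ because $(a_y, b_y) = e(y_1)$), hence has size at most $|X_i|$. Combining with AM--GM on $|X_1| + |X_2| + |X_3| \le n$ then yields $|Y|^2 \le |X_1| \cdot |X_2| \cdot |X_3| \le (n/3)^3$, so $|Y| = \Oh(n^{1.5})$.
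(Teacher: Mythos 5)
Your proof is correct and follows the same overall strategy as the paper: reduce to tripartite instances, encode each constraint as a triple of "shared-bit" labels, and invoke Shearer/Loomis--Whitney. In fact the encoding coincides exactly --- your quotient classes $W_j$ (for bit $j$ shared between positions $i_1, i_2$) are precisely the connected components of the paper's bipartite graph $Y_{i_1 i_2}$, and $e(x)$ is (up to relabeling) the paper's pair $(\alpha_{i,i+1}(x), \alpha_{i,i+2}(x))$. Where you genuinely depart from the paper is the justification that $T$ is injective. The paper first observes that $Y$ is linear, states a Claim (via a case analysis on $\DP_\cF|_{i,i+1}$ and $\DP_\cF|_{i,i+2}$) that doubly-equivalent variables receive equal values under every $\DP_\cF$-consistent assignment, then performs a merge step and derives injectivity of the encoding from the merged instance. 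You instead prove injectivity directly: if $T(y)=T(y')$ then any $\DP_\cF$-consistent assignment factors through $\tilde\sigma$ on the quotient classes and hence gives $\sigma(y)=\sigma(y')$, which applied to the conditional-non-redundancy witness $\sigma_y$ forces $\sigma_y(y')=(00,00,00)\notin\ORDP_\cF$, a contradiction. This is shorter and sidesteps both the linearity observation (which the paper records but does not really need for the final bound) and the variable-merging argument, at the small cost of an extra layer of bookkeeping in setting up the phantom-variable quotient $\tilde W/\!\sim$. Both routes buy you $|Y|\le\sqrt{|X_1||X_2||X_3|}\le (n/3)^{3/2}$.
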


\begin{proof}
Let $(X, Y)$ be a conditionally non-redundant instance. By \Cref{lemma:reduction:to:multipartite}, we may assume that the instance is tripartite. That is, $X := X_1 \sqcup X_2 \sqcup X_3$ with $Y \subseteq X_1 \times X_2 \times X_3$. For each $y \in Y$, let $\sigma_y : X \to D$ be an assignment such that $\sigma_y(y) = (00,00,00)$ and $\sigma_y(y') \in \ORDP_{\cF}$ for all $y' \in Y \setminus \{y\}$. Since no $t \in \ORDP_{\cF}$ has two coordinates equal to $00$, we must have that $Y$ is linear (i.e., any two clauses have at most one variable in common).

For distinct $i,j \in [3]$, let $Y_{ij} := \{(y_i, y_j) : y \in Y\} \subseteq X_i \times X_j.$  Since $Y$ is linear, we have that each $Y_{ij}$ has the same cardinality as $Y$. We say that $x, x' \in X_i$ are $j$-equivalent, denoted by $x \sim_j x'$ if $x$ and $x'$ are in the same connected component of $Y_{ij}$. A crucial observation is as follows.
\begin{claim}\label{claim:merge}
Let $x, x' \in X_i$ be such that $x \sim_{i+1} x'$ and $x \sim_{i+2} x'$, then for every $\DP_{\cF}$-satisfying assignment $\sigma : X \to D$ for $Y$, we have that $\sigma(x) = \sigma(x')$.
\end{claim}
\begin{proof}
Note that for any $i \in [3]$, we have that $\DP_{\cF}|_{i,i+1} = \{(00,00),(00,01),(10,00),(10,01)\} \cup$ $\{(01,10),(01,11),(11,10),(11,11)\}$, where indices are taken modulo $3$.
Thus, if $x, x' \in X_i$ satisfy $x \sim_{i+1} x'$, then $(\sigma(x), \sigma(x')) \in \{00,10\}^2 \cup \{01,11\}^2$ for any $\DP_{\cF}$-satisfying assignment $\sigma$.

Likewise, since $\DP_{\cF}|_{i,i+2} := \{(00,00),(00,10),(01,00),(01,10)\} \cup$ $ \{(10,01),(10,11), (11,01),$ $(11,11)\}$, so any $x, x' \in X_i$ satisfying $x \sim_{i+2} x'$ must have $(\sigma(x), \sigma(x')) \in \{00,01\}^2 \cup \{10,11\}^2$. Therefore, if $x \sim_{i+1} x'$ and $x \sim_{i+2} x'$, then
\[
    (\sigma(x), \sigma(x')) \in (\{00,10\}^2 \cup \{01,11\}^2) \cap (\{00,01\}^2 \cup \{10,11\}^2) = \{(00,00),(01,01),(10,10),(11,11)\},
\]
as desired.
\end{proof}

As such, if for some $i \in [3]$ and distinct $x, x' \in X_i$, we have that $x \sim_{i+1} x'$ and $x \sim_{i+2} x'$, then we can replace $X$ with $X \setminus \{x'\}$ and replace every $y \in Y$ with $y_i = x'$ with a corresponding $y'$ with $y'_i = x$ while preserving non-redundancy. Thus, assume without loss of generality that for every distinct $x, x' \in X_i$, we have that either $x \not\sim_{i+1} x'$ or $x \not\sim_{i+2} x'$.

For each distinct $i,j \in [3]$, let $\ell_{ij}$ be the number of connected components of $Y_{ij}$. Let $\alpha_{ij} : X_i \sqcup X_j \to [\ell_{ij}]$ be an indicator map of these connected components. By our WLOG assumption, we have that each $x \in X_i$ is uniquely determined by $(\alpha_{i,i+1}(x), \alpha_{i,i+2}(x))$. In particular, this means that each edge $y \in Y$ is uniquely determined by
\[
(\alpha_{1,2}(y_1) = \alpha_{1,2}(y_2), \alpha_{1,3}(y_1) = \alpha_{1,3}(y_3), \alpha_{2,3}(y_2) = \alpha_{2,3}(y_3)).
\]

In other words, we can identify $Y$ with $\widetilde{Y} \subseteq [\ell_{12}] \times [\ell_{13}] \times [\ell_{23}]$ and each $X_i$ with $\widetilde{X}_i \subseteq [\ell_{i,i+1}] \times [\ell_{i,i+2}]$ corresponding to the projection of $\widetilde{Y}$ onto $[\ell_{i,i+1}] \times [\ell_{i,i+2}]$. By Shearer's inequality\footnote{This precise version of Shearer's inequality appeared as Problem 5 at the 1992 International Mathematical Olympiad~\cite{djukic2006imo}.} \cite{chung1986intersection}, we then have that
\[
    |Y| = |\widetilde{Y}| \le \sqrt{|\widetilde{X}_1| \cdot |\widetilde{X}_2| \cdot |\widetilde{X}_3|} = \sqrt{|X_1|\cdot |X_2|\cdot |X_3|} \le n^{1.5}. \qedhere
\]
\end{proof}

\begin{remark}
This proof in fact shows that the construction in Lemma~\ref{lem:lb} is the optimal-sized tripartite instance for infinitely many choice of $n$.
\end{remark}

As an immediate corollary of Lemma~\ref{lem:lb} and Lemma~\ref{lem:ub}.

\begin{theorem}
$\NRD(\ORDP_{\cF}, n) = \Theta(n^{1.5})$.
\end{theorem}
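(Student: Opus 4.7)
The proof follows the blueprint of the warmup in Section~\ref{app:warmup}: via the triangle inequality (Lemma~\ref{lem:NRD-chain}) it suffices to establish $\NRD(\ORDP_{\cF} \mid \DP_{\cF}, n) = \Theta_{p,q}(n^{p/q})$ together with a linear bound $\NRD(\DP_{\cF}, n) = O_{p,q}(n)$. The scaffolding bound follows from an Abelian embedding of $\DP_{\cF}$ into a coset of $(\Z/2\Z)^{q+1}$ via $v \mapsto (1, v)$, analogous to Lemma~\ref{lem:nrd:or-dp-star:linear}; the coset is cut out by affine equations that fix the tag bit to $1$ in every coordinate and identify the position-$i$ bits between any two coordinates $S, S' \in \cF$ with $i \in S \cap S'$, so Theorem~\ref{thm:Malt-embedding} applies. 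For the lower bound on the conditional part I would generalize Lemma~\ref{lem:lb}: set $X_S = [t]^S$ for each $S \in \cF$ (so $n = |\cF| t^q$) and for every $z \in [t]^p$ introduce a clause $y_z \in \prod_S X_S$ with $(y_z)_S = z|_S$, giving $|Y| = t^p = \Theta_{p,q}(n^{p/q})$. The witness $\sigma_{z^*}$ for $y_{z^*}$ assigns each $f \in X_S$ the bitstring $(\mathbf{1}[f(i) \neq z^*(i)])_{i \in S} \in \{0,1\}^q$; this $\DP_{\cF}$-satisfies every clause, and the underlying Boolean $p$-vector $\mathbf{1}[z \neq z^*]$ is zero exactly when $z = z^*$.

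The main technical step is the matching upper bound $\NRD(\ORDP_{\cF} \mid \DP_{\cF}, n) = O_{p,q}(n^{p/q})$, which replaces the Kruskal-Katona argument of Lemma~\ref{lem:or-dp:upper} by a fractional Shearer argument generalizing Lemma~\ref{lem:ub}. After reducing to an $|\cF|$-partite instance via Lemma~\ref{lemma:reduction:to:multipartite}, I would, for each position $i \in [p]$, build an equivalence graph $G_i$ on vertex set $\{(x, S) : S \in \cF,\; i \in S,\; x \in X_S\}$ with an edge between $(y_S, S)$ and $(y_{S'}, S')$ for every clause $y \in Y$ and every distinct pair $S, S' \in \cF$ containing $i$. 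Because any $\DP_{\cF}$-consistent assignment is forced to pick a single underlying Boolean value at each position $i$, variables in the same component of $G_i$ must agree on the position-$i$ bit. Consequently, two distinct $x, x' \in X_S$ whose component labels agree on every $i \in S$ are forced equal in every $\DP_{\cF}$-satisfying assignment and can be merged; the usual contradiction (one witness would make the all-zero tuple appear on a clause that another witness assigns to $\ORDP_{\cF}$) rules out the possibility that merging collides two distinct conditionally non-redundant clauses. After exhaustive merging, the map $x \mapsto (c_i(x, S))_{i \in S}$ injects $X_S$ into $\prod_{i \in S} [\ell_i]$ with $\ell_i$ the number of components of $G_i$, and because the vertices of a single clause lie in a common component of $G_i$, these maps paste together into an injection $\pi : Y \hookrightarrow \prod_{i \in [p]} [\ell_i]$ whose projection onto the $S$-coordinates has size at most $|X_S|$. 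Applying fractional Shearer with weights $w_S = \mu(S) \cdot p/q$, where $\mu$ is the probability distribution witnessing $q/p$-regularity (so $\sum_{S \ni i} w_S = 1$), yields $|Y| \le \prod_S |X_S|^{w_S} \le n^{\sum_S w_S} = n^{p/q}$.

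The main obstacle I expect lies in the bookkeeping around the merging step. Unlike the warmup, where $|\cF|=3$ and every pair of sets in $\cF$ covers $[p]$ (which forces a convenient linearity on the instance), general $\cF$ admits arbitrary intersection patterns, so one must verify that repeated merging simultaneously preserves the $|\cF|$-partite structure and each conditional non-redundancy witness, and that the updated component maps $c_i$ remain coherent across all $S \ni i$ throughout the reduction. A secondary, largely cosmetic concern is absorbing the multipartization overhead: after invoking Lemma~\ref{lemma:reduction:to:multipartite} one works with at most $|\cF| n$ variables in place of $n$, but since each $|X_S| \le n$ this only costs a $p,q$-dependent factor in the final bound. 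Packaging the whole reduction cleanly, ideally inside the $c$-fgppp framework of Section~\ref{sec:gadget_reductions}, is where I expect most of the writing effort to go.
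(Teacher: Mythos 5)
Your proof is correct and takes essentially the same route the paper takes for the general Theorem~\ref{thm:frac-nrd-alt} (Lemma~\ref{lem:or-lb} and Lemma~\ref{lem:or-ub}), whereas the paper's own proof of this warmup statement literally just cites the more concrete Lemmas~\ref{lem:lb} and~\ref{lem:ub}, which exploit the special structure of $\cF=\{\{1,2\},\{2,3\},\{3,1\}\}$ (in particular that the instance is automatically a linear hypergraph, which buys a cleaner injectivity argument than the general component-collision reasoning you use). One small imprecision: in your collision argument for the merging step, the contradiction comes from a \emph{single} witness $\sigma_y$ being forced to assign the same value to $y$ and $y'$ while one must land in $\DP_{\cF}\setminus\ORDP_{\cF}$ and the other in $\ORDP_{\cF}$, not from comparing two different witnesses; and the tag bit in your Abelian embedding of $\DP_{\cF}$ is unnecessary since $\DP_{\cF}$ is already a subgroup of $((\Z/2\Z)^q)^{\cF}$, but both points are cosmetic and do not affect correctness.
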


\subsubsection{Every Rational Number}

We now work toward proving Theorem~\ref{thm:frac-nrd-alt} in general. By adapting the proof of \Cref{lem:nrd:or-dp-star:linear}, we have that $\NRD(\DP_{\cF}, n) = \Oh(n)$, so it suffices to bound $\NRD(\ORDP_{\cF} \mid \DP_{\cF}, n)$. We begin with a lower bound.

\begin{lemma}\label{lem:or-lb}
$\NRD(\ORDP_{\cF} \mid \DP_{\cF}, n) = \Omega_{\cF}(n^{p/q})$.
\end{lemma}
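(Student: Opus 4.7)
The plan is to directly exhibit a large conditionally non-redundant instance of $\CSP(\ORDP_{\cF} \mid \DP_{\cF})$, generalizing the warmup construction (Lemma~A.2) from the triangle case. The $q/p$-regularity of $\cF$ is not actually needed for the lower bound; all we will use is that every $i \in [p]$ is covered by some $S \in \cF$ (which $q/p$-regularity implies, since every $i$ must have nonzero marginal).

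Fix a parameter $t \in \N$. Think of $[t]$ as the value set of an ``abstract'' version of each OR-variable $i \in [p]$ (replacing the Boolean domain by a $t$-ary domain of ``levels''). Take as CSP variables
\[
X = \bigsqcup_{S \in \cF}\, [t]^S,
\]
so $|X| = |\cF|\cdot t^q$; and for every $x = (x_1,\ldots,x_p) \in [t]^p$ define a clause $e_x \in X^{\cF}$ whose $S$-coordinate (for $S \in \cF$) is the variable $(S, (x_i)_{i\in S}) \in [t]^S$. Let $Y = \{e_x : x \in [t]^p\}$, so $|Y| = t^p$. Since every $i \in [p]$ lies in some $S \in \cF$, distinct $x$'s yield distinct clauses, so this counting is honest.

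Next I would exhibit, for each ``target'' $x^\ast \in [t]^p$, a witness assignment $\sigma_{x^\ast}\colon X \to \{0,1\}^q \cong E_{p,q}$ defined by
\[
\sigma_{x^\ast}\bigl((S, (y_i)_{i\in S})\bigr) = \bigl([y_i \ne x^\ast_i]\bigr)_{i\in S}.
\]
Under this assignment, the clause $e_y$ is mapped to $\proj_{\cF}(b^{x^\ast,y})$ where $b^{x^\ast,y}_i = [y_i \ne x^\ast_i] \in \{0,1\}$. By the very definitions of $\DP_{\cF}$ and $\ORDP_{\cF}$, this lies in $\DP_{\cF}$ always, and lies in $\ORDP_{\cF}$ iff $b^{x^\ast,y} \ne 0^p$, i.e.\ iff $y \ne x^\ast$. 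Hence $\sigma_{x^\ast}$ $\DP_{\cF}$-satisfies every clause and $\ORDP_{\cF}$-violates exactly $e_{x^\ast}$, showing that $(X,Y)$ is conditionally non-redundant for $\CSP(\ORDP_{\cF} \mid \DP_{\cF})$.

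Setting $n = |X| = |\cF|\, t^q$ yields $|Y| = t^p = (n/|\cF|)^{p/q} = \Omega_{\cF}(n^{p/q})$, as required. There is essentially no technical obstacle: the construction is a verbatim lift of the warmup, and the only thing to check is the ``indicator witness'' calculation above, which is routine. What one should be careful about is that the instance is well-defined (i.e.\ distinct $x$'s give distinct clauses), which is exactly where the covering consequence of $q/p$-regularity is used; the stronger form of $q/p$-regularity will only be needed for the matching upper bound via Shearer's inequality.
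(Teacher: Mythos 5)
Your construction and witness assignments are identical to the paper's: the paper also takes $X$ indexed by pairs $(S,w)$ with $S\in\cF$ and $w\in[t]^S$, forms $Y=\{e_z : z\in[t]^p\}$ with the same $S$-coordinates, and uses the indicator-of-disagreement witness $\sigma_z((S,w))_i = \mathbf{1}[w_i\neq z_i]$. The paper even makes the same remark that only the covering property (rather than full $q/p$-regularity) is needed for the lower bound, so this is a faithful reproduction of the intended argument.
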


\begin{remark}
Lemma~\ref{lem:or-lb} does not need $\cF$ to be regular. In fact, the only assumption needed is that every $i \in [p]$ is contained in some $S \in \cF$.
\end{remark}

\begin{proof}
Assume that $n = |\cF|t^q$. Let $X := \cF \times [t]^q$, and define $Y \subseteq X^{\cF}$ as
\[
    Y := \{e_{z} := ((S, z|_{S}) : S \in \cF) \in X^{\cF}  : z \in [t]^p\}
\]
Now fix, $z \in [t]^p$, we define an assignment $\sigma_z : X \to D$ as follows. For each $S \in \cF$, $w \in [t]^S$ and $i \in S$, we have that
\[
    \sigma_z((S,w))_i := \one[w_i \neq z_i].
\]
First, note that for all $z' \in [t]^p$, we have that $\sigma_z(e_{z'}) \in \DP_{\cF}$. Further, $\sigma_z(e_{z'}) = \proj_{\cF}(0^p)$ if and only if $z = z'$. Thus, $\{\sigma_z : z \in [t]^q\}$ witness that $(X,Y)$ is a non-redundant instance with $O_{\cF}(t^q)$ variables and $t^p$ clauses.
\end{proof}

We now prove a corresponding upper bound for $\NRD(\ORDP_{\cF} \mid \DP_{\cF}, n)$.

\begin{lemma}\label{lem:or-ub}
If $\cF$ is $p/q$-regular then $\NRD(\ORDP_{\cF} \mid \DP_{\cF}, n) = \Oh(n^{p/q})$.
\end{lemma}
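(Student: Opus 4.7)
The plan is to adapt the warm-up argument (\Cref{lem:ub}) by replacing the pairwise connected-component analysis with a per-coordinate analysis, and to invoke fractional Shearer's inequality in place of the symmetric Shearer inequality used in the warm-up.

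First, I would reduce to a $|\cF|$-partite instance via the conditional analogue of \Cref{lemma:reduction:to:multipartite}, obtaining $X = \bigsqcup_{S \in \cF} X_S$ and $Y \subseteq \prod_{S \in \cF} X_S$. The key observation behind $\DP_\cF$ is that every $\DP_\cF$-satisfying assignment $\sigma$ supplies each clause $y$ with an ``underlying Boolean explanation'' $b_y \in \{0,1\}^p$ such that $\sigma(y_S) = b_y|_S$ for every $S \in \cF$; the tuple $\sigma(y) = \proj_\cF(0^p)$ precisely when $b_y = 0^p$. This, combined with the $\NRD^*(\DP_\cF, n) = O(n)$ bound (provable exactly as in \Cref{lem:nrd:or-dp-star:linear}), reduces the problem to the conditional version via the triangle inequality.

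Next, for each coordinate $i \in [p]$, I would define a graph $G_i$ on vertex set $\bigsqcup_{S \in \cF,\, i \in S} X_S$, whose edges are the pairs $\{y_S, y_{S'}\}$ coming from clauses $y \in Y$ with distinct $S, S' \ni i$. The defining property of $\DP_\cF$ forces the $i$th bit of $b_y$ to be preserved along every edge of $G_i$, so it is constant on every connected component of $G_i$. Writing $c_i(\cdot)$ for the component function and $\gamma_i(y) := c_i(y_S)$ for any $S \ni i$ (well-defined since all such $y_S$ lie in one component), I obtain a map $y \mapsto (\gamma_1(y), \ldots, \gamma_p(y))$ into $\prod_{i \in [p]} C_i$, where $C_i$ is the set of components of $G_i$. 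The key combinatorial claim is that this map is \emph{injective} on $Y$: if two distinct clauses $y, y'$ produced the same tuple, then every $\DP_\cF$-satisfying assignment would force $b_y = b_{y'}$, hence $\sigma(y) = \sigma(y')$, contradicting conditional non-redundancy (the witness $\sigma_y$ with $\sigma_y(y) = \proj_\cF(0^p)$ would also force $\sigma_y(y') = \proj_\cF(0^p) \notin \ORDP_\cF$).

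Finally, let $\widetilde{Y}$ denote the image of $Y$ under this injection. Since each coordinate $(\gamma_i(y))_{i \in S}$ equals $(c_i(y_S))_{i \in S}$, the projection satisfies $|\pi_S(\widetilde{Y})| \le |X_S|$ for every $S \in \cF$. By $q/p$-regularity there is a distribution $\mu$ on $\cF$ with $\sum_{S \ni i} \mu(S) = q/p$ for each $i$; rescaling $\mu' := (p/q)\mu$ yields a fractional cover of $[p]$ by $\cF$ with $\sum_S \mu'(S) = p/q$. Applying fractional Shearer to $\widetilde{Y}$ and then weighted AM-GM gives
\[
|Y| = |\widetilde{Y}| \le \prod_{S \in \cF} |\pi_S(\widetilde{Y})|^{\mu'(S)} \le \prod_{S \in \cF} |X_S|^{\mu'(S)} \le \bigl(\max_S |X_S|\bigr)^{p/q} \le n^{p/q},
\]
as desired.

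The main obstacle I anticipate is the injectivity argument: the warm-up handled this via a direct linearity observation (no two clauses share two variables), but for general $\cF$ linearity need not hold, so injectivity has to be established via the more delicate bit-propagation argument above, and one must carefully verify that conditional non-redundancy (which is weaker than ordinary non-redundancy) still suffices to rule out two clauses collapsing to the same $\gamma$-tuple.
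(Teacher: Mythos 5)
Your proposal is correct and follows essentially the same route as the paper. The per-coordinate connected-component analysis (your graph $G_i$ versus the paper's hypergraph $Y_i$ — same components), the injectivity of $y \mapsto (\gamma_1(y),\ldots,\gamma_p(y))$ established from conditional non-redundancy, and the Shearer step all match; you use the counting/fractional form of Shearer whereas the paper phrases it in entropy language with a random clause, but these are equivalent. The paper additionally performs a ``merge variables'' WLOG step (identifying $x,x' \in X_S$ when $c_i(x)=c_i(x')$ for all $i\in S$), but as you implicitly observed, this is not needed since $H(Z|_S)\leq\log_2|X_S|$ follows directly from $Z|_S$ being a function of $y_S$.
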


\begin{proof}
By \Cref{lemma:reduction:to:multipartite}, we may assume without loss of generality that our conditionally non-redundant instance $(X,Y \subseteq X^{\cF}, \{\sigma_y : X \to D \mid y \in Y\})$ is $\cF$-partite. That is, $X$ is the disjoint union of variables sets $X_{S}$ for each $S \in \cF$.

For each $i \in [p]$ define $\cF_{i} := \{S \in \cF : i \in S\}$ and $X_{\cF_i} := \bigcup_{S \in \cF_i} X_S$. Further define $Y_i := \{y|_{\cF_{i}} : y \in Y\} \subseteq X_{\cF_i}^{\cF_i}$. It may be the case that $|Y_i| < |Y|$. Let $\cC_i$ be the set of connected components of $Y_i$ (viewed as a hypergraph on vertex set $X_{\cF_i}$). A key observation is that for every satisfying assignment $\sigma : X \to D$ to $\CSP(\DP_{\cF})$, and any $y, y' \in Y$ for which $y|_{\cF_i}$ and $y'|_{\cF_{i}}$ are in the same connected component of $Y_i$, we have that 
\[\proj_{\cF}^{-1}(\sigma(y))_i = \proj_{\cF}^{-1}(\sigma(y'))_{i}.\]
The reason why is that for any $y \in Y$ and any $S \in \cF_i$, $\proj_{\cF}^{-1}(\sigma(y))_i = \sigma(y_S)_i$, so $\proj^{-1}_{\cF}\circ \sigma$ must be constant on the hyperedges of a connected component of $Y_i$. Thus, for any $y \in Y$ that the list of connected components it resides in $\cC_1, \hdots, \cC_p$ uniquely identifies $y$. 

Let $c_i : X_{\cF_i} \to \cC_i$ be the connected component identification maps. Now fix $S \in \cF$. Consider any $x, x' \in X_S$ such that for every $i \in S$, $x$ and $x'$ are in the same connected component of $Y_i$. By the previous logic, we must have for every satisfying assignment $\sigma : X \to D$ to $\CSP(\DP_{\cF})$, we have that $\sigma(x) = \sigma(x')$. Thus, without loss of generality, we may replace every instance of $x'$ in some $y \in Y$ with $x$ without changing the non-redundancy of the instance. In other words, we may assume that $x \mapsto (c_i(x) : x \in S)$ is an injection from $X_S$ to $\prod_{i \in S}\cC_i$. 

Now, let $Z$ be the random variable on $\cC_1 \times \cdots \times \cC_p$ corresponding to the list of connected components of a uniformly random sample of $y \in Y$. Let $\cP$ be the $q/p$-regular probability
distribution over $\cF$. By Shearer's inequality, we have that
\[
    \log_2 |Y| = H(Z) \le \frac{p}{q} \cdot \underset{S \sim \cP}{\E}[H(Z|_{S})] \le \frac{p}{q} \cdot \underset{S \sim \cP}{\E}[\log_2 |X_S|] \le \frac{p}{q}\log_2 |X|.
\]
Thus, $|Y| \le |X|^{p/q}$, as desired.
\end{proof}

Combining the lemmas of this section with the triangle inequality, we have proved \Cref{thm:frac-nrd-alt}.

\end{document}